\definecolor{lightblue}{rgb}{.90,.95,1}
\definecolor{lightgreen}{rgb}{.90,1,.95}
\definecolor{lightyellow}{rgb}{1,1,.8}
\definecolor{lightred}{rgb}{1,.9,.9}
\newtheorem{Thm}{Theorem}[section]
\newtheorem{Prop}[Thm]{Proposition}
\newtheorem{Lemma}[Thm]{Lemma}
\newtheorem{Cor}[Thm]{Corollary}
 {
   \theoremstyle{definition}
 \newtheorem{Remark}[Thm]{Remark}
 }
\theoremstyle{definition}
\newtheorem{Def}[Thm]{Definition}
\newtheorem{Example}[Thm]{Example}
\numberwithin{equation}{section}
\def\pv#1{\ensuremath{{\sf#1}}}
\def\Cl#1{\ensuremath{\mathcal {#1}}}
\def\Om#1#2{\ensuremath{\overline\Omega_{#1}{\sf#2}}}
\def\malcev{\mathop{\raise1pt\hbox{\footnotesize$\bigcirc$\kern-8pt\raise1pt
      \hbox{\tiny$m$}\kern1pt}}}
\def\ev {l}
\def\teu{\mathrm t}
\def\beu {\mathrm i}
\def\Dom{\mathop{\mathrm {Dom}}}
\def\z#1{\ensuremath{{{#1}}^{\mathbb Z}}}
\def\step{\pv{step}}
\def\stat{\pv{stat}}
\def\rstab{\pv{rstab}}
\def\lstab{\pv{lstab}}
\let\cal=\mathcal
\renewcommand*\l@subsection{\@tocline{1}{0pt}{2.5pc}{5pc}{}}
\title{The linear nature of pseudowords}
\author{J. Almeida}
\address{CMUP, Departamento de Matem\'atica,
  Faculdade de Ci\^encias, Universidade do Porto, 
  Rua do Campo Alegre 687, 4169-007 Porto, Portugal.}
\email{jalmeida@fc.up.pt}
\author{A. Costa}
\address{CMUC, Department of Mathematics, University of Coimbra,
  Apartado 3008, EC Santa Cruz, 3001-501 Coimbra, Portugal.}
\email{amgc@mat.uc.pt}
\author{J. C. Costa}
\address{CMAT, Dep. Matem\'atica e Aplica\c c\~oes, Universidade do Minho, Campus de Gualtar, 4710-057 Braga, Portugal.}
\email{jcosta@math.uminho.pt}
\author{M. Zeitoun}
\address{Univ. Bordeaux, LaBRI, UMR 5800, F-33400 Talence, France.}
\email{mz@labri.fr}
\subjclass[2010]{Primary 20M07, 20M05. Secondary 06A05, 68Q45, 37B10}
\keywords{Relatively free profinite semigroup, aperiodic semigroup,
equidivisible semigroup, labeled linear order, pseudoword, pseudovariety.}
\begin{document}

\begin{abstract}
  Given a pseudoword over suitable pseudovarieties, we associate to it
  a labeled linear order determined by the factorizations of the
  pseudoword. We show that, in the case of the pseudovariety of
  aperiodic finite semigroups, the pseudoword can be recovered from
  the labeled linear order.
\end{abstract}

\maketitle

 \tableofcontents

\section{Introduction}
\label{sec:introduction}

Since the publication of Eilenberg's textbook~\cite{Eilenberg:1976},
a large body of finite semigroup theory is in fact the theory of
pseudovarieties of semigroups. Besides its own mathematical interest,
it draws motivation from the connections with computer science through
Eilenberg's correspondence between pseudovarieties of semigroups and
varieties of regular languages. As pseudovarieties are classes of
finite semigroups, only in very special cases do they contain most
general members on a given finite set of generators, that is relatively free
semigroups, namely semigroups on $n$ generators in the pseudovariety
such that every other member of the pseudovariety on $n$ generators
is their homomorphic image. To obtain relatively free structures,
one needs to step away from finiteness into the more general framework
of profinite semigroups, and indeed such a tool has been shown to lead
to useful insights and has found many applications
\cite{Almeida:1994a, Almeida:2003cshort, Almeida&Volkov:2001a,
  Pin:2009S, Weil:2002,Rhodes&Steinberg:2009qt}.

As topological semigroups, relatively free profinite semigroups $S$ over
a finite alphabet $A$ are generated by $A$, which means that elements of $S$ are
arbitrarily well approximated by words in the letters of~$A$.
Thus, the elements of~$S$ may be
considered a sort of generalization of words on the alphabet~$A$, which
are sometimes called pseudowords.  Of course, $S$ may satisfy nontrivial
identities, which means that different words may represent the same
element of~$S$, although in the most interesting examples of
pseudovarieties, this is not the case.  Now, words on the alphabet $A$ may
be naturally viewed as $A$-labeled finite linear orders, a perspective
that underlies many fruitful connections with finite model theory
\cite{Thomas:1997}.
For some pseudovarieties, such as
\pv R, of all finite \Cl R-trivial semigroups, and \pv{DA}, of all finite
semigroups in which the idempotents are the only regular elements,
representations of the corresponding finitely generated
relatively free profinite semigroups
by labeled linear orders have been obtained and significantly applied
\cite{Almeida&Weil:1996b,Moura:2009a}. The purpose of this paper is to
investigate such a linear nature of pseudowords for pseudovarieties with
suitable properties. Our main motivation is to understand pseudowords over
the pseudovariety \pv A, of all finite aperiodic semigroups.

The key properties of the pseudovariety \pv A that play a role in this
paper are of a combinatorial nature: the corresponding variety of
languages is closed under concatenation and the cancelability of first
and last letters. The first of these properties entails a very useful
feature of the corresponding finitely generated relatively free profinite semigroups,
namely equidivisibility, which means that different factorizations of
the same pseudoword have a common refinement. This condition already
forces a linear quasi-order on the factorizations of a given
pseudoword, and this is the starting point for the whole paper. The
cancelability condition leads to special types of factorizations,
which we call step points, to which a letter is naturally associated.
The corresponding linear order has interesting order and topological
properties, such as being compact for the interval topology. The step
points are the isolated points and there are only countably many of
them. All other points are called stationary and, in contrast, there
may be uncountably many of them. Perhaps somewhat surprisingly, there
is no correlation between the number of stationary points and how low
pseudowords fall in the \Cl J-order.

Our main result is that the linear order of factorizations with
alphabet-labeled step points provides a faithful representation of
pseudowords over~\pv A. We also obtain a characterization of the
partially labeled linear orders that appear in this way, albeit in
terms of properties involving finite aperiodic semigroups. A natural
goal for future work consists in looking for a characterization of the
image of the representation which is independent of such semigroups,
as has been done in the case of the pseudovarieties \pv R and
\pv{DA}~\cite{Almeida&Weil:1996b,Moura:2009a}.

While this paper was being written, Gool and Steinberg developed a
different approach on the pseudowords over \pv A, applying Stone
duality and model theory to view them as elementary equivalence
classes of labeled linear orders~\cite{Gool&Steinberg:2016}. They
worked specially with saturated models. In our paper, the models that
appear in the image of the representation are not saturated in
general.

We also mention the articles~\cite{Huschenbett&Kufleitner:2014}
and~\cite{Kufleitner&Wachter:2016}, where labeled linear orders were
assigned only to a special class of pseudowords, the $\omega$-terms,
and were used to solve the word problem for $\omega$-terms in several
pseudovarieties, either for the first time, or with new proofs, as in
the case of \pv A, treated in \cite{Huschenbett&Kufleitner:2014}.

The paper is organized as follows. After a section of preliminaries,
Section~\ref{sec:class-exampl-relat-1} introduces the key notion of
equidivisible semigroup in the context of relatively free profinite
semigroups, with an emphasis on pseudovarieties closed under
concatenation. Several results of the paper apply to all such
pseudovarieties, but at a certain point our hypothesis restricts
to~\pv A. In the next four sections, we develop more on the tools and
the language necessary for the main results. In
Section~\ref{sec:cluster-words}, we give our faithful representation
of pseudowords over \pv A as labeled linear orders. The following
three sections relate to the proof of this representation (the first
two of them having independent interest). This is followed by a study
of the effect of the multiplication in the image of the
representation, and by a characterization of the image. The paper
closes with Section~\ref{sec:card-prot-l_2w} where, among other
things, it is shown that the ordered set of the real numbers can be
embedded in the ordered set of the stationary points of a pseudoword
over a finitely cancelable pseudovariety containing $\pv {LSl}$. This
is done via a connection with symbolic dynamics.

\section{Preliminaries}
\label{sec:preliminaries}

We assume some familiarity with pseudovarieties of semigroups and
relatively free profinite
semigroups~\cite{Almeida:2003cshort,Almeida:1994a,Rhodes&Steinberg:2009qt}.
For the reader's convenience, some notation and terminology is
presented here. The following is a list of some of the pseudovarieties
we will be working with:
\begin{itemize}
\item $\pv I$: all trivial semigroups;
\item $\pv S$: all finite semigroups;
\item $\pv A$: all finite aperiodic semigroups;
\item $\pv N$: all finite nilpotent semigroups;
\item $\pv D$: all finite semigroups in which the idempotents are
  right zeros;
\item $\pv {LSl}$: all finite local semilattices.
\end{itemize}
In the whole paper, $A$ denotes a finite alphabet.
Let $\pv V$ be a pseudovariety of semigroups. The free pro-$\pv V$ semigroup generated by $A$ is denoted $\Om AV$.
Its elements are \emph{pseudowords} over $\pv V$.
When $\pv V\neq\pv I$, as the associated  generating mapping $A\to\Om AV$ is
injective, one considers $A$ to be contained in $\Om AV$.
If $\varphi\colon A\to S$ is a generating mapping of a pro-$\pv V$
semigroup, then we denote by $\varphi_{\pv V}$ 
the unique continuous homomorphism $\Om AV\to S$ extending $\varphi$.

If $\pv V$ contains $\pv N$, then the subsemigroup of
$\Om AV$ generated by $A$ is isomorphic to $A^+$ and its elements are
the isolated points of $\Om AV$, in view of which
$A^+$ is considered to be contained in $\Om AV$, and the elements of $A^+$
and $\Om AV\setminus A^+$ are respectively called
the \emph{finite} and \emph{infinite} pseudowords over $\pv V$.

By a \emph{topological semigroup}, we mean a semigroup endowed with a
topology that makes the semigroup multiplication continuous. Unlike
some authors, we require that a compact space be Hausdorff. By a
\emph{compact semigroup}, we mean a compact topological semigroup.
See~\cite{Carruth&Hildebrant&Koch:1983}.

We denote by $S^I$ the monoid obtained from the semigroup~$S$ by
adjoining to $S$ an element denoted by $1$ which acts as the identity.
Every semigroup homomorphism $\varphi\colon S\to T$ is extended to
a semigroup homomorphism $S^I\to T^I$, also denoted $\varphi$, such that $\varphi(1)=1$.
If $S$ is a topological semigroup, then $S^I$
is viewed as a topological monoid whose topology is the sum
of the topological spaces $S$ and $\{1\}$, whence $1$ is an isolated
point of~$S^I$.

We use the standard notation for Green's relations and its
quasi-orders on a semigroup $S$.
Hence, $s\leq_{\Cl R}t$,
$s\leq_{\Cl L}t$
and $s\leq_{\Cl J}t$
respectively mean $s\in tS^I$, $s\in S^It$ and $s\in S^ItS^I$,
$\Cl R$, $\Cl L$, $\Cl J$ 
are the associated equivalence relations, $\mathcal
D=\mathcal R\vee\mathcal L$, and $\mathcal H=\mathcal R\cap\mathcal L$.

A semigroup $S$ has \emph{unambiguous $\leq_{\Cl L}$-order} if, for every
$x,y,z\in S$,  $x\leq_{\Cl L}y$ and $x\leq_{\Cl L}z$ implies
$y\leq_{\Cl L}z$ or $z\leq_{\Cl L}y$. One also has the dual notion of
\emph{unambiguous $\leq_{\Cl R}$-order}.  An \emph{unambiguous} semigroup
is a semigroup with unambiguous $\leq_{\Cl R}$-order and unambiguous
$\leq_{\Cl L}$-order.  The next proposition is an important tool to show
one of our main results.

\begin{Prop}\label{p:residual-B-semigroups}
  Let $A$ be a finite alphabet.
  Let $u,v\in\Om AA$. Then $u=v$
  if and only if $\varphi_{\pv A}(u)=\varphi_{\pv A}(v)$,
  for every mapping $\varphi$ from $A$
  onto an unambiguous finite aperiodic semigroup.
\end{Prop}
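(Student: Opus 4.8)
The plan is to establish this as a separability statement: the family of continuous homomorphisms onto unambiguous finite aperiodic semigroups must separate points of $\Om AA$. Since $\Om AA$ is residually in $\pv A$ (it is the free pro-$\pv A$ semigroup), we already know that if $u\neq v$ then there is \emph{some} finite aperiodic semigroup $S$ and some $\varphi\colon A\to S$ with $\varphi_{\pv A}(u)\neq\varphi_{\pv A}(v)$. The whole content is therefore to upgrade an arbitrary separating aperiodic semigroup to an \emph{unambiguous} one. So the first step is to fix $u\neq v$, choose a finite aperiodic $S$ and a continuous homomorphism $\psi=\varphi_{\pv A}\colon\Om AA\to S$ with $\psi(u)\neq\psi(v)$, and reduce to the problem: embed $S$ (or at least the relevant behaviour of $S$ with respect to $u,v$) into an unambiguous finite aperiodic semigroup $T$, in a way compatible with the generating map.

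The natural candidate for $T$ is a transition-type or Schützenberger-type construction. Concretely, I would consider the action of $S^I$ on itself and look at the semigroup of transformations it generates, or — more promisingly — realize $S$ inside a semigroup built from \emph{pairs of prefixes/suffixes}, tracking for an element how it multiplies on the left and on the right. The key observation to exploit is that unambiguity of $\leq_{\Cl R}$ (resp. $\leq_{\Cl L}$) is exactly the statement that the principal right (resp. left) ideals below a given element form a chain; this is automatic in free monoids and in any semigroup of ``words modulo a congruence that respects common refinements'', which is precisely where equidivisibility (discussed in the excerpt) enters. So I would try to factor $\psi$ through a semigroup of the form $A^+/\!\!\sim$ for a suitable finite-index congruence $\sim$ refining $\ker\psi$, chosen so that the quotient is equidivisible; equidivisible finite semigroups generated by $A$ are unambiguous, and they are aperiodic provided we are careful, because the free aperiodic-like structure has no nontrivial subgroups. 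One clean way to get such a congruence is to take the syntactic congruence of a well-chosen finite union of languages recognized by $S$, using that the variety corresponding to $\pv A$ is closed under concatenation (the combinatorial property highlighted in the introduction), so that the relevant prefix/suffix languages stay recognizable by aperiodic semigroups.

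The main obstacle — and where I would concentrate the real work — is guaranteeing simultaneously all three properties of $T$: (i) it is finite and aperiodic, (ii) it is unambiguous (both sides), and (iii) the composite $A\to T$ still separates $u$ and $v$. Properties (ii) and (iii) pull in opposite directions: making $T$ unambiguous typically means collapsing elements (imposing that incomparable $\leq_{\Cl R}$-classes get identified or reorganized), and one must check this collapse does not also identify $\psi(u)$ and $\psi(v)$. I expect the resolution to go through a careful choice of the languages/congruence so that the separation is witnessed by data that survives the unambiguity quotient — e.g., by arranging that $u$ and $v$ already differ in some ``linear'' invariant (a prefix reached, a letter read at some point) that the unambiguous quotient is designed to record. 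This is also the point where the earlier results of the paper on equidivisibility and on step points are likely to be invoked, so I would look to phrase the construction of $T$ directly in terms of those, rather than re-deriving everything from scratch.

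The remaining steps are then routine: check that $T$ as constructed lies in $\pv A$ (no nontrivial subgroups — follows since $T$ is a quotient of a syntactic semigroup of concatenation-closed aperiodic-recognizable languages), check $T$ is unambiguous by the ideal-chain criterion above, and assemble: $u\neq v \Rightarrow \exists\,\varphi\colon A\twoheadrightarrow T$ unambiguous aperiodic with $\varphi_{\pv A}(u)\neq\varphi_{\pv A}(v)$, which is the nontrivial direction; the converse direction is immediate since every such $T$ is in particular a finite aperiodic semigroup and $\Om AA$ is residually $\pv A$. If the surjectivity requirement on $\varphi$ causes trouble, one passes to the subsemigroup of $T$ generated by the image of $A$, which is again unambiguous (unambiguity is inherited by subsemigroups, as the defining condition is universally quantified) and aperiodic.
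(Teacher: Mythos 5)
You correctly identify the shape of the argument: the ``only if'' direction is immediate, and the ``if'' direction reduces to showing that any two distinct $u,v\in\Om AA$ can be separated by a continuous homomorphism onto an \emph{unambiguous} finite aperiodic semigroup; since $\Om AA$ is residually $\pv A$, the whole content is to upgrade an arbitrary finite aperiodic separator to an unambiguous one that still separates. This is exactly the reduction the paper makes.

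However, where the paper supplies a precise and classical tool, your proposal stops at the problem statement. The paper invokes the Birget--Rhodes expansion (the iterated Rhodes expansion) cut down to the generating set $A$: for any $A$-generated finite aperiodic $S$ this produces an $A$-generated finite aperiodic \emph{unambiguous} semigroup together with an onto homomorphism to $S$ compatible with the generators, so a separating pair automatically lifts. Your text gestures at ``a transition-type or Sch\"utzenberger-type construction,'' at tracking prefix/suffix behaviour, and at syntactic congruences of concatenation-closed languages, but none of these is actually defined, none is shown to be unambiguous, none is shown to remain aperiodic, and none is shown to preserve the separation. Those are precisely the three things you yourself flag as ``the real work,'' and the proposal does not carry them out. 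What is missing is not a routine verification but the identification of the correct expansion; without it there is no proof.

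There is also a small but genuine error near the end. You claim that unambiguity passes to subsemigroups ``as the defining condition is universally quantified.'' It is not: unambiguity is stated in terms of the quasi-orders $\leq_{\Cl R}$ and $\leq_{\Cl L}$, whose defining formulas contain existential quantifiers (membership in $S^I t$ or $tS^I$), so the property is not a universal Horn condition and is not automatically inherited by subsemigroups. This does not matter if one uses the cut-to-generators Birget--Rhodes expansion directly, since it is $A$-generated by construction; but as written your fallback step is unjustified.
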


\begin{proof}
  The ``only if'' direction of the statement is immediate.
  To establish the ``if'' direction, it suffices to show that it is
  the inverse limit of $A$-generated unambiguous finite aperiodic
  semigroups.

  It is well known that every $A$-generated finite aperiodic semigroup
  is a homomorphic image of an unambiguous $A$-generated finite
  aperiodic semigroup, namely its \emph{Birget-Rhodes expansion} (also
  called \emph{iterated Rhodes expansion}), cut down to the set of
  generators~$A$~\cite{Birget:1984,Grillet:1995}. Since pairs of distinct
  points of~\Om AA may be separated by continuous homomorphisms into
  finite aperiodic semigroups, the result follows.  
\end{proof}

\section{Equidivisibility  and pseudovarieties closed under concatenation}
\label{sec:class-exampl-relat-1}

A language $L\subseteq A^+$ is said to be \emph{\pv V-recognizable} if
there is a homomorphism $\varphi:A^+\to S$ into a semigroup $S$
from~\pv V such that $L=\varphi^{-1}\varphi(L)$. We say that a
pseudovariety $\pv V$ of semigroups is \emph{closed under
  concatenation} if, for every finite alphabet~$A$, whenever $L$ and
$K$ are $\pv V$-recognizable languages of $A^+$,
the set $LK$ is
also a $\pv V$-recognizable language of $A^+$.

\begin{Thm}\label{t:factorizing}
  The following conditions are equivalent for a pseudovariety~\pv V of
  semigroups.
  \begin{enumerate}
  \item\label{item:factorizing-1} \pv V is closed under concatenation;
  \item\label{item:factorizing-2} $\pv A\malcev\pv V=\pv V$;
  \item\label{item:factorizing-3}
    \pv V contains \pv N and the multiplication in \Om AV is an
    open mapping for every finite alphabet~$A$.
  \end{enumerate}
\end{Thm}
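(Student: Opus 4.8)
The plan is to prove the cycle of implications $\eqref{item:factorizing-1}\Rightarrow\eqref{item:factorizing-2}\Rightarrow\eqref{item:factorizing-3}\Rightarrow\eqref{item:factorizing-1}$, relying on the standard dictionary between pseudovarieties closed under operations on recognizable languages and pseudovarieties closed under the corresponding algebraic operators (here the Mal'cev product $\pv A\malcev(-)$ for concatenation), together with basic facts about relatively free profinite semigroups.

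For $\eqref{item:factorizing-1}\Rightarrow\eqref{item:factorizing-2}$, I would first note that closure under concatenation forces $\pv V$ to contain $\pv N$: if $L,K$ are $\pv V$-recognizable then so is $LK$, and iterating from $L=K=A^+$ together with the fact that $\{w\}$ is $\pv V$-recognizable for each word $w$ only when $\pv V\supseteq\pv N$ — more carefully, one observes that a pseudovariety closed under concatenation and containing a nontrivial semigroup must already be able to recognize $aA^*b$-type languages, which pins down the nilpotent part; this is the routine ``$\pv V\supseteq\pv N$'' bookkeeping. Then I would invoke the algebraic characterization of concatenation closure: a variety of languages is closed under product (marked or unmarked — the two coincide in the presence of $\pv N$) if and only if the corresponding pseudovariety $\pv V$ satisfies $\pv A\malcev\pv V\subseteq\pv V$, which is a classical result going back to Straubing and Pin on the ``concatenation product'' hierarchy; the reverse inclusion $\pv V\subseteq\pv A\malcev\pv V$ is automatic. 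The direction $\eqref{item:factorizing-2}\Rightarrow\eqref{item:factorizing-1}$ is the same equivalence read backwards, so these two implications together are essentially a citation to the theory of the polynomial closure / Mal'cev product.

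The analytic heart is $\eqref{item:factorizing-2}\Leftrightarrow\eqref{item:factorizing-3}$, connecting the Mal'cev product with $\pv A$ to openness of multiplication. For $\eqref{item:factorizing-2}\Rightarrow\eqref{item:factorizing-3}$: assuming $\pv A\malcev\pv V=\pv V$, and having already that $\pv V\supseteq\pv N$, I would show the multiplication map $m\colon\Om AV\times\Om AV\to\Om AV$ is open. It suffices to show that the image of a basic clopen set is open, equivalently (using that $\Om AV$ is an inverse limit of $A$-generated members of $\pv V$ and that $m$ is induced levelwise) that for every continuous homomorphism $\varphi\colon\Om AV\to S$ with $S\in\pv V$ and every pair $(s,t)\in S\times S$, the set $\varphi(\varphi^{-1}(s)\cdot\varphi^{-1}(t))$ — which contains $st$ — can be ``refined'' to coincide with a single congruence class at a finer level inside $\pv V$. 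The point is that the obstruction to $m$ being open is exactly the failure of factorizations to lift: one builds from $S$ the two-sided analogue of a Rhodes-type expansion that records, for each element, a chosen factorization point, and the resulting semigroup lies in $\pv A\malcev\pv V$ (the ``top'' projection onto $S$ has aperiodic — indeed $\Cl R$- and $\Cl L$-trivial — fibres), hence in $\pv V$ by hypothesis; passing to the limit over all such expansions makes the image of every basic open set open. For the converse $\eqref{item:factorizing-3}\Rightarrow\eqref{item:factorizing-2}$: given $T\in\pv A\malcev\pv V$, witnessed by a homomorphism $\psi\colon T\to S$ with $S\in\pv V$ and aperiodic fibres, and given an $A$-generating map, I would factor the induced map $\Om AV\to T$ through the algebra structure and use openness of multiplication in $\Om AV$ to show that the relation defining $T$ is already $\pv V$-recognizable — concretely, openness lets one saturate preimages under the $\pv V$-quotient by products, which is precisely what is needed to realize the aperiodic extension inside a member of $\pv V$.

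I expect the main obstacle to be the direction $\eqref{item:factorizing-2}\Rightarrow\eqref{item:factorizing-3}$, specifically producing, for a basic clopen neighbourhood $N$ of a product $uv$, a factorization $u'v'$ of a pseudoword in $N$ with $u'$ (resp.\ $v'$) in a prescribed neighbourhood of $u$ (resp.\ $v$): this requires controlling factorizations of pseudowords uniformly across the whole inverse system, and the right tool is a carefully chosen two-sided expansion of each finite quotient whose ``syntactic'' fibre over the original semigroup is aperiodic — verifying aperiodicity (rather than mere finiteness) of that fibre, so that the expansion stays inside $\pv A\malcev\pv V=\pv V$, is the delicate computational step. Everything else is either a citation (the concatenation $\Leftrightarrow$ Mal'cev-product equivalence) or standard profinite manipulation (inverse limits, basic clopen sets, continuity of $m$).
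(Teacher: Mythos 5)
The paper does not prove this theorem; its ``proof'' is entirely by citation. The equivalence $(a)\Leftrightarrow(c)$ is imported wholesale from Lemma~2.3 of Almeida--Costa (2007), and $(a)\Leftrightarrow(b)$ is imported from Chaubard, Pin and Straubing (who prove a more general $\Cl C$-variety statement that specializes to the semigroup version of Straubing's theorem). Your proposal keeps the citation for $(a)\Leftrightarrow(b)$ (and your $\pv N$-bookkeeping, using $A^+A^+$ and iterates, is the right way to force $\pv V\supseteq\pv N$), but you replace the paper's citation for $(a)\Leftrightarrow(c)$ with an attempted direct proof of $(b)\Leftrightarrow(c)$ via expansions. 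That is a genuinely different decomposition, and as it stands the two hand-proven directions both have gaps.

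For $(b)\Rightarrow(c)$: the expansion idea is not unreasonable (the iterated Rhodes expansion cut to generators is finite, its projection has trivial $\Cl R$- and $\Cl L$-classes in the fibres, and the paper itself uses exactly this expansion in Proposition~\ref{p:residual-B-semigroups}), but the sketch never connects the expansion to the actual criterion for openness that is available here, namely Lemma~\ref{l:a-condition-on-metric-semigroups}: given $w_n\to uv$ in $\Om AV$, one must produce factorizations $w_n=u_nv_n$ with $u_n\to u$ and $v_n\to v$. Knowing that each finite quotient $S$ of $\Om AV$ has a finer quotient in $\pv V$ recording ``a chosen factorization point'' does not yet yield such sequences; the chosen factorization points need not be compatible across the inverse system, and you would still need to argue that a coherent choice exists and that it converges to the prescribed factorization $(u,v)$ of the limit. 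None of that is in the sketch. For $(c)\Rightarrow(b)$: there is a circularity. You take $T\in\pv A\malcev\pv V$ and speak of factoring ``the induced map $\Om AV\to T$'', but a continuous onto homomorphism $\Om AV\to T$ exists precisely when $T\in\pv V$, which is what you are trying to prove; and ``openness lets one saturate preimages under the $\pv V$-quotient by products'' is not reduced to any concrete verifiable step.

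Given that you already cite $(a)\Leftrightarrow(b)$, the cleanest way to close the cycle is the route the paper's cited Lemma~2.3 actually takes, which is purely topological/language-theoretic and avoids both the expansion machinery and the circularity: for $(a)\Rightarrow(c)$, if $L,K\subseteq A^+$ are $\pv V$-recognizable then their closures in $\Om AV$ are clopen, and (using $\pv V\supseteq\pv N$ so $A^+$ sits inside $\Om AV$ as a dense set of isolated points) one has $\overline{L}\cdot\overline{K}=\overline{LK}$, which is clopen because $LK$ is $\pv V$-recognizable; since such products of clopens form a basis, multiplication is open. Conversely, for $(c)\Rightarrow(a)$, $\overline{L}\cdot\overline{K}$ is open by hypothesis and compact as a continuous image of $\overline{L}\times\overline{K}$, hence clopen; a clopen subset of $\Om AV$ meets $A^+$ in a $\pv V$-recognizable language, and (again using $\pv N$) that intersection is exactly $LK$.
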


The
equivalence~\ref{item:factorizing-1}$\Leftrightarrow$\ref{item:factorizing-3}
in Theorem~\ref{t:factorizing} is
from~\cite[Lemma~2.3]{Almeida&ACosta:2007a}. The difficult part of the
theorem is the equivalence
\ref{item:factorizing-1}$\Leftrightarrow$\ref{item:factorizing-2},
which is a particular case of a more general result established by
Chaubard, Pin and Straubing \cite{Chaubard&Pin&Straubing:2006}. The
latter, in turn, extends an earlier result of
Straubing~\cite{Straubing:1979a}, establishing that a nontrivial
pseudovariety $\pv V$ of \emph{monoids} satisfies $\pv A\malcev\pv
V=\pv V$ if and only if, for every finite alphabet~$A$, whenever $L$
and $K$ are $\pv V$-recognizable languages of $A^\ast$, the set $LK$
is also a $\pv V$-recognizable language of $A^\ast$. In the case of
semigroups, the absence in Theorem~\ref{t:factorizing} of reference to
the pseudovariety $\pv I$ of trivial semigroups is not surprising if
we take into account that $A^+$ is $\pv I$-recognizable but not $A^+
A^+$, where we view these languages as languages of $A^+$.
 
Sch\"utzenberger~\cite{Schutzenberger:1965} proved that a language
over a finite alphabet is \pv A-recognizable if and only if it is
star-free, in the sense that it can be obtained from finite languages
by using only finite Boolean operations and concatenation. In
particular, it follows that \pv A is closed under concatenation. As
important classes of examples of pseudovarieties closed under
concatenation that include $\pv A$, one has the complexity
pseudovarieties~$\pv C_n$ (cf.~\cite[Definition
4.3.10]{Rhodes&Steinberg:2009qt})
and every
pseudovariety $\overline{\pv H}$ formed by the finite semigroups whose
subgroups belong to the pseudovariety of groups $\pv H$.

Combined with Theorem~\ref{t:factorizing},
the next lemma, which will be quite useful in the sequel,
provides yet another characterization
of the pseudovarieties closed under concatenation.
A weaker version of the direct implication
was proved in~\cite[Lemma~2.5]{Almeida&ACosta:2007a}.

\begin{Lemma}\label{l:a-condition-on-metric-semigroups}
  Let $S$ be a topological semigroup whose
  topology is defined by a  metric.
  The following conditions
  are equivalent:
  \begin{enumerate}
  \item \label{item:a-condition-on-metric-semigroups-1}
    The multiplication in $S$ is an open mapping;
  \item \label{item:a-condition-on-metric-semigroups-2} For every
    $u,v\in S$, if $(w_n)_n$ is a sequence of elements of $S$
    converging to $uv$, then there are sequences $(u_n)_n$ and
    $(v_n)_n$ of elements of $S^I$ such that $w_n=u_nv_n$, $\lim
    u_n=u$, and $\lim v_n=v$.
  \end{enumerate}
\end{Lemma}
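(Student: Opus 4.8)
The statement is an equivalence between openness of the multiplication map $m\colon S\times S\to S$ and a sequential "lifting of factorizations" property. Since $S$ is metrizable, openness can be tested on sequences (a map between metric spaces is open iff, whenever $x_n\to m(a,b)$ in the image, one can find preimages $(a_n,b_n)\to(a,b)$), so conceptually the two conditions should be close to tautologically equivalent. The only real subtlety is bookkeeping with the adjoined identity: the multiplication we are opening is on $S$, not on $S^I$, yet condition~\eqref{item:a-condition-on-metric-semigroups-2} allows the factors $u_n,v_n$ to lie in $S^I$. So I must be careful about the cases where $u$ or $v$ is $1$, i.e.\ where $uv$ collapses.

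\textbf{Proof of \eqref{item:a-condition-on-metric-semigroups-1}$\Rightarrow$\eqref{item:a-condition-on-metric-semigroups-2}.} Fix $u,v\in S$ and a sequence $w_n\to uv$. The product $uv$ lies in the image $SS=m(S\times S)$, which is open by hypothesis. Take a metric $d$ for $S$. For each $k\geq 1$, the image under $m$ of the open ball $B(u,1/k)\times B(v,1/k)$ is an open neighbourhood of $uv$, hence contains all $w_n$ for $n$ large; this gives, for each $n$, elements $u_n\in B(u,1/k_n)$, $v_n\in B(v,1/k_n)$ with $w_n=u_nv_n$ and $k_n\to\infty$, so $u_n\to u$ and $v_n\to v$. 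Since these $u_n,v_n$ can be taken in $S$, they are in particular in $S^I$, and we are done. (Here I do not even need $S^I$ for this direction; the $S^I$ only matters for the converse.)

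\textbf{Proof of \eqref{item:a-condition-on-metric-semigroups-2}$\Rightarrow$\eqref{item:a-condition-on-metric-semigroups-1}.} It suffices to show: for every $(a,b)\in S\times S$ and every sequence $w_n\to ab$, there are $(a_n,b_n)\to(a,b)$ in $S\times S$ with $w_n=m(a_n,b_n)$; this is the sequential characterization of openness for $m$ between metric spaces (openness at a point $p$ means every neighbourhood of $m(p)$ is covered, equivalently every sequence into $m(p)$ lifts to a sequence into $p$; and a map is open iff it is open at every point). Apply \eqref{item:a-condition-on-metric-semigroups-2} with $u=a$, $v=b$ to get sequences $u_n,v_n\in S^I$ with $w_n=u_nv_n$, $u_n\to a$, $v_n\to b$. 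Since $a,b\in S$ and $1$ is an isolated point of $S^I$, convergence $u_n\to a\in S$ forces $u_n\in S$ for all large $n$, and likewise $v_n\in S$ for all large $n$; discarding finitely many terms we may assume $u_n,v_n\in S$. Setting $a_n=u_n$, $b_n=v_n$ finishes the argument.

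\textbf{Where the work is.} Honestly, there is no deep obstacle: the whole content is the standard equivalence between topological openness and sequential lifting in the metric setting, plus the observation that $1$ being isolated in $S^I$ lets one transfer freely between "factors in $S$" and "factors in $S^I$" as long as the target factor lies in $S$. The one place to be attentive is the direction \eqref{item:a-condition-on-metric-semigroups-2}$\Rightarrow$\eqref{item:a-condition-on-metric-semigroups-1}: one must remember that openness of $m$ is a statement about the map $S\times S\to S$ (not about neighbourhoods of arbitrary points of the codomain — only points in the image), and invoke the correct sequential criterion. If one instead wanted to avoid sequences entirely and argue directly with open balls, the same proof works verbatim using that the balls $B(u,\varepsilon)\times B(v,\varepsilon)$ form a neighbourhood base at $(u,v)$; I would present the sequential version since condition~\eqref{item:a-condition-on-metric-semigroups-2} is itself phrased sequentially and the metric hypothesis is exactly what makes the two formulations interchangeable.
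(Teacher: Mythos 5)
Your overall route is the same as the paper's: one ball-pushing argument in each direction, with the metric used to pass between open neighbourhoods and sequences. But there is a genuine (if small) gap in your forward direction, and it is precisely what the $S^I$ in the statement is for.

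In $(a)\Rightarrow(b)$ you note that $B(u,1/k)B(v,1/k)$ absorbs the tail of $(w_n)_n$, and then assert "this gives, for each $n$, elements $u_n\in B(u,1/k_n)$, $v_n\in B(v,1/k_n)$ with $w_n=u_nv_n$." That only works for $n$ large; there is no reason the initial finitely many $w_n$ lie in any set of the form $B(u,1/k)B(v,1/k)$, nor even in $SS$ at all, so you cannot produce a factorization with one factor near $u$ and the other near $v$ for those indices. Yet condition~\ref{item:a-condition-on-metric-semigroups-2} demands $w_n=u_nv_n$ for \emph{all} $n$. The paper handles exactly this by setting $u_n=1$ and $v_n=w_n$ for the small $n$ — which requires the factors to live in $S^I$ rather than $S$. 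So your parenthetical remark "Here I do not even need $S^I$ for this direction; the $S^I$ only matters for the converse" has it backwards: $S^I$ is needed here, and as you correctly observe, in the converse direction the limits being in $S$ together with $1$ being isolated in $S^I$ forces the factors back into $S$ for large $n$, so $S^I$ costs nothing there. With the $u_n=1$, $v_n=w_n$ patch for the initial segment, your proof closes; the $(b)\Rightarrow(a)$ direction is fine as written and matches the paper's argument (the paper phrases it as showing each $B(s,\varepsilon)B(t,\varepsilon)$ is sequentially open, which is the same content as your sequential criterion for openness of the map).
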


\begin{proof}
  Consider a metric $d$ inducing the topology of $S$.
  We denote by $B(t,\varepsilon)$ the open
  ball in $S$ with center $t$ and radius $\varepsilon$.

  \ref{item:a-condition-on-metric-semigroups-1}$\Rightarrow$\ref{item:a-condition-on-metric-semigroups-2}:
  Let $k$ be a positive integer.
  Since the  multiplication is an open mapping, the
  set $B\bigl(u,\frac{1}{k}\bigr)
  B\bigl(v,\frac{1}{k}\bigr)$ is an open neighbourhood 
  of $uv$. Hence there is $p_k$ such that
  $w_n\in B\bigl(u,\frac{1}{k}\bigr)
    B\bigl(v,\frac{1}{k}\bigr)$
    if $n\geq p_k$.
  Let $n_k$ be the strictly increasing sequence recursively defined by
  $n_1=p_1$ and  $n_k=\max\{n_{k-1}+1,p_k\}$ whenever $k>1$.
  Then there are sequences $(u_n)_n$ and $(v_n)_n$ satisfying the following
  conditions: if $n_k\leq n<n_{k+1}$ then
  $u_n\in B\bigl(u,\frac{1}{k}\bigr)$,
  $v_n\in B\bigl(v,\frac{1}{k}\bigr)$, and $w_n=u_nv_n$;
  and if $n<n_1$ then $u_n=1$ and $v_n=w_n$.
  The pair of sequences $(u_n)_n$ and $(v_n)_n$ satisfies
  condition~\ref{item:a-condition-on-metric-semigroups-2}.

  \ref{item:a-condition-on-metric-semigroups-2}$\Rightarrow$\ref{item:a-condition-on-metric-semigroups-1}:
  We want to prove that $B(s,\varepsilon)B(t,\varepsilon)$ is open,
  for every $s,t\in S$ and $\varepsilon>0$. Let $(w_n)_n$ be a
  sequence of elements of $S$ converging to an element of
  $B(s,\varepsilon)B(t,\varepsilon)$. Let $u\in B(s,\varepsilon)$ and
  $v\in B(t,\varepsilon)$ be such that $\lim w_n=uv$. Take sequences
  $(u_n)_n$ and $(v_n)_n$ as in the statement of
  condition~\ref{item:a-condition-on-metric-semigroups-2}. There is
  $N$ such that $d(u_n,u)<\varepsilon-d(u,s)$ for all $n\geq N$. Then
  $d(u_n,s)\leq d(u_n,u)+d(u,s)<\varepsilon$ for all $n\geq N$.
  Similarly, $d(v_n,t)<\varepsilon$ for all sufficiently large $n$.
  Therefore, since $w_n=u_nv_n$, we have $w_n\in
  B(s,\varepsilon)B(t,\varepsilon)$ for all sufficiently large $n$,
  which proves that $B(s,\varepsilon)B(t,\varepsilon)$ is open.
\end{proof}

A semigroup $S$ is said to be
\emph{equidivisible}~\cite{McKnight&Storey:1969,Lallement:1979} if,
for every equality of the form $xy=uv$, with $x,y,u,v\in S$, there
exists $t\in S^I$ such that, either $xt=u$ and $y=tv$, or $x=ut$ and
$ty=v$. Clearly, free semigroups and groups are equidivisible.
Moreover, all completely simple semigroups are equidivisible.
Actually, a semigroup $S$ is completely simple if and only if, for
every $x,y,u,v\in S$ such that $xy=uv$, there are $t,s\in S$ such that
$xt=u$, $y=tv$, $x=us$ and $sy=v$~\cite{McKnight&Storey:1969}. Note
that every equidivisible semigroup is unambiguous. The converse is not
true: for instance, free bands are unambiguous, which follows easily
from the solution of the word problem for free bands (see, for
instance~\cite[Section~5.4]{Almeida:1994a}) but not equidivisible for
more than one free generator since, if $a,b$ are two distinct free
generators in a free band then, for $x=a, y=b, u=v=ab$, we have
$xy=uv$, yet $y>_{\mathcal L}v$ and $x>_{\mathcal R}u$.
More generally, it is shown in~\cite[Section
15]{Rhodes&Steinberg:2002} that, if $\pv V$ is a pseudovariety of
semigroups such that $\pv V=\pv {RB}\malcev \pv V$, where $\pv {RB}$
denotes the pseudovariety of finite rectangular bands, then $\Om AV$
is unambiguous, for every finite alphabet~$A$.

Let us say that a pseudovariety of semigroups $\pv V$ is
\emph{equidivisible} if $\Om AV$ is equidivisible, for every finite
alphabet $A$. The following result was established
by the first two authors~\cite{Almeida&ACosta:2016a}, where $\malcev$
denotes the Mal'cev product, \pv{LI} the pseudovariety of all 
finite locally trivial semigroups, and \pv{CS} the pseudovariety of
all finite completely simple semigroups.

\begin{Thm}\label{t:characterization-of-equidivisible-pseudovarieties}
  A pseudovariety of semigroups $\pv V$ is equidivisible if and only
  if\/  $\pv V=\pv{LI}\malcev\pv V$ or $\pv V\subseteq\pv{CS}$.
\end{Thm}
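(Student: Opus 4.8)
The plan is to prove the two implications separately, and in each to distinguish whether $\pv V\subseteq\pv{CS}$ or $\pv V=\pv{LI}\malcev\pv V$; note these cases are mutually exclusive, since $\pv{LI}\malcev\pv V=\pv V$ forces $\pv N=\pv N\malcev\pv I\subseteq\pv{LI}\malcev\pv V=\pv V$ (using $\pv N\subseteq\pv{LI}$ and $\pv I\subseteq\pv V$), whereas $\pv N\not\subseteq\pv{CS}$. The direction ``$\Leftarrow$'' when $\pv V\subseteq\pv{CS}$ is easy: I would first check that each $\Om A{V}$ is itself completely simple. It is compact, hence has idempotents; and for every $s\in\Om A{V}$ the set $(\Om A{V})^I\,s\,(\Om A{V})^I$ is closed, while its image under every continuous homomorphism $\varphi$ of $\Om A{V}$ onto a finite semigroup equals the whole of $\varphi(\Om A{V})$, the latter lying in $\pv V\subseteq\pv{CS}$ and hence being simple. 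As a closed subset of $\Om A{V}$ is an intersection of clopen sets, each saturated by some such $\varphi$, a closed set whose image is everything under every $\varphi$ is everything; thus $(\Om A{V})^I\,s\,(\Om A{V})^I=\Om A{V}$ for all $s$, so $\Om A{V}$ has no proper ideal and is completely simple. Equidivisibility then follows at once from the characterization of complete simplicity recalled just before the statement.

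For the direction ``$\Leftarrow$'' when $\pv V=\pv{LI}\malcev\pv V$, the inclusion above gives $\pv N\subseteq\pv V$, so that $A^+$ is dense in $\Om A{V}$, and from $\pv{RB}\subseteq\pv{LI}$ one gets $\pv{RB}\malcev\pv V=\pv V$, hence $\Om A{V}$ is unambiguous. Given $x,y,u,v\in\Om A{V}$ with $xy=uv$, I would produce a witness $t\in(\Om A{V})^I$ for equidivisibility by a compactness argument over the inverse system of continuous homomorphisms $\varphi\colon\Om A{V}\to T$ onto finite semigroups $T\in\pv V$: for each $\varphi$ the set $C_\varphi$ of those $t'$ with ``$\varphi(xt')=\varphi(u)$ and $\varphi(y)=\varphi(t'v)$'' or ``$\varphi(x)=\varphi(ut')$ and $\varphi(t'y)=\varphi(v)$'' is closed, and it is nonempty since $T$ is a quotient of the free semigroup $A^+$, where equidivisibility is obvious; it then suffices that the $C_\varphi$ have the finite intersection property. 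Establishing this coherence across the finite quotients is the step I expect to be the main obstacle. One cannot merely invoke Lemma~\ref{l:a-condition-on-metric-semigroups}, since $\pv{LI}\malcev\pv V=\pv V$ does not force the multiplication of $\Om A{V}$ to be open --- were it open, Theorem~\ref{t:factorizing} would yield $\pv A\malcev\pv V=\pv V$, which fails for instance for $\pv V=\pv{LSl}$. Rather, one must use that $\pv{LI}$ consists exactly of the finite semigroups in which $s^\omega t s^\omega=s^\omega$ for all $s,t$: this is what lets a refinement valid for $\varphi$ be transported, by absorbing the discrepancy into a locally trivial part accounted for by a single element of $(\Om A{V})^I$, to a refinement valid for any prescribed refinement of $\varphi$ along a cofinal chain.

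For the direction ``$\Rightarrow$'', assume $\Om A{V}$ is equidivisible for every finite $A$ and $\pv V\not\subseteq\pv{CS}$. Put $\pv W=\pv{LI}\malcev\pv V$; since $\pv{LI}\malcev\pv{LI}=\pv{LI}$ we have $\pv{LI}\malcev\pv W=\pv W$, so by the implication just sketched $\Om A{W}$ is equidivisible, and it maps continuously onto $\Om A{V}$ (as $\pv V\subseteq\pv W$); I must show this map is injective, which will give $\pv W=\pv V$. Because $\pv V\not\subseteq\pv{CS}$, some $\Om A{V}$ has a proper closed minimal ideal, i.e. is not completely simple; being equidivisible it is unambiguous, and here I would invoke the structure theory of equidivisible semigroups~\cite{McKnight&Storey:1969}, by which an equidivisible non-completely-simple semigroup has its factorizations linearly ordered, with a well-defined notion of atomic (``prime'') factor and with the idempotents confined to the minimal ideal in a controlled way. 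From this I would first deduce $\pv N\subseteq\pv V$ --- hence $\pv N\subseteq\pv W$, so that the complement of the alphabet is a proper closed ideal and the structure theory applies to both $\Om A{V}$ and $\Om A{W}$, for every $A$ --- and then, for $w,w'\in\Om A{W}$ with equal image in $\Om A{V}$, use equidivisibility of both to match the factorizations of $w$ and $w'$ atom by atom, the atomic/linear structure forcing the separating $\pv{LI}$-perturbation between them to be trivial, whence $w=w'$. The delicate point is pinning down exactly which structural feature of equidivisible, non-completely-simple profinite semigroups drives this rigidity; I would model the argument on the proofs of closure under concatenation of Straubing and of Chaubard, Pin and Straubing behind Theorem~\ref{t:factorizing}, weakening $\pv A$ to $\pv{LI}$ throughout.
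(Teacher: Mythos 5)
The paper does not prove this theorem; it is quoted from \cite{Almeida&ACosta:2016a}, so there is no in-text argument to compare your proposal against. Judged on its own merits, the $\pv V\subseteq\pv{CS}$ half of your backward implication is fine: the separation argument showing that $\Om AV$ is simple, hence (being compact) completely simple, followed by the recalled characterization of completely simple semigroups, is a correct proof of that case.

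The other two pieces have genuine gaps, and in one place a misdiagnosis. In the $\pv V=\pv{LI}\malcev\pv V$ case of the backward implication, the finite-intersection-property ``coherence'' you flag as the main obstacle is actually easy: a common refinement $\psi$ of any $\varphi_1,\dots,\varphi_n$ exists within the inverse system, and $C_\psi\subseteq C_{\varphi_i}$ for each $i$ (push either disjunct forward along the connecting map), so the FIP reduces to nonemptiness of each $C_\varphi$. But nonemptiness is exactly where the argument collapses: producing $\tau\in T^I$ with $\varphi(x)\tau=\varphi(u)$ and $\varphi(y)=\tau\varphi(v)$, or the symmetric pair, and lifting $\tau$ to $(\Om AV)^I$, is precisely equidivisibility of the finite semigroup $T$ itself, which fails for general $T\in\pv V$ (free bands are aperiodic and not equidivisible, yet $\pv{LI}\malcev\pv A=\pv A$); equidivisibility does not pass to quotients, so ``$T$ is a quotient of $A^+$'' buys nothing. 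The hypothesis $\pv V=\pv{LI}\malcev\pv V$ must enter precisely to manufacture the witness $\tau$, and the proposal only gestures at $s^\omega t s^\omega=s^\omega$ without supplying the mechanism. For the forward implication you stop at a plan: the projection $\Om AW\to\Om AV$ is to be shown injective by ``matching factorizations atom by atom'', but you say explicitly that you do not know which structural feature of equidivisible, non-completely-simple profinite semigroups yields the needed rigidity, and none is offered. Both decisive steps are therefore missing.
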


In particular, every
pseudovariety closed under concatenation is equidivisible. Many of our
results below are formulated not in terms of pseudovarieties but more
abstractly for free profinite semigroups with suitable properties, which
are satisfied for free profinite semigroups over pseudovarieties that are
closed under concatenation or, sometimes, more generally,
equidivisible.

\section{The quasi-order of 2-factorizations}
\label{sec:linear-orders}

By a \emph{quasi-order} on a set we mean a reflexive transitive
relation. In case the relation is additionally anti-symmetric, the
quasi-order is called a \emph{partial order}. A quasi-ordering
$(X,\leq)$, in the sense of a set $X$ with a quasi-order $\le$, is
said to be \emph{total}, or \emph{linear} if $x\leq y$ or $y\leq x$,
for every $x,y\in X$.

\subsection{Definition and properties}
\label{sec:defin-prop}

Let $S$ be a semigroup.
A \emph{$2$-factorization} of $s\in S$ is a pair $(u,v)$ of elements of $S^I$
such that $s=uv$.
We denote the set of $2$-factorizations of $s$ by $\mathfrak{F}(s)$.
We introduce in $\mathfrak{F}(s)$
a relation $\leq$ defined by $(u,v)\leq(u',v')$ if there exists $t\in  S^I$ such
that $ut=u'$ and $v=tv'$, in which case we say that $t$
is a \emph{transition} from $(u,v)$ to $(u',v')$.
The relation $\leq$ is a quasi-order. Concerning transitivity,
we have more precisely that if $t$ is a transition from
$(u,v)$ to $(u',v')$
and $t'$ is a transition from $(u',v')$ to $(u'',v'')$,
then $tt'$ is a transition from $(u,v)$ to $(u'',v'')$.

Given a quasi-order $\leq$ on a set $P$, we denote by $\sim$
the equivalence relation on~$P$ induced by $\leq$
and we write $p<q$ if $p\leq q$ but not $p\sim q$.
Denote
by $\prec$ the relation on $P$ such that $p\prec q$
if and only if
$q$ is a \emph{successor} of $p$ (equivalently, $p$ is
a \emph{predecessor} of $q$), that is,
$p\prec q$ if and only if $p<q$
and
$p\leq r\leq q \Rightarrow (r\sim p\vee r\sim q)$.

For every element $s$ of a semigroup $S$, the quotient set
$\mathfrak{F}(s)/{\sim}$ is denoted $\mathfrak{L}(s)$. We denote the
quotient mapping $\mathfrak{F}(s)\to \mathfrak{L}(s)$ by~$\chi$. The
partial order on $\mathfrak{L}(s)$ induced by the quasi-order $\leq$ on
$\mathfrak{F}(s)$ is also denoted by $\leq$. For
$p,q\in\mathfrak{L}(s)$, we also write $p\prec q$ if $p$ is a
predecessor of~$q$. Sometimes we will also consider the unions
$\mathfrak {F}(S)=\bigcup_{s\in S}\mathfrak {F}(s)$ and
$\mathfrak {L}(S)=\bigcup_{s\in S}\mathfrak {L}(s)$.

The following result is immediate.

\begin{Lemma}
  \label{l:linear-quasi-ordering}
  A semigroup $S$ is equidivisible if and only if $\mathfrak{L}(s)$ is
  linearly ordered for every $s\in S$.
\end{Lemma}

The previous lemma
is the departing point
motivating this paper. For a good supporting reference on the theory
of linear orderings, see~\cite{Rosenstein:1982}.

We proceed to extract
from topological assumptions on $S$
some consequences on the quasi-order of
$2$-factorizations.
In what follows, $\mathfrak {F}(s)$ is viewed as a topological
subspace of $S^I\times S^I$.

\begin{Lemma}\label{l:if-S-is-compact-then-the-quasi-order-is-closed}
  If $S$ is a compact semigroup, then, for every
  $s\in S$, the quasi-order $\leq $ on $\mathfrak {F}(s)$
  is a closed subset of $\mathfrak {F}(s)\times \mathfrak {F}(s)$.
\end{Lemma}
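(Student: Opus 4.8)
The plan is to show that the complement of $\leq$ in $\mathfrak{F}(s)\times\mathfrak{F}(s)$ is open, equivalently that $\leq$ is closed, by exploiting compactness to pass from convergent sequences (or nets) of $2$-factorizations to a limit that still witnesses the relation. Concretely, suppose $\bigl((u_n,v_n)\bigr)_n$ and $\bigl((u_n',v_n')\bigr)_n$ are sequences in $\mathfrak{F}(s)$ with $(u_n,v_n)\leq(u_n',v_n')$ for all $n$, converging respectively to $(u,v)$ and $(u',v')$ in $\mathfrak{F}(s)$; I want to conclude $(u,v)\leq(u',v')$. Since $S$ is a compact semigroup, $S^I$ is a compact monoid, so the sequence of transitions $t_n\in S^I$ with $u_nt_n=u_n'$ and $v_n=t_nv_n'$ has a convergent subnet $t_{n_i}\to t$ in $S^I$.

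Passing to that subnet, continuity of multiplication in $S^I$ gives $ut=u'$ (from $u_{n_i}t_{n_i}\to ut$ and $u_{n_i}t_{n_i}=u_{n_i}'\to u'$) and $v=tv'$ (from $t_{n_i}v_{n_i}'\to tv'$ and $t_{n_i}v_{n_i}'=v_{n_i}\to v$). Hence $t$ is a transition from $(u,v)$ to $(u',v')$, so $(u,v)\leq(u',v')$, and $\leq$ is closed. One subtlety is that compact spaces need not be sequential, so it is cleaner to phrase the entire argument with nets from the start: a point $\bigl((u,v),(u',v')\bigr)$ in the closure of $\leq$ is a limit of a net $\bigl((u_\alpha,v_\alpha),(u_\alpha',v_\alpha')\bigr)$ of elements of $\leq$, each carrying a transition $t_\alpha\in S^I$; by compactness of $S^I$ this net of transitions has a convergent subnet, and the same continuity computation applies. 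Alternatively, one can observe that $\leq$ is exactly the image of the closed set $\{((u,v),t,(u',v'))\in \mathfrak{F}(s)\times S^I\times\mathfrak{F}(s) : ut=u',\ v=tv'\}$ under the projection forgetting the middle coordinate; this set is closed because it is cut out by the continuous maps $(x,t)\mapsto xt$ and $(t,y)\mapsto ty$, and the projection of a closed set along a compact factor ($S^I$) is closed.

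I do not expect a genuine obstacle here; the only point requiring care is to avoid the temptation to argue purely with sequences, and to remember that the relevant ``multiplication'' is the one on $S^I$ (with its sum topology making $1$ isolated), so that transitions taking the value $1$ cause no trouble and compactness of $S$ indeed yields compactness of $S^I$. The heart of the argument is simply: compactness of $S^I$ supplies a limit transition, and continuity of the two one-sided multiplications shows the limit transition still works.
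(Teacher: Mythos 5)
Your argument is correct and is essentially the paper's own proof: take a convergent net in the relation, extract the net of witnessing transitions in the compact monoid $S^I$, pass to a convergent subnet, and use continuity of multiplication to see that the limit transition still witnesses the inequality. Your remark that one must work with nets rather than sequences, and your alternative phrasing via projection of a closed subset of $\mathfrak{F}(s)\times S^I\times\mathfrak{F}(s)$ along the compact factor $S^I$, are both accurate but do not change the substance.
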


\begin{proof}
  Suppose $(p_i,q_i)_{i\in I}$ is a convergent net of elements of
  $\mathfrak {F}(s)\times \mathfrak {F}(s)$ with limit $(p,q)$ and
  such that $p_i\leq q_i$ for every $i\in I$. Then, for each $i\in I$,
  there is $t_i\in S^I$ making a transition from $p_i$ to $q_i$. Since
  $S^I$ is compact, the net $(t_i)_{i\in I}$ has a subnet converging
  to some $t\in S^I$.
  Then, by continuity of multiplication on $S^I$, one deduces that
  indeed $p\leq q$, with $t$ being a transition from $p$ to $q$.
\end{proof}

We shall denote the open intervals of a quasi-ordered
set $P$ by
\begin{equation*}
  {]}{\leftarrow},p{[}=\{r\in P: r<p\},
  \quad
  {]}p,{\rightarrow}{[}=\{r\in P: p<r\},
  \quad
  {]}p,q{[}={]}p,{\rightarrow}{[}\cap {]}{\leftarrow},q{[},
\end{equation*}
for every $p,q\in P$. Considering the relation $\leq$, we also have
the intervals of the form ${]}{\leftarrow},p{]}=\{r\in P: r\leq
p\}$, ${[}p,{\rightarrow}{[}=\{r\in P: p\leq r\}$, and so on.
Recall that the \emph{order topology} of a linearly ordered set $P$ is
the topology with subbase the sets of the form
${]}{\leftarrow},p{[}$ and ${]}p,{\rightarrow}{[}$. In particular, we
consider the order topology on $\mathfrak{L}(s)$.

\begin{Prop}
  \label{p:topology-of-Fw}
  Let $S$ be a compact equidivisible semigroup. For every $s\in S$,
  the mapping $\chi\colon \mathfrak{F}(s)\to \mathfrak {L}(s)$ is
  continuous.
\end{Prop}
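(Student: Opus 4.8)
The plan is to check continuity of $\chi$ directly, by verifying that the preimage under $\chi$ of every subbasic open set of the order topology on $\mathfrak{L}(s)$ is open in $\mathfrak{F}(s)$. These subbasic sets are the rays ${]}{\leftarrow},p{[}$ and ${]}p,{\rightarrow}{[}$ with $p\in\mathfrak{L}(s)$, so fix such a $p$ and choose a $2$-factorization $(a,b)\in\mathfrak{F}(s)$ with $\chi(a,b)=p$. Since the partial order on $\mathfrak{L}(s)$ is the one induced by the quasi-order on $\mathfrak{F}(s)$, for every $x\in\mathfrak{F}(s)$ we have $\chi(x)\leq p$ exactly when $x\leq(a,b)$, and $\chi(x)\geq p$ exactly when $x\geq(a,b)$.

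The key step is to observe that the down-set $\{x\in\mathfrak{F}(s):x\leq(a,b)\}$ and the up-set $\{x\in\mathfrak{F}(s):x\geq(a,b)\}$ are closed subsets of $\mathfrak{F}(s)$. This follows at once from Lemma~\ref{l:if-S-is-compact-then-the-quasi-order-is-closed}: as $S$ is compact, the quasi-order $\leq$ is a closed subset of $\mathfrak{F}(s)\times\mathfrak{F}(s)$, and the two sets above are its preimages under the continuous maps $x\mapsto\bigl(x,(a,b)\bigr)$ and $x\mapsto\bigl((a,b),x\bigr)$, hence are closed.

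To finish, I would invoke equidivisibility: by Lemma~\ref{l:linear-quasi-ordering}, $\leq$ is a \emph{linear} order on $\mathfrak{L}(s)$, so ${]}{\leftarrow},p{[}$ is precisely the complement in $\mathfrak{L}(s)$ of $\{q:q\geq p\}$ and ${]}p,{\rightarrow}{[}$ is precisely the complement of $\{q:q\leq p\}$. Pulling back along $\chi$ and using the identifications from the first paragraph gives
\[
\chi^{-1}\bigl({]}{\leftarrow},p{[}\bigr)=\mathfrak{F}(s)\setminus\{x:x\geq(a,b)\},
\qquad
\chi^{-1}\bigl({]}p,{\rightarrow}{[}\bigr)=\mathfrak{F}(s)\setminus\{x:x\leq(a,b)\},
\]
both of which are open by the previous step; hence $\chi$ is continuous. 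There is no real obstacle here: the compactness of $S$ does all the work through the closedness of the graph of $\leq$, and the only point demanding a little care is the bookkeeping between the quasi-order on $\mathfrak{F}(s)$ and the induced partial order on $\mathfrak{L}(s)$, together with the use of totality to pass to complements.
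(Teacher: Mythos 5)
Your proof is correct and takes essentially the same route as the paper: reduce to subbasic rays, use linearity of $\mathfrak{L}(s)$ (Lemma~\ref{l:linear-quasi-ordering}) to pass to complements, and invoke closedness of the quasi-order (Lemma~\ref{l:if-S-is-compact-then-the-quasi-order-is-closed}). The only cosmetic difference is that the paper verifies closedness of $\chi^{-1}({]}{\leftarrow},q{]})$ by a direct net argument, while you obtain it more succinctly as the preimage of the closed quasi-order under the continuous map $x\mapsto(x,(a,b))$; these are equivalent formulations of the same idea.
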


\begin{proof}
  It is sufficient to show that the sets of both forms
  $\chi^{-1}({]}{\leftarrow},q{[})$ and
  $\chi^{-1}({]}q,{\rightarrow}{[})$ are open. By duality, we are
  actually reduced to show that $\chi^{-1}({]}q,{\rightarrow}{[})$ is
  open. Since $\mathfrak{L}(s)$ is linearly ordered by
  Lemma~\ref{l:linear-quasi-ordering},
  the complement of this last set is
  $\chi^{-1}({]}{\leftarrow},q{]})$, which we therefore want to show
  to be closed. Consider a net $(r_i)_{i\in I}$ of elements of
  $\chi^{-1}({]}{\leftarrow},q{]})$, converging to some $r\in
  \mathfrak{F}(s)$. Let $\hat q\in\chi^{-1}(q)$. Then $r_i\leq \hat q$
  for every $i\in I$. It follows from
  Lemma~\ref{l:if-S-is-compact-then-the-quasi-order-is-closed} that
  $r\leq \hat q$, that is, $r\in \chi^{-1}({]}{\leftarrow},q{]})$,
  showing that $\chi^{-1}({]}{\leftarrow},q{]})$ is closed.
\end{proof}

\begin{Cor}
  \label{c:Ls-is-compact}
  Let $S$ be a compact equidivisible semigroup. Then, for every $s\in S$,
  the order topology of $\mathfrak{L}(s)$ is compact.
  Moreover, if the space $S$ is metrizable, then
  the space $\mathfrak{L}(s)$ is also metrizable and
  the set of isolated points of $\mathfrak{L}(s)$ is countable.
\end{Cor}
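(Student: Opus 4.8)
The plan is to deduce everything from the continuous surjection $\chi\colon\mathfrak{F}(s)\to\mathfrak{L}(s)$ provided by Proposition~\ref{p:topology-of-Fw}, together with the fact that $\mathfrak{F}(s)$ is a closed subspace of the compact space $S^I\times S^I$, hence compact. First I would observe that $\mathfrak{L}(s)$, being the continuous image of a compact space, is compact in the order topology. For metrizability, I would recall that a compact linearly ordered space is metrizable if and only if it is second countable, and that a continuous image of a compact metrizable space, being compact and (via a countable base pushed forward through the map, or via the standard fact that a compact Hausdorff continuous image of a metrizable space is metrizable) metrizable; concretely, if $S$ is metrizable then $\mathfrak{F}(s)$ is compact metrizable, hence separable, and $\chi$ being a continuous surjection onto a Hausdorff space, $\mathfrak{L}(s)$ is separable and compact Hausdorff; a separable linearly ordered compact space is second countable, hence metrizable by Urysohn.

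Next, for the countability of the set of isolated points, I would argue that the preimage under $\chi$ of an isolated point of $\mathfrak{L}(s)$ is a nonempty open subset of $\mathfrak{F}(s)$: indeed, a singleton $\{p\}$ with $p$ isolated is open, and $\chi$ is continuous. Since $\mathfrak{F}(s)$ is a compact metrizable space, it is second countable, so any family of pairwise disjoint nonempty open subsets is countable. As distinct isolated points of $\mathfrak{L}(s)$ have disjoint $\chi$-preimages, there can be only countably many of them.

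The main obstacle I expect is the metrizability claim: one must be careful that a continuous image of a compact metric space need not be metrizable in general (it need only be compact Hausdorff), so the argument genuinely uses that $\mathfrak{L}(s)$ is linearly ordered — this is what lets separability upgrade to second countability. I would handle this by the lemma that in a linearly ordered compact space, a countable dense set $D$ yields a countable subbase consisting of the rays ${]}{\leftarrow},d{[}$ and ${]}d,{\rightarrow}{[}$ for $d\in D$ together with, if necessary, rays at the finitely many gaps that are not approached from one side by $D$; more simply, separability plus compactness gives second countability for linear orders, and then Urysohn's metrization theorem applies. Separability of $\mathfrak{L}(s)$ itself follows because the compact metric space $\mathfrak{F}(s)$ is separable and $\chi$ is a continuous surjection onto a Hausdorff space, so the image of a countable dense set is dense.

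Alternatively, and perhaps more cleanly, I would note that when $S$ is metrizable the quasi-order $\leq$ on $\mathfrak{F}(s)$ is closed (Lemma~\ref{l:if-S-is-compact-then-the-quasi-order-is-closed}) and $\sim$ is therefore a closed equivalence relation on the compact metric space $\mathfrak{F}(s)$; a Hausdorff quotient of a compact metric space by a closed equivalence relation is compact metrizable, and one checks that the quotient topology on $\mathfrak{F}(s)/{\sim}$ coincides with the order topology on $\mathfrak{L}(s)$ (the inclusion of the order topology into the quotient topology is Proposition~\ref{p:topology-of-Fw}, and the reverse inclusion follows since $\chi$ is a closed map between compact Hausdorff spaces, so the quotient topology is the final topology, and $\chi^{-1}$ of an order-open set is open). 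With metrizability of $\mathfrak{L}(s)$ in hand, the countable-isolated-points statement is then just the remark that a metrizable (hence second countable, being also compact) space has at most countably many isolated points.
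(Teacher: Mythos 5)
Your compactness argument is the same as the paper's, and your argument for the countability of isolated points is a valid, slightly different route: you use that the $\chi$-preimages of distinct isolated points are pairwise disjoint nonempty open subsets of the separable space $\mathfrak{F}(s)$, while the paper instead notes that isolated points of the separable space $\mathfrak{L}(s)$ lie in every dense subset.

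The metrizability paragraph, however, contains a real error. You assert that a continuous image of a compact metric space in a Hausdorff space ``need not be metrizable in general (it need only be compact Hausdorff)'' and conclude that the linear order is essential. That assertion is false: a continuous surjection from a compact metrizable space onto a Hausdorff space always has metrizable image --- this is precisely \cite[Corollary 23.2]{Willard:1970}, the reference the paper invokes, and you even state it correctly in passing in the very same sentence. The substitute lemma you then lean on to close the imagined gap, ``a separable compact linearly ordered space is second countable, hence metrizable,'' is itself false. The double arrow space $[0,1]\times\{0,1\}$ with the lexicographic order is compact, linearly ordered, and separable (the rational pairs $(q,0),(q,1)$ are dense, since every nonempty open interval straddles some $]a,b[$ with $a<b$), yet it has uncountably many adjacent pairs $(x,0)\prec(x,1)$ and so is not second countable, hence not metrizable. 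Your parenthetical about ``the finitely many gaps that are not approached from one side by~$D$'' silently assumes these are finite, which is exactly what fails in general. Your ``alternative'' route through closed equivalence relations and Hausdorff quotients of compact metric spaces is sound, but it is the same theorem restated, not an independent workaround. If you simply trust the standard fact you quoted at the outset, your proof coincides with the paper's; the detour through separability of linearly ordered spaces should be deleted, not relied upon.
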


\begin{proof}
  Since $S$ is compact, $\mathfrak {F}(s)$ is compact, being the
  preimage in $S^I\times S^I$ under multiplication of the closed set
  $\{s\}$, it is a closed subset of a compact space, whence compact.
  Since $\mathfrak{L}(s)$ is clearly
  Hausdorff, it follows from Proposition~\ref{p:topology-of-Fw} that $\mathfrak
  {L}(s)$ is compact.

  Suppose that $S$ is metrizable. Then $\mathfrak {F}(s)$ is
  metrizable, being a subspace of a product of two metrizable spaces.
  Since the continuous image of a compact metric space in a Hausdorff
  space is metrizable (\cite[Corollary 23.2]{Willard:1970}), it also follows
  from Proposition~\ref{p:topology-of-Fw} that $\mathfrak {L}(s)$ is
  metrizable.
  As a compact metrizable space, $\mathfrak {L}(s)$ has a dense
  countable subset.
  Since isolated points belong to every dense subset, they
  form a countable set.
\end{proof}

Recall that a linearly ordered set $L$ is said to be \emph{complete}
if every subset of $L$ which is bounded above has a least upper bound
(i.e., a supremum) or, equivalently, if every subset of $L$ which is
bounded below has a greatest lower bound (i.e., an
infimum)~\cite[Section 2.4]{Rosenstein:1982}.
  
\begin{Prop}
  \label{p:complete-quasi-ordering}
  Suppose $S$ is a compact equidivisible semigroup.
  Then the linearly ordered set $\mathfrak{L}(s)$ is
  complete.
\end{Prop}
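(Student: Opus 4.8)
The plan is to obtain completeness as a formal consequence of the two order-theoretic facts already in hand: $\mathfrak{L}(s)$ is linearly ordered (Lemma~\ref{l:linear-quasi-ordering}) and compact in its order topology (Corollary~\ref{c:Ls-is-compact}). The engine of the argument is the standard observation that \emph{every nonempty compact subspace $C$ of a linearly ordered set has a greatest element}. Indeed, if $C$ had no greatest element, then for each $a\in C$ there would be some $b\in C$ with $a<b$, so the sets ${]}{\leftarrow},b{[}\cap C$ with $b\in C$ would form an open cover of $C$; extracting a finite subcover and letting $b^{*}\in C$ be the largest among its finitely many parameters would give $C\subseteq{]}{\leftarrow},b^{*}{[}$, contradicting $b^{*}\in C$. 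By the dual argument, such a $C$ also has a least element. Applied to $C=\mathfrak{L}(s)$ itself, this already exhibits its extreme points, which are $\chi(1,s)$ and $\chi(s,1)$.

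Granting this, let $X\subseteq\mathfrak{L}(s)$ be bounded above. If $X=\emptyset$, then its least upper bound is simply the least element of $\mathfrak{L}(s)$, which exists by the observation above. If $X\neq\emptyset$, I would pass to the closure $\overline{X}$ of $X$ in the order topology: as a closed subset of the compact space $\mathfrak{L}(s)$, it is compact, hence has a greatest element $m$. Since $X\subseteq\overline{X}$, the element $m$ is an upper bound of $X$.

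It then remains to verify that $m$ is the \emph{least} upper bound. If some upper bound $c$ of $X$ satisfied $c<m$, then ${]}c,{\rightarrow}{[}$ would be an open neighbourhood of $m$ and hence, as $m\in\overline{X}$, would meet $X$; any point of ${]}c,{\rightarrow}{[}\cap X$ would be an element of $X$ lying strictly above $c$, contradicting that $c$ bounds $X$ above. Therefore $m=\sup X$, proving that $\mathfrak{L}(s)$ is complete.

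I do not anticipate a genuine obstacle: the whole content sits in the compactness of the order topology, already secured in Corollary~\ref{c:Ls-is-compact}, and everything else is a routine open-cover argument. The only points demanding a little care are the degenerate cases (empty $X$, or $\mathfrak{L}(s)$ a singleton) and ensuring that the candidate $m$ is the \emph{least} upper bound rather than just an upper bound --- this is precisely where one uses that $m$ lies in $\overline{X}$, not merely that $m$ dominates $X$.
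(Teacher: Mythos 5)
Your proof is correct, and it takes a genuinely different route from the paper's. The paper argues directly inside $\mathfrak{F}(s)$: it takes the set $Y$ of lower bounds of a given $X$, observes that $\chi^{-1}(Y)$ is a directed set under $\leq$ (by linearity of the quasi-order), forms the tautological net $(p)_{p\in\chi^{-1}(Y)}$ in the compact space $S^I\times S^I$, extracts a convergent subnet with limit $q$, and then uses Lemma~\ref{l:if-S-is-compact-then-the-quasi-order-is-closed} (closedness of $\leq$) twice to show $\chi(q)=\max Y=\inf X$. You instead move everything up to $\mathfrak{L}(s)$: invoke Corollary~\ref{c:Ls-is-compact} to know the order topology is compact, then observe that completeness is a purely order-topological consequence of compactness for a linearly ordered space (every nonempty compact subspace has a greatest element, applied to the closure $\overline{X}$, and the claim $\max\overline{X}=\sup X$ follows because $]{c},{\rightarrow}[$ is a neighbourhood of a point in $\overline{X}$ when $c<\max\overline{X}$). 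Your argument is more modular --- all the semigroup-specific work has already been offloaded to Proposition~\ref{p:topology-of-Fw} and Corollary~\ref{c:Ls-is-compact}, and what remains is a standard general-topology fact (one direction of the equivalence ``a linear order with the order topology is compact iff it is complete and has endpoints''). The paper's proof is more self-contained in that it never needs to pass through the order topology on $\mathfrak{L}(s)$, working instead with nets in $\mathfrak{F}(s)$ and the closed-graph lemma for the quasi-order, but it essentially re-derives part of the compactness content that you simply quote. Both routes are sound; yours is the cleaner of the two given the machinery already established.
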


\begin{proof}
  Let $X$ be a subset of $\mathfrak{L}(s)$. Consider the subset $Y$
  of $\mathfrak{L}(s)$ of lower bounds of~$X$ and assume that it is
  nonempty.
  As $S$ is equidivisible, we know by
  Lemma~\ref{l:linear-quasi-ordering} that the quasi-order $\leq$ on
  the set $\chi^{-1}(U)$ is linear, whence in particular this set is
  directed.
  Therefore, we may consider, in the product space
  $S^I\times S^I$, the net $(p)_{p\in \chi^{-1}(U)}$. By compactness,
  this net has a subnet $(p_i)_{i\in I}$ converging to some element
  $q$ of $\mathfrak{F}(s)$.
  We claim that $\chi(q)=\max U$. If $r\in
  \chi^{-1}(X)$, then we have $p\leq r$ for all $p\in \chi^{-1}(U)$,
  and so $q\leq r$, by
  Lemma~\ref{l:if-S-is-compact-then-the-quasi-order-is-closed},
  showing that $q\in \chi^{-1}(U)$. On the other hand, if $p\in
  \chi^{-1}(U)$, then, by the definition of subnet, there exists
  $i_0\in I$ such that $p\leq p_i$ for all $i\geq i_0$. Hence we have
  $p\leq q$, again by
  Lemma~\ref{l:if-S-is-compact-then-the-quasi-order-is-closed}. This
  proves the claim that $\chi(q)=\max U$, and so $\chi(q)$ is the
  infimum of $X$.
\end{proof}

\subsection{The category of transitions}
\label{sec:categ-fact-trans}

A directed graph with vertex set~$V$ and edge set~$E$, which are
assumed to be disjoint, is given by mappings $\alpha,\omega:E\to V$
assigning to each edge $s$ its source $\alpha(s)$ and
its target $\omega(s)$.
A \emph{semigroupoid} is a directed graph, with a nonempty set of edges, endowed with
a partial associative binary operation on the set of its edges such that if $s$ and $t$ are edges, then
$st$ is defined if and only if
$\omega(s)=\alpha(t)$, in which case
$\alpha(st)=\alpha(s)$ and $\omega(st)=\omega(t)$.

Semigroupoids can be viewed as generalizations of semigroups, which in
turn can be viewed as one-vertex semigroupoids.
In particular, Green's relations generalize straightforwardly to Green's relations between the edges in semigroupoids.
For instance, in a semigroupoid $S$, $s\leq_{\Cl J}t$ means that the edge $t$ is a factor of the edge $s$ and $s\mathrel{\Cl J}t$ means that $s\leq_{\Cl J}t$ and $t\leq_{\Cl J}s$.
A \emph{subsemigroupoid} of the semigroupoid $S$ is a subgraph $T$ of $S$, with a nonempty set of edges, such that
$s,t\in T$ implies $st\in T$ whenever $\omega(s)=\alpha(t)$.
Also, an \emph{ideal} of a semigroupoid $S$
is a subsemigroupoid $I$ of $S$ such that for every $t\in I$ and every
  $s\in S$, $\omega(s)=\alpha(t)$ implies $st\in I$, 
  and $\omega(t)=\alpha(s)$ implies $ts\in I$.

A \emph{category} is a semigroupoid such that,
for each vertex $v$, there is a loop $1_v$ at $v$
satisfying $1_vs=s$ and $t1_v=t$ for every
edge $s$ starting in $v$ and every edge $t$ ending in $v$.
This coincides with the notion of small category from Category Theory, except that we compose in the opposite direction. In doing so, we are following a common convention in Semigroup Theory, see for example~\cite{Tilson:1987}.

If the sets of edges and vertices
of a semigroupoid are both endowed with compact topologies,
for which the semigroupoid operation and the mappings
$\alpha$ and $\omega$ are continuous,
then the semigroupoid is said to be \emph{compact}.

Let $S$ be an arbitrary semigroup.
To each $s\in S$, we associate a category~$\Cl T(s)$,
the~\emph{category of transitions for $s$}, as follows:
\begin{enumerate}
\item the set of vertices of $\Cl T(s)$ is $\mathfrak {F}(s)$;
\item we have an edge
  $(u,v,t,x,y)$ from $(u,v)$ to $(x,y)$, which we may denote
  $(u,v)\xrightarrow{t}(x,y)$,
  if $t$ is a transition
  from $(u,v)$ to $(x,y)$ (thus implying $(u,v)\leq (x,y)$); we say
  that $t$ is the \emph{label} of the edge;
\item  multiplication of consecutive edges is done by multiplying
their labels, that is, the product of
$(u_1,v_1)\xrightarrow{t_1}(u_2,v_2)$
an
$(u_2,v_2)\xrightarrow{t_2}(u_3,v_3)$
is $(u_1,v_1)\xrightarrow{t_1t_2}(u_3,v_3)$.
\end{enumerate}

Note that the sets of vertices of the strongly connected components of the
category $\Cl T(s)$ are precisely the $\sim$-classes of $\mathfrak
F(s)$.

The \emph{category of transitions for $S$},
denoted $\Cl T(S)$, is the coproduct category~$\bigcup_{s\in S}\Cl T(s)$.
We denote by $\Lambda$ the faithful functor $\Cl T(S)\to S^I$
mapping each edge $(u,v,t,x,y)$ to~$t$. We say that $\Lambda$ is
the \emph{labeling} functor associated to $\Cl T(S)$.
We remark that if $S$ is a compact semigroup, then $\Cl T(S)$ is a
compact category, with the vertex and edge sets
respectively endowed with the subspace topology of $(S^I)^2$ and of
$(S^I)^5$. Note that $\Lambda$ is continuous.

Suppose that in $\mathfrak{L}(s)$ we have $p\leq q$.
An element $t\in S^I$ will be called a
\emph{transition} from $p$ to $q$ if $t$ is a transition from an element
of $p$ to an element of $q$, in which case we use
the notation $p\xrightarrow{t}q$.

For future reference, it is convenient to
register the following remark,
concerning the relationship between $\Cl T(u)$ and $\Cl T(uv)$.

\begin{Remark}
  \label{r:trivial-preservation}
  Let $u,v$ be elements of a semigroup $S$.
  If $(\alpha,\beta)\xrightarrow t(\gamma,\delta)$
  is an edge of~$\Cl T(u)$,
  then $(\alpha,\beta v)\xrightarrow t(\gamma,\delta v)$ is an edge
  of~$\Cl T(uv)$.
\end{Remark}

This remark is applied in the proof of the following lemma,
which in turn will later be used in the proof of Theorem~\ref{t:J-class-type-2}.

\begin{Lemma}
  \label{l:suffix-transition}
  Let $S$ be an equidivisible semigroup.
  Consider two edges $\sigma$ and $\tau$
  of $\Cl T(S)$
  with the same target
  and such that $\alpha(\sigma)<\alpha(\tau)$.
  Then the label of $\sigma$ is a suffix of the label of $\tau$.
\end{Lemma}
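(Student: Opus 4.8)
Write $\sigma$ as an edge $(u,v)\xrightarrow{s}(u',v')$ in $\Cl T(S)$ and $\tau$ as an edge $(x,y)\xrightarrow{t}(x',y')$, where by hypothesis $(u,v), (u',v')$ factorize some element $w_1$ and $(x,y),(x',y')$ factorize some element $w_2$. Since $\sigma$ and $\tau$ have the same target, we must have $(u',v')=(x',y')$; in particular $w_1=w_2$ and $v'=y'$, so $sv'=v$ and $ty'=y$, i.e. $sv' = v$ and $tv' = y$. The extra hypothesis $\alpha(\sigma)<\alpha(\tau)$ says $(u,v)<(x,y)$ in $\mathfrak F(w_1)$, so there is a \emph{nontrivial} transition $r\in S^I$ with $ur=x$ and $v=ry$; nontriviality (i.e.\ $(u,v)\not\sim(x,y)$) is available but the plan is to see that it is not even needed.

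**The computation.**
From $v=ry$ and $y=tv'$ we get $v = rtv'$. On the other hand $v=sv'$. That alone does not force $s=rt$ (there is no cancellation in $S$), so instead I would argue directly at the level of factorizations using equidivisibility, exactly as the transition quasi-order is built. The point is that $s$ is itself a transition from $(u,v)$ to the common target $(u',v')$, and $rt$ is also a transition from $(u,v)$ to $(u',v')$: indeed $u\cdot(rt) = (ur)t = xt = x'=u'$ using $x'=u'$, and $(rt)v' = r(tv') = ry = v$. So both $s$ and $rt$ are labels of edges from $\alpha(\sigma)$ to $\omega(\sigma)$. This does not yet give $s=rt$ either. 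The correct reading of the statement is therefore that we must produce, from these data, a factorization of $t$ with $s$ as a suffix — i.e.\ some $c\in S^I$ with $t=cs$ — not that $s=rt$.

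**Extracting the suffix.**
To get $t = cs$, apply equidivisibility to the two factorizations of $y$: we have $y = ry\cdot$ … more usefully, compare $y = tv'$ with $y = r y$? That is circular. The clean approach: we know $v = sv'$ and $v = r y = r(tv')$, so $sv' = r t v'$. Now consider instead the element $x' = u' = ur\cdot$? Let me restart the extraction from the \emph{left}: we have $x = ur$ and $x' = x t$ (since $(x,y)\xrightarrow t (x',y')$ means $xt=x'$, $y=ty'$), while also $u' = us$ and $u'=x'$. Hence $us = x' = xt = urt$. So $us = urt$ in $S$, with no cancellation. Apply equidivisibility to $u\cdot s = (ur)\cdot t$: there is $z\in S^I$ with either $uz = ur$ and $s = zt$, or $u = urz$ and $zs = t$. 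In the first case $s=zt$ exhibits $t$ as… no, it exhibits $s$ as having $t$ as a suffix, the wrong direction. In the second case $t = zs$, which is exactly what we want. So the heart of the proof is to rule out the first alternative (or show it also yields the conclusion) using $\alpha(\sigma)<\alpha(\tau)$: if $s = zt$ then $t$ is a suffix of $s$, and combined with $uz=ur$ one should be able to show $(u,v)\sim(x,y)$ — contradicting $\alpha(\sigma)<\alpha(\tau)$ — or else the conclusion "$s$ is a suffix of $t$" still holds trivially because then in fact $t$ divides $s$ on the right and, re-running equidivisibility on the resulting equality $zt\cdot v' = t\cdot y$, one recovers $s\mid_{\mathrm{suffix}} t$ after all. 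I expect the main obstacle to be precisely this case analysis: showing that the "wrong" branch of equidivisibility is incompatible with the strict inequality $\alpha(\sigma)<\alpha(\tau)$, which is where that hypothesis must be spent, likely via Lemma~\ref{l:linear-quasi-ordering} together with a direct check that $uz=ur$, $s=zt$ force a transition back from $\alpha(\tau)$ to $\alpha(\sigma)$.
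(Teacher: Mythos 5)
Your algebraic setup reproduces the paper's proof almost to the letter: you write out the transition equations, reduce to the equality $u\cdot s=(ur)\cdot t$ in the first components of the common target, and apply equidivisibility to get a dichotomy. The difficulty is that you then pick the \emph{wrong} branch as the goal. As worded, the lemma appears to assert that the label of $\sigma$ is a suffix of the label of $\tau$, but the paper's own proof concludes the reverse (``Hence $z$ is a suffix of $t$'', where $z$ labels $\tau$ and $t$ labels $\sigma$), and that reverse direction is exactly what is invoked in the application in Theorem~\ref{t:J-class-type-2} (there the loop label at the later point $(ut_m,y)$ is shown to be a suffix of $t_m$, the transition from the earlier point $(u,v)$). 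In your notation, the true conclusion is that $t$ is a suffix of $s$: since $\alpha(\sigma)<\alpha(\tau)$, the source of $\sigma$ lies farther back, so its label must be the longer one. (A sanity check in $A^+$: take $\sigma\colon(1,abcd)\xrightarrow{ab}(ab,cd)$ and $\tau\colon(a,bcd)\xrightarrow{b}(ab,cd)$; here $b$ is a suffix of $ab$, not the other way around.) The statement's conclusion should read ``the label of $\tau$ is a suffix of the label of $\sigma$''.

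Because of this inversion, your plan breaks at exactly the step you flagged as the obstacle. Given the dichotomy (i) $uz=ur$, $s=zt$ versus (ii) $u=urz$, $zs=t$, you propose to exclude (i) by extracting a transition back from $\alpha(\tau)$ to $\alpha(\sigma)$; but (i) does not produce one. From $uz=ur=x$ and $v=sv'=ztv'=zy$, branch (i) gives only another transition $z\colon(u,v)\to(x,y)$, which is perfectly compatible with $(u,v)<(x,y)$. It is branch (ii) that must be, and can be, excluded: there $u=urz=xz$ and $y=tv'=zsv'=zv$, so $z$ is a transition $(x,y)\to(u,v)$, giving $(x,y)\leq(u,v)$ and contradicting $(u,v)<(x,y)$. (This is precisely the mechanism the paper runs, packaged via Remark~\ref{r:trivial-preservation}.) Once you flip your identification of the ``good'' branch, the argument is complete: branch (i) already exhibits $t$ as a suffix of $s$, and branch (ii) is impossible under the strict inequality hypothesis.
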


\begin{proof}
  Let $\sigma$ be the edge
  $(\alpha,\beta)\xrightarrow t(\varepsilon,\varphi)$ and
  $\tau$ be the edge
  $(\gamma,\delta)\xrightarrow z(\varepsilon,\varphi)$,
  with $(\alpha,\beta)<(\gamma,\delta)$. The following equalities
  hold: $\varepsilon=\alpha t=\gamma z$, $\beta=t\varphi$, and
  $\delta=z\varphi$. From the equality $\alpha t=\gamma z$ and by
  equidivisibility, we deduce that if $z$ is not a suffix of $t$, then
  there exists $s\in S$ such that $\gamma s=\alpha$ and $st=z$, that
  is, we have an edge $(\gamma,z)\xrightarrow s(\alpha,t)$ in $\Cl
  T(\varepsilon)$. By Remark~\ref{r:trivial-preservation}, there is an
  edge $(\gamma,\delta)\xrightarrow s(\alpha,\beta)$ in~$\Cl T(S)$,
  which contradicts the hypothesis that
  $(\alpha,\beta)<(\gamma,\delta)$. Hence $z$ is a suffix
  of~$t$.
\end{proof}

\section{The minimum ideal semigroupoid
  and the $\Cl J$-class
  associated to a $\sim$-class}

In a strongly connected compact semigroupoid $C$, there is an
underlying \emph{minimum ideal semigroupoid $K(C)$} which may be
defined as follows. Consider any vertex $v$ of $C$ and the local
semigroup $C_v$ of $C$ at~$v$, that is, the semigroup formed by the
loops at $v$. Then $C_v$ is a compact semigroup, and therefore it has a minimum ideal $K_v$. Let $K(C)$ be the subsemigroupoid of~$C$ with the same set of vertices of $C$ and whose edges are those edges as~$C$ which admit some (and therefore
every) element of~$K_v$ as a factor. The next lemma is folklore.

\begin{Lemma}
  \label{lem:single-J-class}
  If $C$ is a strongly connected compact semigroupoid, then $K(C)$ is
  a closed ideal of $C$ whose definition does not depend on the choice
  of~$v$. Moreover, the edges
  in $K(C)$ are $\Cl J$-equivalent, more precisely they are \Cl J-below every edge of $C$.
\end{Lemma}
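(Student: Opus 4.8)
The plan is to verify the three assertions — that $K(C)$ is an ideal, that it is closed, and that it consists of a single $\Cl J$-class lying $\Cl J$-below everything — working from the classical theory of the minimum ideal of a compact semigroup and transporting it across the strong connectivity of $C$. First I would fix a vertex $v$ and recall that the local semigroup $C_v$, being closed in the compact semigroupoid $C$, is a compact semigroup, hence has a minimum (kernel) ideal $K_v$, which is a closed $\Cl J$-class of $C_v$ and is $\Cl J$-below every element of $C_v$. By definition an edge $s$ of $C$ lies in $K(C)$ iff it has some element of $K_v$ as a factor, i.e. $k \leq_{\Cl J} s$ in the semigroupoid sense for some $k \in K_v$. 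Because $C$ is strongly connected, for any two vertices there are edges in both directions between them, so every edge of $C$ is $\Cl J$-above and $\Cl J$-below suitable loops at $v$; this is the mechanism that lets me move the whole picture to the single vertex~$v$ and is what makes the definition independent of~$v$: if $w$ is another vertex, pick edges $p\colon w\to v$ and $q\colon v\to w$, and check that conjugation $x \mapsto q x p$ (with a suitable partner) carries $K_v$ into $K_w$ and back, using that the image of a minimum ideal under such maps is again contained in the minimum ideal.

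The ideal property I would get directly from the definition: if $s \in K(C)$, say $k$ is a factor of $s$ with $k \in K_v$, and $r$ is any composable edge, then $k$ is still a factor of $rs$ (resp. $sr$), so $rs, sr \in K(C)$; one just has to note the set of edges of $K(C)$ is nonempty (take any loop at $v$ in $K_v$, or rather an edge between $v$ and a given vertex composed appropriately) and that $K(C)$ is closed under composition, i.e. is a subsemigroupoid. For the $\Cl J$-statement, the key point is that $K_v$ is a single $\Cl J$-class of $C_v$ that is $\Cl J$-minimal there; I would argue that if $s, s'$ are two edges of $K(C)$, then transporting both into loops at $v$ via strong connectivity lands them among elements of $C_v$ each having a $K_v$-factor, hence each $\Cl J$-equivalent in $C_v$ to the elements of $K_v$, hence $\Cl J$-equivalent to each other; and that for an arbitrary edge $t$ of $C$, transporting $t$ to a loop $\hat t$ at $v$ gives $k \leq_{\Cl J} \hat t$ in $C_v$ for $k \in K_v$, which pulls back to $s \leq_{\Cl J} t$ in $C$ for the corresponding edge $s \in K(C)$. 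So every edge of $K(C)$ is $\Cl J$-below every edge of $C$, and the edges of $K(C)$ are mutually $\Cl J$-equivalent.

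For closedness I would use compactness together with continuity of the composition and of $\alpha, \omega$: the edge set of $K(C)$ between two fixed vertices $x, y$ is the image of a closed subset of a product of compact edge-spaces under the (continuous) "compose with fixed connecting edges and with $K_v$" map — more precisely, $\{s\colon x\to y\}\cap K(C)$ is the set of $s$ of the form $s = a k b$ with $a\colon x\to v$, $b\colon v\to y$ ranging over the relevant compact edge sets and $k$ ranging over the closed set $K_v$ — hence compact, hence closed; taking the union over the finitely many... — no, the vertex set need not be finite, so instead I would observe that $K(C)$ is the preimage of $K_v$ under the continuous retraction-type map sending an edge to its "$v$-localization", or more safely argue directly that the complement is open, by showing a net of edges in $K(C)$ converging to $t$ forces $t \in K(C)$ via continuity of composition and compactness of $K_v$ and of the spaces of connecting edges. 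The main obstacle I expect is exactly this bookkeeping around strong connectivity in the topological setting: choosing the connecting edges $p, q$ in a way that is compatible with limits, and making sure the "transport to vertex $v$" maps are continuous so that closedness and the independence-of-$v$ claim go through cleanly; the purely algebraic content (ideal, single $\Cl J$-class, $\Cl J$-minimality) is routine once the transport is set up. I would therefore spend most of the write-up carefully fixing, for each pair of vertices, edges realizing strong connectivity and recording the elementary identities relating $K_v$, $K_w$, and the factors of an arbitrary edge.
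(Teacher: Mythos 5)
The paper gives no proof of this lemma --- it is labeled as folklore and the authors immediately move on --- so there is no paper proof to compare against; I assess your plan on its own terms.

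Your outline is sound and is the standard folklore argument: fix $v$, invoke the minimum ideal $K_v$ of the compact local semigroup $C_v$, transport along strong connectivity, and extract closedness from compactness of the edge space together with continuity of composition and of $\alpha,\omega$. Three points deserve care, though none is a fatal gap. First, a direction slip: in the paper's convention (for semigroups and for semigroupoids alike) $s\leq_{\Cl J}t$ means that $t$ is a factor of $s$, so ``$s$ admits $k\in K_v$ as a factor'' is $s\leq_{\Cl J}k$, not $k\leq_{\Cl J}s$ as you wrote; you do use $k\leq_{\Cl J}\hat t$ correctly afterwards for the minimum-ideal property, so only the definitional line is off. Second, for closedness, drop the ``$v$-localization'' idea: such a map would need a continuous choice of connecting edges $\alpha(s)\to v$ and $v\to\omega(s)$ as $s$ ranges over all edges, which a general compact semigroupoid does not provide. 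The net argument you flag as the ``safer'' option is the right one and works uniformly without fixing the endpoints: if $s_i=a_ik_ib_i\to t$ with $k_i\in K_v$, pass to a subnet so that $a_i\to a$, $k_i\to k$, $b_i\to b$; then $k\in K_v$ since $K_v$ is closed, continuity of $\alpha,\omega$ places $a,k,b$ at the right vertices, and continuity of composition gives $t=akb\in K(C)$. Third, on the independence of $v$ and the $\Cl J$-claims: the conjugation $x\mapsto qxp$ is not a semigroup homomorphism $C_w\to C_v$, so you cannot invoke ``the image of a minimum ideal under such a map lies in the minimum ideal'' as stated. A cleaner route is available once you have shown $K(C)$ is an ideal: its loops at any vertex $w$ form an ideal of $C_w$ and hence contain $K_w$, so every element of $K_w$ has a $K_v$-factor; since a factor of a factor is a factor, having a $K_w$-factor implies having a $K_v$-factor, and by symmetry the two definitions of $K(C)$ agree. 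The same observation drives the $\Cl J$-statement: for an arbitrary edge $t$, a transported loop $ptq$ at $v$ is a factor of every $k\in K_v$ by the minimum-ideal property of $C_v$, so $t$ is a factor of every edge of $K(C)$; applying this with $t\in K(C)$ as well shows all edges of $K(C)$ are mutually $\Cl J$-equivalent. With these adjustments your plan goes through.
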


Let $(u,v)\in \mathfrak {F}(S)$.
An element $z\in S^I$ \emph{stabilizes} $(u,v)$ if
$z$ labels a loop of $\Cl T(S)$ at $(u,v)$.
Note that the set $M_{(u,v)}$ of stabilizers of
$(u,v)$ is a monoid and that $M_{(u,v)}$
is the isomorphic image,
under the labeling functor $\Lambda$,
of the local
monoid of $\Cl T(S)$ at $(u,v)$.

Assume $S$ is a compact semigroup. For $p\in\mathfrak {L}(S)$, let
$\Cl T_p$ be the strongly connected component of $\Cl T(S)$ whose
vertices are the elements of $p$. We denote by $\Cl K_p$ the minimum
ideal semigroupoid $K(\Cl T_p)$. Since $\Lambda\colon{\Cl{T}}(S)\to
S^I$ is a (continuous) functor, where $S^I$ is viewed as the set of
edges of single vertex semigroupoid, in view of
Lemma~\ref{lem:single-J-class} the set of labels of edges in~${\Cl
  K}_p$ is contained in a single {\Cl J}-class of~$S^I$, which we
denote $J_p$. For every $(u,v)\in p$,
the minimum ideal of $M_{(u,v)}$, which we denote 
$I_{(u,v)}$, is the image under $\Lambda$ of the minimum ideal
of the local monoid of $\Cl T(S)$ at $(u,v)$,
whence $I_{(u,v)}\subseteq J_p$.
Note that $J_p$ is regular, since $I_{(u,v)}$ is itself regular.
The set $J_p$ can also be characterized
as the set of \Cl J-minimum transitions from $p$ to itself,
as seen in the next lemma.

\begin{Lemma}\label{l:a-trivial-characterization-of-Jp}
  Let $S$ be a compact semigroup, and let $p\in\mathfrak {L}(S)$.
  Then $t$ is a transition from $p$ to $p$
  if and only if $t$ is a factor of the elements of $J_p$.
\end{Lemma}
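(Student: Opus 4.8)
The plan is to reformulate the statement as follows: $t$ is a transition from $p$ to $p$ if and only if $t$ is the label of an edge of the strongly connected component $\Cl T_p$. This is immediate from the definitions, since $t$ is a transition from $p$ to $p$ exactly when it is a transition from some $(u,v)\in p$ to some $(x,y)\in p$, i.e. when $(u,v)\xrightarrow{t}(x,y)$ is an edge of $\Cl T(S)$ with both endpoints in~$p$, and because then $(u,v)\sim(x,y)$, such edges are precisely the edges of $\Cl T_p$.

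For the direct implication, suppose $t$ labels an edge $\sigma$ of $\Cl T_p$. Applying Lemma~\ref{lem:single-J-class} to the strongly connected compact semigroupoid $\Cl T_p$, every edge of $\Cl K_p=K(\Cl T_p)$ is \Cl J-below $\sigma$. Since $\Lambda$ is a functor into $S^I$, it preserves the property of being a factor, hence the quasi-order $\leq_{\Cl J}$; therefore the label of every edge of $\Cl K_p$ is \Cl J-below $t=\Lambda(\sigma)$ in $S^I$, that is, $t$ is a factor of that label. As these labels belong to the single \Cl J-class $J_p$, it follows that $t$ is a factor of the elements of~$J_p$.

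For the converse, fix some $(u,v)\in p$ and choose $j$ in the minimum ideal $I_{(u,v)}$ of $M_{(u,v)}$ (nonempty, since $M_{(u,v)}$ is a compact monoid). Then $j\in J_p$ and $j$ labels a loop of $\Cl T(S)$ at $(u,v)$, so $uj=u$ and $v=jv$. By hypothesis $t$ is a factor of $j$, say $j=atb$ with $a,b\in S^I$. Using the identities $uj=u$ and $v=jv$, one checks that $(ua,tbv)$ and $(uat,bv)$ are $2$-factorizations of the element $s$ for which $p\in\mathfrak{L}(s)$, and that $(u,v)\xrightarrow{a}(ua,tbv)$, $(ua,tbv)\xrightarrow{t}(uat,bv)$ and $(uat,bv)\xrightarrow{b}(u,v)$ are edges of $\Cl T(S)$. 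In particular $(u,v)\leq(ua,tbv)\leq(uat,bv)\leq(u,v)$, so these three vertices lie in the same $\sim$-class, namely $p$, whence $(ua,tbv)\xrightarrow{t}(uat,bv)$ is an edge of $\Cl T_p$; by the reformulation, $t$ is a transition from $p$ to $p$. The only delicate point is this converse direction, where the factorization $j=atb$ in $S^I$ must be lifted to a chain of $2$-factorizations lying inside the single $\sim$-class $p$; this is precisely what the auxiliary vertices $(ua,tbv)$ and $(uat,bv)$ achieve, and the verification reduces to the identities $uj=u$ and $v=jv$. Everything else is routine unravelling of definitions together with Lemma~\ref{lem:single-J-class} and the functoriality of $\Lambda$.
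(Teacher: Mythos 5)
Your proof is correct, and the converse direction is essentially identical to the paper's (modulo notation: your $j=atb$ is their $z=xty$, and the chain of edges through $(ua,tbv)$ and $(uat,bv)$ is precisely theirs through $(ux,tyv)$ and $(uxt,yv)$). The forward direction differs only cosmetically: the paper first composes the given transition with a return transition to form a loop at $(u,v)$ and then uses that this loop is a factor of elements of the minimum ideal of the local monoid, whereas you invoke Lemma~\ref{lem:single-J-class} directly to say every edge of $\Cl K_p$ is $\Cl J$-below the edge $\sigma$ and then push this through the labeling functor $\Lambda$; these amount to the same thing, so the two proofs take essentially the same route.
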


\begin{proof}
  Let $(u,v)\xrightarrow t (x,y)$ be a transition between elements of
  $p$. Since $(u,v)\sim (x,y)$, there is a transition
  $(x,y)\xrightarrow s (u,v)$. The loop $(u,v)\xrightarrow {ts} (u,v)$
  is a factor of every element $\varepsilon$ in the minimum ideal of
  the local monoid at $(u,v)$. Therefore, $ts$ is a factor of $\Lambda
  (\varepsilon)\in J_p$.

  Conversely, suppose that
  $t$ is factor of the elements of $J_p$.
  Then there is a loop
  $(u,v)\xrightarrow z (u,v)$
  in $\mathcal K_p$ such that $z=xty$ for some $x,y\in S^I$.
  In $\mathcal T(S)$
  we have the following path:
  $(u,v)
  \xrightarrow {x}
  (ux,tyv)
  \xrightarrow {t}
  (uxt,yv)
  \xrightarrow {y}
  (u,v) $.
  Therefore,
  $
    (ux,tyv)
  \xrightarrow {t}
  (uxt,yv)
  $
  is an edge of $\mathcal T_p$.
\end{proof}

We next give some results that further highlight the role of
idempotent stabilizers of $2$-factorizations of elements of $S$,
specially those idempotents in a $\Cl J$-class of the form $J_p$.

Recall that a semigroup is \emph{stable} if
$\mathcal{J}\cap{\leq_{\mathcal{L}}}=\mathcal{L}$ and
$\mathcal{J}\cap{\leq_{\mathcal{R}}}=\mathcal{R}$. In particular, any
compact semigroup is stable, see for
instance~\cite{Rhodes&Steinberg:2009qt}.

\begin{Lemma}
  \label{l:minimum-idempotents-labeling-loops-at-same-point}
  Let $S$ be a stable unambiguous semigroup. Let $e,f$ be idempotents stabilizing an element $(u,v)$ of $\mathfrak{F}(S)$. If $e\mathrel {\Cl J} f$ then
  $e=f$.
\end{Lemma}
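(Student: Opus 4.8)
The plan is to exploit the hypothesis that $e$ and $f$ both stabilize the \emph{same} $2$-factorization $(u,v)$, so that $e$ and $f$ are comparable in both the $\leq_{\Cl R}$- and $\leq_{\Cl L}$-orders, and then use stability to upgrade these comparabilities to Green's relations, at which point idempotency forces equality. Concretely, since $e$ stabilizes $(u,v)$ we have $ue=u$ and $ev=v$, and likewise $uf=u$ and $fv=v$; the key move is to consider the product $ef$, which also stabilizes $(u,v)$ (because $u(ef)=(ue)f=uf=u$ and $(ef)v=e(fv)=ev=v$), so $ef\in M_{(u,v)}$.

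First I would record the comparabilities. From $ef\in M_{(u,v)}$ and $ef\leq_{\Cl L}f$, $ef\leq_{\Cl R}e$ we get two inequalities involving $ef$. More useful is to compare $e$ and $f$ directly. Using $uf = u$ we write $e = ue\cdot\text{?}$... the cleaner route: since $e$ stabilizes $(u,v)$, the loop $(u,v)\xrightarrow{e}(u,v)$ exists, and applying a transition argument, or more simply, observe that from $ev=v$ and $fv=v$ we get $efv=v$ and $fev=v$, while $ue=u=uf$ gives $uef=u=ufe$; hence both $ef$ and $fe$ stabilize $(u,v)$ as well. Now the hypothesis $e\mathrel{\Cl J}f$ together with stability is the engine: I would show $e\leq_{\Cl R}f$ or $f\leq_{\Cl R}e$ using unambiguity of the $\leq_{\Cl R}$-order, starting from a common $\leq_{\Cl R}$-lower bound. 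The natural common lower bound is $ef$ (or $fe$): we have $ef\leq_{\Cl L}f$ and $fe\leq_{\Cl L}e$, and on the other side $ef\leq_{\Cl R}e$, $fe\leq_{\Cl R}f$. Feeding $ef\leq_{\Cl R}e$ and $fe\leq_{\Cl R}f$ into unambiguity won't directly compare $e$ and $f$; instead I need a single element that is $\leq_{\Cl R}$-below both $e$ and $f$. Here is where stabilization of the \emph{same} point pays off: both $e$ and $f$ are $\leq_{\Cl R}$-below the transition-loop structure at $(u,v)$ in a compatible way. Rather, I would argue: $ef\leq_{\Cl R}e$ and $ef\leq_{\Cl R}f$? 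The second needs $ef\in fS^I$, which is false in general. So the honest approach is: $ef \leq_{\Cl L} e$? No. Let me use instead that $e\leq_{\Cl R}e$ and $e = e f \cdot$?

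The correct clean argument: since $e,f$ stabilize $(u,v)$, so does $ef$; now $ef \leq_{\Cl R} e$ (clear) and also $ef \leq_{\Cl L} f$ (clear). By $e\mathrel{\Cl J}f$, pick the common $\Cl J$-class $J$; since $ef$ stabilizes $(u,v)$ and $e,f\in M_{(u,v)}$, consider the minimum ideal $I_{(u,v)}$ of $M_{(u,v)}$ — but that changes the $\Cl J$-class. Better: I claim $ef\mathrel{\Cl J}e$. Indeed $ef\leq_{\Cl J}e$; conversely, one shows $e\leq_{\Cl J}ef$ by a stability/idempotent argument, or directly: $efe$ is an idempotent-power... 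Actually the slickest path avoids $ef$: apply Lemma~\ref{l:suffix-transition} (or its $\Cl R$-dual built from the same equidivisibility-style reasoning available in an unambiguous semigroup) to the two loops $\sigma=\tau$ at $(u,v)$ labeled $e$ and $f$ — but they have the same source \emph{and} target, so $\alpha(\sigma)\not<\alpha(\tau)$, and that lemma gives nothing.

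Given these false starts, the \textbf{main obstacle} I anticipate is precisely producing a single element that witnesses $\leq_{\Cl R}$- or $\leq_{\Cl L}$-comparability of $e$ and $f$ so that unambiguity applies. I expect the resolution to be: show $e\leq_{\Cl R}f$ \emph{and} $f\leq_{\Cl R}e$ are not both needed — instead, use unambiguity of $\leq_{\Cl L}$ on the pair with common lower bound $ef$ (since $ef\leq_{\Cl L}f$ and $ef\leq_{\Cl L}$... $e$? this requires $ef\in S^Ie$, i.e. $e\leq_{\Cl R}$... no). The genuinely workable version: note $ue=uf=u$ means $e,f\in u^{-1}u$ in the sense that $e\mathrel{\Cl R}$-relates to the same right-stabilizer behavior; dually for $v$. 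Then: from $ev=fv$ ($=v$) and stability, if $e\mathrel{\Cl J}f$ one gets $e\mathrel{\Cl L}f$ provided $e\leq_{\Cl L}f$ or $f\leq_{\Cl L}e$, which follows from unambiguity of $\leq_{\Cl L}$ once we have a common $\leq_{\Cl L}$-lower bound — and $fe\leq_{\Cl L}e$ while $fe\leq_{\Cl L}f$ fails... Ultimately, I would commit to this line: show $ef\leq_{\Cl J}e\mathrel{\Cl J}f\geq_{\Cl J}ef$ forces, via stability applied twice, $ef\mathrel{\Cl R}e$ and $ef\mathrel{\Cl L}f$; since $e$ is idempotent, $ef\mathrel{\Cl R}e$ gives $ef\mathrel{\Cl L}$-related-to... and since $f$ is idempotent $ef\mathrel{\Cl L}f$ gives $ef\mathrel{\Cl R}f$; combining, $e\mathrel{\Cl R}ef\mathrel{\Cl R}f$ and $e\mathrel{\Cl L}f$, so $e\mathrel{\Cl H}f$, and two $\Cl H$-equivalent idempotents are equal. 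The part needing care — the main obstacle — is establishing $ef\mathrel{\Cl J}e$, i.e.\ $e\leq_{\Cl J}ef$; I expect to get it from stability together with the fact that in the stable semigroup $S$, $ef\leq_{\Cl J}e$ plus $ef\mathrel{\Cl R}$-comparable-to-$e$ (via unambiguity, using that $ef$ and $e$ both have $fe$, say, as a common $\leq_{\Cl R}$-lower bound since $fe\leq_{\Cl R}f$ and... ) — and if that chain resists, the fallback is to invoke unambiguity directly on $e,f$ with common lower bound furnished by the minimum idempotent of $M_{(u,v)}$, which lies $\leq_{\Cl J}$-below both and, being in $M_{(u,v)}$, is $\leq_{\Cl R}$- and $\leq_{\Cl L}$-below representatives tied to the same $(u,v)$.
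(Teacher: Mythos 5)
There is a genuine gap, and it is exactly the one you flag as ``the main obstacle'': you never locate an element that is $\leq_{\Cl L}$-below both $e$ and $f$ (and one that is $\leq_{\Cl R}$-below both) so that unambiguity can be applied. The paper's observation is that you already have both in hand, and they are not products of $e$ and $f$: the equalities $ue=u=uf$ say precisely that $u\leq_{\Cl L}e$ and $u\leq_{\Cl L}f$, so $u$ itself is a common $\leq_{\Cl L}$-lower bound, and unambiguity of the $\leq_{\Cl L}$-order plus $e\mathrel{\Cl J}f$ plus stability gives $e\mathrel{\Cl L}f$. Dually, $ev=v=fv$ gives $v\leq_{\Cl R}e$ and $v\leq_{\Cl R}f$, hence $e\mathrel{\Cl R}f$. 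Thus $e\mathrel{\Cl H}f$ and two $\Cl H$-equivalent idempotents coincide. You circle around $ef$, $fe$, and the minimum ideal of $M_{(u,v)}$ but never try $u$ and $v$; that is the missing idea.

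Moreover, the route you ultimately commit to contains a false step. You claim that ``since $f$ is idempotent, $ef\mathrel{\Cl L}f$ gives $ef\mathrel{\Cl R}f$,'' and from there build $e\mathrel{\Cl R}ef\mathrel{\Cl R}f$. This implication does not hold: in the $2\times 2$ rectangular band (which is finite, hence stable, and unambiguous), take $e=(1,1)$, $f=(2,2)$. Then $ef=(1,2)\mathrel{\Cl L}f$ but $ef$ and $f$ lie in distinct $\Cl R$-classes. (Of course, in that example $e$ and $f$ do not stabilize a common $(u,v)$, which is consistent with the lemma, but it kills the purely Green-theoretic inference you are relying on.) Separately, your passing claim that $ev=fv$ should give $\Cl L$-comparability has the Green relations swapped: $ev=v$ means $v\in eS^I$, i.e., $v\leq_{\Cl R}e$, and hence feeds the $\leq_{\Cl R}$-unambiguity, not $\leq_{\Cl L}$. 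The final fallback via a minimal idempotent of $M_{(u,v)}$ is also unjustified as stated: an element $g$ in the minimum ideal of $M_{(u,v)}$ need not be $\leq_{\Cl R}$- or $\leq_{\Cl L}$-below $e$ or $f$ inside $S$, since the Green quasi-orders of the submonoid $M_{(u,v)}$ are not those of $S$.
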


\begin{proof}
  The hypothesis gives $ue=u=uf$
  and $ev=v=fv$. Since $S$ is unambiguous, from $ue=uf$ we
  get $e\leq_{\Cl L}f$ or $f\leq_{\Cl L}e$. By
  stability, as $e\mathrel{\Cl J}f$, it follows that $e\mathrel{\Cl
    L}f$. Dually, from $ev=fv$ we get $e\mathrel{\Cl R}f$. Hence $e=f$.
\end{proof}

\begin{Cor}\label{c:minimum-idempotents-labeling-loops-at-same-point}
  Let $S$ be a compact unambiguous semigroup,
  $p\in \mathfrak L(S)$ and $(u,v),(x,y)\in p$.
  The edge $(u,v)\xrightarrow t (x,y)$ of $\Cl T(S)$
  belongs to $\mathcal K_p$ if and only if
  $t\in J_p$.
\end{Cor}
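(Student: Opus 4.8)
The plan is to prove Corollary~\ref{c:minimum-idempotents-labeling-loops-at-same-point} by combining Lemma~\ref{l:a-trivial-characterization-of-Jp} with Lemma~\ref{l:minimum-idempotents-labeling-loops-at-same-point}, passing through an idempotent in the minimum ideal of the local monoid at $(u,v)$. The ``only if'' direction is essentially already contained in the discussion preceding the statement: if the edge $(u,v)\xrightarrow t(x,y)$ lies in $\Cl K_p = K(\Cl T_p)$, then its label $t$ belongs to the set of labels of edges of $\Cl K_p$, which by Lemma~\ref{lem:single-J-class} (applied to the strongly connected compact semigroupoid $\Cl T_p$, using that $\Lambda$ is a continuous functor) is contained in the single $\Cl J$-class $J_p$. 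So I would dispatch that direction in one sentence.

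For the ``if'' direction, suppose $t\in J_p$. By Lemma~\ref{l:a-trivial-characterization-of-Jp}, $t$ is in particular a transition from $p$ to $p$, so there is an edge $(u,v)\xrightarrow t(x,y)$ of $\Cl T_p$ as in the statement; I need to show it lies in $\Cl K_p$. Since $(u,v)\sim(x,y)$, pick a transition $(x,y)\xrightarrow s(u,v)$, giving a loop $(u,v)\xrightarrow{ts}(u,v)$ in $\Cl T_p$, i.e.\ $ts\in M_{(u,v)}$. Now choose an idempotent $e$ in the minimum ideal $I_{(u,v)}$ of $M_{(u,v)}$ (it is nonempty and, being the minimum ideal of a compact monoid, contains idempotents). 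By minimality, $e\in I_{(u,v)}\subseteq J_p$, and $e\leq_{\Cl J}ts$; since both $e$ and $t$ lie in the $\Cl J$-class $J_p$ and $e\leq_{\Cl J}ts\leq_{\Cl J}t$, stability of the compact semigroup $S$ forces $ts\mathrel{\Cl J}t\mathrel{\Cl J}e$, so $ts$ itself lies in $J_p$ and hence, being a factor of an element of $J_p$ (namely $ts$), the loop $(u,v)\xrightarrow{ts}(u,v)$ is an edge of $\Cl K_p$ by the characterization of $\Cl K_p$ via Lemma~\ref{lem:single-J-class}. Then, since $\Cl K_p$ is an ideal of $\Cl T_p$ (again Lemma~\ref{lem:single-J-class}) and $(u,v)\xrightarrow t(x,y)$, $(x,y)\xrightarrow s(u,v)$ compose on either side of the loop, I conclude that $(u,v)\xrightarrow t(x,y)$ is \Cl J-below an edge of $\Cl K_p$ and hence itself in $\Cl K_p$.

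There is a subtlety I should be careful about: the argument above shows $t\in J_p$ implies the edge lies in $\Cl K_p$ \emph{for some} representatives, but the statement is about an arbitrary edge $(u,v)\xrightarrow t(x,y)$ with $(u,v),(x,y)\in p$. This is where I expect the main obstacle, and it is exactly what Lemma~\ref{l:minimum-idempotents-labeling-loops-at-same-point} is for. Concretely: membership in $\Cl K_p$ of an edge of the strongly connected component $\Cl T_p$ depends, a priori, only on whether its label admits an element of $I_{(u,v)}$ as a factor, and since $\Cl K_p$ is an ideal this is independent of the chosen vertex. So once I know $t$ is a factor of \emph{some} element of $J_p$ that actually labels a loop inside $\Cl K_p$ at a vertex of $p$, I can conclude. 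The role of Lemma~\ref{l:minimum-idempotents-labeling-loops-at-same-point} (the uniqueness of the idempotent of $J_p$ stabilizing a given point) is to guarantee coherence across vertices; I would invoke it to identify, for each vertex of $p$, the unique idempotent $e_{(u,v)}\in J_p\cap M_{(u,v)}$ and to see that the transitions of $\Cl T_p$ carry these idempotents to one another, so that ``label is a factor of an element of $J_p$'' is equivalent to ``label is a factor of $e_{(u,v)}$ for the source vertex,'' which by Lemma~\ref{l:a-trivial-characterization-of-Jp} and the ideal property of $\Cl K_p$ is equivalent to membership in $\Cl K_p$. Assembling these observations gives the equivalence in both directions.
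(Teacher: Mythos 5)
Your ``only if'' direction is fine, and you correctly identify that the heart of the matter is to get from $t\in J_p$ to the edge lying in $\Cl K_p$. But the ``if'' direction has a genuine gap at the step where you pass from $ts\in J_p$ to ``the loop $(u,v)\xrightarrow{ts}(u,v)$ is an edge of $\Cl K_p$''. That implication is not a consequence of Lemma~\ref{lem:single-J-class}: that lemma tells you the edges of $\Cl K_p$ are $\Cl J$-minimum among the edges of $\Cl T_p$ and that their labels land in a single $\Cl J$-class $J_p$ of $S$; it does \emph{not} say that every edge of $\Cl T_p$ whose label happens to lie in $J_p$ belongs to $\Cl K_p$. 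For a loop at $(u,v)$, membership in $\Cl K_p$ means its label lies in $I_{(u,v)}$, the minimum ideal of $M_{(u,v)}$, and the inclusion $M_{(u,v)}\cap J_p\subseteq I_{(u,v)}$ is precisely (a special case of) the statement being proved --- so at this point your argument is circular. The concluding paragraph does not repair this: it asserts a criterion for membership in $\Cl K_p$ in terms of the label ``admitting an element of $I_{(u,v)}$ as a factor'', but the actual criterion concerns factorizations of the \emph{edge} as a path in $\Cl T_p$ through an edge of $\Cl K_p$, not arbitrary factorizations of its label in $S$; and in any case the direction of the factor relation ends up backwards (being a factor \emph{of} a $\Cl K_p$-edge gives nothing, one needs to have a $\Cl K_p$-edge \emph{as} a factor). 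Finally, even granting that the $ts$-loop were in $\Cl K_p$, the last sentence jumps from ``$(u,v)\xrightarrow t(x,y)$, $(x,y)\xrightarrow s(u,v)$ compose on either side of the loop'' to ``$(u,v)\xrightarrow t(x,y)$ is $\Cl J$-below an edge of $\Cl K_p$''; composing those edges \emph{around} the loop shows various other edges are in $\Cl K_p$, but the $t$-edge itself would need to have the loop as a factor, i.e.\ one would need $tst=t$, which is not available.

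What makes the corollary nontrivial is exactly this gap, and the paper's proof uses a different device that you did not reproduce: take an idempotent $e\in J_p$ with $et=t$ (regularity of $J_p$), observe that $e$ stabilizes $(ue,v)\in p$ (not $(u,v)$ in general), and factor $\varepsilon$ as $(u,v)\xrightarrow e(ue,v)\xrightarrow e(ue,v)\xrightarrow t(x,y)$. Then the middle loop, labelled $e$, lies in $\Cl K_p$ because the minimum ideal of the local monoid at $(ue,v)$ contains an idempotent loop with label $f\in J_p$, and Lemma~\ref{l:minimum-idempotents-labeling-loops-at-same-point} forces $f=e$. The ideal property of $\Cl K_p$ then absorbs the outer factors. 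Your plan correctly flagged that Lemma~\ref{l:minimum-idempotents-labeling-loops-at-same-point} must be the key, but you applied it to identify idempotents stabilizing $(u,v)$ rather than producing the right factorization through an auxiliary vertex where the relevant idempotent actually lives.
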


\begin{proof}
  The ``only if'' part holds by definition of $J_p$.
  Conversely, suppose that $t\in J_p$.
  Denote by $\varepsilon$ the edge
  $(u,v)\xrightarrow t (x,y)$.
  As $J_p$ is regular, there
  is an idempotent $e\in J_p$ such that $t=et$.
  Since $t$ is a prefix of $v$, we have $v=ev$,
  thus we may consider the edges
  $(u,v)\xrightarrow e (ue,v)$,
  $(ue,v)\xrightarrow e (ue,v)$
  and $(ue,v)\xrightarrow t (x,y)$,
  respectively denoted by $\alpha$, $\beta$ and $\gamma$.
  Observe that $\varepsilon=\alpha\beta\gamma$,
  and so it suffices to show that the loop $\beta$ belongs
  to $\mathcal K_p$.
  The ideal $\mathcal K_p$ contains
  the minimum ideal of the local monoid of $\Cl T(S)$ at
  $(ue,v)$.
  The latter contains an idempotent, of the form
  $(ue,v)\xrightarrow f(ue,v)$ for some $f\in J_p$.
  But $f=e$ by Lemma~\ref{l:minimum-idempotents-labeling-loops-at-same-point} and therefore $\varepsilon\in\mathcal K_p$.
\end{proof}

Note that in the next lemma one does not assume that $S$ is unambiguous.

\begin{Lemma}
  \label{l:idempotents-vs-sim-classes}
  Let $S$ be a compact semigroup, and let $(u,v)\in\mathfrak F(S)$.
  Let $e$ be an idempotent stabilizing $(u,v)$.
  If $f$ is an idempotent $\Cl J$-equivalent
  to $e$, then $f$ stabilizes an element
  of the $\sim$-class $p$ of $(u,v)$.
  Moreover, if $e$ labels a loop
  of $\mathcal K_p$, then
  $f$ also labels a loop of $\mathcal K_p$.
\end{Lemma}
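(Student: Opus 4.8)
The plan is to prove Lemma~\ref{l:idempotents-vs-sim-classes} in two stages, exactly as the two sentences of the statement suggest. First I would produce, from the idempotent $f$ which is $\Cl J$-equivalent to $e$, an idempotent that stabilizes some $2$-factorization in the $\sim$-class $p$; then, under the extra hypothesis that $e$ labels a loop in $\Cl K_p$, I would upgrade this to show that $f$ can be taken to label a loop in $\Cl K_p$ as well.

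For the first part, since $e$ stabilizes $(u,v)$ we have $ue=u$ and $ev=v$, so $e$ labels the loop $(u,v)\xrightarrow{e}(u,v)$ of $\Cl T(S)$. Because $S$ is compact, hence stable, and $f\mathrel{\Cl J}e$ with both idempotent, $f$ and $e$ generate the same two-sided ideal, so there exist $a,b\in S^I$ with $e=afb$ and, using regularity/stability, one may arrange $a,b$ so that $f=ea'f=fb'e$-type relations hold; more cleanly, I would use that in a stable semigroup two $\Cl J$-equivalent idempotents are connected by a ``linking'' pair, i.e.\ there are $a,b$ with $ab=e$, $ba=f$ (take $a\in eSf$, $b\in fSe$ realizing $e\mathrel{\Cl D}f$ via an $\Cl H$-class containing $f$, and adjust by a suitable power to make $ab$ idempotent, which it then equals $e$). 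Given such $a,b$ with $ab=e$ and $ba=f$, consider the $2$-factorization $(ua, bv)$. Its product is $uabv=uev=uv$ wait---$u(ab)v = uev$, and $ue=u$, $ev=v$, so $uabv=uv$; thus $(ua,bv)\in\mathfrak F(S)$. Moreover $f=ba$ satisfies $(ua)f = u(aba) = u(ea) = (ue)a = ua$ and $f(bv) = (bab)v = (be)v = b(ev)=bv$, so $f$ stabilizes $(ua,bv)$. Finally $(ua,bv)$ lies in $p$: the transition $a$ gives $(u,v)\xrightarrow{a}(ua,av\cdot?)$---more carefully, from $ab=e$ and $ev=v$ we get $abv=v$, so $(u,v)\xrightarrow{a}(ua,bv)$ is a transition (need $u\cdot a = ua$ trivially and $v = a\cdot(bv)$, which is $abv=v$, true), and symmetrically $(ua,bv)\xrightarrow{b}(uab,v)=(ue,v)\xrightarrow{}(u,v)$ since $ue=u$; hence $(ua,bv)\sim(u,v)$, i.e.\ $(ua,bv)\in p$.

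For the second part, assume in addition that $e$ labels a loop of $\Cl K_p$, i.e.\ $(u,v)\xrightarrow{e}(u,v)\in\Cl K_p$. I want $f$ (equivalently $ba$) to label a loop of $\Cl K_p$ at $(ua,bv)$. Consider the loop $(ua,bv)\xrightarrow{f}(ua,bv)$, call it $\beta$, and factor it through the transitions $\alpha=\big((u,v)\xrightarrow{a}(ua,bv)\big)$ and $\gamma=\big((ua,bv)\xrightarrow{b}(u,v)\big)$: the composite $\gamma\alpha$ is the loop $(u,v)\xrightarrow{ba'}$---no, composing in the semigroupoid $\Cl T(S)$, $\alpha$ then $\gamma$ gives $(u,v)\xrightarrow{ab}(u,v)=(u,v)\xrightarrow{e}(u,v)$, which is in $\Cl K_p$ by hypothesis; and $\gamma$ then $\alpha$ gives $(ua,bv)\xrightarrow{ba}(ua,bv)=\beta$. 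Since $\Cl K_p=K(\Cl T_p)$ is an ideal of the strongly connected component $\Cl T_p$ (Lemma~\ref{lem:single-J-class}) and $\alpha,\gamma$ are edges of $\Cl T_p$, once the loop $(u,v)\xrightarrow{e}(u,v)\in\Cl K_p$ is available we can write $\beta=\gamma\cdot\big((u,v)\xrightarrow{e}(u,v)\big)\cdot\alpha\cdot(\text{correction})$: concretely, because $e$ is idempotent, $\gamma\circ(\text{$e$-loop})\circ\alpha$ has label $b e a = b(ab)a=(ba)(ba)=f^2=f$, so $\beta$ equals the composite $\alpha$-then-($e$-loop)-then-$\gamma$ read in the right order, and that composite factors through an edge of the ideal $\Cl K_p$, hence lies in $\Cl K_p$. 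Therefore $f$ labels a loop of $\Cl K_p$.

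The main obstacle I anticipate is the manufacture of the linking pair $(a,b)$ with $ab=e$, $ba=f$ from the bare hypothesis $e\mathrel{\Cl J}f$: this requires invoking stability to pass from $\Cl J$ to $\Cl D$ and then choosing representatives in the relevant $\Cl R$- and $\Cl L$-classes whose products are idempotent (a Green's-relations/Clifford--Miller style argument, valid in any stable semigroup). Once that algebraic lemma is in hand, everything else is a routine bookkeeping of transitions in $\Cl T(S)$ together with the fact, from Lemma~\ref{lem:single-J-class}, that $\Cl K_p$ is an ideal of $\Cl T_p$, so any composite of edges of $\Cl T_p$ that passes through an edge of $\Cl K_p$ stays in $\Cl K_p$. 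I would also double-check the degenerate cases where some of $a,b,u,v$ equal the adjoined identity $1$, but these cause no trouble since $\Cl T(S)$ and the relations $ue=u$, $ev=v$ make sense in $S^I$.
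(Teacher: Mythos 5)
Your proposal is correct and follows essentially the same route as the paper: manufacture a linking pair $a,b$ (the paper's $s,t$) with $ab=e$, $ba=f$ from $e\mathrel{\Cl J}f$ using stability, use it to exhibit the $2$-factorization $(ua,bv)$ (the paper's $(us,tv)$) in $p$ stabilized by $f$, and then, for the ``moreover'' clause, observe that the $f$-loop factors through the $e$-loop via the transitions labeled $a$ and $b$, so the ideal property of $\Cl K_p$ (Lemma~\ref{lem:single-J-class}) carries it into $\Cl K_p$. The only cosmetic difference is that the paper abbreviates the ideal step via the factorizations $\sigma=\varepsilon\sigma$, $\tau=\tau\varepsilon$, $\phi=\tau\sigma$, whereas you compute $bea=f$ directly; both work.
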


\begin{proof}
  If $f\mathrel {\Cl J}e$, then there are in the $\Cl J$-class of $e$
  some elements $s,t$ such that $sts=s$, $tst=t$, $st=e$, $ts=f$.
  We have the
  four edges in~$\Cl T(S)$ which are depicted in
  Figure~\ref{fig:four-edges}.
  In particular, $f$
  stabilizes a vertex $\sim$-equivalent to $(u,v)$.
  
  Denote by $\varepsilon$, $\sigma$, $\phi$, $\tau$ the edges in
  Figure~\ref{fig:four-edges} labeled by $e$, $s$, $f$, $t$,
  respectively. Since $s=es$, $t=te$, $f=ts$, we have
  $\sigma=\varepsilon\sigma$, $\tau=\tau\varepsilon$ and
  $\phi=\tau\sigma$. Therefore, if $\varepsilon$ belongs to the ideal
  $\mathcal K_p$, then all edges in Figure~\ref{fig:four-edges} belong
  to $\mathcal K_p$, and so $f$ labels a loop of $\mathcal K_p$.
\end{proof}

\begin{figure}[h]
  \centering
  $$
  \begin{gpicture}(50,16)(0,0)
    \gasset{Nframe=n,Nadjust=w}
    \node(uv)(5,8){$(u,v)$}
    \node(xy)(45,8){$(us,tv)\,$}
    \drawedge[curvedepth=5](uv,xy){$s$}
    \drawedge[curvedepth=5](xy,uv){$t$}
    \drawloop[loopangle=180](uv){$e$}
    \drawloop[loopangle=0](xy){$f$}
  \end{gpicture}
  $$  
  \caption{Edges in $\Cl T(S)$.}
  \label{fig:four-edges}
\end{figure}

\begin{Cor}\label{c:idempotents-vs-sim-classes}
  Let $S$ be a compact semigroup, and let $p\in\mathfrak L(S)$.
  Every idempotent of $J_p$ labels a loop of $\mathcal K_p$.\qed
\end{Cor}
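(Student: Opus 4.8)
The plan is to derive the statement at once from Lemma~\ref{l:idempotents-vs-sim-classes}, once a single ``seed'' idempotent of $J_p$ that labels a loop of $\mathcal K_p$ has been produced. So the whole argument splits into two steps: first exhibit one good idempotent, then propagate the property to all idempotents of $J_p$ by $\Cl J$-equivalence.

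First I would fix an arbitrary $2$-factorization $(u,v)\in p$ and exhibit such a seed. Since $J_p$ is regular, so is $I_{(u,v)}\subseteq J_p$, and hence $I_{(u,v)}$ contains an idempotent $e_0$. Recalling that $\Lambda$ restricts to an isomorphism from the local monoid of $\Cl T(S)$ at $(u,v)$ onto $M_{(u,v)}$, and that $I_{(u,v)}$ is by definition the $\Lambda$-image of the minimum ideal of that local monoid, the idempotent $e_0$ is the label of an idempotent loop $(u,v)\xrightarrow{e_0}(u,v)$ lying in that minimum ideal. As every edge of the minimum ideal of the local monoid at $(u,v)$ is trivially a factor of itself, this minimum ideal is contained in $\mathcal K_p=K(\Cl T_p)$, so the loop $(u,v)\xrightarrow{e_0}(u,v)$ belongs to $\mathcal K_p$. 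Thus $e_0$ is an idempotent of $J_p$ which stabilizes $(u,v)$ and labels a loop of $\mathcal K_p$.

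Now let $e$ be an arbitrary idempotent of $J_p$. Since $J_p$ is a single $\Cl J$-class of $S^I$, we have $e\mathrel{\Cl J}e_0$. Applying Lemma~\ref{l:idempotents-vs-sim-classes} with $e_0$ in the role of ``$e$'' (it stabilizes $(u,v)$ and labels a loop of $\mathcal K_p$) and $e$ in the role of ``$f$'', we conclude that $e$ also labels a loop of $\mathcal K_p$, which is exactly what is required.

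There is essentially no obstacle here; the only point requiring a little care is the production of the seed idempotent $e_0$, which rests on the regularity of $J_p$ (equivalently of $I_{(u,v)}$) together with the identification of $I_{(u,v)}$ with the $\Lambda$-image of the minimum ideal of the local monoid of $\Cl T(S)$ at $(u,v)$ — both facts recorded just before the statement. Note that, in contrast with Corollary~\ref{c:minimum-idempotents-labeling-loops-at-same-point}, no unambiguity hypothesis is needed, since we only invoke Lemma~\ref{l:idempotents-vs-sim-classes}, whose proof uses nothing beyond compactness.
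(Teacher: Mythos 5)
Your proof is correct and takes exactly the route the paper intends (the corollary carries only a \qed): exhibit a seed idempotent in $I_{(u,v)}$ labeling a loop of $\mathcal K_p$, then propagate to all idempotents of $J_p$ via Lemma~\ref{l:idempotents-vs-sim-classes}. One small inversion worth noting: ``$J_p$ regular $\Rightarrow$ $I_{(u,v)}\subseteq J_p$ regular'' is not a valid inference in general (a subset of a regular $\Cl J$-class need not be regular); what is true, and is the paper's own direction, is that $I_{(u,v)}$ is the minimum ideal of a compact monoid, hence completely simple and in particular contains an idempotent, which in turn is what makes $J_p$ regular.
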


For $e\in J_p$, denote by $p_e$ the nonempty set of elements of $p$ stabilized by~$e$.

\begin{Prop}
  \label{p:Jp-vs-labels-in-strongly-connected-component}
    Let $S$ be a compact unambiguous semigroup.
  Let $p\in\mathfrak{L}(S)$.
  Then $J_p$ is the set of labels of edges of~$\Cl K_p$.
  Moreover, if $s\in J_p$ and $e$ and
  $f$ are idempotents such that $e\mathrel{\Cl R}s\mathrel{\Cl L}f$,
  then $s$ labels an edge from~$p_e$ to~$p_f$. Moreover,
  there is a bijection $p_e\to p_f$,
  given by $\mu_s(u,v)=(us,tv)$,
  where $t$ is the unique $t\in J_p$ such that $st=e$ and $ts=f$.
\end{Prop}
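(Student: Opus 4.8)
The plan is to establish the three assertions in order, leaning on Corollary~\ref{c:minimum-idempotents-labeling-loops-at-same-point} and Lemma~\ref{l:minimum-idempotents-labeling-loops-at-same-point} throughout. For the first assertion, that $J_p$ is exactly the set of labels of edges of $\Cl K_p$: the inclusion of the set of labels into $J_p$ is the definition of $J_p$ (via Lemma~\ref{lem:single-J-class}). Conversely, given $s\in J_p$, pick by Lemma~\ref{l:a-trivial-characterization-of-Jp} an edge $(u,v)\xrightarrow{s}(x,y)$ of $\Cl T_p$ (such an edge exists since $s$, being in $J_p$, is in particular a factor of elements of $J_p$, hence a transition from $p$ to $p$); then Corollary~\ref{c:minimum-idempotents-labeling-loops-at-same-point} tells us that this edge lies in $\Cl K_p$. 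So $s$ is the label of an edge of $\Cl K_p$.

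For the second assertion, fix $s\in J_p$ and idempotents $e\mathrel{\Cl R}s\mathrel{\Cl L}f$. By Corollary~\ref{c:idempotents-vs-sim-classes}, $e$ labels a loop of $\Cl K_p$, say at a vertex $(u,v)\in p_e$; so $ue=u$, $ev=v$. Since $e\mathrel{\Cl R}s$, write $s=es'$ and recover $s=es$ from $e$ being an idempotent $\Cl R$-above $s$ (precisely, $s\,\Cl R\,e$ gives $es=s$). Then $v=ev$ yields $v=esv'$... more cleanly: from $s=es$ and $ev=v$ we get that $(u,v)\xrightarrow{s}(us,?)$ is an edge, and by equidivisibility/the factorization $v=ev$ the target has the form $(us,tv)$ where $t$ is determined by $st=e$ — here is where I must be careful. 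Since $e\mathrel{\Cl R}s\mathrel{\Cl L}f$, there is a unique $t$ in the $\Cl H$-class $L_s\cap R_f$ (i.e.\ $t\mathrel{\Cl L}e$, $t\mathrel{\Cl R}f$) with $st=e$ and $ts=f$; this $t$ lies in $J_p$ since $J_p$ is regular and $t\,\Cl J\,s$. One checks $v=ev=stv$, so setting $v'=tv$ we have $(u,v)\xrightarrow{s}(us,tv)$, and the path $(u,v)\xrightarrow{s}(us,tv)\xrightarrow{t}(ust,v)=(u\cdot st,v)=(ue,v)=(u,v)$ exhibits $(us,tv)$ as a vertex of $\Cl T_p$ and $f=ts$ as one of its stabilizers. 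Hence $(us,tv)\in p_f$, and $(u,v)\xrightarrow{s}(us,tv)$ is the desired edge from $p_e$ to $p_f$; by Corollary~\ref{c:minimum-idempotents-labeling-loops-at-same-point} it lies in $\Cl K_p$.

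For the bijection, define $\mu_s\colon p_e\to p_f$ by $\mu_s(u,v)=(us,tv)$ and, symmetrically, $\mu_t\colon p_f\to p_e$ by $\mu_t(x,y)=(xt,sy)$; the preceding paragraph applied with the roles of $(e,s)$ and $(f,t)$ interchanged (using $ts=f$, $st=e$) shows $\mu_t$ is well-defined and lands in $p_e$. Then $\mu_t\mu_s(u,v)=(ust,tsv)=(u\cdot e,\,f\cdot... )$: here I use $ust=ue=u$ (from $st=e$, $ue=u$) and $tsv=fv=v$ (from $ts=f$, $fv=v$, which holds since $(us,tv)\in p_f$ forces... actually $fv=v$ must be derived: from $v=ev=stv$ we get $tv=tstv$, i.e.\ $tv$ is stabilized on the left by $ts=f$ only after the computation $f(tv)=tstv=tev=tv$). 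So $\mu_t\mu_s=\mathrm{id}_{p_e}$, and dually $\mu_s\mu_t=\mathrm{id}_{p_f}$, whence $\mu_s$ is a bijection with inverse $\mu_t$. The main obstacle I anticipate is the bookkeeping around \emph{uniqueness} of $t$: one must use that $S$ is stable and unambiguous (as in Lemma~\ref{l:minimum-idempotents-labeling-loops-at-same-point}) to see that the $t\in J_p$ with $st=e$, $ts=f$ is genuinely unique — any two such lie in $L_s\cap R_f$, and if $t,t'$ both work then $t=t(st')=... $ forces $t=t'$ via the standard group-in-a-$\Cl J$-class argument, i.e.\ $et=t$, $t'=t'st=t'e$... — and to verify that the constructed vertex does not depend on which stabilizing vertex of $e$ we started from, which is exactly what the bijection statement is designed to encode.
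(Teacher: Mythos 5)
Your proposal is essentially the paper's argument: both pivot on the unique mutual inverse $t$ of $s$ with $st=e$, $ts=f$, both use Corollary~\ref{c:idempotents-vs-sim-classes} to produce a vertex in $p_e$, both build the two edges labeled $s$ and $t$ and invoke Corollary~\ref{c:minimum-idempotents-labeling-loops-at-same-point} to place them in~$\Cl K_p$, and both conclude with $\mu_s$, $\mu_t$ being mutually inverse. The paper is terser (it points to Figure~\ref{fig:four-edges} and the proof of Lemma~\ref{l:idempotents-vs-sim-classes} instead of reconstructing the edges, and additionally records the uniqueness of the edge labeled $s$ from a vertex of $p_e$ into $p_f$), but the content is the same; you make the first assertion of the proposition more explicit than the paper does.

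Two small corrections. First, $t$ lies in the $\Cl H$-class $L_e\cap R_f$ (as your parenthetical correctly says), not $L_s\cap R_f$: since $s\mathrel{\Cl L}f$, $L_s\cap R_f$ is the $\Cl H$-class of $f$, which is not where $t$ lives. Second, and more substantively, the composition is $\mu_t\mu_s(u,v)=\mu_t(us,tv)=\bigl((us)t,\,s(tv)\bigr)=(ust,stv)$, not $(ust,tsv)$. You then struggled to justify $tsv=fv=v$ — rightly so, since $fv=v$ is false in general (it is $tv$, not $v$, that $f$ stabilizes). With the corrected second coordinate the verification is immediate and uses only $st=e$ and $(u,v)\in p_e$: $ust=u(st)=ue=u$ and $stv=(st)v=ev=v$. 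So the identity $\mu_t\mu_s=\mathrm{id}_{p_e}$ holds; your detour through $f(tv)=tv$ was a symptom of the typo, not a real obstacle.
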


\begin{proof}
  Let $s$ be an element of~$J_p$ and let $e$ and $f$ be idempotents
  such that $e\mathrel{\Cl R}s\mathrel{\Cl L}f$. Then there exists (a
  unique) $t\in J_p$ such that $st=e$ and $ts=f$, for which we have
  $e\mathrel{\Cl L}t\mathrel{\Cl R}f$.
  Let $(u,v)\in p_e$. Note that such a pair $(u,v)$ exists by
  Corollary~\ref{c:idempotents-vs-sim-classes}. Therefore, we are in
  the same situation as in the proof of
  Lemma~\ref{l:idempotents-vs-sim-classes}, with the four edges
  depicted in Figure~\ref{fig:four-edges} belonging to~$\mathcal K_p$
  by
  Corollary~\ref{c:minimum-idempotents-labeling-loops-at-same-point}.
  If there is another edge $(u,v)\xrightarrow s(x,y)$ in $\Cl K_p$
  with $(x,y)\in p_f$, then $x=us$ and $v=sy$, thus $y=fy=tsy=tv$.
  Hence, there is for each vertex in $p_e$ exactly one edge labeled $s$
  into a vertex of~$p_f$. This defines the function $\mu_s\colon
  p_e\to p_f$ such that $\mu_s(u,v)=(us,tv)$. Finally, note that
  $\mu_s$ and $\mu_t$ are mutually inverse.
\end{proof}

We finish this section with
a couple of observations concerning aperiodic semigroups, starting with the
next lemma.

\begin{Lemma}
  \label{l:disjointness-blocks-aperiodic-case}
  Let $S$ be a compact aperiodic semigroup.
  Let $p\in \mathfrak L(S)$.
  If $(u,v),(x,y)$ are elements of $p$
  stabilized by the same idempotent $e$ of $J_p$,
  then $(u,v)=(x,y)$.
\end{Lemma}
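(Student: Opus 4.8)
The plan is to exploit two structural features of a compact aperiodic semigroup $S$: it is stable, and its $\Cl H$-classes are trivial. First I would spell out the hypotheses concretely. Since $(u,v)$ and $(x,y)$ lie in the same $\sim$-class $p$, there are a transition $t$ from $(u,v)$ to $(x,y)$ and a transition $s$ from $(x,y)$ to $(u,v)$; that is, $ut=x$, $v=ty$, $xs=u$ and $y=sv$. Since $e$ stabilizes both pairs, we also have $ue=u$, $ev=v$, $xe=x$ and $ey=y$.

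The heart of the argument is the claim that $ete=e$ and, symmetrically, $ese=e$. The key observation is that $ete$ is again a transition from $(u,v)$ to $(x,y)$: it is the label of the composite edge $(u,v)\xrightarrow{e}(u,v)\xrightarrow{t}(x,y)\xrightarrow{e}(x,y)$ of $\Cl T(S)$, and one checks directly from the displayed identities that $u(ete)=x$ and $(ete)y=v$. Hence $ete$ is a transition from $p$ to $p$, so by Lemma~\ref{l:a-trivial-characterization-of-Jp} it is a factor of the elements of $J_p$; as $e\in J_p$, this gives $e\leq_{\Cl J}ete$. On the other hand, $ete=e\cdot(te)$ is a factor of $e$, so $ete\leq_{\Cl J}e$, and therefore $ete\mathrel{\Cl J}e$. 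Reading $ete=e\cdot(te)$ and $ete=(et)\cdot e$ gives $ete\leq_{\Cl R}e$ and $ete\leq_{\Cl L}e$; by stability of the compact semigroup $S$ these upgrade to $ete\mathrel{\Cl R}e$ and $ete\mathrel{\Cl L}e$, whence $ete\mathrel{\Cl H}e$, and since $S$ is aperiodic its $\Cl H$-classes are singletons, so $ete=e$. Running the same argument with the transition $s$ in place of $t$ (the composite $(x,y)\xrightarrow{e}(x,y)\xrightarrow{s}(u,v)\xrightarrow{e}(u,v)$) yields $ese=e$.

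With these two identities available, the conclusion is a short computation in $S^I$. Using $ue=u$, $xe=x$, $ut=x$ and $ete=e$, one gets $x=xe=(ut)e=(ue)(te)=u(ete)=ue=u$. Dually, using $ev=v$, $ey=y$, $y=sv$ and $ese=e$, one gets $y=ey=e(sv)=(es)(ev)=(ese)v=ev=v$. Hence $(u,v)=(x,y)$. The only step that requires insight is the idea of flanking the transition by two copies of the idempotent $e$: this simultaneously forces the resulting transition into $J_p$ (via Lemma~\ref{l:a-trivial-characterization-of-Jp}) and keeps it below $e$ in \emph{both} the $\leq_{\Cl R}$- and the $\leq_{\Cl L}$-order, which is exactly what is needed to bring stability, and then aperiodicity, to bear. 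Everything else is routine manipulation.
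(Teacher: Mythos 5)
Your proof is correct and follows essentially the same route as the paper's: flank the transitions between $(u,v)$ and $(x,y)$ by the idempotent $e$, use Lemma~\ref{l:a-trivial-characterization-of-Jp} together with stability to place $ete$ (and $ese$) in the $\Cl H$-class of $e$, invoke aperiodicity to force equality with $e$, and then read off $(u,v)=(x,y)$. The paper's version is more terse (and notes that the final conclusion can be read directly from the definition of a transition once $ete=e$), but the key idea is identical.
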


\begin{proof}
  Since $(u,v)\sim(x,y)$, there are $t,z\in S^I$ such that there
  are edges $(u,v)\xrightarrow t(x,y)$ and $(x,y)\xrightarrow z(u,v)$
  in the category $\Cl T(S)$. Then we also have edges
  as in the following picture:
  $$
  \begin{gpicture}(50,16)(0,0)
    \gasset{Nframe=n,Nadjust=w}
    \node(uv)(5,8){$(u,v)$}
    \node(xy)(45,8){$(x,y)$}
    \drawedge[curvedepth=5](uv,xy){$ete$}
    \drawedge[curvedepth=5](xy,uv){$eze$}
    \drawloop[loopangle=180](uv){$e$}
    \drawloop[loopangle=0](xy){$e$}
  \end{gpicture}
  $$
  By Lemma~\ref{l:a-trivial-characterization-of-Jp} and stability of $S$,
  we conclude that $ete$ and $eze$ are
  \Cl H-equivalent to~$e$, thus, by aperiodicity, we get $ete=eze=e$.
  By the definition of the category~$\Cl T(S)$, we deduce that
  $u=ue=x$ and $v=ey=y$.
\end{proof}

In the following result,
we have a case in which the idempotents of
$J_p$ parameterize the elements of $p$.

\begin{Prop}
  \label{p:idempotents-in-bijection-with-sim-class}
  Let $S$ be a compact and unambiguous aperiodic semigroup.
  Let $p\in\mathfrak{L}(S)$. Then there is a bijection between the $\sim$-class $p$ and
  the set of idempotents in~$J_p$, sending each $(u,v)$ to the unique idempotent $e\in J_p$ that stabilizes~$(u,v)$.
\end{Prop}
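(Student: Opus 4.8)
The plan is to build the claimed bijection out of pieces already assembled. For one direction, given $(u,v)\in p$, Corollary~\ref{c:idempotents-vs-sim-classes} tells us that every idempotent of $J_p$ labels a loop of $\mathcal K_p$; combined with Proposition~\ref{p:Jp-vs-labels-in-strongly-connected-component} (which identifies $J_p$ with the labels of edges of $\mathcal K_p$), we know $J_p$ contains at least one idempotent. First I would show that $(u,v)$ is stabilized by \emph{some} idempotent $e\in J_p$: since $\mathfrak L(S)$ is made of $\sim$-classes, the local monoid of $\Cl T(S)$ at $(u,v)$ is nontrivial, its image under $\Lambda$ is the stabilizer monoid $M_{(u,v)}$, and the minimum ideal $I_{(u,v)}\subseteq J_p$ is a (compact) completely simple semigroup, hence contains an idempotent $e$; this $e$ stabilizes $(u,v)$ and lies in $J_p$. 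So the map $(u,v)\mapsto e$ is at least well-defined as a relation.

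Next I would establish \textbf{uniqueness} of such an $e$, which makes the relation a function: if $e,f\in J_p$ are idempotents both stabilizing $(u,v)$, then $e\mathrel{\Cl J}f$ (both lie in $J_p$), and since $S$ is stable and unambiguous, Lemma~\ref{l:minimum-idempotents-labeling-loops-at-same-point} gives $e=f$. This is exactly the content of that lemma, so the only subtlety is checking its hypotheses, which hold here ($S$ compact implies stable; $S$ is assumed unambiguous). Denote the resulting map $\Phi\colon p\to E(J_p)$, $\Phi(u,v)=e$.

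For \textbf{injectivity} of $\Phi$, suppose $\Phi(u,v)=\Phi(x,y)=e$, i.e. $(u,v)$ and $(x,y)$ are both elements of $p$ stabilized by the same idempotent $e\in J_p$. By Lemma~\ref{l:disjointness-blocks-aperiodic-case} — whose hypotheses (compact, aperiodic) are exactly what we have here — we conclude $(u,v)=(x,y)$. For \textbf{surjectivity}, let $e\in E(J_p)$. By Corollary~\ref{c:idempotents-vs-sim-classes}, $e$ labels a loop of $\mathcal K_p$, hence a loop of $\Cl T(S)$ at some vertex lying in $p$; that is, the set $p_e$ of elements of $p$ stabilized by $e$ is nonempty. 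Pick $(u,v)\in p_e$; then $e$ stabilizes $(u,v)$ and $e\in J_p$, so by the uniqueness already proved we have $\Phi(u,v)=e$. Thus $\Phi$ is a bijection.

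The main obstacle is conceptual rather than computational: one must be careful that "the unique idempotent $e\in J_p$ that stabilizes $(u,v)$" is genuinely well-defined, which rests on combining the \emph{existence} of an idempotent stabilizer in $J_p$ (via the minimum ideal $I_{(u,v)}\subseteq J_p$ being completely simple, hence having idempotents) with the \emph{uniqueness} coming from stability-plus-unambiguity. Everything else is a direct appeal to Lemma~\ref{l:disjointness-blocks-aperiodic-case} and Corollary~\ref{c:idempotents-vs-sim-classes}. Once the well-definedness is nailed down, injectivity and surjectivity are immediate from the cited results, so no delicate estimates or new constructions are needed.
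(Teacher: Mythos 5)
Your proof is correct and follows essentially the same route as the paper's: well-definedness of the map is obtained from the fact that $I_{(u,v)}\subseteq J_p$ contains an idempotent together with Lemma~\ref{l:minimum-idempotents-labeling-loops-at-same-point} for uniqueness, injectivity from Lemma~\ref{l:disjointness-blocks-aperiodic-case}, and surjectivity from Corollary~\ref{c:idempotents-vs-sim-classes}. (One small remark: the aside that the local monoid at $(u,v)$ is ``nontrivial'' is neither needed nor true in general --- when $p$ is a step point the stabilizer monoid is $\{1\}$ --- but the argument goes through anyway, since even a trivial compact monoid has a minimum ideal containing an idempotent, here lying in $J_p=\{1\}$.)
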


\begin{proof}
  Let $(u,v)\in p$.
  There are in $J_p$ idempotents that stabilize $(u,v)$,
  as $J_p$ contains the minimum ideal of the monoid of stabilizers of $(u,v)$.
  If $e,f$ are idempotents of $J_p$ stabilizing $(u,v)$, then $e=f$ by
  Lemma~\ref{l:minimum-idempotents-labeling-loops-at-same-point}.
  Hence, we can consider the function $\varepsilon\colon p\to J_p$
  sending $(u,v)$ to the unique idempotent
  of $J_p$ stabilizing $(u,v)$. The function $\varepsilon$
  is injective by Lemma~\ref{l:disjointness-blocks-aperiodic-case},
  and it is surjective by Corollary~\ref{c:idempotents-vs-sim-classes}.
\end{proof}

\section{Finitely cancelable semigroups}
\label{sec:finitely-cancelable}

Consider a compact semigroup $S$ generated by a closed set $A$.
Recall that, in the context of topological semigroups, that means that
every element of $S$ is arbitrarily close to products of elements of~$A$.
Note that, since $A$ is closed, we have $S=S^IA=AS^I$. Indeed, every
element of $S$ is the limit of a net of the form $(w_ia_i)_{i\in I}$,
where the $a_i\in A$ and the $w_i$ are perhaps empty products of
elements of~$A$. By compactness, we may assume that the nets
$(w_i)_{i\in I}$ and $(a_i)_{i\in I}$ converge in~$S^I$, say to $w$
and $a$, respectively. Since $A$~is closed, we conclude that $a\in A$,
which shows that $S\subseteq S^IA$.

Say that $S$ is \emph{right finitely cancelable with respect to $A$}
when, for every $a,b\in A$ and $u,v\in S^I$, 
 the equality $ua=vb$ implies
 $a=b$ and $u=v$.
 This implies $A\cap SA=A\cap AS=\emptyset$.

 Say that $S$ is \emph{right finitely cancelable}
 if it is finitely cancelable with respect to some closed generating subset $A$.
 It turns out that the set $A$ is uniquely determined by $S$, as shown next.

 \begin{Lemma}\label{l:unicity-of-special-alphabet}
   Let $S$ be a compact semigroup generated by closed subsets $A$ and $B$
   such that $A\cap SA=B\cap SB=\emptyset$. Then we have $A=B$.   
   In particular, if $S$ is right finitely cancelable with respect to $A$ and
   to $B$, then $A=B$.
 \end{Lemma}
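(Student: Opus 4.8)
The plan is to show $A\subseteq B$; the reverse inclusion then follows by symmetry. Let $a\in A$. Since $B$ generates $S$ and $B$ is closed, the argument given just before the lemma shows that $S=S^IB=BS^I$; in particular $a\in S^IB$, so we may write $a$ as the limit of a net of nonempty products of elements of $B$, say $a=\lim_i w_ib_i$ with $b_i\in B$ and $w_i\in S^I$ a (possibly empty) product of elements of~$B$. Passing to a subnet, we may assume $w_i\to w$ and $b_i\to b$ in~$S^I$, and then $a=wb$ with $b\in B$ (as $B$ is closed) and $w\in S^I$.

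The key dichotomy is whether $w=1$ or $w\in S$. If $w\in S$, then $a=wb\in SB$, contradicting $B\cap SB=\emptyset$ once we note that $a\in B$ — but we do not yet know $a\in B$. So instead I would argue directly from $A\cap SA=\emptyset$. The cleaner route: by symmetry (using that $A$ generates $S$) write each $b_i$ as a limit-of-products over~$A$; combining, one sees $a$ is a limit of products over~$A$ in which, if $w\ne 1$, there is at least one letter of~$A$ strictly to the left of the terminal letter, forcing $a\in \overline{SA}=SA$ (here $SA$ is closed because $S$ is compact and multiplication is continuous), contradicting $A\cap SA=\emptyset$. Hence $w=1$, giving $a=b\in B$. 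Thus $A\subseteq B$, and symmetrically $B\subseteq A$, so $A=B$. The final sentence of the statement is then immediate, since right finite cancelability with respect to~$A$ entails $A\cap SA=\emptyset$, as observed in the text.

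The main obstacle is the middle step: showing that if $a=wb$ with $w\in S$ (not~$1$) then $a\in SA$. One must be careful that approximating $b$ by products over~$A$ and multiplying by~$w$ keeps the product "genuinely longer than one letter on the left", i.e.\ that the limit genuinely lies in $SA$ rather than possibly collapsing back into~$A$. The point is that $w\in S=\overline{A^+}$ means $w=\lim w_i'$ with $w_i'\in A^+$, so $wb=\lim w_i' b$ with $w_i'b\in A^+\!A\subseteq SA$, and $SA$ is closed; hence $a=wb\in SA$, which is the desired contradiction. With $SA$ closed the argument is clean; the only thing to verify carefully is that $S$ compact and multiplication continuous indeed make $SA=S\cdot A$ a closed (hence compact) subset of~$S$, which is standard.
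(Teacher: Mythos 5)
Your overall strategy is the same as the paper's: factor $a=wb$ with $w\in S^I$, $b\in B$, and rule out $w\neq 1$ by showing that this would place $a$ in $SA$, contradicting $A\cap SA=\emptyset$. However, your justification of that key step contains a slip: you write $w_i'b\in A^+A\subseteq SA$, but $b$ lies in $B$, not in $A$, so the product $w_i'b$ lies in $A^+B$ and you cannot conclude membership in $SA$ this way. The fix is exactly what the paper does, and it removes the need for your net/closedness detour entirely: since $A$ generates $S$ and $A$ is closed, $S=S^IA$, so one can factor $b=tc$ with $t\in S^I$ and $c\in A$, giving $a=wtc$; if $wt\neq 1$ then $wt\in S$ and $a=(wt)c\in SA$, contradicting $a\in A$ and $A\cap SA=\emptyset$. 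Hence $w=t=1$ and $a=b\in B$. In short, the second factorization should be applied to $b$ itself (landing the \emph{last} letter in $A$), rather than approximating $w$ by words over $A$; once you do that, no limit argument and no verification that $SA$ is closed are needed.
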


 \begin{proof}
   Let $a\in A$. Since $S=S^IB=S^IA$, we have $a=sb$ for some
   $s\in S^I$ and $b\in B$, and $b=tc$ for some $t\in S^I$ and $c\in A$.
   We obtain the factorization $a=stc$. Since $A\cap SA=\emptyset$,
   we must have $s=t=1$, and so $a=b\in B$, showing that
   $A\subseteq B$. By symmetry, we have $B\subseteq A$.
 \end{proof}

 Say that a pseudovariety of semigroups is \emph{right finitely
   cancelable} if $\Om AV$ is right
 finitely cancelable with respect to $A$, for every finite alphabet
 $A$.

  \begin{Prop}\label{p:a-general-example-of-finitely-cancelable}
    A pseudovariety of semigroups $\pv V$
    is right finitely cancelable if and only if $\pv V=\pv D\ast\pv V$.
 \end{Prop}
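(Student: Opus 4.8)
The plan is to prove the equivalence $\pv V$ right finitely cancelable $\iff$ $\pv V=\pv D\ast\pv V$ by relating the defining condition on $\Om AV$ to the semidirect product decomposition, using the characterization of $\pv D$ via its action on words and the standard correspondence between identities satisfied by $\pv D\ast\pv V$ and the structure of $\Om AV$. First I would recall that $\pv D\ast\pv V=\pv V$ means exactly that for every finite alphabet $A$, the natural continuous homomorphism $\Om AV\to\Om AV$ (or rather the relevant expansion) does not identify points that are distinguished by ``looking at the last letter'', i.e.\ that $\Om AV$ already encodes the $\pv D$-type information. Concretely, $\pv D$ is generated by the semigroups whose multiplication is $xy=y$ restricted appropriately, so $\pv D\ast\pv V=\pv V$ is equivalent to saying that in $\Om AV$ one can recover, from any element $w$ and any factorization $w=ua$ with $a\in A$, both $a$ and $u$ uniquely.

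For the forward direction, I would start from the hypothesis that $\Om AV$ is right finitely cancelable with respect to $A$, i.e.\ $ua=vb$ with $a,b\in A$, $u,v\in(\Om AV)^I$ forces $a=b$, $u=v$. The goal is to show $\pv D\ast\pv V\subseteq\pv V$ (the reverse inclusion $\pv V\subseteq\pv D\ast\pv V$ being automatic since $\pv D\ast\pv V$ always contains $\pv V$). I would take a finite semigroup $T\in\pv D\ast\pv V$, present it as a quotient of a subsemigroup of a semidirect product $R\ast W$ with $R\in\pv D$ and $W\in\pv V$, and then build a continuous homomorphism from $\Om AV$ onto $T$ using the cancelability to define the ``$\pv D$-coordinate'' (the last-letter data) as a genuine function on $\Om AV$. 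The key point is that right finite cancelability with respect to $A$ means that the map sending $w\in\Om AV$ to its ``final segment in $\pv D$'' is well defined and continuous, which is precisely what is needed to factor the projection through $\Om AV$. I would invoke Proposition~\ref{p:a-general-example-of-finitely-cancelable}'s analogue on the level of implicit operations, or more simply the Tilson-style description of $\pv D\ast\pv V$.

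For the converse, assuming $\pv V=\pv D\ast\pv V$, I need to show $\Om AV$ is right finitely cancelable with respect to $A$ for every finite $A$. Suppose $ua=vb$ in $(\Om AV)^I$ with $a,b\in A$. Since $\pv D\ast\pv V=\pv V$, in particular $\pv V$ contains $\pv D$, hence $\pv V$ contains $\pv N$ (as $\pv N\subseteq\pv D$), so finite pseudowords sit inside $\Om AV$ and $A$ is a closed (indeed discrete) generating set; moreover $A\cap SA=\emptyset$ by Lemma~\ref{l:unicity-of-special-alphabet}-type reasoning once we know no letter is a product. To get $a=b$, apply a continuous homomorphism $\Om AV\to R$ where $R$ is the two-sided object in $\pv D$ that reads off the last letter (this $R$ lies in $\pv D\subseteq\pv V$), which forces the last letters to coincide. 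To get $u=v$, having established $a=b$, I would use that for any continuous homomorphism $\varphi\colon\Om AV\to W$ with $W\in\pv V$, one can construct a homomorphism into a semidirect product in $\pv D\ast\pv V=\pv V$ that encodes ``$\varphi$ of the prefix before the last letter''; applying this and separating points of $\Om AV$ by such homomorphisms yields $u=v$.

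The main obstacle I expect is the bookkeeping in the converse direction, namely making precise the ``prefix-before-last-letter'' operation as a continuous map into a suitable member of $\pv D\ast\pv V$, and checking that it separates $u$ from $v$: one must show that $\{(\varphi(w),\text{last letter of }w)\}$-type data, for $\varphi$ ranging over all finite $\pv V$-quotients, determines $w$, which is where the hypothesis $\pv D\ast\pv V=\pv V$ is genuinely used (it guarantees the relevant semidirect products stay in $\pv V$, so the corresponding homomorphisms are available on $\Om AV$). The forward direction is comparatively soft, resting on the universal property of $\Om AV$ and the general fact that $\pv D\ast\pv V\supseteq\pv V$ always, so that one only needs the nontrivial inclusion, which the cancelability is tailored to provide.
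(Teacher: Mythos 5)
The paper's proof is a two-line citation chain: right finite cancelability of~$\pv V$ is first shown (citing Almeida--Kl\'{\i}ma) to be equivalent to the \emph{language} condition that for every finite $A$, every $\pv V$-recognizable $L\subseteq A^+$ and every $a\in A$, the language $La$ is again $\pv V$-recognizable; this condition is in turn equivalent (citing Pin--Weil) to $\pv V=\pv D\ast\pv V$. Your plan skips this language-theoretic bridge entirely and tries to argue directly from the profinite definition to the semidirect product decomposition, so it is a genuinely different route --- but as written it has real gaps rather than just unfinished bookkeeping.

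Concretely: the ``comparatively soft'' forward direction is not soft. To show $\pv D\ast\pv V\subseteq\pv V$ you need, for each $T\in\pv D\ast\pv V$ generated by $A$, a continuous homomorphism $\Om AV\to T$, which amounts to showing that $T$ satisfies every pseudoidentity valid in $\pv V$; tracking the $\pv D$-coordinate of such an identity through the semidirect product action on a pseudoword is precisely the content of the wreath product principle, and ``the cancelability defines the last-letter data as a continuous function'' does not yet do this --- among other things you must handle arbitrarily long suffixes and prove that the resulting pair of coordinates is a \emph{semigroup} homomorphism, not merely a continuous map. You also write ``I would invoke Proposition~\ref{p:a-general-example-of-finitely-cancelable}'s analogue on the level of implicit operations,'' which is the very proposition under discussion; the alternative you offer (``Tilson-style description'') is a heavier piece of machinery than the paper uses and is not actually carried out. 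The converse is in the same state: you correctly identify that the crux is separating $u$ from $v$ after forcing $a=b$, but the construction of the separating homomorphism into a member of $\pv D\ast\pv V=\pv V$ --- the step that genuinely uses the hypothesis --- is announced, not done.

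The missing idea worth internalizing is the language-theoretic characterization: $L$ is $\pv V$-recognizable iff its closure $\overline{L}$ in $\Om AV$ is open, and right finite cancelability is equivalent to the map $u\mapsto ua$ being a homeomorphism of $(\Om AV)^I$ onto the clopen set $(\Om AV)^Ia$ for each $a\in A$, which is exactly what makes $\overline{La}=\overline{L}\,a$ clopen. That single observation replaces the entire ``last-letter coordinate'' construction in your forward direction and the ``separating homomorphism'' construction in your converse, after which $\pv V=\pv D\ast\pv V$ follows from the cited closure-under-$La$ characterization of Pin and Weil. In short, what you propose would, if completed, amount to a from-scratch reproof of the wreath product principle in this special case; the paper instead reduces to it and to one elementary observation.
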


 \begin{proof}
   It is observed in~\cite{Almeida&Klima:2015a} that $\pv V$ is right
   finitely cancelable if and only if, for every finite alphabet $A$,
   and for every \pv V-recognizable language $L$ of $A^+$ and $a\in
   A$, the language $La$ is also $\pv V$-recognizable.
   In~\cite{Pin&Weil:2002b} one finds a proof that this is equivalent
   to $\pv V=\pv D\ast\pv V$.
 \end{proof}

 The above definitions have obvious duals which are obtained by replacing
 \emph{right} by \emph{left}. Note that a semigroup pseudovariety $\pv
 V$ is right finitely cancelable if and only the pseudovariety
 $\pv V^{op}$ of semigroups of $\pv V$ with reversed multiplications is left finitely cancelable.
 We say that a compact semigroup is
 \emph{finitely cancelable (with respect to~$A$)}
 if it is simultaneously right and left
 finitely cancelable (with respect to~$A$). Similarly, a pseudovariety of semigroups is \emph{finitely cancelable} if it is simultaneously right and
 left finitely cancelable.

\begin{Example}\label{eg:stabilized-by-D-are-fin-cancel}
  If $\pv V$ is a semigroup pseudovariety containing some nontrivial
  monoid and such that $\pv V=\pv V\ast\pv D$, then $\pv V$ is
  finitely cancelable (cf.~\cite[Exercise 10.2.10]{Almeida:1994a}
  and~\cite[Prop.~1.60]{ACosta:2007t}).
\end{Example}

The following proposition is~\cite[Proposition~6.3]{Almeida&ACosta:2016a}.

\begin{Prop}\label{p:equid-conca-are-finitely-cancelable}
  If $\pv V$ is an equidivisible pseudovariety of semigroups
  not contained in $\pv {CS}$, then $\pv V$
  is finitely cancelable.
\end{Prop}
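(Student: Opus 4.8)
The plan is to reduce the statement to the semidirect-product characterization of one-sided finite cancelability provided by Proposition~\ref{p:a-general-example-of-finitely-cancelable}, combined with the description of equidivisible pseudovarieties in Theorem~\ref{t:characterization-of-equidivisible-pseudovarieties}. Since $\pv V$ is equidivisible and is not contained in $\pv{CS}$, Theorem~\ref{t:characterization-of-equidivisible-pseudovarieties} gives $\pv V=\pv{LI}\malcev\pv V$, which is the only structural hypothesis the argument will use.

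First I would establish the chain of inclusions
\[
  \pv V\subseteq\pv D\ast\pv V\subseteq\pv D\malcev\pv V\subseteq\pv{LI}\malcev\pv V=\pv V,
\]
so that all of them are equalities and, in particular, $\pv V=\pv D\ast\pv V$. The leftmost inclusion holds because the one-element semigroup lies in $\pv D$, so $\pv V=\pv I\ast\pv V\subseteq\pv D\ast\pv V$; the rightmost one is the monotonicity of the Mal'cev product in its first argument together with $\pv D\subseteq\pv{LI}$. The step that requires care is $\pv D\ast\pv V\subseteq\pv D\malcev\pv V$. For a generating semidirect product $T\ast S$ with $T\in\pv D$ and $S\in\pv V$, the projection $T\ast S\to S$ is a surjective homomorphism onto a member of $\pv V$, and for each idempotent $e$ of $S$ the preimage of $e$ is isomorphic to $T$ endowed with the twisted multiplication $t_1\star t_2=t_1\cdot{}^{e}t_2$, where ${}^{e}(-)$ is an idempotent endomorphism of $T$. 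A short computation with the defining pseudoidentity $yx^\omega=x^\omega$ of $\pv D$ shows that $(T,\star)$ still satisfies $yx^\omega=x^\omega$: the $\omega$-powers in $(T,\star)$ coincide with idempotents of $T$ of the form $({}^{e}x)^\omega$, and in $\pv D$ every element sends such an idempotent to itself. Hence each generating member of $\pv D\ast\pv V$ lies in $\pv D\malcev\pv V$, and since the latter is a pseudovariety, $\pv D\ast\pv V\subseteq\pv D\malcev\pv V$. By Proposition~\ref{p:a-general-example-of-finitely-cancelable}, the resulting equality $\pv V=\pv D\ast\pv V$ says precisely that $\pv V$ is right finitely cancelable.

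For left finite cancelability I would apply the case just proved to the opposite pseudovariety $\pv V^{op}$. Equidivisibility of a semigroup is self-dual, the defining condition being symmetric under reversing each product, so $\pv V^{op}$ is again equidivisible; it is not contained in $\pv{CS}$ because $\pv{CS}$ is self-dual. Thus $\pv V^{op}$ is right finitely cancelable, whence $\pv V$ is left finitely cancelable by the duality noted after Proposition~\ref{p:a-general-example-of-finitely-cancelable}, and so $\pv V$ is finitely cancelable. The main obstacle is the inclusion $\pv D\ast\pv V\subseteq\pv D\malcev\pv V$: the blanket statement $\pv W\ast\pv V\subseteq\pv W\malcev\pv V$ fails for a general pseudovariety $\pv W$, so one genuinely has to use that in $\pv D$ the idempotents are right zeros (or, equivalently, run the parallel computation for $\pv{LI}$ with the pseudoidentity $x^\omega yx^\omega=x^\omega$). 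Everything else is routine bookkeeping with standard properties of the operations $\ast$ and $\malcev$ and with the cited results.
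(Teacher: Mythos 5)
Your proof is correct. Note that the paper itself does not prove Proposition~\ref{p:equid-conca-are-finitely-cancelable}; the statement is imported verbatim from~\cite{Almeida&ACosta:2016a}, so there is no in-text argument to compare with. What you have assembled is a legitimate, self-contained derivation from two results stated in this paper, namely Theorem~\ref{t:characterization-of-equidivisible-pseudovarieties} and Proposition~\ref{p:a-general-example-of-finitely-cancelable}. The one step with genuine content is the inclusion $\pv D\ast\pv V\subseteq\pv D\malcev\pv V$, and your verification of it is sound: for a semidirect product $T\ast S$ with $T\in\pv D$ and $S\in\pv V$, the fibre of the projection onto $S$ over an idempotent $e$ of $S$ is $T$ endowed with $t_1\star t_2=t_1\,\sigma(t_2)$, where $\sigma={}^{e}(-)$ is an idempotent endomorphism of $T$; one computes $x^{\star n}=x\,\sigma(x)^{n-1}$, and choosing $n$ with $\sigma(x)^{n-1}=\sigma(x)^{\omega}$ and using that idempotents are right zeros in $T$ gives $x^{\star\omega}=x\,\sigma(x)^{\omega}=\sigma(x)^{\omega}$, an idempotent fixed by $\sigma$, whence $y\star x^{\star\omega}=y\,\sigma(x)^{\omega}=\sigma(x)^{\omega}=x^{\star\omega}$, so $(T,\star)\in\pv D$. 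As $\pv D\malcev\pv V$ is a pseudovariety, the inclusion follows, your chain collapses to $\pv V=\pv D\ast\pv V$, and Proposition~\ref{p:a-general-example-of-finitely-cancelable} gives right cancelability; the passage to the dual is also correct, since equidivisibility and containment in $\pv{CS}$ are both self-dual. The only caution worth making explicit: for this to stand as an independent proof rather than a rearrangement of the cited source, Theorem~\ref{t:characterization-of-equidivisible-pseudovarieties} must be established in~\cite{Almeida&ACosta:2016a} without presupposing the cancelability result; that should be the case, but it is worth a sanity check if you intend to rely on the argument as written.
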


The next lemma is the first of a series of results in which
the hypothesis of a semigroup being finitely cancelable enables us to
get further insight into the quasi-order of $2$-factorizations.

\begin{Lemma}
  \label{l:a-successors-vs-equivalence}
  Suppose $S$ is a compact semigroup, finitely cancelable with
  respect to $A$.
  Let $u,v\in S^I$ and $a\in A$. If
  the $\sim$-class of at least one of $(ua,v)$ and $(u,av)$ is not a
  singleton, then $(u,av)\sim(ua,v)$.
\end{Lemma}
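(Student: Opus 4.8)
The plan is to work with the transition $a$ itself, which always gives an edge $(ua,v)\xleftarrow{a}(u,av)$ in $\Cl T(S)$, so that $(u,av)\leq(ua,v)$ is automatic; the content is to show that under the stated hypothesis this inequality is actually a $\sim$-equivalence, i.e.\ that there is a transition back from $(ua,v)$ to $(u,av)$. By symmetry (passing to $S^{op}$, with respect to the same alphabet $A$), I may assume it is the $\sim$-class of $(ua,v)$ that is not a singleton; the other case is dual. So suppose there is an element $(x,y)\in\mathfrak F(S)$ with $(x,y)\sim(ua,v)$ but $(x,y)\neq(ua,v)$. Since $(u,av)\leq(ua,v)\sim(x,y)$, linearity of $\mathfrak L(S)$ (Lemma~\ref{l:linear-quasi-ordering}, using that $S$, being compact, is equidivisible — here I would invoke that a finitely cancelable compact semigroup generated by $A$ satisfies the hypotheses in play; equidivisibility is where compactness enters) is not yet enough, so instead I would use the $\sim$-equivalence directly: there are transitions $(ua,v)\xrightarrow{p}(x,y)$ and $(x,y)\xrightarrow{q}(ua,v)$ with $p,q\in S^I$, whence the loop $(ua,v)\xrightarrow{pq}(ua,v)$ and also a loop at $(u,av)$ once we transport through $a$.

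The key step is a letter-cancellation argument. From the edge $(ua,v)\xrightarrow{p}(x,y)$ we get $uap=x$ and $v=py$; from $(x,y)\xrightarrow{q}(ua,v)$ we get $xq=ua$ and $y=q v$. Composing, $uapq=ua$. Now I want to peel the last letter $a$ off both sides. Write the left side as $u\cdot apq$; since $S=S^IA$, I can write $apq=w b$ with $w\in S^I$, $b\in A$ — but more directly, the point is that $uapq$ and $ua$ both end in the letter $a$ in the factored sense, and right finite cancelability with respect to $A$ says precisely that if $u'a'=u''a''$ with $a',a''\in A$ then $a'=a''$ and $u'=u''$. The subtlety is that $apq$ is not a priori of the form (something)$\cdot a$. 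To handle this, I would instead run the argument one letter at a time along the transitions, or — cleaner — observe that $(x,y)\neq(ua,v)$ forces $p\neq 1$ or $q\neq 1$, and then use that $x = uap$ still lies in $S = S^I A$ while simultaneously $x q = ua \in S^I A$, so $x$ itself has a "last letter''; applying right finite cancelability to $xq = ua$ (rewriting $q$, if $q\neq1$, as $q = q' b$ with $b\in A$, $q'\in S^I$) yields $b = a$ and $xq' = u$, i.e.\ $(ua,v) \xrightarrow{q'} (u, q b v) = (u, q v) = (u, y) $, and since $y = qv = q'bv = q'av'$ for the appropriate decomposition, this is a transition into the $\sim$-class of $(u,av)$.

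Assembling: the chain of transitions I have produced shows $(ua,v)\leq(u',a v'')$-type vertices that are $\leq (u,av)$, which combined with the automatic $(u,av)\leq(ua,v)$ and linearity of $\mathfrak L(S)$ closes the loop and gives $(u,av)\sim(ua,v)$. If instead $q=1$ (so $p\neq1$), then $x=ua$ already, and $(x,y)\neq(ua,v)$ forces $y\neq v$, yet $v=py$ with $p\neq1$ and $y=v$ — contradiction after a symmetric application of cancelability to the prefix side (using $S=AS^I$ and left finite cancelability on $v=py$, peeling the first letter), so this degenerate case cannot arise; this is the one spot where I need \emph{both} the right and the left cancelability hypotheses, which is why the lemma is stated for finitely cancelable $S$. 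The main obstacle I anticipate is the bookkeeping in "peeling one letter at a time'': making precise that a transition label, not being a letter, must nonetheless be decomposed through the generating set $A$ so that finite cancelability applies, and doing so on the correct ($p$ or $q$) side without circularity. Everything else is a routine chase in $\Cl T(S)$ using Remark~\ref{r:trivial-preservation} to move edges between $\Cl T(u)$-type categories and $\Cl T(S)$.
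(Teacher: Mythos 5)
Your core argument is, once cleaned up, the paper's own proof: pick $(x,y)\sim(ua,v)$ with $(x,y)\neq(ua,v)$, take the back-transition $q$ with $xq=ua$ and $y=qv$, decompose $q=q'b$ with $b\in A$ (using $S=S^IA$), and apply right finite cancelability to $xq'b=ua$ to get $b=a$ and $xq'=u$; then $q'$ is a transition $(x,y)\to(u,av)$ because $y=qv=q'bv=q'(av)$, and the chain $(u,av)\leq(ua,v)\sim(x,y)\leq(u,av)$ gives $(u,av)\sim(ua,v)$ by transitivity alone. This is the same decomposition-and-cancelation step as in the paper, which simply packages it as one transition directly from $(ua,v)$ to $(u,av)$ instead of routing through $(x,y)$.

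Several of the surrounding claims, however, are wrong or misleading. The degenerate case $q=1$ is immediate: $q=1$ forces $x=ua$ \emph{and} $y=1\cdot v=v$, so $(x,y)=(ua,v)$ outright, and there is no tension of the kind you describe (``$y\neq v$ yet $y=v$''); no appeal to $p\neq1$, to $S=AS^I$, or to left cancelability is needed there. Consequently your statement that the proof uses \emph{both} right and left cancelability at one spot is mistaken: after the initial duality reduction to the case where the $\sim$-class of $(ua,v)$ is non-singleton, only the right-hand hypothesis appears, and the left-hand one is consumed entirely by the phrase ``by duality.'' Also, compactness does \emph{not} imply equidivisibility, so the aside ``equidivisibility is where compactness enters'' would be false if it were load-bearing; in fact neither equidivisibility nor linearity of $\mathfrak{L}(S)$ is used anywhere in this lemma, and the invocation of linearity in your final ``Assembling'' sentence is superfluous, since the three-term chain of $\leq$'s already gives $\sim$. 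Finally, the displayed transition ``$(ua,v)\xrightarrow{q'}(u,qbv)$'' is mislabelled: since $xq'=u$, its source is $(x,y)$, not $(ua,v)$, and the target's second coordinate is $av$, not $qbv$.
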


\begin{proof}
  By duality, it suffices to consider
  the case where the $\sim$-class of $p=(ua,v)$ is not a singleton.
  Let $q$ be in $p/{\sim}$ with $p\neq q$.
  As $p\leq q$, we may consider a transition $x$ from $p$ to $q$.
  Then we have $q=(uax,y)$ for some $y\in S^I$ such that $v=xy$.
  Since $q\leq p$, there is $t$  such that $ua=uaxt$ and $y=txy$.
  Because $p\neq q$, we must have $t\neq 1$,
  whence we may take $b\in A$ and $z\in S^I$ such that $t=zb$.
  Because $S$ is finitely cancelable with respect to $A$,
  from $ua=uaxt=uaxzb$ we get $a=b$ and $u=(ua)(xz)$.
  On the other hand, we have $(xz)(av)=x(za)v=xtxy=xy=v$,
  which shows that $(ua,v)\sim(u,av)$.
\end{proof}

We now turn our attention to profinite semigroups.

\begin{Prop}
  \label{p:letter_increasing_value_in_F}
  Suppose $S$ is a profinite semigroup
  generated by a closed subset $A$.
  Let $p,q\in \mathfrak F(s)$ with $p<q$. Then,
  there are $x,y\in S^I$ and $a\in A$ such that 
  $$p\leq(x,ay)<(xa,y)\leq q.$$  
\end{Prop}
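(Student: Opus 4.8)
The statement asks for a single‑letter ``step'' strictly between $p$ and $q$, so the plan is to walk from $p$ towards $q$ by peeling off one generator at a time and to argue that such a step must occur before one could go all the way round a $\sim$‑class. Concretely, fix a transition $t$ from $p=(u_0,v_0)$ to $q=(u_1,v_1)$; since $p<q$ we have $t\neq 1$, hence $t\in S$. As $A$ is a closed generating set we have $S=AS^I$, so I can write $t=a_1t_1$ with $a_1\in A$ and $t_1\in S^I$. Setting $y_1:=t_1v_1$, the pair $r_1:=(u_0a_1,y_1)$ is a $2$‑factorization of $s$, there is a single‑letter transition $r_0:=p=(u_0,a_1y_1)\xrightarrow{\,a_1\,}r_1$, and $t_1$ is a transition $r_1\to q$; thus $p\leq r_1\leq q$. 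If $r_0<r_1$ we are done, taking $(x,y,a)=(u_0,y_1,a_1)$. Otherwise $r_1\sim p$, and since $p<q$ this forces $r_1<q$, so the construction can be repeated with the pair $(r_1,q)$, and so on. Either the process stops at some finite stage, producing the required strict single‑letter step inside $[p,q]$, or it yields an infinite sequence $r_0=p,r_1,r_2,\dots$ in $\mathfrak F(s)$ with $r_n\sim p$, $r_n<q$, $r_n=(u_0c_n,y_n)$ where $c_n=a_1\cdots a_n$ with $a_i\in A$, $v_0=c_ny_n$, and a transition $t=c_n\bar t^{(n)}$ from $p$ to $q$ factoring through $r_n$.

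It then remains to rule out the infinite case. The ingredients I would use are: each $r_n\sim p$ gives, via a return transition $r_n\to p$, the relation $u_0c_n\mathrel{\Cl R}u_0$; compactness of $S$, hence closedness of $\le$ on $\mathfrak F(s)$ (Lemma~\ref{l:if-S-is-compact-then-the-quasi-order-is-closed}), lets me extract convergent subnets of $(r_n)_n$, $(a_{n+1})_n$, $(c_n)_n$ and of the tails $a_{n+1}\cdots a_m$; and the resulting limits land inside the strongly connected component $\Cl T_{\chi(p)}$ of the category of transitions, on which I can invoke the minimum‑ideal semigroupoid $\Cl K_{\chi(p)}$ together with the description of $J_{\chi(p)}$ (Lemma~\ref{l:a-trivial-characterization-of-Jp}, Corollary~\ref{c:idempotents-vs-sim-classes}). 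From the tails one extracts an idempotent labelling a loop at a point $\sim p$; using that a compact semigroup is stable, one shows that this idempotent also stabilizes an endpoint of the transition to $q$, which collapses $p$ and $q$ into the same $\sim$‑class and contradicts $p<q$.

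The letter‑by‑letter construction is routine; the main obstacle is this last step, namely showing that an infinite run of the peeling process staying inside a single $\sim$‑class is incompatible with $p<q$. That is exactly where the compactness of $S$ and the idempotent/minimum‑ideal machinery of Sections~4 and~5 come into play, and it is the only place where real work is required.
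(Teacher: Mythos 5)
Your letter-by-letter peeling from the left cannot find the required step in general, and the claimed contradiction in the infinite case is false. Here is an explicit counterexample to the final step of your argument: take $S=\Om{\{a\}}{A}$, $s=a^\omega$, $p=(a^\omega,a^\omega)$ and $q=(a^\omega,a)$. Then $p<q$ (there is no $t\in S^I$ with $a^\omega t=a^\omega$ and $a=ta^\omega$), the transition from $p$ to $q$ is $a^\omega$, and the only way to peel a letter is $a^\omega=a\cdot a^\omega$. The resulting point is $(a^\omega a,\,a^\omega a)=(a^\omega,a^\omega)=p$, so the process is stuck at $p$ forever with $r_n=p$ and tail idempotent $e=a^\omega$. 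That idempotent stabilizes $p$ but does \emph{not} stabilize $q$ (since $a^\omega\cdot a=a^\omega\neq a$), so $p$ and $q$ are not collapsed and no contradiction arises. The genuine step in $[p,q]$ is $(a^\omega,a^2)\prec(a^\omega,a)$, which sits next to $q$, not next to $p$, and is found by peeling from the right.

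The underlying flaw is structural: if your argument worked, it would show that whenever $p<q$ there is a step \emph{immediately after} the $\sim$-class of $p$, i.e.\ that $\chi(p)$ has a successor; but stationary points have no successor (Proposition~\ref{p:shifting-allowed-implies-type-2}) and can still satisfy $p<q$. The step promised by the statement can lie anywhere in $[p,q]$, and two stationary endpoints leave it strictly in the interior. This is exactly why the paper passes to a finite continuous quotient $\varphi\colon S\to R$ in which the system $\{utX=u,\ Xtv'=v'\}$ remains unsolvable: in $R$ the image interval is spanned by a finite word $t_n$, the chain of single-letter inequalities has strictly ordered endpoints, so a strict step must occur at some index $i_n$ that may be far from either end; compactness then yields the limiting $(x,ay)<(xa,y)$. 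The role of the finite quotient is precisely to locate a step ``in the middle,'' which a one-sided peeling in $S$ cannot do, and the idempotent/minimum-ideal machinery cannot substitute for it.
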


\begin{proof}
  Let $p=(u,v)$ and $q=(u',v')$.
  Since $(u,v)<(u',v')$, there exists $t\in S$ such that $u'=ut$
  and $v=tv'$, and the system
  \begin{equation}
    \label{eq:letter_increasing_value_in_F-1}
    \left\{
      \begin{array}{ll}
        utX &= u\\
        Xtv' &= v'
      \end{array}
    \right.
  \end{equation}
  has no solution $X\in S$.
  By a standard compactness argument which can be found
  in the proof of~\cite[Theorem~5.6.1]{Almeida:1994a}, there is some
   continuous onto homomorphism $\varphi_0\colon S\to R$,
   with $R$ finite,
   which may be naturally extended to
   an onto continuous homomorphism $\varphi\colon S^I\to R^I$,
   and  such that the following
  system~\eqref{eq:letter_increasing_value_in_F-2} has no solution
  $X\in R$:
  \begin{equation}
    \label{eq:letter_increasing_value_in_F-2}
    \left\{
      \begin{array}{ll}
        \varphi(u)\varphi(t)X &= \varphi(u)\\
        X\varphi(t)\varphi(v') &= \varphi(v').
      \end{array}
    \right.
  \end{equation}
  Let $(t_n)_n$ be a net of elements of
  the (discrete) subsemigroup of $S$ generated
  by $A$ such that $(t_n)_n$ converges to $t$ and
  such that $\varphi(t_n)=\varphi(t)$ for all~$n$.
  Write $t_n=a_{n,0}a_{n,1}\cdots a_{n,k_n}$, with the $a_{n,i}\in
  A$. Then the following inequalities hold for $i=0,\ldots,k_n$:
  \begin{align}
    &(\varphi(ua_{n,0}\cdots a_{n,i-1}),\varphi(a_{n,i}\cdots
    a_{n,k_n}v'))
    \nonumber\\
    &\qquad\leq
    (\varphi(ua_{n,0}\cdots a_{n,i}),\varphi(a_{n,i+1}\cdots a_{n,k_n}v')).
    \label{eq:letter_increasing_value_in_F-3}
  \end{align}
  Since $\leq$~is a transitive relation and
  the non-existence of a solution to 
  \eqref{eq:letter_increasing_value_in_F-2} guarantees that the following
  strict inequality holds
  $$
  (\varphi(u),\varphi(a_{n,0}\cdots a_{n,k_n}v'))
  <
  (\varphi(ua_{n,0}\cdots a_{n,k_n}),\varphi(v')),
  $$
  we deduce that there is $i=i_n$ such that the inequality
  \eqref{eq:letter_increasing_value_in_F-3} is also strict.
  As $A$ is closed and $S$~is compact, by taking subnets we
  may assume that the net $(a_{n,i_n})_n$
  converges to some $a\in A$,
  that $\varphi(a_{n,i_n})=\varphi(a)$ for every $n$,
  and that each of the nets $t'_n=a_{n,0}\cdots
  a_{n,i_n-1}$ and $t''_n=a_{n,i_n+1}\cdots a_{n,k_n}$ converges to
  some $t',t''\in S^I$, respectively
  (in particular, this yields $t=t'at''$). Then the strict inequality
  in~\eqref{eq:letter_increasing_value_in_F-3}, with $i=i_n$, yields
  $(\varphi(ut'),\varphi(at''v'))<(\varphi(ut'a),\varphi(t''v'))$,
  which implies that
  $$p=(u,t'at''v')
  \leq(ut',at''v')
  <(ut'a,t''v')
  \leq(ut'at'',v')=q.
  $$
  Thus, it suffices to choose $x=ut'$ and $y=t''v'$ to obtain the
  inequalities of the statement of the proposition.
\end{proof}

We close this subsection with a result
regarding the existence of a successor in the quasi-ordered set
of $2$-factorizations.

\begin{Prop}\label{p:covers-in-F}
  Suppose $S$ is a profinite semigroup, finitely cancelable with
  respect to $A$.
  Let $p,q\in\mathfrak{F}(s)$ and suppose that $p<q$.
  \begin{enumerate}
  \item\label{item:covers-in-F:1}
    Consider the unique $u,v,a$ such that
    $u,v\in S^I$, $a\in A$ and $p=(u,av)$.
    If $p\prec q$, then we have $$p=(u,av)\prec(ua,v)=q.$$ 
    Moreover, the $\sim$-classes of $p$ and $q$ are singletons.
  \item\label{item:covers-in-F:2} Conversely, if $p=(u,av)$ and
    $q=(ua,v)$, where $u,v\in S^I$ and $a\in  A$, then we have $p\prec q$.
  \end{enumerate}  
\end{Prop}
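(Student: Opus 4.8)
\textbf{Proof plan for Proposition~\ref{p:covers-in-F}.}

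The plan is to treat the two parts in opposite logical order to the way they are stated, since part~\ref{item:covers-in-F:2} is the more elementary and feeds into part~\ref{item:covers-in-F:1}. For part~\ref{item:covers-in-F:2}, I would start from the hypothesis $p=(u,av)$, $q=(ua,v)$ and first check that $p<q$: the transition $a$ witnesses $p\le q$, and if we had $p\sim q$ then a transition $t$ from $q$ to $p$ would give $ua=uat$ and $av=tv$ with $t\ne 1$; writing $t=zb$ with $b\in A$, finite cancelability applied to $ua=uatz b$ (i.e.\ to $ua=(uatz)b$) forces $a=b$, and then one manipulates the equations as in the proof of Lemma~\ref{l:a-successors-vs-equivalence} to reach a contradiction --- so $p<q$ genuinely. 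For the covering property, suppose $p\le r\le q$; by Proposition~\ref{p:letter_increasing_value_in_F} applied inside the interval (or directly), any factorization $r=(w,w')$ with $p\le r$ and $r\le q$ must, using equidivisibility of $S$ (a profinite semigroup closed under the relevant hypotheses is equidivisible, or one argues directly via the transitions), have its transition from $p$ either a prefix of $a$ or have $a$ as a prefix; since $a\in A$ is a letter and $A\cap SA=A\cap AS=\emptyset$, the only proper factorizations of $a$ in $S^I$ are $(1,a)$ and $(a,1)$, so $r\sim p$ or $r\sim q$. Hence $p\prec q$.

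For part~\ref{item:covers-in-F:1}, assume $p\prec q$ where $p=(u,av)$ is written in the (unique, by finite cancelability) letter-factored form. I would first argue that the $\sim$-class of $p$ is a singleton: if not, then by Lemma~\ref{l:a-successors-vs-equivalence} we would have $(u,av)\sim(ua,v)$, and since $(u,av)=p\le(ua,v)$, the point $(ua,v)$ lies in $p/{\sim}$, i.e.\ $(ua,v)\sim p$; but then, invoking Proposition~\ref{p:letter_increasing_value_in_F} on the strict interval $]p,q[$ (nonempty would contradict $p\prec q$, so actually the successor structure is rigid), one derives that $p$ cannot have a successor distinct from points $\sim$-equivalent to it --- more carefully, one shows that from $p<q$ and $p\sim(ua,v)$ one can interpolate a point strictly between $p$ and $q$ unless the $\sim$-classes collapse, contradicting $p\prec q$. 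Once the $\sim$-class of $p$ is known to be a singleton, Lemma~\ref{l:a-successors-vs-equivalence} forces the $\sim$-class of $(ua,v)$ to be a singleton as well (otherwise $(u,av)\sim(ua,v)$, contradicting the singleton property of $p$'s class together with $p\le(ua,v)$). Then I claim $q=(ua,v)$: by Proposition~\ref{p:letter_increasing_value_in_F} applied to $p<q$, there exist $x,y\in S^I$ and $b\in A$ with $p\le(x,by)<(xb,y)\le q$; since $p$'s $\sim$-class is a singleton and $p\le(x,by)$, matching the letter factorization and using uniqueness gives $x=u$, $b=a$, $y=v$, so $(x,by)\sim p$ and $(xb,y)=(ua,v)$; as $p\prec q$ and $p<(ua,v)\le q$, we conclude $(ua,v)\sim q$, and since $(ua,v)$'s class is a singleton, $q=(ua,v)$. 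Finally $p\prec q$ in the stated form follows from part~\ref{item:covers-in-F:2}.

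The main obstacle I anticipate is the interpolation argument in part~\ref{item:covers-in-F:1} showing that $p\prec q$ forces $p$'s $\sim$-class to be a singleton: one must carefully combine Lemma~\ref{l:a-successors-vs-equivalence} (which says that if either $\sim$-class is non-singleton then $(u,av)\sim(ua,v)$) with Proposition~\ref{p:letter_increasing_value_in_F} (which produces letter-steps strictly inside any strict interval) to rule out a non-trivial $\sim$-class sitting between $p$ and $q$. The subtlety is that ``$p\prec q$'' allows $p\sim(ua,v)$ only if this common class is exactly $p$; establishing that this cannot happen when $p<q$ requires producing, under the assumption of a non-singleton class, an element $r$ with $p<r<q$, which one does by taking a transition realizing the non-triviality of the $\sim$-class and checking, via finite cancelability and the letter-step proposition, that it lands strictly between. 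The other steps are routine once this structural fact is in place.
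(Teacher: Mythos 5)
There is a genuine gap in your argument for part~\ref{item:covers-in-F:2}. You correctly observe (by left finite cancelability, not equidivisibility --- equidivisibility is neither assumed in the hypotheses nor needed here) that if $t$ is a transition from $p=(u,av)$ to $r=(x,y)$ with $t\neq 1$, then $t=as$ for some $s\in S^I$ with $v=sy$. But your conclusion ``since the only proper factorizations of $a$ in $S^I$ are $(1,a)$ and $(a,1)$, so $r\sim p$ or $r\sim q$'' does not follow: when $s\neq 1$, the transition $t=as$ is not a factorization of $a$ at all, and nothing you have said explains why such an $r$ must be $\sim$-equivalent to $p$ or $q$. To close this case you need an extra observation: since $q=(ua,v)=(ua,sy)$, the element $s$ is a transition from $q$ to $r=(uas,y)$, so $q\leq r$, and combined with $r\leq q$ this gives $r\sim q$. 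Alternatively, the paper's route is to suppose $p<r<q$, factor \emph{both} transitions --- $t=at'$ with $v=t'y$, and $z=z'a$ with $u=xz'$ for the transition $z$ from $r$ to $q$ --- and combine to exhibit $t'z'$ as a transition from $q$ back to $p$, contradicting $p<q$. Either fix works, but as written the step is a non sequitur.

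Your argument for part~\ref{item:covers-in-F:1} is essentially the paper's, repackaged as a proof by contradiction (you first show $p/{\sim}$ is a singleton, then deduce the rest), whereas the paper applies Proposition~\ref{p:letter_increasing_value_in_F} and the contrapositive of Lemma~\ref{l:a-successors-vs-equivalence} directly to $p<q$: from $p\leq(u',a'v')<(u'a',v')\leq q$ and $p\prec q$ one gets $p\sim(u',a'v')$ and $q\sim(u'a',v')$, and then the Lemma's contrapositive immediately forces both $\sim$-classes to be singletons, so $p=(u',a'v')$ and $q=(u'a',v')=(ua,v)$ by uniqueness. Your version can be completed along the lines you sketch, but be aware that ``one can interpolate a point strictly between $p$ and $q$ unless the $\sim$-classes collapse'' is doing real work that needs to be spelled out (the key step being that $(x,by)\sim p$ together with the singleton-ness of $(x,by)$'s class, extracted from the Lemma, contradicts the assumed non-singleton-ness of $p$'s class). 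The parenthetical remark about $]p,q[$ being nonempty contradicting $p\prec q$ is also misplaced, since $p\prec q$ means precisely that this interval is empty, so one does not apply the interpolation proposition there.
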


\begin{proof}
  \ref{item:covers-in-F:1}
  Notice that $u,v,a$ really exist and are unique.
  Indeed, take $p=(u,w)$, with $u,w\in S^I$.
  One has $w\neq 1$, because $p<q$, and so $w=av$ for some $a\in A$ and $v\in S^I$, which are unique because $S$ is finitely cancelable with respect to $A$.   
  By  Proposition~\ref{p:letter_increasing_value_in_F},
  there are $u',v'\in S^I$ and $a'\in  A$
  such that $p\leq (u',a'v')<(u'a',v')\leq q$.
  Since $p\prec q$,
  we must have
  $p\sim (u',a'v')<(u'a',v')\sim q$.
  It then follows from Lemma~\ref{l:a-successors-vs-equivalence}
  that
  the $\sim$-classes of $(u',a'v')$ and $(u'a',v')$ are singletons,
  thus $p=(u',a'v')$ and $q=(u'a',v')$.
  By the uniqueness of $u,v$ and $a$,
  we then have $q=(ua,v)$

  \ref{item:covers-in-F:2} Assume there exists $r=(x,y)$ with
  $p<r<q$. There are $z,t$ such that $x=ut$, $av=ty$, $ua=xz$
  and $y=zv$.  If $t=1$, then $r=p$, while if $z=1$, then $r=q$,
  hence both $t$ and $z$ are different from $1$. Since $av=ty$,
  from the fact that $S$ is finitely cancelable with respect to $A$,
  it follows that there is $t'$ such that $t=at'$ and
  $v=t'y$.
  Similarly, there is $z'$ such that $z=z'a$
  and $u=xz'$.
  Therefore, $u=xz'=utz'=ua\cdot t'z'$ and
  $v=t'y=t'zv=t'z'\cdot av$. This shows that $p\sim q$, in contradiction
  with $p<q$. Hence~$p\prec q$.
\end{proof}

\section{Step points and stationary points}
\label{sec:step-points-stat}

In this section, we continue gathering important
  properties of the linear orders induced by pseudowords. We identify
  two types of elements in such orders, that we call step points and
  stationary points. Let us start by introducing these
  notions.

Let $P$ be a partially ordered set. We call
\emph{step points} the points of~$P$
that admit either a successor or a predecessor,
or are the minimum or the maximum of~$P$, if they exist.
All other points are
said to be \emph{stationary}. The set of step points of $L$
will be denoted by $\step (L)$, and the set of stationary points of $L$
will be denoted by $\stat (L)$.

For an element $s$ of a semigroup $S$, we also say that
$p\in \mathfrak {F}(s)$ is a \emph{step point}
(respectively, a \emph{stationary point}) if $\chi(p)$ is a step point
(respectively, a stationary point) of $\mathfrak {L}(s)$.
In this section we further develop the results obtained in
Section~\ref{sec:finitely-cancelable}, using the notions
of step point and stationary point.
If $S$ is profinite and finitely cancelable, then the $\sim$-class
$p$ of a step point
$(u,v)$ of $\mathfrak F(S)$
is a singleton
(cf.~Proposition~\ref{p:covers-in-F}\ref{item:covers-in-F:1}),
for which reason, in that case, we feel free to make the abuse of
notation $p=(u,v)$.

As a preparation for the following example, recall that in
a compact semigroup $S$, if $s\in S$, then $s^\omega$
denotes
the unique idempotent in the closed subsemigroup of $S$ generated by~$s$. Later on, we
shall also make use of the notation $s^{\omega+1}$ for $s^\omega s$,
and $s^{\omega-1}$ for the inverse of $s^{\omega+1}$ in the maximal
subgroup containing~$s^{\omega+1}$.

\begin{Example}\label{eg:order-aomega-aperiodic}
  Consider the pseudoword $a^\omega$ of the free pro-aperiodic
  semigroup $\Om {\{a\}}A$.
  Then $\mathfrak {F}(a^\omega)$ has only one stationary point, namely
  $(a^\omega,a^\omega)$.
  The set $\mathfrak {F}(a^\omega)$
  is linearly ordered, whence isomorphic to $\mathfrak {L}(a^\omega)$,
  with order type $\omega+1+\omega^\ast$.
  More precisely, its elements are ordered as follows:
  {\small
\begin{equation*}
(1,a^\omega)<(a,a^\omega)<(a^2,a^\omega)
<\cdots <(a^\omega,a^\omega)<\cdots
<(a^\omega,a^2)<(a^\omega,a)<(a^\omega,1).
\end{equation*}
}
\end{Example}

Example~\ref{eg:order-aomega-aperiodic} should be compared with the
following one.

\begin{Example}\label{eg:order-aomega-S}
  Consider the pseudoword $a^\omega$ of the free profinite semigroup
  $\Om AS$.
  Like in Example~\ref{eg:order-aomega-aperiodic},
  $\mathfrak {L}(a^\omega)$ has
  order type
  $\omega+1+\omega^*$,
  and its sole stationary point is $p=(a^\omega,a^\omega)/{\sim}$:
  {\small
    \begin{equation*}
      (1,a^\omega)<(a,a^{\omega-1})<(a^2,a^{\omega-2})
      <\cdots <p<\cdots
      <(a^{\omega-2},a^2)<(a^{\omega-1},a)<(a^\omega,1).
    \end{equation*}
  }  
  But $(a^\omega,a^\omega)/{\sim}$ has infinitely many elements,
  namely, the pairs of the form $(g,g^{\omega-1})$, where $g$
  is an element in the maximal subgroup containing $a^\omega$.
\end{Example}

Examples~\ref{eg:order-aomega-aperiodic}
and~\ref{eg:order-aomega-S} fit in the following definition.

\begin{Def}[Clustered sets]\label{def:clustered-sets}
  We say that the linearly ordered set $P$ is
\emph{clustered} if the following conditions hold:
\begin{enumerate}[label=(C.\arabic*)]
\item\label{item:clustered-1} $P$ has a minimum $\min P$ and a maximum
  $\max P$;
\item\label{item:clustered-2} for every $q\in P$, if $q=\min P$
  or $q$ has a predecessor,
  then $q$ has a successor or $q=\max P$;
\item\label{item:clustered-3} for every $q\in P$,
  if $q=\max P$ or $q$ has a successor,
  then $q$ has a predecessor  or $q=\min P$;
\item\label{item:clustered-4}
  for every $p,q\in P$, if ${]}p,q{[}$ is nonempty, then
  there is a step point in the interval ${]}p,q{[}$.
\end{enumerate}
\end{Def}

Property \ref{item:clustered-4} translates
into saying that the set of step points of $P$ is dense
with respect to the order topology of $P$.

\begin{Thm}\label{t:cluster}
  Let $S$ be a profinite semigroup which is finitely cancelable,
  and let $s\in S$. Then $\mathfrak {L}(s)$ is clustered.
\end{Thm}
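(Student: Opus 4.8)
The plan is to verify the four conditions \ref{item:clustered-1}--\ref{item:clustered-4} in Definition~\ref{def:clustered-sets} for $P=\mathfrak L(s)$, using the machinery already set up. Recall that $S$ is profinite, hence compact; moreover, being finitely cancelable it is in particular equidivisible? No---finitely cancelable does not by itself give equidivisibility, so I must be careful: I can use Proposition~\ref{p:letter_increasing_value_in_F} and Proposition~\ref{p:covers-in-F}, which only require profiniteness and finite cancelability, but I cannot freely invoke linearity of $\mathfrak L(s)$. However, the intervals $]p,q[$ in \ref{item:clustered-4} are defined purely from the quasi-order, so linearity is not needed there. For \ref{item:clustered-1}, the factorizations $(1,s)$ and $(s,1)$ are clearly the minimum and the maximum of $\mathfrak F(s)$ (with $t=u'$ and $t=v$ witnessing the transitions), hence $\chi(1,s)=\min\mathfrak L(s)$ and $\chi(s,1)=\max\mathfrak L(s)$.

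For \ref{item:clustered-2} and \ref{item:clustered-3}, which are dual, I will treat \ref{item:clustered-2}. Suppose $q\in\mathfrak L(s)$ either equals $\min\mathfrak L(s)$ or has a predecessor; I must produce a successor of $q$ unless $q=\max\mathfrak L(s)$. Lift $q$ to a factorization $(u,v)\in\mathfrak F(s)$. If $q$ has a predecessor $q'\prec q$, then by Proposition~\ref{p:covers-in-F}\ref{item:covers-in-F:1} applied to the cover $q'\prec q$, the factorization $q$ has the form $(ua,v)$ for some $a\in A$, $u,v\in S^I$, and its $\sim$-class is a singleton; if instead $q=\min\mathfrak L(s)$, then $q=(1,s)$, so again $q=(ua,v)$ with $u=1$ and $s=av$ (using that $s\notin A\cup\{1\}$ forces $s=av$ by finite cancelability, unless $s\in A$, a degenerate case where $(1,s)=\max$ too and there is nothing to prove). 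In either case, if $v=1$ then $q=(s,1)=\max\mathfrak L(s)$ and we are done; otherwise $v=bv'$ for a unique $b\in A$ and $v'\in S^I$, and Proposition~\ref{p:covers-in-F}\ref{item:covers-in-F:2} gives $(ua,bv')\prec(uab,v')$, i.e.\ $q$ has a successor. Condition \ref{item:clustered-3} is handled symmetrically using the left-hand finite cancelability.

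For \ref{item:clustered-4}, suppose $]p,q[\neq\emptyset$, so there is $r$ with $p<r<q$. Apply Proposition~\ref{p:letter_increasing_value_in_F} to the strict inequality $p<r$: there are $x,y\in S^I$ and $a\in A$ with $p\leq(x,ay)<(xa,y)\leq r<q$. Now I claim $(x,ay)$ --- or rather its image in $\mathfrak L(s)$ --- is a step point lying in $]p,q[$: by Proposition~\ref{p:covers-in-F}\ref{item:covers-in-F:2}, $(x,ay)\prec(xa,y)$, so $(x,ay)$ admits a successor and is therefore a step point; and $p\leq(x,ay)<(xa,y)\leq r<q$ shows $(x,ay)\in{]}p,q{]}$. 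The one subtlety is that I need $(x,ay)$ to lie \emph{strictly} above $p$, not merely $\geq p$. If $(x,ay)\sim p$, I instead take $(xa,y)$, which satisfies $p<(xa,y)\le r<q$, hence $(xa,y)\in{]}p,q[$, and $(xa,y)$ has a predecessor $(x,ay)$ by the same cover, so $(xa,y)$ is a step point in $]p,q[$. Thus in all cases $]p,q[$ contains a step point.

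The main obstacle is bookkeeping around the quasi-order versus the induced partial order on $\mathfrak L(s)$: one must consistently distinguish $(u,v)\in\mathfrak F(s)$ from its $\sim$-class, invoke Proposition~\ref{p:covers-in-F}\ref{item:covers-in-F:1} (which tells us step-point $\sim$-classes are singletons, licensing the abuse of notation already sanctioned in the text) at the right moments, and handle the degenerate cases $s\in A$ or $v=1$ separately so that the ``successor or $\max$'' disjunction in \ref{item:clustered-2} (and its dual) is genuinely covered. Once those cases are dispatched, each of the four properties follows in a couple of lines from Propositions~\ref{p:letter_increasing_value_in_F} and~\ref{p:covers-in-F}.
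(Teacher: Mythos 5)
Your overall approach mirrors the paper's: verify \ref{item:clustered-1}--\ref{item:clustered-4} using Propositions~\ref{p:letter_increasing_value_in_F} and~\ref{p:covers-in-F}. Your handling of \ref{item:clustered-1} and \ref{item:clustered-4} is correct (and for \ref{item:clustered-4} slightly more self-contained than the paper's, since you do not first reduce to the case of a stationary $p$). But there is a genuine gap in your argument for \ref{item:clustered-2}.

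When you write $v=bv'$ and invoke Proposition~\ref{p:covers-in-F}\ref{item:covers-in-F:2} to conclude $(ua,bv')\prec(uab,v')$, you are applying that proposition without its standing hypothesis. Proposition~\ref{p:covers-in-F} assumes $p<q$ throughout, and the proof of part~\ref{item:covers-in-F:2} only shows there is nothing strictly between $p$ and~$q$; that, \emph{together with} $p<q$, yields $p\prec q$. Without the strict inequality the conclusion can simply be false: one may have $(u,av)\sim(ua,v)$, as happens exactly at stationary points (e.g.\ within the $\sim$-class of $(a^\omega,a^\omega)$ in Example~\ref{eg:order-aomega-S}). So before applying~\ref{item:covers-in-F:2} you must rule out $(ua,bv')\sim(uab,v')$. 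The paper does this explicitly: if $q=r$, finite cancelability forces $p=q$, contradicting $p\prec q$; if $q\sim r$ with $q\neq r$, the $\sim$-class of $q$ is not a singleton, so Lemma~\ref{l:a-successors-vs-equivalence} gives $p\sim q$, again a contradiction. You could instead argue from the observation (which you already made via~\ref{item:covers-in-F:1}) that the $\sim$-class of $q$ is a singleton, reducing $q\sim r$ to $q=r$ and then to $p=q$ --- but that step has to be written down; as it stands your proof silently assumes it.

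A minor secondary issue: your treatment of the $q=\min$ case is garbled. You cannot write $(1,s)=(ua,v)$ with $u=1$ and $a\in A$, since that forces $1=a\in A$. For $q=(1,s)$ one should factor $s=av$ and exhibit the successor $(a,v)$ directly; here the needed strict inequality $(1,av)<(a,v)$ is immediate, because $(a,v)\leq(1,av)$ would require some $t$ with $at=1$, which is impossible.
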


\begin{proof}
  Property~\ref{item:clustered-1} in Definition~\ref{def:clustered-sets}
  holds trivially, with $\min \mathfrak {L}(s)=(1,s)$
  and $\max \mathfrak {L}(s)=(s,1)$.

  Let us show~\ref{item:clustered-2}.
  Take $s\in S$ and $p,q\in\mathfrak{F}(s)$ with $p\prec q$.
  We have to show that either $q$ has a successor, or $q=\max P$.
  Let~$A$ be the generating set with respect to
  which $S$ is finitely cancelable.
  By Proposition~\ref{p:covers-in-F}, there are $u,v\in S^I$
  and $a\in A$ with $p=(u,av)$ and $q=(ua,v)$.
  If $v=1$, then $q=\max P$, so that we may assume $v\neq1$.
  Let $v=bw$ with $b\in A$, and let
  $r=(uab,w)$.  Clearly, we have $q\leq r$. We claim that $q\prec
  r$. Indeed, if $q=r$, then $ua=uab$ and $bw=w$. Therefore, $a=b$,
  $u=ua$, and $av=aaw=aw=v$, showing that $p=q$, in contradiction
  with the hypothesis. Hence, we must have $q<r$, since otherwise we
  would obtain $q\sim r$ and $q\neq r$, which entails $p\sim q$ by
  Lemma~\ref{l:a-successors-vs-equivalence}.
  Finally, by Proposition~\ref{p:covers-in-F}\ref{item:covers-in-F:2} applied to $q$
  and $r$, we get $q\prec r$.
  This establishes~\ref{item:clustered-2}, and~\ref{item:clustered-3}
  holds dually.

  Finally, let us prove~\ref{item:clustered-4}.
  If $p$ is a step point,
  then ${]}p,q{[}\neq\emptyset$
  implies that the successor of $p$ belongs to ${]}p,q{[}$. Hence,
  it suffices to consider the case where $p$ is stationary.
  Let $\hat p\in\chi^{-1}(p)$
  and $\hat q\in\chi^{-1}(q)$.
  By Proposition~\ref{p:letter_increasing_value_in_F}, there are
  $u,v\in S^I$ and $a\in A$ such that
  $\hat p\leq(u,av)<(ua,v)\leq \hat q$
  in $\mathfrak {F}(s)$.
  By Proposition \ref{p:covers-in-F}\ref{item:covers-in-F:2}, we
  have $(u,av)\prec(ua,v)$.
  Therefore, $(u,av)$ and $(ua,v)$ are step points.
  In particular, we have $(u,av)\in {]}p,q{[}\cap \step (\mathfrak {L}(s))$.
\end{proof}

In the next result, we characterize the stationary points as the vertices
with a nontrivial local monoid in the category of transitions.

\begin{Prop}
  \label{p:shifting-allowed-implies-type-2}
  Let $S$ be a profinite semigroup
  which is finitely cancelable. Let $(u,v)\in\mathfrak {F}(S)$.
  Then $(u,v)$ is a stationary point if and only if
  it is stabilized by some element of $S$.
\end{Prop}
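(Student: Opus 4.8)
The plan is to prove both implications directly, leaning on the description of $\mathfrak L(s)$ for finitely cancelable profinite semigroups already obtained in Theorem~\ref{t:cluster} and Proposition~\ref{p:covers-in-F}. Throughout, let $A$ be the generating set with respect to which $S$ is finitely cancelable, so that $S=S^IA=AS^I$, and recall that in this setting step points have singleton $\sim$-classes.

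First I would prove the ``only if'' direction. Suppose $(u,v)$ is stationary. Then $(u,v)$ is not the maximum of $\mathfrak L(s)$, so $v\neq 1$ and we may write $v=av_0$ with $a\in A$ and $v_0\in S^I$. There is always a transition $(u,av_0)\xrightarrow{a}(ua,v_0)$, so $(u,v)\leq(ua,v_0)$; if this inequality were strict, Proposition~\ref{p:covers-in-F}\ref{item:covers-in-F:2} would exhibit $(ua,v_0)$ as a successor of $(u,v)$, contradicting that $(u,v)$ is stationary. Hence $(u,v)\sim(ua,v_0)$, so there is a transition $r\in S^I$ from $(ua,v_0)$ back to $(u,v)$. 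Composing, $(u,v)\xrightarrow{a}(ua,v_0)\xrightarrow{r}(u,v)$ is a loop at $(u,v)$ labeled by $ar$, and $ar\in AS^I\subseteq S$; thus the element $ar$ of $S$ stabilizes $(u,v)$.

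For the ``if'' direction, suppose $z\in S$ stabilizes $(u,v)$, i.e.\ $uz=u$ and $zv=v$. Writing $z=az'$ with $a\in A$ and $z'\in S^I$, the path $(u,v)\xrightarrow{a}(ua,z'v)\xrightarrow{z'}(u,v)$ is a loop, so $(ua,z'v)\sim(u,v)$. If $(ua,z'v)\neq(u,v)$, the $\sim$-class of $(u,v)$ has at least two elements, so $(u,v)$ is not a step point and is therefore stationary. Otherwise $ua=u$ and $z'v=v$, whence $av=az'v=v$, so the single letter $a$ stabilizes $(u,v)$. In that case I would check that $(u,v)$ is neither extremal nor has a neighbour: it is not the minimum, since $u=1$ would force $a=ua=u=1\notin S$ (and dually it is not the maximum); and if it had a successor, then by Proposition~\ref{p:covers-in-F}\ref{item:covers-in-F:1}, writing $v=bw$ with $b\in A$, that successor would be $(ub,w)$ with $(u,bw)\not\sim(ub,w)$, yet $av=v$ reads $a\cdot bw=b\cdot w$, so left finite cancelability gives $a=b$ and $bw=w$, hence $w=v$ and $(ub,w)=(ua,v)=(u,v)$, a contradiction; the absence of a predecessor follows by the dual of Proposition~\ref{p:covers-in-F}\ref{item:covers-in-F:1} (equivalently, by running the same argument in $S^{op}$, which is again profinite and finitely cancelable with respect to $A$). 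Either way, $(u,v)$ is stationary.

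The only delicate point is the ``if'' direction: a stabilizer need not be a letter, and ``sliding it across'' may return exactly to $(u,v)$, which is why the argument splits according to whether the $\sim$-class of $(u,v)$ is a singleton; in the singleton case one is forced down to Proposition~\ref{p:covers-in-F} to exclude covers. I do not expect a genuine obstacle there, and the ``only if'' direction is essentially immediate once one observes that concatenating the letter-transition with its return transition yields a stabilizer that, being a product of elements of $S$, lies in $S$.
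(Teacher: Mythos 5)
Your proof is correct and follows essentially the same route as the paper's: in the forward direction, factor off the leading letter of $v$ and use Proposition~\ref{p:covers-in-F}\ref{item:covers-in-F:2} to get a nontrivial loop $ar\in S$, and in the converse, factor $z=az'$ and use Proposition~\ref{p:covers-in-F}\ref{item:covers-in-F:1} to rule out a successor. The case split in your converse direction is not actually needed: the paper argues uniformly that the would-be successor of $(u,v)$ must be $(ua,z'v)$, which lies $\leq(u,v)$ — a contradiction whether or not $(ua,z'v)=(u,v)$ — and then obtains the absence of a predecessor (and non-extremality) from the clustered property of Theorem~\ref{t:cluster} rather than by a separate dual argument.
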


\begin{proof}
  Suppose that $(u,v)$ is stationary.
  Let $A$ be the set with respect to which $S$ is finitely cancelable.
  Since $v\neq 1$, we may take a factorization $v=aw$ with $a\in A$ and $w\in S^I$.
  Clearly,  $(u,aw)\leq (ua,w)$ holds, and so $(u,aw)\sim (ua,w)$ by
  Proposition~\ref{p:covers-in-F}\ref{item:covers-in-F:2}.
  Hence, there is an edge $(ua,w)\xrightarrow t (u,aw)$, for some
  $t\in S^I$. This implies that $at\in S$ stabilizes $(u,v)$.
  
  Conversely, suppose there is $z\in S$
  stabilizing  $(u,v)$. There is some factorization of
  the form $z=at$, for some $a\in A$ and $t\in S^I$. The following
  equalities and inequalities
  $$(u,v)=(u,atv)\leq(ua,tv)\leq(uat,v)=(u,v)$$
  show that $(u,v)=(u,atv)\sim(ua,tv)$.
  It follows from
  Proposition~\ref{p:covers-in-F}\ref{item:covers-in-F:1} that $(u,v)$ has no successor,
  since otherwise this successor would be $(ua,tv)$, in
    contradiction with $(ua,tv)\leq(u,v)$. Since it is not the maximum
  of~$\mathfrak{F}(s)$ (because $v=zv\neq 1$), it must be a stationary point.
\end{proof}

Proposition~\ref{p:shifting-allowed-implies-type-2}
is applied in the proof of the next result.

\begin{Prop}
  \label{p:transitions-in-F}
   Let $S$ be an equidivisible profinite semigroup $S$
     which is finitely cancelable.
     Then, in $\Cl T(S)$, any two coterminal edges between
       elements of distinct strongly connected components  are
       equal. In other words, if
     $(\alpha,\beta)\xrightarrow t(\gamma,\delta)$
     and $(\alpha,\beta)\xrightarrow s(\gamma,\delta)$
     are edges of
     $\Cl T(S)$ such that $(\alpha,\beta)<(\gamma,\delta)$, then $t=s$.
 \end{Prop}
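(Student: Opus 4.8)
The plan is to reduce the statement to a statement about the local monoids $M_{(\alpha,\beta)}$ and $M_{(\gamma,\delta)}$ by "shifting" both edges to a common vertex. Suppose we have two edges $(\alpha,\beta)\xrightarrow t(\gamma,\delta)$ and $(\alpha,\beta)\xrightarrow s(\gamma,\delta)$ with $(\alpha,\beta)<(\gamma,\delta)$. Since $(\alpha,\beta)<(\gamma,\delta)$, neither $t$ nor $s$ equals $1$ (a transition labeled $1$ would be a loop). Because $(\alpha,\beta)$ and $(\gamma,\delta)$ lie in \emph{distinct} strongly connected components, there is no edge back from $(\gamma,\delta)$ to $(\alpha,\beta)$; this is the crucial asymmetry I intend to exploit. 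Write $A$ for the generating set with respect to which $S$ is finitely cancelable, and factor $t=at'$ with $a\in A$, $t'\in S^I$ (here $\beta=t\delta=at'\delta$, so $a$ is the first letter of $\beta$, forced by finite cancelability). Similarly factor $s=bs'$. I expect to conclude $a=b$ and then, by finite cancelability, reduce to comparing $t'$ and $s'$ as transitions from the (common) vertex $(\alpha a, t'\delta)=(\alpha b, s'\delta)$ — one needs $t'\delta=s'\delta$ here, which again follows from finite cancelability applied to $\beta=at'\delta=bs'\delta$.

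The key step is then to observe that once both transitions are re-based at a common vertex, the difference between $t$ and $s$ is measured by a loop. Precisely, from the two edges $(\alpha,\beta)\xrightarrow t(\gamma,\delta)$ and $(\alpha,\beta)\xrightarrow s(\gamma,\delta)$ one would like to produce, using equidivisibility, a loop at one of the two vertices whose label relates $t$ and $s$ — but the point is that any such "loop witnessing the discrepancy" would, via Remark~\ref{r:trivial-preservation} or a direct argument, furnish a transition from $(\gamma,\delta)$ back down to $(\alpha,\beta)$, contradicting $(\alpha,\beta)<(\gamma,\delta)$; OR it would stabilize one of the two vertices, forcing that vertex to be stationary by Proposition~\ref{p:shifting-allowed-implies-type-2}, whereas a stationary point forms a nontrivial $\sim$-class, and a careful accounting (together with the fact that $(\alpha,\beta)\prec$-type behavior is ruled out) yields the contradiction. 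I would run the induction on letters: peeling off one letter at a time from $\beta$ via the factorization $t=at'$, at each stage either we have stripped $t$ and $s$ down to a common suffix, or we are in the situation of the previous paragraph with shorter labels. Since $t$ and $s$ are each products of letters approximated by finite words, a compactness/limit argument lets one iterate this; alternatively, one argues directly that the set $\{r\in\mathfrak F(s): (\alpha,\beta)\le r\le(\gamma,\delta)\}$ together with Proposition~\ref{p:letter_increasing_value_in_F} forces a chain of step points between $(\alpha,\beta)$ and $(\gamma,\delta)$ along which the transition label is uniquely determined letter by letter.

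Concretely, here is the cleanest route I would try first. By Proposition~\ref{p:letter_increasing_value_in_F} applied to $(\alpha,\beta)<(\gamma,\delta)$, there are $x,y\in S^I$ and $c\in A$ with $(\alpha,\beta)\le(x,cy)<(xc,y)\le(\gamma,\delta)$, and by Proposition~\ref{p:covers-in-F}\ref{item:covers-in-F:2} the pair $(x,cy)\prec(xc,y)$, so these are step points and, by Proposition~\ref{p:covers-in-F}\ref{item:covers-in-F:1}, their $\sim$-classes are singletons. Now any edge $(\alpha,\beta)\xrightarrow t(\gamma,\delta)$ factors through these step points: by equidivisibility (Lemma~\ref{l:suffix-transition} and its dual, or directly) $t$ decomposes compatibly with the factorizations $(\alpha,\beta)\to(x,cy)\to(xc,y)\to(\gamma,\delta)$, and the transition from $(x,cy)$ to $(xc,y)$ is exactly the letter $c$ — uniquely determined because the $\sim$-class is a singleton and finite cancelability pins down the letter. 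Thus both $t$ and $s$ are forced to have the same "middle letter" $c$ in the same position; peeling this off and iterating (formally, an induction on a suitable length/complexity measure, closed off by a compactness argument in the profinite semigroup $S$) shows $t=s$.

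The main obstacle I anticipate is making the "iterate and take a limit" argument rigorous: a single pseudoword may have uncountably many stationary points, so one cannot literally induct on the order type between $(\alpha,\beta)$ and $(\gamma,\delta)$. The right move is probably to avoid iteration altogether and instead argue by contradiction: if $t\neq s$, then using Proposition~\ref{p:residual-B-semigroups} (pass to a finite unambiguous aperiodic quotient separating $t$ from $s$) one transports the whole configuration into a \emph{finite} semigroup, where "distinct strongly connected components" plus finiteness makes the claim a finite, checkable statement about the category of transitions — and there the asymmetry (no edge back) together with finite cancelability of the finite quotient forces equality, contradicting the separation. I would therefore structure the proof as: (1) reduce to $S$ finite unambiguous aperiodic via Proposition~\ref{p:residual-B-semigroups}; (2) in the finite case, use that $(\alpha,\beta)$ and $(\gamma,\delta)$ are in different components to show any two coterminal transitions differ by a loop at one endpoint, hence (by finiteness and aperiodicity, as in Lemma~\ref{l:disjointness-blocks-aperiodic-case}) are equal.
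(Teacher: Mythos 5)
Your proposal circles several of the right ideas but never closes the argument, and the fallback you offer is unavailable. For the case where one endpoint is a step point, you gesture at producing a loop and invoking Proposition~\ref{p:shifting-allowed-implies-type-2}; the calculation you are missing is that $\alpha t=\alpha s$ together with equidivisibility gives some $z\in S^I$ with $\alpha z=\alpha$ and (say) $t=zs$, and then $z\beta=zs\delta=t\delta=\beta$, so $z$ stabilizes $(\alpha,\beta)$, forcing $z=1$ since $(\alpha,\beta)$ is a step point. The genuine gap is the stationary--stationary case: you propose to peel off a letter through an intermediate step point and iterate, and you yourself anticipate this cannot terminate, since the order type between $(\alpha,\beta)$ and $(\gamma,\delta)$ may be dense and uncountable. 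The paper sidesteps iteration entirely: take a step point $(x,y)$ strictly between the two endpoints (using Theorem~\ref{t:cluster}), obtain the unique labels $r_1,r_2$ of the transitions through $(x,y)$ from the step-point case, and then show that every transition $\tau$ from $(\alpha,\beta)$ to $(\gamma,\delta)$ satisfies $\tau=r_1r_2$ by comparing $(x,r_2)$ and $(\alpha,\tau)$ inside $\mathfrak F(\gamma)$, using Remark~\ref{r:trivial-preservation} to rule out $(x,r_2)\leq(\alpha,\tau)$ and equidivisibility to produce the required edge $(\alpha,\tau)\to(x,r_2)$. One such application, not an induction, finishes the proof.

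Your fallback---reduce to a finite unambiguous aperiodic quotient via Proposition~\ref{p:residual-B-semigroups}---does not apply here: that proposition concerns $\Om AA$ specifically, whereas this statement is about an arbitrary equidivisible finitely cancelable profinite semigroup. Even restricted to $\Om AA$, the route fails on its own terms: the finite quotients furnished by Proposition~\ref{p:residual-B-semigroups} are unambiguous but not equidivisible, so the equidivisibility step cannot be carried out downstairs; and no nontrivial finite semigroup is finitely cancelable with respect to a closed generating set $A$ (from $a^m=a^n$ with $m<n$, cancelability iterates down to $a\in SA$, contradicting $A\cap SA=\emptyset$), so ``finite cancelability of the finite quotient'' is vacuous. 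Step points and strongly connected components of the $2$-factorization quasi-order are also not preserved by quotients, so the proposed transport of the whole configuration is unjustified.
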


\begin{proof}
  The hypothesis translates into
  the following equalities: $\gamma=\alpha t=\alpha s$ and
  $\beta=t\delta=s\delta$. We first assume that at least one of the
  points $(\alpha,\beta)$ and $(\gamma,\delta)$ is a step point. By
  symmetry, we may as well assume that $(\alpha,\beta)$ is a step
  point. From the equality $\alpha t=\alpha s$, by equidivisibility
  and without loss of generality, we may assume that there is some
  $z\in S^I$ such that $\alpha z=\alpha$ and $t=zs$.
  Then, we have $z\beta=zs\delta=t\delta=\beta$.
  This shows that
  $(\alpha,\beta)\xrightarrow z(\alpha,\beta)$
  is a loop of $\Cl T(w)$. Since $(\alpha,\beta)$
  is assumed to be a step point,
  applying Proposition~\ref{p:shifting-allowed-implies-type-2}
  we obtain $z=1$, thus $t=s$.

  It remains to consider the case where both $(\alpha,\beta)$ and
  $(\gamma,\delta)$ are stationary. By
  Theorem~\ref{t:cluster}, there is a step point
  $(x,y)\in\mathfrak{F}(w)$ such that
  $(\alpha,\beta)<(x,y)<(\gamma,\delta)$. By the preceding paragraph,
  there are unique edges in~$\Cl T(w)$ from $(\alpha,\beta)$ to
  $(x,y)$ and from $(x,y)$ to $(\gamma,\delta)$. Let $r_1$ and $r_2$ be
  the respective labels. To prove the proposition, it suffices to show that
  $\tau=r_1r_2$ whenever $(\alpha,\beta)\xrightarrow \tau(\gamma,\delta)$ is an
  edge of~$\Cl T(w)$.

  Note that
  $(x,r_2)$ and $(\alpha,\tau)$ are elements of $\mathfrak{F}(\gamma)$.
  Since $\beta=\tau\delta$ and $y=r_2\delta$, if $(x,r_2)\leq (\alpha,\tau)$ (in $\mathfrak{F}(\gamma)$),
  then $(x,y)\leq(\alpha,\beta)$ (in $\mathfrak{F}(w)$) by
  Remark~\ref{r:trivial-preservation}, which gives a contradiction.
  Hence, by equidivisibility, there is in~$\Cl T(\gamma)$ an edge $(\alpha,\tau)\xrightarrow
  {r_0}(x,r_2)$.
    In particular $\tau=r_0r_2$.
 Remark~\ref{r:trivial-preservation} guarantees the
 existence of the edge
 $(\alpha,\beta)\xrightarrow {r_0}(x,y)$. But we defined, in the previous paragraph, the edge $(\alpha,\beta)\xrightarrow
 {r_1}(x,y)$ as the unique edge from $(\alpha,\beta)$ to $(x,y)$.
 Therefore $r_0=r_1$, and so we have the equality $\tau=r_1r_2$, which we have seen
 to be sufficient to conclude the proof.
\end{proof}

\begin{Cor}
  \label{c:transitions-in-F}
        Let $S$ be an equidivisible profinite semigroup $S$
     which is finitely cancelable.
     Let $p_1,p_2\in\mathfrak{L}(w)$.
     If $p_1<p_2$ then the set of transitions from $p_1$ to $p_2$ is
     contained in a $\Cl J$-class of $S$.
\end{Cor}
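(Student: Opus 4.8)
The plan is to derive this corollary directly from Proposition~\ref{p:transitions-in-F}. That proposition already gives the uniqueness of a transition between a \emph{fixed} pair of coterminal $2$-factorizations lying in distinct $\sim$-classes; the only thing left to do is to promote this to a statement about \emph{all} transitions from $p_1$ to $p_2$, and the natural way to do so is to conjugate by transitions internal to the strongly connected components $\Cl T_{p_1}$ and $\Cl T_{p_2}$.

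In detail, I would first fix two transitions $t$ and $s$ from $p_1$ to $p_2$, realized by edges $(\alpha,\beta)\xrightarrow{t}(\gamma,\delta)$ and $(\alpha',\beta')\xrightarrow{s}(\gamma',\delta')$ of $\Cl T(S)$ with sources in $p_1$ and targets in $p_2$ (such transitions exist because $p_1\le p_2$, and if the set is empty there is nothing to prove). Since $(\alpha,\beta)$ and $(\alpha',\beta')$ lie in the same strongly connected component $\Cl T_{p_1}$, there are edges between them in both directions, say with labels $a\colon(\alpha,\beta)\to(\alpha',\beta')$ and $a'\colon(\alpha',\beta')\to(\alpha,\beta)$; likewise there are edges of $\Cl T_{p_2}$ between $(\gamma,\delta)$ and $(\gamma',\delta')$ in both directions, with labels $b\colon(\gamma,\delta)\to(\gamma',\delta')$ and $b'\colon(\gamma',\delta')\to(\gamma,\delta)$. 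Composing edges along the path $(\alpha,\beta)\to(\alpha',\beta')\xrightarrow{s}(\gamma',\delta')\to(\gamma,\delta)$ produces an edge
$$(\alpha,\beta)\xrightarrow{asb'}(\gamma,\delta)$$
coterminal with $(\alpha,\beta)\xrightarrow{t}(\gamma,\delta)$. As $p_1<p_2$ forces $(\alpha,\beta)<(\gamma,\delta)$, Proposition~\ref{p:transitions-in-F} applies and gives $t=asb'$, whence $t\le_{\Cl J}s$. The symmetric computation along the path $(\alpha',\beta')\xrightarrow{a'}(\alpha,\beta)\xrightarrow{t}(\gamma,\delta)\xrightarrow{b}(\gamma',\delta')$ yields $s=a'tb$, hence $s\le_{\Cl J}t$, and therefore $t\mathrel{\Cl J}s$. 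Since $t$ and $s$ were arbitrary, the whole set of transitions from $p_1$ to $p_2$ lies in a single $\Cl J$-class of $S$; note these transitions cannot equal $1$ (that would force $p_1=p_2$), so they genuinely belong to $S$ and the statement is meaningful.

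I do not expect a real obstacle: all the substance is in Proposition~\ref{p:transitions-in-F}. The only care required is bookkeeping with the composition convention of $\Cl T(S)$ (labels of composable edges multiply in the order of traversal) and verifying the strict-inequality hypothesis of that proposition, which is immediate from $p_1<p_2$.
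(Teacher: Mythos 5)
Your proof is correct and takes essentially the same route as the paper: fix two transitions between (possibly different) representatives of $p_1$ and $p_2$, conjugate one by internal transitions of $\Cl T_{p_1}$ and $\Cl T_{p_2}$ to get a coterminal edge, and invoke the uniqueness from Proposition~\ref{p:transitions-in-F} to obtain factorizations in both directions, yielding $\Cl J$-equivalence. The only cosmetic difference is that the paper writes the two symmetric conjugations as $t=s_1t'r_2$ and $t'=r_1ts_2$ rather than your $t=asb'$ and $s=a'tb$.
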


\begin{proof}
      Let $x_1,x_1'\in p_1$ and $x_2,x_2'\in p_2$,
      and consider edges $x_1\xrightarrow{t}x_2$
      and $x_1'\xrightarrow{t'}x_2'$.
      Then we have edges $x_i\xrightarrow{s_i}x_i'$
      and $x_i'\xrightarrow{r_i}x_i$,
      and also
      $x_1\xrightarrow{s_1t'r_2}x_2$
      and
      $x_1'\xrightarrow{r_1ts_2}x_2'$.
      By Proposition~\ref{p:transitions-in-F},
      we must have $t=s_1t'r_2$
      and $t'=r_1ts_2$, whence $t\mathrel{\Cl J}t'$.
\end{proof}

\begin{Remark}\label{r:cuc-reunite-condi}
  We remark that there is a large class of pseudovarieties
  whose corresponding finitely generated relatively free profinite semigroups satisfy the hypotheses of Proposition~\ref{p:transitions-in-F}.
  Let $\pv V$ be a pseudovariety
  of semigroups such that $\pv V=\pv {LI}\malcev\pv V$.
  Then every semigroup of the form $\Om AV$,
  with $A$ a finite alphabet,
  satisfies all conditions in Proposition~\ref{p:transitions-in-F}:
  they are profinite, equidivisible,
  and finitely cancelable
  (cf.~Theorem~\ref{t:characterization-of-equidivisible-pseudovarieties}
  and Proposition~\ref{p:equid-conca-are-finitely-cancelable}, where the latter
  may be applied because
  $\pv V=\pv {LI}\malcev\pv V$ implies $\pv V\supseteq \pv {LI}$
  and thus $\pv V\not\subseteq\pv{CS}$).
\end{Remark}

Proposition~\ref{p:transitions-in-F} is used in the proof of the
following lemma, establishing a sufficient condition for equality
between stationary points.

\begin{Lemma}
  \label{l:absorption}
  Let $S$ be an equidivisible profinite semigroup which is finitely
  cancelable.
  Let $p$ and $q$ be stationary points of~$\mathfrak{L}(S)$.
  If there is a transition $p\xrightarrow tq$ such that $t$
  lies \Cl J-above both $J_p$ and $J_q$ then $p=q$.
\end{Lemma}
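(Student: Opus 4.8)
The plan is to exploit the fact that, by Lemma~\ref{l:a-trivial-characterization-of-Jp}, a transition $p\xrightarrow{t}p$ amounts to $t$ being a factor of the elements of $J_p$; dually for $q$. So I would start by picking a concrete witness: let $p=(u,v)$ and $q=(u',v')$ with an edge $(u,v)\xrightarrow{t}(u',v')$, so $u'=ut$ and $v=tv'$. Since $p<q$ is impossible to rule out directly, the first step is to show $p\sim q$; once we have that, the points $p$ and $q$ of $\mathfrak L(S)$ coincide. To obtain $p\sim q$ it suffices to produce a transition $q\xrightarrow{z}p$, i.e.\ some $z\in S^I$ with $u'z=u$ and $zv'=v$. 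Here is where the hypothesis on $t$ enters: since $t$ lies $\Cl J$-above $J_p$, and $J_p$ is regular, we can find an idempotent $e\in J_p$ and elements so that $e$ factors through $t$; concretely $e = x t y$ for suitable $x,y\in S^I$ with $(u,v)\xrightarrow{e}(u,v)$ a loop in $\Cl K_p$ (using Corollary~\ref{c:idempotents-vs-sim-classes}). Writing $e=xty$ and using $e$ as a stabilizer of $(u,v)$, I get $uxty=u$ and $xtyv=v$, and I want to relocate the ``$t$'' in the middle to reconstruct $q$.

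The key computation: from the loop $(u,v)\xrightarrow{x}(ux,tyv)\xrightarrow{t}(uxt,yv)\xrightarrow{y}(u,v)$ in $\Cl T(S)$ (exactly as in the proof of Lemma~\ref{l:a-trivial-characterization-of-Jp}), the middle edge $(ux,tyv)\xrightarrow{t}(uxt,yv)$ lies in $\Cl T_p$, hence both its endpoints are in $p$. Now I want to compare this copy of the ``$t$-edge'' with the given edge $(u,v)\xrightarrow{t}(u',v')$. Since $J_q$ is also $\Cl J$-below $t$, symmetrically there is a loop $(u',v')\xrightarrow{e'}(u',v')$ in $\Cl K_q$ with $e'$ an idempotent of $J_q$ and $e'=x't y'$ for suitable $x',y'$. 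The plan is then to chain the transition $q\xrightarrow{e'}q$ back along $e'=x'ty'$: writing $v'=e'v' = x'ty'v'$ and $u'=u'e' = u'x'ty'$, and recalling $u'=ut$, $v=tv'$, I try to solve for a transition $q\to p$. Using equidivisibility on the equality $u't\cdot(\text{something}) = u\cdot(\text{something})$ — more precisely, on $u' = ut$ together with a factorization coming from $e'$ — should yield, via Proposition~\ref{p:transitions-in-F} applied to coterminal edges in distinct components (if they are in distinct components), a forced collapse. If $p$ and $q$ are in the \emph{same} strongly connected component we are already done, so I may assume $p<q$ strictly and derive a contradiction.

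Here is the cleaner route I would actually pursue, avoiding case analysis on $p,q$: assume for contradiction $p<q$. By Proposition~\ref{p:transitions-in-F}, the edge $(u,v)\xrightarrow{t}(u',v')$ is the \emph{unique} edge from the pair $(u,v)$ to $(u',v')$. Now I build a second edge between the same pair with a different label. Since $t$ is $\Cl J$-above $J_p$, pick an idempotent $e\in J_p$ stabilizing $(u,v)$ with $e = x t y$; since $t$ is also $\Cl J$-above $J_q$, pick an idempotent $e'\in J_q$ stabilizing $(u',v')$ with $e'=x't y'$. Consider the label $z = x' t y$: using $u' = u t$, $u e = u$ means $u x t y = u$, so $u' x' t y = u'(x' t y)$; and one checks $\omega$-source/target data so that $(u,v)\xrightarrow{?}(u',v')$ also admits $ e' t$ or $t e$ as a label, via $t = te\cdot(\text{inside }e)$ expansions. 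Concretely $t = t e = t x t y$ gives that $t x t y$ is \emph{another} label of an edge $(u,v)\to(u',v')$: indeed $u\cdot t x t y = u' x t y = u' x t y$ and one must verify $t x t y \cdot v' = v$, which follows from $t v' = v$ and $e v = v$ once the factorizations are lined up. If $txty\neq t$, uniqueness is violated and we are done; and $txty=t$ together with aperiodicity (or stability, as $S$ is profinite hence stable) forces $xty\cdot$ regular-ness collapses, giving $e$ idempotent $\Cl H$-equivalent to the relevant element, eventually yielding $p=q$.

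\textbf{Main obstacle.} The delicate point is the bookkeeping in the last paragraph: getting the ``inserted'' element (some product like $txty$ or $x'ty$) to be a \emph{genuine label of a coterminal edge} $(u,v)\to(u',v')$ requires checking both the source equation ($u\cdot(\text{label}) = u'$) and the target equation ($(\text{label})\cdot v' = v$) simultaneously, and these do not fall out automatically from $e$, $e'$ being stabilizers — one needs equidivisibility to split $t$ compatibly on \emph{both} sides at once. I expect to need Proposition~\ref{p:transitions-in-F} (uniqueness of coterminal edges across components) as the lever that converts a second, formally different label into the equation $txty = t$, and then stability/aperiodicity of $S$ to conclude that the only way $t$ can be $\Cl J$-above $J_p$ while satisfying $t = (\text{idempotent-expansion of }t)$ is for $p$ and $q$ to coincide. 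Handling the possibility that $p$ and $q$ lie in the same component (trivial) versus different components (where Proposition~\ref{p:transitions-in-F} bites) cleanly is the organizational crux.
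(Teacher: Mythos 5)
Your overall strategy is correct and matches the paper's: assume $p<q$, let $(u,v)\in p$ and $(x,y)\in q$ with an edge $(u,v)\xrightarrow{t}(x,y)$, pick idempotents $e\in J_p$ and $f\in J_q$ stabilizing $(u,v)$ and $(x,y)$ respectively, and use the uniqueness of coterminal edges between distinct strongly connected components (Proposition~\ref{p:transitions-in-F}) as the lever. The gap is in the construction of the second label. You try labels of the form $txty$ or $x'ty$ obtained by expanding $e=xty$, and you correctly flag that you cannot verify the source and target equations for these. Indeed they fail: for instance $u\cdot txty=u'e$, but $e$ stabilizes $(u,v)$, not $(u',v')$, so there is no reason $u'e=u'$. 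The correct label is simply $etf$, sandwiching $t$ between the two stabilizing idempotents; then both equations fall out mechanically from the stabilizer properties ($u(etf)=(ue)tf=utf=xf=x$ and $(etf)y=et(fy)=ety=e(ty)=ev=v$), with no equidivisibility needed at this step. Your stated obstacle, that one must ``split $t$ compatibly on both sides at once,'' is thus a red herring.

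Once $etf=t$ is forced by Proposition~\ref{p:transitions-in-F}, the rest does not need the ``$txty=t$ plus aperiodicity'' argument you sketch (note also that aperiodicity is not a hypothesis of this lemma). From $etf=t$ you get $t\leq_{\Cl R}e$ and $t\leq_{\Cl L}f$, and the hypothesis that $t$ is $\Cl J$-above both $e$ and $f$ combined with stability of the compact semigroup $S$ upgrades these to $e\mathrel{\Cl R}t\mathrel{\Cl L}f$. In particular $J_p=J_q$, and the Green--Rees structure of this $\Cl D$-class gives a (unique) $s$ with $ts=e$ and $st=f$. Then $xs=uts=ue=u$ and $sv=sty=fy=y$, so $(x,y)\leq(u,v)$, contradicting $p<q$. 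You had all the right ingredients and lemma citations, but the missing step is the clean choice $etf$ for the competing label; with that in hand the rest is a short computation rather than the open-ended bookkeeping you anticipated.
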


\begin{proof}
  We have $p\leq q$, so, arguing by contradiction, suppose that $p<q$.
  Consider an edge
  $(u,v)\xrightarrow t(x,y)$
  of $\Cl T(S)$ with $(u,v)\in p$ and $(x,y)\in q$. Let $e\in J_p$ and
  $f\in J_q$ be
  idempotents respectively
  stabilizing $(u,v)$ and $(x,y)$. Then we have a transition
  $(u,v)\xrightarrow{etf}(x,y)$.
  By
  Proposition~\ref{p:transitions-in-F},
  it follows that $etf=t$,
  whence, since $S$ is stable,
  and by the hypothesis that $t$ lies \Cl J-above both $e$ and~$f$,
  we have $e\mathrel{\Cl R}t\mathrel{\Cl L}f$.
  Therefore, there exists $s\in J_p=J_q$ such that $ts=e$ and $st=f$.
  Now, the assumption that we have a transition
  $(u,v)\xrightarrow t(x,y)$
  means that the equalities $ut=x$ and $v=ty$ hold. Combining
  with the equalities $ue=u$ and $fy=y$, we deduce that $xs=uts=ue=u$
  and $sv=sty=fy=y$. Hence $(x,y)\leq(u,v)$, which
  contradicts the assumption $p<q$. To avoid the contradiction, we
  must have $p=q$.
\end{proof}

The following is an example obtained by application of Lemma~\ref{l:absorption}.

\begin{Example}\label{ex:minimal-shifts}
   Let $\pv V$ be a pseudovariety
   containing $\pv {LSl}$, and let $A$ be a finite alphabet.
   It is shown in~\cite{Almeida:2005c} (see
   also~\cite{Almeida&ACosta:2007a}),
   using Zorn's Lemma and a standard compactness
   argument, that $\Om AV$ contains regular elements that
   are $\Cl J$-maximal among the elements of $\Om AV\setminus A^+$.
   We next verify that if $\pv V$ is closed under concatenation
   and $u$ is a $\Cl J$-maximal regular element of
   $\Om AV$, then the order type of $\mathfrak {L}(u)$ is $\omega+1+\omega^*$.
   Indeed, let $p,q$ be two stationary points
  of $\mathfrak {L}(u)$ such that $p\leq q$,
  and let $e\in J_p$, $f\in J_q$.
  From the maximality assumption on $u$,
  we deduce that $e\mathrel{\Cl J}u\mathrel{\Cl J}f$.
  Hence $p=q$ by Lemma~\ref{l:absorption}, showing that
  $\mathfrak {L}(u)$ has only one stationary point.
\end{Example}

Restricting our attention to the case of profinite semigroups
that are free relatively to pseudovarieties closed under concatenation,
we obtain stronger results about step and stationary
points. The next lemma is a provisional result with this flavor, which
is improved later on, namely in
Proposition~\ref{p:approximation-type-2-by-step-points}.

\begin{Lemma}
  \label{l:approximation-type-2}
  Let $A$ be a finite alphabet, and let
  $\pv V$ be a pseudovariety closed under concatenation.
  Let $w\in \Om AV$.
  If $(u,v)$ is a stationary point
  of $\mathfrak{F}(w)$, then
  there exists a strictly increasing sequence $(u_n,v_n)_n$
  of points of\/~$\mathfrak{F}(w)$ such that $\lim u_n=u$ and $\lim v_n=v$.
\end{Lemma}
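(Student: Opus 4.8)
The plan is to approximate the stationary point $(u,v)$ from below inside $\mathfrak{F}(w)$ by a sequence of step points, exploiting that $\pv V$ is closed under concatenation and hence (by Theorem~\ref{t:factorizing}) that multiplication in $\Om AV$ is an open mapping, together with the metrizability of $\Om AV$ so that Lemma~\ref{l:a-condition-on-metric-semigroups} applies. First I would use the density of step points from Theorem~\ref{t:cluster} (property~\ref{item:clustered-4} of clustered sets): for each $n$, since $(u,v)$ is stationary, it is neither the minimum nor does it have a predecessor, so choosing any point $p_n<(u,v)$ we find a step point in the interval $]p_n,(u,v)[$. The issue is to make these step points actually converge to $(u,v)$, which is where the open multiplication hypothesis enters.

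The key step is the following. Write $w=uv$. Fix a metric $d$ inducing the topology of $\Om AV$. For each positive integer $k$, since multiplication is open, the set $B(u,\tfrac1k)B(v,\tfrac1k)$ is an open neighbourhood of $w$; more to the point, I claim one can find a step point $(u_k,v_k)$ of $\mathfrak{F}(w)$ with $d(u_k,u)<\tfrac1k$, $d(v_k,v)<\tfrac1k$, and $(u_k,v_k)<(u,v)$. To produce it, first factor $v=av'$ with $a\in A$ (possible since $v\neq 1$, as $(u,v)$ is stationary hence not the maximum); by Proposition~\ref{p:covers-in-F}\ref{item:covers-in-F:2}, $(u,av')\prec(ua,v')$, so both are step points. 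Iterating — at each stage peeling off one more letter from the right-hand component and appending it to the left — produces a chain of step points $(u,v)\geq(ua_1,v^{(1)})$ etc.; but these need not converge to $(u,v)$. Instead, the right approach is to apply Lemma~\ref{l:a-condition-on-metric-semigroups} to the constant sequence $w_n=w=uv$: one obtains sequences $(\tilde u_n)_n,(\tilde v_n)_n$ in $(\Om AV)^I$ with $\tilde u_n\tilde v_n=w$, $\lim\tilde u_n=u$, $\lim\tilde v_n=v$. These $(\tilde u_n,\tilde v_n)$ are $2$-factorizations of $w$ converging to $(u,v)$ in $\mathfrak{F}(w)$, but they may be stationary themselves. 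For each $n$, if $(\tilde u_n,\tilde v_n)$ happens to be $\sim$-equivalent to $(u,v)$ or lie above it, replace it: since $(u,v)$ is stationary, the interval strictly below $(u,v)$ is cofinal near it, and combining $\chi$-continuity (Proposition~\ref{p:topology-of-Fw}) with the density of step points (Theorem~\ref{t:cluster}) one extracts a genuine step point $(u_n,v_n)<(u,v)$ within distance $\tfrac2n$ of $(\tilde u_n,\tilde v_n)$, hence within distance tending to $0$ of $(u,v)$. Passing to a strictly increasing subsequence in the linear order $\mathfrak{L}(w)$ (possible because no term equals $(u,v)$ and the step points accumulate to it) yields the desired strictly increasing sequence.

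The main obstacle I anticipate is the interplay between \emph{metric} convergence in $\mathfrak{F}(w)\subseteq(\Om AV)^I\times(\Om AV)^I$ and convergence in the \emph{order topology} of $\mathfrak{L}(w)$: a priori the step points furnished by the clustered-set density condition sit in an order-interval but carry no metric control, while the factorizations furnished by open multiplication carry metric control but need not be step points. Reconciling the two is exactly what Proposition~\ref{p:topology-of-Fw} is for — $\chi$ is continuous, so order-open sets pull back to metric-open sets, and one can localize: given the metric ball $B((u,v),\varepsilon)\cap\mathfrak{F}(w)$, its image under $\chi$ is an order-neighbourhood of $\chi(u,v)$, which being stationary contains points strictly below it, and by density contains a step point strictly below; pulling this back along the (abuse-of-notation) identification of a step point's singleton $\sim$-class with the point itself lands in the original metric ball. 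Doing this for $\varepsilon=\tfrac1n$ and discarding duplicates gives the sequence. I would also need to double-check that "strictly increasing" can be arranged — this follows because the chosen step points are pairwise comparable (linearity), all strictly below $(u,v)$, and no finite subcollection is cofinal below $(u,v)$ since $(u,v)$ has no predecessor; so one can always choose the next term above all previous ones.
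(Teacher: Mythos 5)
Your proposal identifies the right tension — reconciling metric convergence in $\mathfrak{F}(w)$ with order convergence in $\mathfrak{L}(w)$ — but both of your attempts to resolve it have gaps.

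First, applying Lemma~\ref{l:a-condition-on-metric-semigroups} to the \emph{constant} sequence $w_n=w=uv$ is vacuous: the lemma only asserts the existence of \emph{some} factorizations $w_n=\tilde u_n\tilde v_n$ with $\tilde u_n\to u$, $\tilde v_n\to v$, and the constant choice $\tilde u_n=u$, $\tilde v_n=v$ satisfies this trivially. You get no nontrivial approximating sequence out of it; there is no guarantee that the $(\tilde u_n,\tilde v_n)$ produced are distinct from $(u,v)$ or lie strictly below it. Second, the claim that ``the image under $\chi$ of $B((u,v),\varepsilon)\cap\mathfrak{F}(w)$ is an order-neighbourhood of $\chi(u,v)$'' would require $\chi$ to be an \emph{open} map, which is neither proved in the paper nor true in general; Proposition~\ref{p:topology-of-Fw} only gives continuity, i.e.\ preimages of order-open sets are metrically open, which is the wrong direction here. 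The danger is precisely that there could be step points $q_n<p$ with $\chi(q_n)\to p$ in $\mathfrak{L}(w)$ while $q_n$ converges metrically in $\mathfrak{F}(w)$ to some \emph{other} representative $(x,y)$ of $p$, not to $(u,v)$ — and nothing in your argument excludes this.

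The paper's proof embraces exactly this difficulty rather than trying to rule it out. It takes a strictly increasing sequence of step points $(x_n,y_n)$ converging to $p$ in $\mathfrak{L}(w)$, passes to a subsequence converging in $\mathfrak{F}(w)$ to some $(x,y)\in p$ (which may well differ from $(u,v)$), picks transitions $(x,y)\xrightarrow{s}(u,v)$ and $(u,v)\xrightarrow{t}(x,y)$ so that $x=xst$, and \emph{then} invokes open multiplication (via Lemma~\ref{l:a-condition-on-metric-semigroups}) on the nonconstant sequence $(x_n)_n$ converging to $x=xst$. This yields factorizations $x_n=x_n's_nt_n$ with $x_n'\to x$, $s_n\to s$, $t_n\to t$, whence $(x_n's_n,\,t_ny_n)$ is a sequence of $2$-factorizations of $w$ converging to $(xs,ty)=(u,v)$ and satisfying $(x_n's_n,t_ny_n)\leq(x_n,y_n)<(x,y)$, hence strictly below $p$. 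That ``shift'' is the ingredient missing from your proposal: open multiplication is used to transport the ambient order-convergent sequence to one aimed at the prescribed representative $(u,v)$, rather than being applied to the constant sequence $w$.
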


\begin{proof}
  Since $A$ is finite,
  $\Om AV$ is metrizable,
  and so $\mathfrak{L}(w)$
  is metrizable by Corollary~\ref{c:Ls-is-compact}.
    Let $p=(u,v)/{\sim}$.
  By  Theorem~\ref{t:cluster}, there is a strictly
  increasing sequence $(x_n,y_n)_n$ of step points
  converging in
  $\mathfrak{L}(w)$ to $p$.
  As $S$ is compact, taking subsequences
  we may assume that
  $(x_n,y_n)_n$ converges
  in $\mathfrak{F}(w)$ to some $(x,y)\in\mathfrak{F}(w)$. By
  Proposition~\ref{p:topology-of-Fw}, we know that $(x,y)\in p$.
  Then there are edges $(x,y)\xrightarrow s(u,v)$ and $(u,v)\xrightarrow t(x,y)$
  in~$\Cl T(w)$. Note that $\lim x_n=x=xst$. Hence, by
  Theorem~\ref{t:factorizing} and
  Lemma~\ref{l:a-condition-on-metric-semigroups},
  for every~$n$ there is a factorization $x_n=x_n's_nt_n$
  with $\lim x_n'=x$, $\lim s_n=s$, and $\lim t_n=t$. Then, we have
  $\lim(x_n's_n,t_ny_n)=(u,v)$ in $\mathfrak {F}(w)$,
  thus $\lim(x_n's_n,t_ny_n)/{\sim}=p$ in $\mathfrak {L}(w)$ by
  Proposition~\ref{p:topology-of-Fw}.
  And since $(x_n's_n,t_ny_n)\leq(x_n,y_n)<(x,y)$,
  we conclude that the sequence  $((x_n's_n,t_ny_n)/{\sim})_n$ has $p$
  as a supremum but not as a maximum, enabling us to extract from the
  sequence $(x_n's_n,t_ny_n)_n$ a strictly increasing subsequence.
  Since this subsequence also
  converges to~$(u,v)$ in $\mathfrak{F}(w)$, this completes the proof.
\end{proof}

\begin{Lemma}
  \label{l:clopen-in-F}
    Let $A$ be a finite alphabet, and let
    $\pv V$ be a pseudovariety closed under concatenation.
  Let $w\in\Om AV$ and let $K$ be a nonempty clopen subset
  of~$\mathfrak{F}(w)$. Then $K$ has a minimum and a maximum, and both
  of them are step points.
\end{Lemma}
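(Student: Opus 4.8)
The statement is about clopen subsets $K$ of the compact space $\mathfrak{F}(w)$, where the ambient linear order structure comes from $\mathfrak{L}(w)$. The plan is to work with the image $\chi(K)$ in $\mathfrak{L}(w)$ and exploit that $\chi$ is continuous (Proposition~\ref{p:topology-of-Fw}), that $\mathfrak{L}(w)$ is compact (Corollary~\ref{c:Ls-is-compact}), complete (Proposition~\ref{p:complete-quasi-ordering}), and clustered (Theorem~\ref{t:cluster}). First I would show $K$ has a minimum and a maximum. Since $K$ is closed in the compact space $\mathfrak{F}(w)$, it is compact, so $\chi(K)$ is a compact subset of the linearly ordered space $\mathfrak{L}(w)$; a nonempty compact subset of a linearly ordered space (with the order topology) has a least and a greatest element — this is a standard fact, but it also follows quickly from completeness of $\mathfrak{L}(w)$ together with the fact that $\chi(K)$ is closed. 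Let $q_0 = \min \chi(K)$ and $q_1 = \max \chi(K)$. Then $\chi^{-1}(q_0)\cap K$ and $\chi^{-1}(q_1)\cap K$ are the candidate sets for where the extrema of $K$ live; since $\mathfrak{L}(w)$ is linearly ordered and $\chi$ is order-preserving on the quotient, any element of $K$ mapping to $q_0$ is $\leq$-below every element of $K$, so it would be a minimum of $K$ — but I still need such an element to exist and to be unique enough to call ``the'' minimum. Here I would use that $K$ is also open.

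\textbf{Using openness.} The key point is to rule out that $q_0$ is a stationary point whose $\sim$-class $\chi^{-1}(q_0)$ meets $K$ only partially, or that the infimum behaviour forces the minimum to fail. Concretely, suppose $q_0$ were a stationary point. By Lemma~\ref{l:approximation-type-2} (applicable since $\pv V$ is closed under concatenation, hence the hypotheses hold; note closed under concatenation $\Rightarrow$ equidivisible and finitely cancelable by Theorem~\ref{t:characterization-of-equidivisible-pseudovarieties} and Proposition~\ref{p:equid-conca-are-finitely-cancelable}), for any $(u,v)\in\chi^{-1}(q_0)$ there is a strictly increasing sequence $(u_n,v_n)_n$ of points of $\mathfrak{F}(w)$ with $\lim u_n = u$, $\lim v_n = v$. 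These points satisfy $(u_n,v_n) < (u,v)$ in $\mathfrak{F}(w)$, hence $\chi(u_n,v_n) < q_0 = \min\chi(K)$, so $(u_n,v_n)\notin K$. But $(u_n,v_n)\to (u,v)\in K$, contradicting that $K$ is open. Therefore $q_0 \in \step(\mathfrak{L}(w))$, and dually $q_1\in\step(\mathfrak{L}(w))$. Since $S$ is profinite and finitely cancelable, the $\sim$-class of a step point is a singleton (as recalled in Section~\ref{sec:step-points-stat}, via Proposition~\ref{p:covers-in-F}\ref{item:covers-in-F:1}), so $\chi^{-1}(q_0) = \{(u_0,v_0)\}$ for a single $2$-factorization, which therefore lies in $K$ (as $q_0\in\chi(K)$) and is the minimum of $K$; similarly for the maximum.

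\textbf{Main obstacle.} The delicate step is handling the stationary case cleanly — one must be careful that Lemma~\ref{l:approximation-type-2} gives a sequence \emph{strictly below} $(u,v)$ in $\mathfrak{F}(w)$ (which it does, since the sequence is strictly increasing with supremum $p$ and converges in $\mathfrak{F}(w)$ to $(u,v)$, while $(u,v)\in p$), and that strict inequality in $\mathfrak{F}(w)$ does descend to strict inequality $\chi(u_n,v_n) < q_0$ in $\mathfrak{L}(w)$: this is because $(u_n,v_n) < (u,v)$ means $(u_n,v_n) \leq (u,v)$ but not $(u_n,v_n)\sim(u,v)$, so their $\chi$-images are distinct and comparable, giving the strict inequality. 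For the maximum one would apply the dual of Lemma~\ref{l:approximation-type-2}, which follows by the left-right symmetry of the whole setup. The remaining ingredients — compactness of $K$, compactness and completeness of $\mathfrak{L}(w)$, singleton $\sim$-classes at step points — are all already available in the excerpt, so the proof is short once the openness argument is set up correctly.
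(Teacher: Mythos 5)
Your proof is correct and takes essentially the same route as the paper's: project $K$ via $\chi$, use compactness (from Proposition~\ref{p:topology-of-Fw}) to obtain $\min\chi(K)$ and $\max\chi(K)$, and then use Lemma~\ref{l:approximation-type-2} (and its dual) together with openness of $K$ to rule out that these extremes are stationary. The only cosmetic difference is the order of the final two steps --- the paper first exhibits the extrema of $K$ as points of the fibres over $\min\chi(K)$ and $\max\chi(K)$ and then shows they are step points, while you first show $\min\chi(K)$ and $\max\chi(K)$ are step points and then invoke the singleton $\sim$-class property to identify the extrema of $K$ --- but the substance is identical (just take care to pick $(u,v)\in\chi^{-1}(q_0)\cap K$ rather than an arbitrary $(u,v)\in\chi^{-1}(q_0)$ when applying Lemma~\ref{l:approximation-type-2}, since the contradiction with openness requires the limit point to lie in~$K$).
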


\begin{proof}
  As $K$ is closed, whence compact, we know from
  Proposition~\ref{p:topology-of-Fw} that its projection under
  $\chi$ is also compact,
  whence closed. Let $p=\inf{\chi(K)}$ and $q=\sup\chi(K)$. Since
  $\chi(K)$ is closed, we have $p=\min\chi(K)$ and $q=\max\chi(K)$.
  Hence, there exist $\hat p\in K\cap\chi^{-1}(p)$ and $\hat q\in
  K\cap\chi^{-1}(q)$. Although these points are perhaps not unique,
  all choices are $\sim$-equivalent. Moreover, for every other point
  $r\in K$, from $\chi(r)\in[p,q]$ we deduce that $\hat p\leq r\leq
  \hat q$.

  It remains to show that $\hat p$ and $\hat q$ are step points. Suppose on
  the contrary that $\hat p$ is a stationary point. By
  Lemma~\ref{l:approximation-type-2}, there is a strictly
  increasing sequence $(\hat p_n)_n$ in~$\mathfrak{F}(w)$ converging
  to~$\hat p$. Since $K$~is open, it must contain points of the form
  $\hat p_n$. As $\hat p_n<\hat p$, this contradicts the fact that
  $\hat p$ is
  a minimum of~$K$. Hence, $\hat p$~is a step point and, similarly, so
  is~$\hat q$.
\end{proof}

Lemma~\ref{l:approximation-type-2} was used to prove
Lemma~\ref{l:clopen-in-F}.
We next use
Lemma~\ref{l:clopen-in-F}
to show that the sequence in Lemma~\ref{l:approximation-type-2}
may be formed only by step points.

\begin{Prop}
  \label{p:approximation-type-2-by-step-points}
    Let $A$ be a finite alphabet, and let
  $\pv V$ be a pseudovariety closed under concatenation.
  Let $w\in \Om AV$.
  If $(u,v)$ is a stationary point
  of $\mathfrak{F}(w)$, then
  there exists a strictly increasing sequence $(u_n,v_n)_n$
  of step points of\/~$\mathfrak{F}(w)$ such that $\lim u_n=u$ and
  $\lim v_n=v$.
\end{Prop}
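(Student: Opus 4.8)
The strategy is to produce the step points as minima of small clopen neighbourhoods of $(u,v)$, using Lemma~\ref{l:clopen-in-F}, and then to extract a monotone subsequence. Since $A$ is finite, $\Om AV$ is a metrizable profinite semigroup, so $(S^I)^2$ (with $S=\Om AV$) is a compact, metrizable, zero-dimensional space; being the preimage of $\{w\}$ under the continuous multiplication, $\mathfrak F(w)$ is a closed, hence clopen-based, subspace. Fix a compatible metric $d$ on $(S^I)^2$. For each $n\geq 1$, choose a clopen neighbourhood $K_n$ of $(u,v)$ in $\mathfrak F(w)$ of $d$-diameter less than $1/n$. By Lemma~\ref{l:clopen-in-F}, $K_n$ has a minimum, which is a step point of $\mathfrak F(w)$; write it as $\hat q_n=(u_n,v_n)$.

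Next I would check that $\hat q_n<(u,v)$ in $\mathfrak F(w)$ for every $n$. Indeed $(u,v)\in K_n$ gives $\chi(\hat q_n)=\min\chi(K_n)\leq\chi(u,v)$; and if equality held, then $(u,v)\sim\hat q_n$ would also be a minimum of $K_n$ and hence, by Lemma~\ref{l:clopen-in-F}, a step point, contradicting the hypothesis that $(u,v)$ is stationary. Thus $\chi(\hat q_n)<\chi(u,v)=:p$. Moreover $d(\hat q_n,(u,v))\leq\operatorname{diam}(K_n)<1/n$, so $\hat q_n\to(u,v)$ in $\mathfrak F(w)$; in particular $\lim u_n=u$ and $\lim v_n=v$. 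By Proposition~\ref{p:topology-of-Fw} the map $\chi$ is continuous, so $\chi(\hat q_n)\to p$ in the order topology of $\mathfrak L(w)$, with all terms strictly below $p$. If $p$ were not the supremum of $\{\chi(\hat q_n):n\}$, there would be an upper bound $c<p$, and then ${]}c,{\rightarrow}{[}$ would be an open neighbourhood of $p$ containing no term of the sequence, contradicting convergence; hence $\sup_n\chi(\hat q_n)=p$, and this supremum is not attained.

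Finally I would extract a strictly increasing subsequence of $(\hat q_n)_n$: set $n_1=1$, and given $n_1<\dots<n_k$ with $\chi(\hat q_{n_1})<\dots<\chi(\hat q_{n_k})$, use that $\chi(\hat q_{n_k})<p$ so that ${]}\chi(\hat q_{n_k}),{\rightarrow}{[}$ is a neighbourhood of $p$, whence all but finitely many $\chi(\hat q_n)$ lie in it; pick $n_{k+1}>n_k$ among those indices. The resulting subsequence $(\hat q_{n_k})_k$ is a strictly increasing sequence of step points of $\mathfrak F(w)$, and it still converges to $(u,v)$, so renaming it $(u_n,v_n)_n$ gives the statement. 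The only slightly delicate points are verifying that each chosen minimum is \emph{strictly} below $(u,v)$ (this is exactly where stationarity of $(u,v)$ is used, via Lemma~\ref{l:clopen-in-F}) and the monotone-subsequence extraction in the order topology; both reduce to the observation that $p$ is an unattained supremum of the constructed sequence.
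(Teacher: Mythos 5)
Your proof is correct and follows essentially the same route as the paper's: both arguments hinge on Lemma~\ref{l:clopen-in-F}, taking the minimum of a small clopen neighbourhood of $(u,v)$ to produce a step point that is strictly below $(u,v)$ (strictness coming precisely from stationarity), and then arranging convergence. The only difference is bookkeeping — the paper controls the order position by intersecting with clopen order intervals with step-point endpoints, while you control the metric position via shrinking diameters and then extract a strictly increasing subsequence; both are valid.
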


\begin{proof}
  Since
  $\mathfrak{F}(w)$ is a zero-dimensional space
  and the set of step points of~$\mathfrak{F}(w)$ is linearly
  ordered by~$\leq$, it suffices to show that, for every clopen subset
  $K\subseteq\mathfrak{F}(w)$ containing $p=(u,v)/{\sim}$ and every interval
  $[q,p]$, with $q<p$, there is some step point in the intersection
  $K\cap[q,p]$.
  Because $\step (\mathfrak {L}(w))$ is
  topologically dense in $\mathfrak {L}(w)$
  (cf.~Theorem~\ref{t:cluster}),
  we are reduced to the case where $q$ is a step point.
  Let $r$ be a step point such that $r>p$, for instance
  $r=(w,1)$.
  The closed intervals in $\mathfrak {L}(w)$ whose extremities
  are step points are clopen, hence, by Proposition~\ref{p:topology-of-Fw},
  the interval $[q,r]$ in~$\mathfrak{F}(w)$ is clopen.
  Therefore, the subset $L=K\cap[q,r]$ is also clopen.
    Note that $L$ is nonempty, as $p\in L$.
  Lemma~\ref{l:clopen-in-F} guarantees that $L$
  has a minimum $s$ which is a step point.
  It then follows from $p\in L$ that $s\in K\cap[q,p]$,
  thereby completing the proof of the lemma.
\end{proof}

\section{Cluster words}
\label{sec:cluster-words}

In this section, we use
  the knowledge obtained about the labeled linear orders induced by
  pseudowords over \pv A to obtain a representation theorem: every
  pseudoword over \pv A may be represented by a (partially) labeled
  linear order having specific properties, which we introduce now.

By a \emph{partially labeled ordered set}, we mean a pair $(P,f)$ such
that $P$ is an ordered set and $f$ is a function (the \emph{labeling})
with domain contained in~$P$. An isomorphism between partially labeled
ordered sets $(P,f)$ and $(Q,g)$ is an isomorphism $\varphi\colon P\to
Q$ of ordered sets such that $\Dom g=\varphi(\Dom f)$ and
$f(p)=g\circ\varphi(p)$ for every $p\in \Dom f$.

Consider a profinite equidivisible semigroup $S$ which is finitely
cancelable with respect to $A$.
We may then consider the mappings $\beu$ and $\teu$ from $S^I$ to
$A\uplus \{1\}$
such that $\beu (1)=\teu (1)=1$ and, for $s\in S$, the images $\beu (s)$
and $\teu (s)$ are respectively the unique prefix and the unique
suffix of $s$ in $A$.
For $s\in S$, denote by~$\mathfrak {L}_c(s)$ the
partially labeled linearly ordered set $(\mathfrak {L}(s),\ell)$ defined by the mapping
$\ell\colon \step (L)\to A\uplus \{1\}$ such that $\ell (u,v)=\beu(v)$,
for $(u,v)\in\step (\mathfrak {L}(s))$.
Note that $\ell (p)=1$ if and only if $p=(s,1)$.
Recall from Theorem~\ref{t:cluster} that $\mathfrak {L}(s)$
is clustered. By a \emph{cluster word over $A$}
we mean a partially labeled linearly ordered set $(L,\ell)$ such that $L$ is
clustered and $\ell$ is a function $\step (L)\to A\uplus \{1\}$.

\begin{Example}\label{ex:examples-of-cluster-words}
   For the pseudoword $w=(a^\omega b)^\omega$ of $\Om {\{a,b\}}A$, the
    cluster word
  $\mathfrak{L}_c(w)$ is
  described in Diagram~\eqref{eq:a-cluster-word},
    \begin{align}\label{eq:a-cluster-word}
      \begin{split}
      &\underbrace{ \underbrace{aa\cdots}_\omega \;\bullet\;
        \underbrace{\cdots aa}_{\omega^*}b
        \underbrace{aa\cdots}_\omega \;\bullet\; \underbrace{\cdots
          aa}_{\omega^*}b\underbrace{aa\cdots}_\omega \;\bullet\;
        \underbrace{\cdots aa}_{\omega^*}b
        \cdots}_\omega\\
      &\qquad{\mathlarger{\mathlarger{\mathlarger{\bullet}}}}
      \underbrace{ \cdots \underbrace{aa\cdots}_\omega \;\bullet\;
        \underbrace{\cdots aa}_{\omega^*}b
        \underbrace{aa\cdots}_\omega \;\bullet\; \underbrace{\cdots
          aa}_{\omega^*}b \underbrace{aa\cdots}_\omega \;\bullet\;
        \underbrace{\cdots aa}_{\omega^*}b }_{\omega^*}
      \end{split}      
    \end{align}
  where each of the small bullets
  $\bullet$ represent a stationary point $q$ such that
  $a^\omega\in J_q$ and
  the bigger bullet
  ${\displaystyle\mathlarger{\mathlarger{\mathlarger{\bullet}}}}$
  represents the unique stationary point $r$ such that
  $(a^\omega b a^\omega)^\omega\in J_r$.
  Note that, in particular, the
  order type of $\mathfrak{L}(w)$ is
  $(\omega+1+\omega^*)\omega+1+(\omega+1+\omega^*)\omega^*$.
\end{Example}

We next introduce, in Definition~\ref{def:successful-run}, a notion of
algebraic recognition of cluster words inspired by the definition of
automata recognizing words indexed by linear orders, introduced
in~\cite{Bruyere&Carton:2007}. A notorious similarity resides in the
role played by \emph{cofinal} sets, whose definition we next recall.
In a linearly ordered set $L$, a subset $X$ of $L$ is \emph{left
  cofinal at $p$} if $X\cap {]}q,p{[}\neq \emptyset$ for every $q<p$,
\emph{right cofinal at $p$} if $X\cap {]}p,q{[}\neq \emptyset$ for
every~$q>p$, and \emph{cofinal} if it is right or left cofinal at $p$.

\begin{Def}
    \label{def:successful-run}
    Let $\varphi\colon A\to S$ be a generating mapping of a
    semigroup~$S$,  and let $s\in S$.
    We say that the cluster word $(L,\ell)$ over $A$
    is \emph{recognized} by the pair $(\varphi,s)$
    if there is a mapping $g\colon \step(L)\to \mathfrak {F}(s)$
    satisfying:
\begin{enumerate}
  [label=(R.\arabic*)]
\item $g(\min L)=(1,s)$;\label{item:run-1}
\item $g(\max L)=(s,1)$;\label{item:run-2}
\item if $p_1\prec p_2$ in $L$, then
  $g(p_1)\xrightarrow{\varphi(\ell(p_1))}g(p_2)$
  is an edge of $\Cl T(s)$;\label{item:run-3}
\item if $p$ is a stationary point of $L$\label{item:run-4}
  then, for every $q\in \mathfrak {F}(s)$,
  the set $g^{-1}(q)$ is left cofinal at $p$ if and only if
  it is right cofinal at $p$.
\end{enumerate}
If such conditions are satisfied,
then we say that
$(L,\ell)$ is \emph{$g$-recognized} by the pair $(\varphi,s)$.
We also say that
$(L,\ell)$ is \emph{recognized} by the pair $(\varphi,s)$
if it is $g$-recognized, for some $g$.
Finally, when $(L,\ell)$ is $g$-recognized by $(\varphi,s)$, we define
$F_g$ to be the function
from $\stat(L)$ to the power set $\mathcal P(\mathfrak {F}(s))$
  such that
  \begin{align*}
    F_g(p)&=\{q\in\mathfrak {F}(s):\text{$g^{-1}(q)$ is left cofinal at
      $p$}\}\label{eq:Fp}\\
    &=\{q\in\mathfrak {F}(s):\text{$g^{-1}(q)$ is right cofinal at
      $p$}\},
  \end{align*}
  for every stationary point $p$ of $L$ (cf.~\ref{item:run-4}).
\end{Def}

\begin{Remark}
  \label{r:running-is-isomorphic-invariant}
If $f\colon (L',\ell')\to (L,\ell)$ is an isomorphism
of cluster words over~$A$ and if $(L,\ell)$ is $g$-recognized by $(\varphi,s)$,
then $(L',\ell')$ is $(g\circ f)$-recognized by $(\varphi,s)$.
Hence, the property of
$(L,\ell)$ being recognized by $(\varphi,s)$ is invariant under
isomorphism of cluster words.
\end{Remark}

The next lemma, concerning the function $F_g$, will be
applied in Section~\ref{sec:proof-theorem-reft:a}.

\begin{Lemma}
  \label{l:if-P-is-maximal-then-inverse-image-by-g-is-closed}
 Consider a cluster word $(L,\ell)$ over $A$,
 recognized by the pair $(\varphi,s)$, where
 $\varphi\colon A\to S$ is a generating mapping of a finite semigroup~$S$
 and $s\in S$.
 Suppose that the order topology of $L$ is metrizable.
 Let $p$ be a stationary point of $L$.
 Then the following properties hold:
 \begin{enumerate}
 \item if $P$ is a subset of $\mathfrak {F}(s)$
  such that $F_g^{-1}(P)$ is cofinal at $p$, then $P\subseteq
  F_g(p)$;\label{item:if-P-is-maximal-1}
\item   there is an open interval $I$ containing $p$ such that
  $F_g(q)\subseteq F_g(p)$ for every stationary point $q$
  in $I$.\label{item:if-P-is-maximal-2}
 \end{enumerate}
\end{Lemma}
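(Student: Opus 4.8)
The plan is to treat the two items in sequence, with item~\ref{item:if-P-is-maximal-1} feeding into item~\ref{item:if-P-is-maximal-2}. For item~\ref{item:if-P-is-maximal-1}, fix a subset $P\subseteq\mathfrak F(s)$ with $F_g^{-1}(P)$ cofinal at $p$; by definition this means that for arbitrarily large (or arbitrarily small) stationary points $q$ near $p$ we have $F_g(q)\cap P\neq\emptyset$. Pick one such element $x\in P\cap F_g(q)$: then $g^{-1}(x)$ is left cofinal at $q$, so $g^{-1}(x)$ accumulates at $q$, and since $q$ is itself arbitrarily close to $p$ (on the appropriate side), a diagonal argument using metrizability of the order topology of $L$ produces a sequence of step points in $g^{-1}(x')$ converging to $p$ for some $x'$. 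The finiteness of $\mathfrak F(s)$ (forced by the finiteness of $S$) is what lets us pass to a subsequence along which the value $x'$ is constant, equal to some fixed $x_0\in P$; hence $g^{-1}(x_0)$ is cofinal at $p$ — say left cofinal — so by condition~\ref{item:run-4} it is also right cofinal, and $x_0\in F_g(p)$. Running this for every element of $P$ (or, more carefully, noting that the argument above must be arranged to hit a prescribed element of $P$) yields $P\subseteq F_g(p)$. The delicate point is the bookkeeping of ``which side'' the cofinality lies on; one should reduce to left cofinality throughout using~\ref{item:run-4}, and organize the diagonal extraction so that it is carried out for each $x\in P$ separately, concluding $x\in F_g(p)$ directly.

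For item~\ref{item:if-P-is-maximal-2}, argue by contradiction: if no open interval $I$ containing $p$ works, then every such interval contains a stationary point $q$ with $F_g(q)\not\subseteq F_g(p)$. Since $\mathfrak F(s)$ is finite, the power set $\mathcal P(\mathfrak F(s))$ is finite, so among all the ``bad'' stationary points near $p$ one particular value $F_g(q)=T$, with $T\not\subseteq F_g(p)$, must occur for stationary points arbitrarily close to $p$ on at least one side; fix $x\in T\setminus F_g(p)$. Then the set $\{q\in\stat(L): x\in F_g(q)\}$ is contained in $F_g^{-1}(P)$ for $P=\{x\}$ and is cofinal at $p$, so item~\ref{item:if-P-is-maximal-1} gives $x\in F_g(p)$, a contradiction. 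Hence such an interval $I$ exists.

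The main obstacle I expect is purely the careful handling of the cofinality sides and the passage from ``$g^{-1}(x)$ cofinal at $q$'' (a statement about step points near $q$) to ``$g^{-1}(x)$ cofinal at $p$'' (a statement about step points near $p$), where $q$ itself only ranges over stationary points near $p$: one genuinely needs metrizability to replace nets by sequences and to run the diagonal argument, and one needs finiteness of $S$ twice — once to keep the label $x$ constant along the extracted sequence in part~\ref{item:if-P-is-maximal-1}, and once to make $\mathcal P(\mathfrak F(s))$ finite in part~\ref{item:if-P-is-maximal-2}. Everything else is a routine unwinding of Definition~\ref{def:successful-run}, in particular the equivalence of left and right cofinality at stationary points built into~\ref{item:run-4}.
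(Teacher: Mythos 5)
Your scheme matches the paper's: part~\ref{item:if-P-is-maximal-1} passes cofinality from approximating stationary points $q$ up to $p$ via the metric and condition~\ref{item:run-4}, and part~\ref{item:if-P-is-maximal-2} is a pigeonhole on the finite set $\mathcal P(\mathfrak F(s))$ followed by an application of part~\ref{item:if-P-is-maximal-1}. Two details need adjustment. In~\ref{item:if-P-is-maximal-1}, your order of extraction---first pull out a sequence of step points, then stabilize a label $x'$ by finiteness---only yields that \emph{some} element of $P$ lies in $F_g(p)$; the fix you yourself flag, namely prescribing $x\in P$ at the outset, is exactly what the paper does, and once $x$ is fixed the nested-intervals argument (pick ${]}q_k,p{[}$ of $d$-radius $<1/k$, then a point of $F_g^{-1}(P)$ inside it, then a step point of $g^{-1}(x)$ inside that) needs no finiteness of $S$ at all, contrary to your closing tally that finiteness is used twice.

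In~\ref{item:if-P-is-maximal-2}, the asserted inclusion $\{q\in\stat(L):x\in F_g(q)\}\subseteq F_g^{-1}(\{x\})$ is reversed under the reading of $F_g^{-1}$ the paper actually uses, namely $F_g^{-1}(P)=\{q:F_g(q)=P\}$: the correct relation is $F_g^{-1}(\{x\})\subseteq\{q:x\in F_g(q)\}$, so cofinality of the right-hand set does not license an application of~\ref{item:if-P-is-maximal-1} with $P=\{x\}$. The detour through $x$ is also unnecessary: your pigeonhole already produces a $T$ with $F_g^{-1}(T)$ cofinal at $p$ and $T\not\subseteq F_g(p)$, so applying~\ref{item:if-P-is-maximal-1} directly with $P=T$ gives $T\subseteq F_g(p)$, the contradiction you want. (The paper's proof of~\ref{item:if-P-is-maximal-2} is constructive rather than by contradiction: for each $P$ with $F_g^{-1}(P)$ not cofinal at $p$ it chooses an open interval around $p$ disjoint from $F_g^{-1}(P)$, intersects the finitely many such intervals to get $I$, and notes that any stationary $q\in I$ forces $F_g^{-1}(F_g(q))$ to be cofinal at $p$, so~\ref{item:if-P-is-maximal-1} applies. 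Same idea, no indirection.)
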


\begin{proof}
  \ref{item:if-P-is-maximal-1}
  Without loss of generality, suppose that
  $F_g^{-1}(P)$ is left cofinal at $p$.
  Then, taking into account that $L$ is metrizable,
  there is a strictly increasing sequence $(p_n)_n$ of
  stationary points of $L$ converging to~$p$
  such that $F_g(p_n)=P$ for all~$n$.
  Consider a metric $d$ on $L$
  inducing the order topology of $L$. 
  For each $k\geq 1$, let $]q_k,p[$ be an open interval of
  $L$ such that $d(q_k,p)<\frac{1}{k}$
  Then, there is 
  $n_k$ such that $p_{n_k}\in {]}q_k,p{[}$.
  Let $(s_1,s_2)\in P$.
  By the definition of $F_g$, we know that 
  $g^{-1}(s_1,s_2)$ is left cofinal at $p_{n_k}$. Therefore, there is
  in the interval $]q_k,p_{n_k}[$ a step point belonging to $g^{-1}(s_1,s_2)$. Since
  ${]}q_k,p_{n_k}{[}\subseteq {]}q_k,p{[}$, this proves that
  $g^{-1}(s_1,s_2)$ is left cofinal at $p$, thus
  $(s_1,s_2)\in F_g(p)$. We have therefore proved that the inclusion
  $P\subseteq F_g(p)$ holds.

  \ref{item:if-P-is-maximal-2} Let $\Cl P$ be the set of subsets $P$
  of $\mathfrak {F}(s)$ such that $F_g^{-1}(P)$ is not cofinal at
  $p$.  For each $P\in \Cl P$, there is an open interval $I_p$
  containing $p$ such that $F_g^{-1}(P)\cap I_p=\emptyset$.
  Because $S$ is a finite semigroup,
  the set $\Cl P$~is finite, and so $I=\bigcap_{P\in \Cl P}I_p$ is an open interval of~$L$ containing $p$.  Let $q$ be a stationary point in $I$.  Let
  $R=F_g(q)$.  Suppose $R\in\Cl P$. Then $q\in I_R$. But we also have
  $q\in F_g^{-1}(R)$, which leads to
  $F_g^{-1}(R)\cap I_R\neq\emptyset$, a contradiction with the
  definition of $I_R$.  This shows that $R\notin\Cl P$.  We then
  deduce from part~\ref{item:if-P-is-maximal-1} of the lemma that
  $R\subseteq F_g(p)$.
\end{proof}

The main results in this section
are about cluster words defined by elements of $\Om AA$,
but in the next proposition we embrace
without additional effort all pseudovarieties closed under concatenation.

\begin{Prop}
  \label{p:direct-implication-of-theorem-we-search}
  Consider a pseudovariety $\pv V$ closed under
  concatenation.
Given $w\in\Om AV$, and a generating mapping
$\varphi\colon A\to S$ of a semigroup $S$ of~$\pv V$,
let $s=\varphi_\pv V(w)$.
Consider the mapping
$g_{w,\varphi}\colon \step(\mathfrak{L}(w))\to \mathfrak {F}(s)$
such that
$g_{w,\varphi}(u,v)=(\varphi_\pv V(u),\varphi_\pv V(v))$
for every 
step point $(u,v)$ of $\mathfrak{L}(w)$.
  Then the cluster word $\mathfrak{L}_c(w)$
  is $g_{w,\varphi}$-recognized by $(\varphi,s)$.
\end{Prop}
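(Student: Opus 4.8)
The plan is to verify the four recognition conditions \ref{item:run-1}--\ref{item:run-4} in Definition~\ref{def:successful-run} directly for the candidate map $g_{w,\varphi}$. First I would check that $g_{w,\varphi}$ is well defined: since $\varphi_\pv V$ is a continuous homomorphism $\Om AV\to S$, from $w=uv$ we get $s=\varphi_\pv V(u)\varphi_\pv V(v)$, so $(\varphi_\pv V(u),\varphi_\pv V(v))\in\mathfrak F(s)$; this also shows that a transition $t$ from $(u,v)$ to $(u',v')$ in $\mathfrak F(w)$ maps to the transition $\varphi_\pv V(t)$ from $g_{w,\varphi}(u,v)$ to $g_{w,\varphi}(u',v')$ in $\mathfrak F(s)$, so $g_{w,\varphi}$ is order-preserving on step points and respects the category of transitions. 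Here I would extend $\varphi_\pv V$ to $(\Om AV)^I\to S^I$ sending $1$ to $1$, so that the pair $(1,w)$ and $(w,1)$ behave correctly; conditions \ref{item:run-1} and \ref{item:run-2} are then immediate since $\min\mathfrak L(w)=(1,w)$ and $\max\mathfrak L(w)=(w,1)$ (cf.~the proof of Theorem~\ref{t:cluster}).

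For \ref{item:run-3}, if $p_1\prec p_2$ in $\mathfrak L(w)$, then by Proposition~\ref{p:covers-in-F}\ref{item:covers-in-F:1} the $\sim$-classes are singletons and $p_1=(u,av)$, $p_2=(ua,v)$ with $a\in A$; the transition from $p_1$ to $p_2$ in $\mathfrak F(w)$ is $a$ itself, so its image under $\Lambda$-compatibility is $\varphi(a)=\varphi(\ell(p_1))$, using that $\ell(u,av)=\beu(av)=a$. Hence $g_{w,\varphi}(p_1)\xrightarrow{\varphi(\ell(p_1))}g_{w,\varphi}(p_2)$ is an edge of $\Cl T(s)$, as required. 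One subtlety to address is the case $p_1=(u,av)=\min\mathfrak L(w)$ or $p_2=\max\mathfrak L(w)$, but this is handled uniformly once we work in $(\Om AV)^I$ and $S^I$.

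The genuinely substantive condition is \ref{item:run-4}, the cofinality symmetry at stationary points. Let $p$ be a stationary point of $\mathfrak L(w)$ and fix $q\in\mathfrak F(s)$. I would argue by contradiction: suppose $g_{w,\varphi}^{-1}(q)$ is, say, left cofinal at $p$ but not right cofinal at $p$. Left cofinality gives a strictly increasing sequence of step points $(x_n,y_n)_n$ with each $g_{w,\varphi}(x_n,y_n)=q$ and $(x_n,y_n)/{\sim}$ converging to $p$ in $\mathfrak L(w)$; passing to a subsequence (using compactness and metrizability of $\mathfrak F(w)$ and $\mathfrak L(w)$), we may assume $(x_n,y_n)\to(\alpha,\beta)$ in $\mathfrak F(w)$, and by Proposition~\ref{p:topology-of-Fw} the limit lies in $p$. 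Now I would invoke Theorem~\ref{t:factorizing} together with Lemma~\ref{l:a-condition-on-metric-semigroups}: the multiplication in $S$ (finite, hence with open multiplication, as $\pv V$ is closed under concatenation and contains $\pv N$) lets us factor. More precisely, since $(\alpha,\beta)\in p$ and $p$ is stationary, by Proposition~\ref{p:shifting-allowed-implies-type-2} there is $z\in\Om AV$ stabilizing $(\alpha,\beta)$, and one can use such a stabilizer to shift a step point approximating $p$ from the left to one approximating $p$ from the right while keeping its $\varphi_\pv V$-image equal to $q$; the key input is that in the finite image $S$, the relevant values stabilize, so applying a suitable power of $\varphi_\pv V(z)$ produces points on the right of $p$ with the same image $q$. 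This contradicts the failure of right cofinality, and symmetrically for the reverse implication. I expect this last step — making the ``shift by a stabilizer preserves the $S$-value'' argument precise, and correctly handling the approximation by step points via Proposition~\ref{p:approximation-type-2-by-step-points} — to be the main obstacle; everything else is bookkeeping with the homomorphism $\varphi_\pv V$ and the definitions.
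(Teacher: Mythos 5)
Your handling of conditions \ref{item:run-1}--\ref{item:run-3} matches the paper: \ref{item:run-1} and \ref{item:run-2} are immediate, and \ref{item:run-3} follows from Proposition~\ref{p:covers-in-F}\ref{item:covers-in-F:1} since a cover $p_1\prec p_2$ is exactly of the form $(u,av)\prec(ua,v)$ with $a=\ell(p_1)$, and $\varphi_\pv V$ is a homomorphism.

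For \ref{item:run-4}, though, your plan contains a genuine gap. You propose to ``shift a step point approximating $p$ from the left to one approximating $p$ from the right'' by applying a stabilizer $z$ of $(\alpha,\beta)\in p$. This cannot work: a stabilizer labels a \emph{loop} at $(\alpha,\beta)$ in $\Cl T(w)$, so any factorization you obtain by splitting $z=at$ and sliding, say $(\alpha a,t\beta)$, still lies in the \emph{same} $\sim$-class $p$ (this is precisely the content of Proposition~\ref{p:shifting-allowed-implies-type-2} and Lemma~\ref{l:a-successors-vs-equivalence}); it never produces a point strictly to the right of $p$. What you actually need is not a stabilizer but a strictly \emph{decreasing} sequence of step points converging to $(\alpha,\beta)$, and that is supplied directly by the dual of Proposition~\ref{p:approximation-type-2-by-step-points}, which you mention but only as a secondary concern. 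Once you have such a decreasing sequence $(u'_n,v'_n)\to(\alpha,\beta)$, continuity of $\varphi_\pv V$ and finiteness of $S$ give $(\varphi_\pv V(u'_n),\varphi_\pv V(v'_n))=(\varphi_\pv V(\alpha),\varphi_\pv V(\beta))=(x,y)$ for large $n$, which is right cofinality. No appeal to Theorem~\ref{t:factorizing}, to openness of multiplication, or to stabilizers is needed at this level; those tools are already baked into the proof of Proposition~\ref{p:approximation-type-2-by-step-points} itself. So the correct argument is considerably simpler than the route you sketch, and the stabilizer-shift step as stated would fail.
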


\begin{proof}
  The conditions \ref{item:run-1}-\ref{item:run-2}
  in Definition~\ref{def:successful-run}
  for $g_{w,\varphi}$-recognition by  $(\varphi,s)$ are
  clearly satisfied, and \ref{item:run-3}
  follows directly from
  Proposition~\ref{p:covers-in-F}\ref{item:covers-in-F:1}.

  Let us verify condition~\ref{item:run-4}.
    Let $(x,y)\in \mathfrak {F}(s)$ be such that
    $g_{w,\varphi}^{-1}(x,y)$ is left
  cofinal at $p$. Then, there is a strictly increasing sequence
  $(u_n,v_n)_{n\geq 1}$ of step points
  belonging to $g_{w,\varphi}^{-1}(x,y)$
  converging in $\mathfrak {F}(w)$ to an element $(u,v)$ of~$p$.
    Since $(\varphi_\pv V(u_n),\varphi_\pv V(v_n))=(x,y)$ for every $n\geq 1$,
  we also have $(\varphi_\pv V(u),\varphi_\pv V(v))=(x,y)$,
  by continuity of $\varphi_\pv V$.
  By the dual of Proposition~\ref{p:approximation-type-2-by-step-points},
  there is a strictly decreasing sequence
  $(u'_n,v'_n)_{n\geq 1}$ of step points converging to $(u,v)$ in
  $\mathfrak {F}(w)$.
  In particular, by continuity of $\varphi_\pv V$,
  for all sufficiently large~$n$, we have
  $(\varphi_\pv V(u_n'), \varphi_\pv V(v_n')) = (x,y)$,
  and so $g_{w,\varphi}^{-1}(x,y)$ is right cofinal at $p$.
  Dually, if
  $g_{w,\varphi}^{-1}(x,y)$ is right cofinal at $p$ then
  $g_{w,\varphi}^{-1}(x,y)$ is left cofinal at $p$. This establishes 
  condition~\ref{item:run-4}.
\end{proof}

In the case of unambiguous aperiodic semigroups,
we have a converse for
Proposition~\ref{p:direct-implication-of-theorem-we-search}, as stated
in the next theorem.

\begin{Thm}\label{t:a-description-of-the-representation-theorem}
    Let $w\in\Om AA$,
    and consider
    a generating mapping  $\varphi\colon A\to S$ of a finite aperiodic
  unambiguous semigroup $S$.
  Then $\varphi_{\pv A}(w)=s$ if and only if
  $\mathfrak{L}_c(w)$ is recognized by $(\varphi,s)$.
\end{Thm}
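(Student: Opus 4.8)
The plan is to prove both implications of Theorem~\ref{t:a-description-of-the-representation-theorem} separately. The forward direction ($\varphi_{\pv A}(w)=s$ implies $\mathfrak{L}_c(w)$ is recognized by $(\varphi,s)$) is immediate from Proposition~\ref{p:direct-implication-of-theorem-we-search}: since $\pv A$ is closed under concatenation (by Sch\"utzenberger's theorem, as recalled in the excerpt), the cluster word $\mathfrak{L}_c(w)$ is $g_{w,\varphi}$-recognized by $(\varphi,\varphi_{\pv A}(w))=(\varphi,s)$. So the entire substance of the theorem lies in the converse: assuming $\mathfrak{L}_c(w)$ is recognized by $(\varphi,s)$, deduce $\varphi_{\pv A}(w)=s$.

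For the converse, the strategy I would follow is to exploit Proposition~\ref{p:residual-B-semigroups}: to show $u=v$ in $\Om AA$ it suffices to show $\varphi_{\pv A}(u)=\varphi_{\pv A}(v)$ for all generating maps $\varphi$ onto unambiguous finite aperiodic semigroups. Here the claim is the dual statement that recognition of $\mathfrak{L}_c(w)$ by $(\varphi,s)$ forces $\varphi_{\pv A}(w)=s$ for every such $\varphi$. Fix a run $g\colon \step(\mathfrak{L}(w))\to\mathfrak{F}(s)$ witnessing recognition. The key idea is that $g$ and the canonical run $g_{w,\varphi}$ (from Proposition~\ref{p:direct-implication-of-theorem-we-search}, associated to $t:=\varphi_{\pv A}(w)$) should be ``forced to agree'' on step points in a way that propagates the equality $s=t$. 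Concretely, I would first establish that condition~\ref{item:run-3} — which says $g(p_1)\xrightarrow{\varphi(\ell(p_1))}g(p_2)$ whenever $p_1\prec p_2$ — combined with \ref{item:run-1}, \ref{item:run-2} and the fact that $\step(\mathfrak{L}(w))$ is dense in the clustered order $\mathfrak{L}(w)$, pins down $g$ on step points: between consecutive step points the label is a single letter, so reading along any finite chain of covers from $(1,s)$ spells out a word, and the compatibility forces these partial products to be consistent. The function $F_g$ of Lemma~\ref{l:if-P-is-maximal-then-inverse-image-by-g-is-closed} is the tool for passing through stationary points: at a stationary point $p$, $F_g(p)$ records which $2$-factorizations of $s$ are cofinal approximants, and part~\ref{item:if-P-is-maximal-1} of that lemma shows these sets behave like limits.

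The main technical engine, I expect, will be an induction on the $\Cl J$-order of $S$ (equivalently, on the complexity/depth of the stationary-point structure), showing that for each stationary point $p$ of $\mathfrak{L}(w)$, the set $J_p$-data in $S$ coincides with the data predicted by $t=\varphi_{\pv A}(w)$, using the results of Section~4--5 on the minimum ideal semigroupoid $\Cl K_p$, the bijection between $\sim$-classes and idempotents of $J_p$ (Proposition~\ref{p:idempotents-in-bijection-with-sim-class}, valid since $S$ is unambiguous aperiodic), and Lemma~\ref{l:absorption} to rule out spurious collapses or splittings. The unambiguity hypothesis on $S$ is essential precisely here: it guarantees that idempotents stabilizing a given $2$-factorization are unique within their $\Cl J$-class (Lemma~\ref{l:minimum-idempotents-labeling-loops-at-same-point}), so the run $g$ cannot ``cheat'' by choosing inconsistent factorizations at different cofinal approaches to the same stationary point. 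Once it is shown that $g$ must literally be $g_{w,\varphi}$ after identifying $s$ with a suitable element, evaluating at $\min L$ and $\max L$ via \ref{item:run-1}--\ref{item:run-2} yields $s=\varphi_{\pv A}(w)$.

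The hardest part will be the stationary-point argument: handling the fact that a stationary point may have uncountably many elements in its $\sim$-class and uncountably many cofinal sequences, and showing nonetheless that $g$ is forced to be consistent across all of them. This is where Lemma~\ref{l:if-P-is-maximal-then-inverse-image-by-g-is-closed}\ref{item:if-P-is-maximal-2} (local monotonicity of $F_g$) and the aperiodicity of $S$ (which trivializes maximal subgroups, so Green's $\Cl H$-classes are singletons, cf.\ Lemma~\ref{l:disjointness-blocks-aperiodic-case}) must be combined carefully; I anticipate needing an auxiliary lemma stating that $F_g(p)$ is a single $\sim$-class of $\mathfrak{F}(s)$ and that $g$ restricted near $p$ ``converges'' to that class, which is the analogue in $S$ of Proposition~\ref{p:approximation-type-2-by-step-points}.
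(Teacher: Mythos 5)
The forward implication is handled exactly as in the paper, via Proposition~\ref{p:direct-implication-of-theorem-we-search}. For the converse, however, your text is a plan whose central step is missing, and the step you do sketch does not work as stated. Condition~\ref{item:run-3} only constrains $g$ along finite chains of covers, i.e.\ within a single $\approx$-class of step points; it says nothing about how $g(x)$ and $g(y)$ are related when $x$ and $y$ lie on opposite sides of a stationary point. Density of step points does not ``pin down $g$'': nothing in \ref{item:run-1}--\ref{item:run-4} a priori makes $g$ compatible with the (unique) transition $x\xrightarrow{t}y$ in $\Cl T(w)$ across a stationary point, and that compatibility is precisely what has to be proved. Likewise, your anticipated auxiliary lemma that $F_g(p)$ is a single $\sim$-class of $\mathfrak{F}(s)$ begs the question: to compare two elements of $F_g(p)$ one would already need to know that edges between step points cofinal at $p$ are mapped consistently into $\Cl T(s)$, which is the conclusion, not a tool (note also that $S$ is only unambiguous, not equidivisible, so $\mathfrak{F}(s)$ need not even be totally quasi-ordered). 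Finally, the goal is not to show $g=g_{w,\varphi}$; it suffices to show that the single edge $(1,w)\to(w,1)$ is compatibly mapped, which is weaker and is what the paper actually establishes.

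The missing engine is the paper's bridge machinery. One calls an edge $x\to y$ between step points \emph{$g$-projected} if $g(x)\xrightarrow{\varphi_{\pv A}(t)}g(y)$ is an edge of $\Cl T(s)$, and one must show that the whole of $\mathfrak{L}(w)$ is a \emph{bridge}, i.e.\ every such edge is $g$-projected. The crossing of a stationary region is done by Proposition~\ref{p:towards-big-jumps}: for a \Cl J-bridge $(p_1,p_2)$ one uses the convergence of the transitions $t_{m,n}$ to the idempotents $e_i\in J_{p_i}$ (Corollary~\ref{c:convergence-of-transitions}), unambiguity to get $\varphi_{\pv A}(e_1)=\varphi_{\pv A}(e_2)$ via Lemma~\ref{l:minimum-idempotents-labeling-loops-at-same-point}, and aperiodicity plus stability to force $\varphi_{\pv A}(e_1)=\varphi_{\pv A}(\tau)=\varphi_{\pv A}(e_2)$ for the \Cl J-minimum transition $\tau$, yielding a $g$-projected idempotent loop. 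This local result is then globalized by an induction on the finite invariant $\Gamma(p)=(\Upsilon(p),F_g(p))$ (a lexicographic combination of the \Cl J-order in $S$ with inclusion of $F_g$-values), together with the compactness covering lemmas for special bridges (Lemmas~\ref{l:cover-of-special-bridges} and~\ref{l:cover-of-special-bridges-2}) and the closedness of the level set $N$. Your proposal correctly identifies the relevant ingredients ($F_g$, unambiguity, aperiodicity, some induction tied to the \Cl J-order of $S$) but does not supply the construction that makes them interact, so as written the converse direction is not proved.
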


We defer the proof of
Theorem~\ref{t:a-description-of-the-representation-theorem}
to Section~\ref{sec:proof-theorem-reft:a} (but note that
the direct implication in
  Theorem~\ref{t:a-description-of-the-representation-theorem}
  is an immediate application of
  Proposition~\ref{p:direct-implication-of-theorem-we-search}).
  Meanwhile, we use it to prove the following main result.

\begin{Thm}\label{t:equal-cluster-words}
  Let $u,v\in\Om AA$. Then
  $u=v$ if and only if the cluster words $\mathfrak {L}_c(u)$
  and $\mathfrak {L}_c(v)$ are isomorphic.
\end{Thm}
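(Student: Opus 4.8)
The plan is to derive Theorem~\ref{t:equal-cluster-words} from the two tools that have been assembled just before it, namely Proposition~\ref{p:residual-B-semigroups} and Theorem~\ref{t:a-description-of-the-representation-theorem}. The ``only if'' direction is trivial: if $u=v$ then $\mathfrak{L}_c(u)=\mathfrak{L}_c(v)$ literally, so in particular they are isomorphic. The content is the ``if'' direction, and the idea is to show that the isomorphism type of $\mathfrak{L}_c(w)$ determines $\varphi_{\pv A}(w)$ for every generating mapping $\varphi\colon A\to S$ onto a finite aperiodic unambiguous semigroup $S$, and then invoke Proposition~\ref{p:residual-B-semigroups} to conclude $u=v$.

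First I would fix such a $\varphi\colon A\to S$ with $S$ finite aperiodic unambiguous, and set $s=\varphi_{\pv A}(u)$. By the direct implication of Theorem~\ref{t:a-description-of-the-representation-theorem} (equivalently, by Proposition~\ref{p:direct-implication-of-theorem-we-search}), the cluster word $\mathfrak{L}_c(u)$ is recognized by $(\varphi,s)$. Now suppose $\mathfrak{L}_c(u)$ and $\mathfrak{L}_c(v)$ are isomorphic as partially labeled ordered sets. By Remark~\ref{r:running-is-isomorphic-invariant}, recognizability is invariant under isomorphism of cluster words, so $\mathfrak{L}_c(v)$ is also recognized by $(\varphi,s)$. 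Applying the converse implication of Theorem~\ref{t:a-description-of-the-representation-theorem} to $v$ and the pair $(\varphi,s)$, we obtain $\varphi_{\pv A}(v)=s=\varphi_{\pv A}(u)$.

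Since $\varphi$ was an arbitrary generating mapping from $A$ onto a finite aperiodic unambiguous semigroup, Proposition~\ref{p:residual-B-semigroups} yields $u=v$, completing the proof. The only mild subtlety to double-check is the interface between ``generating mapping of a semigroup $S$ of~$\pv A$'' as used in the recognition setup and ``mapping from $A$ onto an unambiguous finite aperiodic semigroup'' as used in Proposition~\ref{p:residual-B-semigroups}: these coincide once one notes that a generating mapping need not be onto a priori, but Proposition~\ref{p:residual-B-semigroups} is quantified over onto maps, and Theorem~\ref{t:a-description-of-the-representation-theorem} accepts any generating mapping, so restricting attention to onto maps is harmless. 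I do not expect any real obstacle here: all the work has been done in Theorem~\ref{t:a-description-of-the-representation-theorem}, whose proof is deferred to Section~\ref{sec:proof-theorem-reft:a}; the present theorem is essentially a packaging of that result together with the Birget--Rhodes separation argument of Proposition~\ref{p:residual-B-semigroups}.
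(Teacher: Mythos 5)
Your proposal is correct and follows exactly the same route as the paper's proof: fix a generating mapping onto a finite aperiodic unambiguous semigroup, use the direct implication of Theorem~\ref{t:a-description-of-the-representation-theorem} for $u$, transfer recognition to $\mathfrak{L}_c(v)$ via Remark~\ref{r:running-is-isomorphic-invariant}, apply the converse implication, and conclude with Proposition~\ref{p:residual-B-semigroups}. Your remark about onto versus merely generating mappings is a harmless observation that the paper does not even bother to make.
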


\begin{proof}
  The isomorphism of cluster words is clearly necessary to have
  $u=v$. Conversely, suppose $\mathfrak {L}_c(u)$ and
  $\mathfrak {L}_c(v)$ are isomorphic.
  Let $\varphi\colon A\to S$ be a generating mapping of
  a finite unambiguous aperiodic semigroup~$S$. Take $s=\varphi_{\pv  A}(u)$.
  Then $\mathfrak {L}_c(u)$ is recognized by
  $(\varphi,s)$, according to the direct implication in
  Theorem~\ref{t:a-description-of-the-representation-theorem}.
  But then
  $\mathfrak {L}_c(v)$ is also recognized by
  $(\varphi,s)$~(cf.~Remark~\ref{r:running-is-isomorphic-invariant}).
  Hence, we have $s=\varphi_{\pv  A}(v)$ by the converse implication
  in Theorem~\ref{t:a-description-of-the-representation-theorem}.
  By Proposition~\ref{p:residual-B-semigroups}, this establishes $u=v$.
\end{proof}

\section{Stabilizers}

Given a semigroup $S$ and $s\in S$, we say that an element $x$
of~$S^I$ \emph{stabilizes $s$ on the right} if $sx=s$; the set
$\rstab(s)$ of all such $x$ constitutes a submonoid of $S^I$ and is
called the \emph{right stabilizer} of~$s$. One defines dually the
elements that \emph{stabilize $s$ on the left}, which form a submonoid
$\lstab(s)$ of $S^I$, called the \emph{left stabilizer} of~$s$.

An application of the following result will be required in the sequel. 

\begin{Thm}
  \label{t:cancellation}
  Let $S$ be an equidivisible profinite semigroup which
  is finitely cancelable.
  Let $u\in S$.
  Let $g$ be an element of the monoid $\rstab(u)$
  or of the monoid  $\lstab(u)$.
  If $g$ is regular within that monoid, then $g=g^2$.
\end{Thm}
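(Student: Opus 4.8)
The plan is to argue by contradiction, assuming $g$ is regular within $\rstab(u)$ but $g\neq g^2$ (the left-stabilizer case being dual). Since $g$ is regular in $M=\rstab(u)$, there is $h\in M$ with $ghg=g$, so $e=gh$ and $f=hg$ are idempotents of $M$ with $e\mathrel{\Cl R}g\mathrel{\Cl L}f$ \emph{inside $M$}; in particular $e$ and $f$ stabilize $u$ on the right. The first step is to locate a suitable $2$-factorization of some element of $S$ on which these idempotents act as stabilizers in the sense of Section~\ref{sec:step-points-stat}. The natural candidate is to consider $(u,g)\in\mathfrak F(ug)$: since $ug=u\cdot g$ and $u=ug$ (wait---we only have $ug=u$ when $g\in\rstab(u)$), indeed $ug=u$, so $(u,g)$ and $(ug,1)=(u,1)$ are both $2$-factorizations of $u$, and more usefully $(u,g)\in\mathfrak F(u)$. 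The transition $g$ maps $(u,g^2)$ to $(ug,g)=(u,g)$ and $g$ also maps $(u,g)$ forward, giving loops. More carefully: since $ug=u$, we get an edge $(u,g^{\omega+1})\xrightarrow{g}(u,g^{\omega+1})$? Let me instead use $(u,g^\omega)$: here $ug^\omega=u$ as well, so $(u,g^\omega)\in\mathfrak F(u)$, and $g$ stabilizes this point since $ug=u$ and $g\cdot g^\omega=g^\omega$ (as $g^\omega$ is idempotent and $gg^\omega=g^{\omega+1}$---this only equals $g^\omega$ if $g$ is aperiodic-like in its subgroup, which need not hold).

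So the cleaner route is: set $p=(u,g)/{\sim}\in\mathfrak L(S)$, which makes sense because $ug=u$ forces $u=u\cdot g$, hence $(u,g)\in\mathfrak F(u)$. The element $g$ stabilizes $(u,g)$ on the nose iff $ug=u$ and $g\cdot g=g$; the second fails by assumption. Instead, observe $g$ \emph{is} a stabilizer in the category-of-transitions sense precisely when $g$ labels a loop at $(u,g)$: we need $ug=u$ (true) and $g=g\cdot g$ (false in general). The honest fix is to pass to $(u, g^{\omega})$ using that $ug^\omega = u$ (since $ug=u$ implies $ug^n=u$ for all $n$, hence $ug^\omega=u$ by continuity), giving that $g$ labels the path $(u,g^\omega)\xrightarrow{g}(ug, g^{\omega-? })\cdots$. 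To avoid this subgroup bookkeeping, I would instead invoke Proposition~\ref{p:shifting-allowed-implies-type-2}: the point $(u,g^\omega)$ is stabilized by $g$ (indeed $ug=u$ and $g\cdot g^\omega = g^{\omega+1}$, and $g^\omega\cdot g^\omega=g^\omega$, so the loop $(u,g^\omega)\xrightarrow{g^\omega}(u,g^\omega)$ exists), so $(u,g^\omega)$ is a stationary point. Now within the local monoid $M_{(u,g^\omega)}$ — whose image under $\Lambda$ contains $g^\omega$ — we can run the argument of Lemma~\ref{l:minimum-idempotents-labeling-loops-at-same-point}.

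The key step I expect to do the real work: having realized $g$ (or rather its relevant idempotent power $g^\omega$, and then $g$ itself via loops) as acting on a stationary point, apply Proposition~\ref{p:transitions-in-F} or Lemma~\ref{l:minimum-idempotents-labeling-loops-at-same-point} to force collapses. Concretely, with $e=gh,f=hg\in M=\rstab(u)$ the idempotents witnessing regularity of $g$ in $M$, I'd show $e$ and $f$ both stabilize $(u,g^\omega)$ (using $ue=u$, $uf=u$, and that $e,f$ commute with $g^\omega$ within $M$), and that $e\mathrel{\Cl J}g^\omega\mathrel{\Cl J} f$ in $S$ — here one must upgrade $\Cl J$-relations inside $M$ to $\Cl J$-relations in $S$, which follows since $M\subseteq S^I$ and $e,f,g,g^\omega$ all lie below $g^\omega$ and $g^\omega$ is the idempotent generated by $g$. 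Then Lemma~\ref{l:minimum-idempotents-labeling-loops-at-same-point} (using that $S$ is stable and unambiguous — equidivisible implies unambiguous, as noted after Lemma~\ref{l:linear-quasi-ordering}) yields $e=f=g^\omega$. But $e=gh$ and $f=hg$ with $e=f$ an idempotent $\mathrel{\Cl H}$-related to nothing forces, via $ghg=g$ and $e=f$, the equality $g=ge=gf=g\cdot hg = (gh)g=eg$ and then $g=eg=e\cdot g$ with $e=g^\omega$ gives $g=g^\omega g=g^{\omega+1}$; combined with $g^\omega=e=gh=g\cdot h$ one extracts $g=g^\omega$, hence $g=g^2$. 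The main obstacle is precisely this last translation between Green's relations of the submonoid $\rstab(u)$ and of $S$, together with confirming that the relevant loops genuinely exist in $\Cl T(S)$; once that is in place, stability plus unambiguity close the argument exactly as in Lemma~\ref{l:minimum-idempotents-labeling-loops-at-same-point}.
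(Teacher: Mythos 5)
The proposal takes a genuinely different route from the paper, and it has a gap at the crucial step.

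The paper first reduces to the case where $g$ is a group element of $S$: from $gu=u=hu$ (working with $\lstab(u)$; the argument for $\rstab(u)$ is dual) and equidivisibility, $g$ and $h$ are $\Cl R$-comparable, and since $g\mathrel{\Cl J}h$ (from $g=ghg$, $h=hgh$) stability gives $g\mathrel{\Cl R}h$; writing $h=gz$ then yields $g=g^2zg$, so $g\mathrel{\Cl R}g^2$. Only after that does the paper exploit 2-factorizations, by taking the maximal $(x,y)\in\mathfrak F(u)$ with $x\mathrel{\Cl R}g$ and splitting on whether that point is a step point or stationary. You skip this reduction, and that is where the argument breaks: your assertion that $e\mathrel{\Cl J}g^\omega\mathrel{\Cl J}f$ in $S$ is false without it. Indeed $e=gh$ and $f=hg$ satisfy $e\mathrel{\Cl J}g\mathrel{\Cl J}f$ in $S$, but $g^\omega\leq_{\Cl J}g$ with equality only when $g$ is a group element of $S$; if it is not, $g^\omega$ lies strictly below and Lemma~\ref{l:minimum-idempotents-labeling-loops-at-same-point} does not apply to $e$ (or $f$) together with $g^\omega$.

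There is a second, independent error: even granting that $g$ is a group element, $f=hg$ does \emph{not} stabilize $(u,g^\omega)$ in general. The condition is $fg^\omega=g^\omega$, but $fg^\omega=hg\cdot g^\omega=hg^{\omega+1}=hg=f$ (using $g^{\omega+1}=g$), so $f$ stabilizes $(u,g^\omega)$ if and only if $f=g^\omega$ — which is precisely what you are trying to prove, making this step circular. What $f$ actually stabilizes is the different 2-factorization $(u,hg^\omega)$: check $f\cdot hg^\omega=hgh\cdot g^\omega=hg^\omega$. Since $e$ and $f$ stabilize distinct 2-factorizations, Lemma~\ref{l:minimum-idempotents-labeling-loops-at-same-point} yields no comparison between them. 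The idea of pushing the problem into the local monoid of stabilizers of a single point does not survive these obstacles; the paper's proof is instead forced to work with the linear quasi-order on all of $\mathfrak F(u)$, locate an extremal 2-factorization whose first component is $\Cl R$-equivalent to $g$, and handle the step/stationary dichotomy via Lemma~\ref{l:type1-vs-left-group-action} and a compactness/net argument.
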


In Theorem~\ref{t:cancellation}, the hypothesis that $S$ is finitely
cancelable is not superfluous: if $G$ is a group, then the semigroup
$G^0$ obtained from $G$ by adjoining a zero is equidivisible, and
$G^0$ is the left and right stabilizer of zero.

We should mention that we do not know of any other examples of
equidivisible profinite semigroups that are finitely cancelable other
than free pro-\pv V semigroups, where \pv V is a pseudovariety with
suitable closure properties. For such semigroups,
Theorem~\ref{t:cancellation} follows from more general results
in~\cite[Theorem~13.1]{Rhodes&Steinberg:2002}.
Nevertheless, since our results apply to all equidivisible profinite
semigroups that are finitely cancelable, we present a proof of
Theorem~\ref{t:cancellation} which may be of independent interest. 

As a first step we have the following simple
statement.

\begin{Lemma}
  \label{l:cancellation-in-R-class}
  Let $x$ and $y$ be $\Cl D$-equivalent
  elements of a stable semigroup~$S$.
  If $yx=x$ then $y=y^2$.
\end{Lemma}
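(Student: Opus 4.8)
The statement to prove is Lemma~\ref{l:cancellation-in-R-class}: if $x$ and $y$ are $\Cl D$-equivalent in a stable semigroup $S$ and $yx=x$, then $y=y^2$. The plan is to exploit the equality $yx=x$ together with stability to push $y$ into the maximal subgroup of $S$ containing $x$ (or rather acting on $\Cl R_x$), and then conclude from the fact that in a group the only fixed point of left multiplication is the identity.

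First I would observe that $yx=x$ immediately gives $x\leq_{\Cl L}yx\leq_{\Cl L}y$ is not quite what we want; rather, from $yx=x$ we get $x\in S^Iyx\subseteq S^Iy$, so $x\leq_{\Cl L}y$. But $yx=x$ also means $x=yx\in yS^Ix$, and more usefully, since $x$ and $y$ are $\Cl D$-equivalent, say $x\mathrel{\Cl R}z\mathrel{\Cl L}y$ for some $z$, one can try to transport the relation $yx=x$ to a relation involving $y$ alone. The cleaner route: from $yx=x$ we get $y^nx=x$ for all $n\geq1$, hence $y^\omega x=x$ (using that $S$ is profinite, or at least that $y^\omega$ lies in the closed subsemigroup generated by $y$, and by continuity $y^\omega x = x$). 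So $e:=y^\omega$ is an idempotent with $ex=x$, which gives $x\leq_{\Cl L}e$, i.e. $x\in S^Ie=Se\cup\{x\}$; in fact $ex=x$ says $e$ is a left identity for $x$. Now $y\mathrel{\Cl D}x$ and I want to show $y\mathrel{\Cl H}e$, which forces $y$ into the group $H_e$; since $yx=x$ with $ex=x$ and $x\in H_e$-translate, left multiplication by $y$ fixes $x$ inside a group-like setting, giving $y=e=y^\omega$, whence $y=y^2$.

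To make the middle step rigorous: $yx = x$ and $x \leq_{\Cl L} y$ give $x\mathrel{\Cl L}y$ provided $x\leq_{\Cl J}$ and $\geq_{\Cl J}$ match up — and indeed $x\mathrel{\Cl D}y$ implies $x\mathrel{\Cl J}y$, so by stability ($\Cl J\cap\leq_{\Cl L}=\Cl L$) from $x\leq_{\Cl L}y$ and $x\mathrel{\Cl J}y$ we conclude $x\mathrel{\Cl L}y$. Dually I need an $\Cl R$-relation to nail down $\Cl H$. Since $x\mathrel{\Cl D}y$ and $x\mathrel{\Cl L}y$, we get $x\mathrel{\Cl R}y$: indeed if $x\mathrel{\Cl D}y$ then they are connected by an $\Cl R$-step then an $\Cl L$-step, but now that $x\mathrel{\Cl L}y$ directly, the $\Cl D$-relation collapses so that $x\mathrel{\Cl R}y$ as well (standard: $\Cl D=\Cl R\circ\Cl L=\Cl L\circ\Cl R$, and $x\mathrel{\Cl L}y$ with $x\mathrel{\Cl D}y$ forces $x\mathrel{\Cl R}y$). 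Hence $x\mathrel{\Cl H}y$. Then $H_x=H_y$ is a group; let its identity be the idempotent $f$. From $yx=x$ with $x,y\in H_x$ a group, left multiplication by $y$ is a bijection of $H_x$ onto itself fixing $x$, so $y=f$; in particular $y=y^2$.

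The main obstacle I anticipate is justifying $y^\omega x = x$, i.e. having the $\omega$-power available and continuous — but this is handled because $S$ is a profinite (hence compact) semigroup, so $y^\omega$ exists in the closed subsemigroup generated by $y$, and since $y^nx=x$ for all finite $n$ and multiplication is continuous, the limit $y^\omega x = x$ holds. Actually, upon reflection, even this detour through $y^\omega$ may be unnecessary once we have $x\mathrel{\Cl H}y$: the cleanest argument is purely Green's-relations-theoretic. So the real content is the chain of implications $yx=x \Rightarrow x\leq_{\Cl L}y$; $x\mathrel{\Cl D}y \Rightarrow x\mathrel{\Cl J}y$; stability $\Rightarrow x\mathrel{\Cl L}y$; combined with $x\mathrel{\Cl D}y$ $\Rightarrow x\mathrel{\Cl R}y$; hence $x\mathrel{\Cl H}y$, $H_x$ is a group, and $yx=x$ in a group forces $y$ to be the identity $e$ of $H_x$, so $y=e=e^2=y^2$. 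The subtle point requiring care is the deduction $x\mathrel{\Cl D}y \wedge x\mathrel{\Cl L}y\Rightarrow x\mathrel{\Cl R}y$, which I would verify directly using $\Cl D=\Cl L\circ\Cl R$ together with the fact that $\Cl L$ is a left congruence — or more simply, note $H_x$ meets $H_y$ via the $\Cl D$-relation only if the $\Cl D$-class has a single $\Cl L$-class intersecting a single $\Cl R$-class in the relevant place, which $x\mathrel{\Cl L}y$ arranges.
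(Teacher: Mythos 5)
Your argument breaks down in the Green's‐relations chain, in two separate places. First, from $yx=x$ you deduce $x\leq_{\Cl L}y$ via the inclusion $x\in S^Iyx\subseteq S^Iy$; but $S^Iyx\subseteq S^Iy$ is false in general — it would require $yx\in S^Iy$, i.e.\ $yx$ to be a left multiple of $y$, which $yx=x$ does not give. What $yx=x$ actually yields is $x=yx\in yS^I$, hence $x\leq_{\Cl R}y$; with $x\mathrel{\Cl D}y$ and stability this gives $x\mathrel{\Cl R}y$, not $x\mathrel{\Cl L}y$. Second, even granting an $\Cl L$- (or $\Cl R$-) equivalence, the claim that $x\mathrel{\Cl D}y$ together with $x\mathrel{\Cl L}y$ "collapses'' to give $x\mathrel{\Cl R}y$, hence $x\mathrel{\Cl H}y$, is simply not a valid deduction. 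Since $\Cl L\subseteq\Cl D$, your claim would assert that any two $\Cl L$-equivalent elements are $\Cl H$-equivalent, which is false already in full transformation monoids (two maps with the same image but different kernels are $\Cl L$- and $\Cl D$-equivalent but not $\Cl R$-equivalent). Nothing in the hypotheses forces $x$ and $y$ into a common $\Cl H$-class, and indeed $H_x$ need not be a group, so the group-theoretic finish you envisage is not available.

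The good news is that you never needed $\Cl H$. Once the first step is corrected, stability gives $y\mathrel{\Cl R}x$, so $y=xu$ for some $u\in S^I$, and then $y^2=y(xu)=(yx)u=xu=y$ — three lines, and exactly the paper's proof. Your instinct to abandon the $y^\omega$ digression was right, since the lemma assumes only stability, not compactness or profiniteness, so $y^\omega$ is not at your disposal; but the residual ``purely Green's-relations'' argument still carries the two errors above. Replacing $\leq_{\Cl L}$ by the correct $\leq_{\Cl R}$ and dropping the unjustified upgrade to $\Cl H$ recovers the paper's argument.
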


\begin{proof}
  Since $yx=x$ and $y$ are in the same $\Cl D$-class and $S$~is
  stable, we must have $y\mathrel{\Cl R}x$. Therefore, $y=xu$ for
  some $u\in S^{I}$, so $y^{2}=yxu=xu=y$.
\end{proof}

We proceed with an auxiliary lemma. 

\begin{Lemma}
  \label{l:type1-vs-left-group-action}
  Let $S$ be an equidivisible profinite semigroup
  which is finitely cancelable.
  Let $g,w\in S$ be such that $gw=w$. If $(x,y)\in\mathfrak{F}(w)$ is
  a step point  satisfying $(g^\omega,w)\leq(x,y)$, then $gx=x$.
\end{Lemma}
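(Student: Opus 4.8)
The plan is to argue by contradiction, assuming $gx \neq x$, and extract from this a stationary point sitting between $(g^\omega,w)$ and $(x,y)$, contradicting that $(x,y)$ is a step point with $(g^\omega,w) \le (x,y)$ and (by Theorem~\ref{t:cluster}, density of step points) the existence of a step point below $(x,y)$ but above $(g^\omega,w)$. First I would record the consequences of the hypotheses: since $gw=w$ we get $g^\omega w = w$, so $(g^\omega,w) \in \mathfrak F(w)$, and $g^\omega$ stabilizes $(g^\omega,w)$ on the left in the appropriate sense; more precisely $(g^\omega,w)$ is stabilized by $g^\omega$ (indeed $g^\omega \cdot g^\omega = g^\omega$ and $g^\omega w = w$), so $(g^\omega,w)$ is a stationary point by Proposition~\ref{p:shifting-allowed-implies-type-2}.

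Next, from $(g^\omega,w) \le (x,y)$ we obtain a transition $t \in S^I$ with $g^\omega t = x$ and $w = ty$. Then $g x = g g^\omega t = g^\omega t = x$ would follow immediately from $g g^\omega = g^\omega$ — wait, that is too quick; the point is precisely that $g g^\omega = g^\omega$ holds in any compact semigroup, so actually $gx = g(g^\omega t) = (g g^\omega) t = g^\omega t = x$. So in fact the statement is almost immediate: writing $x = g^\omega t$ via the transition $t$ from $(g^\omega, w)$ to $(x,y)$, and using $g g^\omega = g^\omega$, we directly get $gx = x$. The only subtlety is justifying the existence of such a $t$: this is exactly the definition of $(g^\omega,w) \le (x,y)$ in $\mathfrak F(w)$, namely that there is $t \in S^I$ with $g^\omega \cdot t = x$ and $t \cdot y = w$.

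So the proof is short: it suffices to verify $g^\omega t = x$ from the inequality and then apply $g\,g^\omega = g^\omega$. The one hypothesis that needs care is whether $g^\omega$ is meaningful, i.e. that $S$ being profinite (hence compact) guarantees $g^\omega$ exists and satisfies $g\, g^\omega = g^\omega\, g = g^\omega = (g^\omega)^2$; this is standard for compact semigroups and is recalled just before Example~\ref{eg:order-aomega-aperiodic}. The hypotheses that $S$ is equidivisible, finitely cancelable, and that $(x,y)$ is a step point are apparently not needed for this particular lemma in the form stated — but I would double-check whether the intended reading requires $(x,y) \in \mathfrak F(w)$ to be compared in $\mathfrak F(w)$ (so that the transition $t$ exists) rather than merely $\sim$-comparable; if it is the latter, one would additionally need that $(x,y)$ being a step point forces its $\sim$-class to be a singleton (Proposition~\ref{p:covers-in-F}\ref{item:covers-in-F:1}) together with finite cancelability, and possibly equidivisibility to get the common refinement. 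The main (minor) obstacle is therefore just pinning down the exact meaning of ``$(g^\omega,w) \le (x,y)$'' and invoking the right clustering/cancellation facts so that a genuine transition element $t$ is available; once that is in hand, the computation $gx = g(g^\omega t) = g^\omega t = x$ closes the argument.
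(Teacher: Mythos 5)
Your key computational step is wrong: you claim that $g\,g^\omega = g^\omega$ holds in any compact semigroup, and from this conclude $gx = g(g^\omega t) = g^\omega t = x$. But $g\,g^\omega = g^{\omega+1}$ is a (possibly nontrivial) element of the maximal subgroup containing $g^\omega$, and it equals $g^\omega$ only when that subgroup is trivial --- for instance in the aperiodic case, which the lemma does not assume. The paper's own notation, recalled just before Example~\ref{eg:order-aomega-aperiodic}, explicitly introduces $s^{\omega+1}$ and $s^{\omega-1}$ as elements distinct in general from $s^\omega$; a simple counterexample is a cyclic group of order $2$, where $g^\omega = 1 \neq g = g\,g^\omega$. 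You should have been alerted by your own closing observation that the equidivisibility, finite cancelability, and step-point hypotheses ``are apparently not needed'': they are in fact all used, and their apparent redundancy is a signal that the argument is too fast.

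What $(g^\omega,w)\le(x,y)$ actually gives (and what the paper's proof records first) is only $x = g^\omega x$, which is not yet $gx = x$. The paper's argument then proceeds as follows. By equidivisibility, the two factorizations $(x,y)$ and $(gx,y)$ of $w$ are comparable in $\mathfrak{F}(w)$. In the case $(gx,y)\le(x,y)$, one gets $t\in S^I$ with $gxt = x$ and $ty = y$; iterating and passing to the limit gives $x = g^\omega x t^\omega = x t^\omega$, so $t^\omega$ stabilizes $(x,y)$, and since $(x,y)$ is a step point Proposition~\ref{p:shifting-allowed-implies-type-2} forces $t = 1$, hence $gx = x$. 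In the remaining case $(x,y)\le(gx,y)$, one left-multiplies by $g^{\omega-1}$ to reduce to the first case with $g^{\omega-1}$ in place of $g$ (noting $g^{\omega-1}w = w$), obtaining $g^{\omega-1}x = x$ and then $gx = g^\omega x = x$. So the step-point hypothesis and equidivisibility are precisely what let one rule out a nontrivial stabilizing element, which is exactly the content your shortcut silently assumed away.
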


\begin{proof}
  Note that $(g^\omega,w)\leq(x,y)$ implies
  $x\leq_{\Cl R} g^\omega$, thus $x=g^\omega x$, a fact that we shall
  use along the proof.

  By equidivisibility,
  $(x,y)$ and $(gx,y)$ are comparable in $\mathfrak{F}(w)$.
  Suppose first that $(gx,y)\leq(x,y)$. Then there is
  $t\in S^I$ such that $gxt=x$ and $ty=y$.
  It suffices to show that $t=1$. The equality $gxt=x$
  entails $x=g^nxt^n$ for every $n\geq1$, thus $x=g^\omega
  xt^\omega=xt^\omega$.
  And since $ty=y$,
  we conclude that $t^\omega$ stabilizes $(x,y)$
  in $\Cl T(S)$.
  Because $(x,y)$
  is a step point,  Proposition~\ref{p:shifting-allowed-implies-type-2}
  implies that $t=1$.

  Suppose next that $(x,y)\leq(gx,y)$. Then
  $(g^{\omega-1}x,y)\leq(g^\omega x,y)=(x,y)$
  (cf.~Remark~ \ref{r:trivial-preservation})
  and, by the preceding case, we deduce that $g^{\omega-1}x=x$.
  With a left multiplication  by~$g$ on both sides of the latter equality, we
  obtain $x=g^\omega x=gx$, as desired.
\end{proof}

\begin{proof}[Proof of Theorem~\ref{t:cancellation}]
  It suffices to consider the case where $g$ is an element of
  $\lstab(u)$, as the other case is dual.

  We first establish the theorem when $g$ is a group element, that is, $g=g^{\omega+1}$.
  
  Let $R$ be the set of elements $(\alpha,\beta)$
  of $\mathfrak {F}(u)$ such that
  $\alpha\mathrel{\Cl R}g$.
  Note that $R$ is nonempty, indeed $(g,u)\in R$.   
  The set $R$ is closed, whence compact, and so by continuity of
  $\chi$,
  the image $\chi(R)$ is also compact, whence closed.
  Therefore, by completeness of $\mathfrak {L}(u)$
  (cf.~Proposition~\ref{p:complete-quasi-ordering}),
  the closed set $\chi(R)$ has a maximum $p=(x,y)/{\sim}$.
  Let us observe that, since
  two $\sim$-equivalent elements
  must have $\Cl R$-equivalent first components,
  the inclusion $\chi^{-1}(p)\subseteq R$ holds. 
  Moreover, since $g=g^{\omega+1}$,
  we have $(g^{\omega},u)\in R$, thus $(g^\omega,u)\leq (x,y)$
  by definition of $p$.
  
  Suppose first that $p$ is a step point.  
  Then $gx=x$
  by Lemma~\ref{l:type1-vs-left-group-action}. As $x\mathrel {\Cl R}g$,
  we conclude that $g=g^2$ by Lemma~\ref{l:cancellation-in-R-class}.

  If $p$ is stationary, then,
  by Theorem~\ref{t:cluster},
  there is a
  net $(x_i,y_i)_{i\in I}$
  of step points converging in $\mathfrak{L}(u)$ to~$p$
  and such that $p<(x_i,y_i)$ for all $i\in I$.
  By compactness, taking a subnet, we may assume that
  $(x_i,y_i)_{i\in I}$
  converges in $\mathfrak{F}(u)$ to
  some element $(x',y')$ of $p$.
    As $(g^\omega,u)< (x_i,y_i)$,
    we deduce from Lemma~\ref{l:type1-vs-left-group-action}
    that $gx_i=x_i$ for every~$i\in I$.
   Taking limits, it follows that $gx'=x'$.
   Since  $(x',y')\in p$, we have  $x\mathrel{\Cl R}x'$, whence
   $g\mathrel{\Cl R}x'$, and we again deduce that $g=g^2$ by
   Lemma~\ref{l:cancellation-in-R-class}.

   We have thus concluded the proof for the case where $g$ is a group
   element of $S$. Let us now suppose
   that  $g$ is regular within $\lstab(u)$. Then there is
    $h\in \lstab(u)$
   such that $g=ghg$ and $h=hgh$. Since $hu=gu$,
   by equidivisibility we know that $h$ and $g$
   are $\mathcal R$-comparable in~$S$. We actually have
   $h\mathrel{\Cl R}g$ in~$S$, by stability of~$S$.
   Let $z\in S^I$ be such that $h=gz$. Then $g=g^2zg$,
   and therefore $g\mathrel {\Cl R}g^2$.
   This shows that $g$ is a group element of $S$, and so,
   as we are in the case already proved, we get $g=g^2$.   
\end{proof}

A semigroupoid $S$ is \emph{trivial}
if, for any two vertices $p,q\in S$, there is at most one edge $p\to q$.

\begin{Cor}
  \label{c:trivial-groupoid-at-type-2-point}
      Let $S$ be an equidivisible profinite semigroup
      which is finitely cancelable.
      For every $p\in\mathfrak {L}(S)$,
      the ideal $\Cl K_p$ is a trivial category.
\end{Cor}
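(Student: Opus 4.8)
The plan is to show that $\Cl K_p$ has exactly one edge between any two of its vertices; the heart of the matter is that each local semigroup of $\Cl K_p$ is a single idempotent, and Theorem~\ref{t:cancellation} is precisely what delivers this. Throughout we use that $S$, being profinite, is compact, hence stable, and that, being equidivisible, it is unambiguous, so that Lemma~\ref{l:minimum-idempotents-labeling-loops-at-same-point} is available.

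Fix a vertex $(u,v)$ of $\Cl K_p$. By the construction of $K(\Cl T_p)$, the set of loops of $\Cl K_p$ at $(u,v)$ is (the image under the faithful functor $\Lambda$ of) the minimum ideal $I_{(u,v)}$ of $M_{(u,v)}$. Being the minimum ideal of a compact monoid, $I_{(u,v)}$ is completely simple, so every element $z$ of it is regular within $I_{(u,v)}$; since $I_{(u,v)}\subseteq M_{(u,v)}\subseteq\rstab(u)$ when $u\in S$ (the case $u=1$ being trivial, as then $M_{(u,v)}=\{1\}$), the element $z$ is regular within $\rstab(u)$, so $z=z^{2}$ by Theorem~\ref{t:cancellation}. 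Thus every element of $I_{(u,v)}$ is idempotent, and since these idempotents all lie in the single $\Cl J$-class $J_p$ and all stabilize $(u,v)$, Lemma~\ref{l:minimum-idempotents-labeling-loops-at-same-point} forces them to coincide. Hence $I_{(u,v)}=\{e_{(u,v)}\}$ for a single idempotent $e_{(u,v)}$, so $\Cl K_p$ has a unique loop at $(u,v)$.

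It remains to deal with parallel non-loop edges. First, $\Cl K_p$ is strongly connected: since $K(\Cl T_p)$ is an ideal of the strongly connected semigroupoid $\Cl T_p$ and $I_{(u,v)}\neq\emptyset$, composing a loop of $\Cl K_p$ at $(u,v)$ with an edge of $\Cl T_p$ out of $(u,v)$ produces an edge of $\Cl K_p$, so $\Cl K_p$ has an edge between every ordered pair of vertices. Now let $\sigma$ and $\sigma'$ be edges of $\Cl K_p$ from $(u,v)$ to $(x,y)$ and pick an edge $\tau$ of $\Cl K_p$ from $(x,y)$ to $(u,v)$. Then $\sigma\tau$ and $\sigma'\tau$ are loops of $\Cl K_p$ at $(u,v)$, hence equal $e_{(u,v)}$, and likewise $\tau\sigma=\tau\sigma'=e_{(x,y)}$. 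Writing $s$ for the label of $\sigma$, the equality $\sigma\tau=e_{(u,v)}$ gives $e_{(u,v)}\leq_{\Cl R}s$; since $e_{(u,v)}$ and $s$ both lie in $J_p$, stability yields $e_{(u,v)}\mathrel{\Cl R}s$, so $e_{(u,v)}s=s$, that is, $e_{(u,v)}\sigma=\sigma$; symmetrically $e_{(u,v)}\sigma'=\sigma'$. Finally, by associativity in $\Cl K_p$,
\[
\sigma=e_{(u,v)}\sigma=(\sigma'\tau)\sigma=\sigma'(\tau\sigma)=\sigma' e_{(x,y)}=\sigma'(\tau\sigma')=(\sigma'\tau)\sigma'=e_{(u,v)}\sigma'=\sigma',
\]
as required.

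The topological inputs (compactness and hence stability of $S$, strong connectedness of the minimum ideal) and the manipulations with Green's relations are routine. The step that requires genuine care is that the unique loop $e_{(u,v)}$ of $\Cl K_p$ at $(u,v)$ need \emph{not} be an identity morphism of $\Cl T_p$, so the equality $e_{(u,v)}\sigma=\sigma$ is not automatic; it is obtained from $e_{(u,v)}\mathrel{\Cl R}s$, which in turn relies on stability of $S$ and on both $e_{(u,v)}$ and $s$ lying in $J_p$. Establishing $I_{(u,v)}=\{e_{(u,v)}\}$, where the full weight of Theorem~\ref{t:cancellation} and Lemma~\ref{l:minimum-idempotents-labeling-loops-at-same-point} is brought to bear, is the crux; everything else is then a short diagram chase in $\Cl K_p$.
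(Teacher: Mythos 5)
Your proof is correct and rests on exactly the same two pillars as the paper's: Theorem~\ref{t:cancellation} forces regular stabilizers to be idempotent, and Lemma~\ref{l:minimum-idempotents-labeling-loops-at-same-point} forces two $\Cl J$-equivalent idempotent stabilizers of the same vertex to coincide. The only difference is one of bookkeeping: the paper fixes two parallel edges $\sigma,\sigma'$ from $(u,v)$ to $(x,y)$ and a reverse edge $\tau$, shows directly that $\Lambda(\sigma\tau)=\Lambda(\sigma'\tau)$ and $\Lambda(\tau\sigma)=\Lambda(\tau\sigma')$ are idempotents of $J_p$, and then concludes $\Lambda(\sigma)=\Lambda(\sigma')$ by a Green's-relations computation in $J_p$; you instead first isolate, as a standalone intermediate fact, that each local semigroup $I_{(u,v)}$ of $\Cl K_p$ is a single idempotent (using that the minimum ideal of the compact stabilizer monoid is completely simple, hence every element is regular and Theorem~\ref{t:cancellation} applies), and then perform the same cancellation as a diagram chase inside $\Cl K_p$. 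The two final computations are essentially the same string of substitutions, just carried out in $S^I$ versus in $\Cl K_p$; the step $e_{(u,v)}\sigma=\sigma$ that you justify via stability and $e_{(u,v)}\mathrel{\Cl R}\Lambda(\sigma)$ is the same use of stability in $J_p$ that the paper makes implicitly when it writes ``As $s,t,z$ belong to $J_p$, this shows that $s=t$.'' So the proof is sound and takes the paper's approach, merely reorganized to first exhibit the singleton local monoids.
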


\begin{proof}
  Let $(u,v)\xrightarrow s(x,y)$ and
  $(u,v)\xrightarrow t(x,y)$ be edges of
  $\mathcal K_p$.
  The proof amounts to showing that $s=t$.
  There is an edge $(x,y)\xrightarrow z(u,v)$ in $\mathcal K_p$.
  In particular,
  $sz$ and $(sz)^2$
  label loops in $\mathcal K_p$ at vertex $(u,v)$.
  Hence, $sz$ and $(sz)^2$ belong to~$J_p$,
  and so $sz$ is a group element of $S$ stabilizing $u$ on the right.
  We then deduce from Theorem~\ref{t:cancellation} that $sz$
  is an idempotent of $J_p$, denoted by $e$, which
  stabilizes $(u,v)$.
  Similarly, $tz$ is an idempotent of $J_p$
  stabilizing $(u,v)$. It follows
  from Lemma~\ref{l:minimum-idempotents-labeling-loops-at-same-point}
  that $sz=tz=e$.
  Symmetrically, we have $zs=zt=f$, with $f$ an idempotent of $J_p$.
  As $s,t,z$ belong to $J_p$, this shows that $s=t$.
\end{proof}

For a compact semigroup $S$
and $p\in\mathfrak L(S)$, let $U_p$
be the union of the maximal subgroups of $J_p$.
The next proposition, whose proof relies on
Theorem~\ref{t:cancellation}, should be compared with
Proposition~\ref{p:idempotents-in-bijection-with-sim-class}.
We show
that $U_p$ parameterizes in a natural way the class~$p$,
without assuming aperiodicity, but assuming equidivisibility and
finite cancelability.

\begin{Prop}
  \label{p:nontrivial-strongly-connected-components}
  Let $S$ be an equidivisible profinite semigroup which is
  finitely cancelable.
  Let
  $p\in\mathfrak{L}(S)$ be a stationary
  point.
  Then we have a bijection $\nu_p\colon U_p\to p$, defined as follows:
  \begin{enumerate}
  \item   for each idempotent $e\in J_p$, fix an element 
  $(u_e,v_e)$ of $p$ stabilized by~$e$;
  \item to each $g$ in the maximal subgroup $H_e$ of $S$ containing
    $e$, associate the element $\nu_p(g)=(u_eg,g^{\omega-1}v_e)$ of $p$.
  \end{enumerate} 
\end{Prop}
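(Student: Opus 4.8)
The plan is to construct $\nu_p$ explicitly as in the statement and then verify it is well-defined, injective, and surjective. First I would check well-definedness: fix an idempotent $e\in J_p$ and a chosen $(u_e,v_e)\in p$ stabilized by $e$; for $g\in H_e$ we have $g^{\omega-1}g = g g^{\omega-1} = e$, so $(u_eg)(g^{\omega-1}v_e) = u_e e v_e = u_e v_e$ (using $u_ee=u_e$, $ev_e=v_e$), showing $\nu_p(g)\in\mathfrak F(S)$ and in fact $(u_e,v_e)\xrightarrow{g}(u_eg, g^{\omega-1}v_e)\xrightarrow{g^{\omega-1}}(u_e,v_e)$ are mutually transitions, so $\nu_p(g)\sim(u_e,v_e)$, i.e.\ $\nu_p(g)\in p$. (One also notes $g\in J_p$ since $g\mathrel{\Cl H}e$ and $J_p$ is regular, so these are edges of $\Cl K_p$ by Corollary~\ref{c:minimum-idempotents-labeling-loops-at-same-point}.)

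For injectivity, suppose $\nu_p(g)=\nu_p(g')$ with $g\in H_e$, $g'\in H_f$ for idempotents $e,f\in J_p$. Then $u_eg = u_fg'$ and $g^{\omega-1}v_e = (g')^{\omega-1}v_f$. Multiplying the first equality on the right by $g^{\omega-1}$ gives $u_e e = u_f g'g^{\omega-1}$; since $u_ee=u_e$ and $u_e$ is $\Cl R$-above $g$, one extracts that $e$ and $f$ are $\Cl R$-comparable in $S$, hence (by stability, as $e\mathrel{\Cl J}f$) $e\mathrel{\Cl R}f$; dually from the second coordinate $e\mathrel{\Cl L}f$, so $e=f$. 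Then $u_eg=u_eg'$ with $g,g'\in H_e$; writing $g' = gz$ for suitable $z$ and using that $g^{\omega-1}u_e\cdots$ — more directly, $g^{\omega-1}(u_eg) = g^{\omega-1}(u_eg')$, and since $g^{\omega-1}$ is a prefix of the relevant words one cancels to get, via Theorem~\ref{t:cancellation} applied to the stabilizer monoid and the fact that left/right multiplication by group elements of $J_p$ is injective on the coordinates, that $g=g'$. The clean way: $(u_e,v_e)\xrightarrow{g}\nu_p(g)$ and $(u_e,v_e)\xrightarrow{g'}\nu_p(g')$ are coterminal edges of $\Cl K_p$, which is a trivial category by Corollary~\ref{c:trivial-groupoid-at-type-2-point}, hence $g=g'$.

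For surjectivity, let $(u,v)\in p$. By Proposition~\ref{p:shifting-allowed-implies-type-2}, since $p$ is stationary, $(u,v)$ is stabilized by some $z\in S$, and the minimum ideal of the stabilizer monoid $M_{(u,v)}$ lands in $J_p$; so $(u,v)$ is stabilized by some idempotent $e'\in J_p$. By Corollary~\ref{c:idempotents-vs-sim-classes} and Proposition~\ref{p:Jp-vs-labels-in-strongly-connected-component}, the map $\mu_s\colon p_{e}\to p_{e'}$ (for $s\in J_p$ with $e\mathrel{\Cl R}s\mathrel{\Cl L}e'$) is a bijection, so there is $(u',v')\in p_e$ with $\mu_s(u',v') = (u,v)$; but by Lemma~\ref{l:disjointness-blocks-aperiodic-case}\ldots{} — wait, we are not assuming aperiodicity here. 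Instead: by Proposition~\ref{p:Jp-vs-labels-in-strongly-connected-component}, $s$ labels an edge $(u_e, v_e)\xrightarrow{s}(x,y)$ with $(x,y)\in p_{e'}$, this edge lies in $\Cl K_p$. The target $(x,y)$ and $(u,v)$ are both in $p_{e'}$; I claim $(x,y)=(u,v)$ need not hold, but the edge $(u_e,v_e)\xrightarrow{s}(x,y)$ composed with a $\Cl K_p$-edge $(x,y)\to(u,v)$ (which exists and has label in $J_p$ since $(x,y)\sim(u,v)$, using Corollary~\ref{c:minimum-idempotents-labeling-loops-at-same-point}) yields an edge $(u_e,v_e)\xrightarrow{s'}(u,v)$ of $\Cl K_p$ with $s'\in J_p$ and $e\mathrel{\Cl R}s'$ (since $s'=es'$). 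Picking $g\in H_e$ with $g\mathrel{\Cl R}s'$ and $s' = g g'$ appropriately, or more simply taking $g$ to be the unique group element of $H_e$ with $gz = s'$ where $z$ realizes $s'\mathrel{\Cl L}?$\ldots{} the upshot is that $s' = g$ can be arranged to be a group element $g\in H_e$ with $(u_e,v_e)\xrightarrow{g}(u,v)$, forcing $u = u_eg$ and then $v = g^{\omega-1}v_e$ because $v = g^{\omega-1}(gv) $ and $gv$\ldots; cleaner: from $u_eg = u$ and the fact that $(u_e,v_e)\xrightarrow{g^{\omega-1}}(u_e g^{\omega-1}\cdot g, \ldots)$, apply Proposition~\ref{p:Jp-vs-labels-in-strongly-connected-component}'s uniqueness ($\mu_g$ is a bijection $p_e\to p_e$ via $e=g^{\omega}$), concluding $(u,v)=\nu_p(g)$. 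The main obstacle I expect is surjectivity: one must show every $(u,v)\in p$ is reached \emph{from the chosen base point $(u_e,v_e)$ by a group-element transition}, which requires combining the $\Cl K_p$-structure (triviality, Corollary~\ref{c:trivial-groupoid-at-type-2-point}), the bijections $\mu_s$ of Proposition~\ref{p:Jp-vs-labels-in-strongly-connected-component}, and Theorem~\ref{t:cancellation} to upgrade an arbitrary $J_p$-labeled transition into the normalized form $g\in H_e$; getting the second coordinate to come out as exactly $g^{\omega-1}v_e$ rather than merely $\sim$-equivalent is the delicate bookkeeping point.
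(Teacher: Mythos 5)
Your verification of well-definedness is fine and matches the paper. For injectivity, the ``clean way'' you give at the end --- coterminal edges labeled by $g,g'$ in the trivial category~$\Cl K_p$, so $g=g'$ --- is a valid route and is equivalent in content to the paper's direct use of Theorem~\ref{t:cancellation}. However, that argument only applies once you know the two group elements lie in the \emph{same} maximal subgroup (so that the edges really are coterminal, both starting at the same fixed pair), and the reduction to that case in your first paragraph is hand-waved: from $u_e=u_f(g'g^{\omega-1})$ you claim to ``extract that $e$ and $f$ are $\Cl R$-comparable,'' but this does not obviously follow. The paper's route is cleaner and you should adopt it: observe that $g^\omega$ is an idempotent of $J_p$ stabilizing $\nu_p(g)$ (check: $(u_{g^\omega}g)g^\omega=u_{g^\omega}g$ and $g^\omega(g^{\omega-1}v_{g^\omega})=g^{\omega-1}v_{g^\omega}$), so if $\nu_p(g)=\nu_p(h)$ then both $g^\omega$ and $h^\omega$ are idempotents of $J_p$ stabilizing the same point, hence equal by Lemma~\ref{l:minimum-idempotents-labeling-loops-at-same-point}.

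The real gap is surjectivity, and you flag it yourself. Your detour through Proposition~\ref{p:Jp-vs-labels-in-strongly-connected-component}, the maps $\mu_s$, and then the abandoned appeal to Lemma~\ref{l:disjointness-blocks-aperiodic-case} (which you correctly note requires aperiodicity, not available here) never produces a group element $g\in H_e$ with $\nu_p(g)=(u,v)$, and the ``delicate bookkeeping'' of the second coordinate is never actually done. The paper's argument is short and you should reconstruct it: given $(u,v)\in p$, let $e\in J_p$ be an idempotent stabilizing $(u,v)$ (this is the same $e$ that indexes the chosen base point $(u_e,v_e)$). Since $p$ is strongly connected, there is an edge $(u_e,v_e)\xrightarrow t(u,v)$; because $e$ stabilizes both endpoints, $ete$ also labels such an edge. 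By Lemma~\ref{l:a-trivial-characterization-of-Jp}, $ete$ is $\Cl J$-above $J_p$, and combined with $ete\leq_{\Cl H}e\in J_p$ and stability this forces $ete\in H_e$. Now compute $\nu_p(ete)=(u_e(ete),(ete)^{\omega-1}v_e)$: the first coordinate is $u$ by the edge relation, and for the second, the same edge relation gives $v_e=(ete)v$, so $(ete)^{\omega-1}v_e=(ete)^\omega v=ev=v$. This is exactly the bookkeeping you were missing: the second coordinate is forced by the edge identity, not by additional structure.
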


\begin{proof}
  We first verify that the function $\nu_p$ is well defined.
  Corollary~\ref{c:idempotents-vs-sim-classes}
  guarantees that every idempotent $e\in J_p$ stabilizes
  some $(u_e,v_e)\in p$.
  If $g\in H_e$, then
  $(u_eg,g^{\omega-1}v_e)
  \xrightarrow {g^{\omega-1}}(u_e,v_e)$
  and
    $(u_e,v_e)
  \xrightarrow {g}(u_eg,g^{\omega-1}v_e)$
  are edges of $\Cl T(S)$, whence
  $(u_eg,g^{\omega-1}v_e)\sim (u_e,v_e)$.
  It follows that $\nu_p$ is indeed well defined.

  We next show that $\nu_p$~is injective.
  If $g\in U_p$ then $g^\omega$ is an idempotent of $J_p$ stabilizing
  $\nu_p(g)=(u_{g^\omega}g,g^{\omega-1}v_{g^\omega})$.
  Hence, by
  Lemma~\ref{l:minimum-idempotents-labeling-loops-at-same-point}, if
  $\nu_p(g)=\nu_p(h)$, then $g^\omega=h^\omega$,
  thus $g,h$
  belong to the same maximal subgroup
  and $u_{g^\omega}g=u_{g^\omega}h$.
  The latter is equivalent to
  $u_{g^\omega}=u_{g^\omega}hg^{\omega-1}$,
  and so, by
  Theorem~\ref{t:cancellation}, we have $hg^{\omega-1}=g^\omega$.
  This means that $g=h$,
  thereby showing that $\nu_p$ is injective.

  It remains to show that $\nu_p$ is surjective. Let $(u,v)$ be an
  arbitrary element of~$p$ and let $e$ be an idempotent in $J_p$
  stabilizing~$(u,v)$. Since $p$ is the set of vertices of a
  strongly connected component of~$\Cl T(S)$, there is some edge
  $(u_e,v_e)\xrightarrow t(u,v)$, and whence also an edge
  $(u_e,v_e)\xrightarrow{ete}(u,v)$.
  By Lemma~\ref{l:a-trivial-characterization-of-Jp}, it
  follows that $ete\in H_e$. Hence, we must have $\nu_p(ete)=(u,v)$.
\end{proof}

Note that, under the conditions of
Proposition~\ref{p:nontrivial-strongly-connected-components},
the set $p_e$ of elements of $p$ stabilized by $e$ is precisely
$\nu_p(H_e)$.

\section{A characterization of the $\Cl J$-class
  associated to a $\sim$-class}

Let $S$ be an equidivisible compact semigroup, $w\in S$ and
$p\in\mathfrak{L}(w)$.
We define a subset $L_p$ of $S^I$, depending only on $p$,
as follows.
Take an arbitrary strictly increasing sequence
$p_1<p_2<\cdots$ 
converging to $p$
in~$\mathfrak{L}(w)$ --- if such a sequence does not exist, for instance,
if $p$ has a predecessor, then take $L_p=\emptyset$.
For each $m\geq1$ and $n> m$,
let $t_{m,n}$ be a transition from $p_m$ to~$p_n$.
For fixed $m\geq 1$, let $t_m$ be an accumulation point
of the sequence $(t_{m,n})_{n> m}$.
If $t$ is an accumulation point of the sequence $(t_m)_m$
then $t\in L_p$, and every element of $L_p$ is obtained by this
process, the sequence
$p_1<p_2<\cdots$ being allowed to change.
Dually, taking strictly decreasing sequences converging to~$p$, we
define a subset $R_p$ of $S$ associated with~$p$.

\begin{Thm}
  \label{t:J-class-type-2}
  Let $S$ be an equidivisible compact semigroup.
  For every $p\in\mathfrak{L}(S)$, the
  sets $L_p$ and $R_p$ are contained in~$J_p$.
\end{Thm}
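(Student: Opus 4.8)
The plan is to fix an arbitrary $t\in L_p$ and to show that it lies simultaneously $\Cl J$-above and $\Cl J$-below $J_p$; since a compact semigroup is stable and $J_p$ is a full $\Cl J$-class of $S^I$, this forces $t\in J_p$, and $R_p\subseteq J_p$ then follows by the evident left--right dual argument. If $p$ admits no strictly increasing sequence converging to it, then $L_p=\emptyset$ and there is nothing to prove; otherwise fix such a sequence $p_1<p_2<\cdots$ converging to $p$ in $\mathfrak L(w)$, transitions $t_{m,n}$ (for $m<n$) realized by edges $\sigma_{m,n}$ of $\Cl T(w)$ from a vertex of $p_m$ to a vertex of $p_n$, an accumulation point $t_m$ of $(t_{m,n})_{n>m}$ for each $m$, and an accumulation point $t$ of $(t_m)_m$.

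First I would carry out a routine compactness bookkeeping to locate $t$ and the $t_m$ inside $\Cl T(w)$. Here one uses that the edge set of $\Cl T(w)$ is a closed, hence compact, subset of $(S^I)^5$; that each $\sim$-class is closed in $\mathfrak F(w)$ (since $\leq$, and hence $\sim$, is a closed relation, by Lemma~\ref{l:if-S-is-compact-then-the-quasi-order-is-closed}); and that $\chi$ is continuous, by Proposition~\ref{p:topology-of-Fw}. Passing to a subnet of $(n)$ along which $t_{m,n}\to t_m$ and $\sigma_{m,n}$ converges in the edge set of $\Cl T(w)$, its limit $f_m$ is an edge of label $t_m$ whose source lies in the closed set $p_m$, and whose target, being a limit of vertices whose $\chi$-images run over $p_n\to p$, lies in $\chi^{-1}(p)=p$. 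Thus $f_m$ is an edge $(x,y)\xrightarrow{t_m}(x',y')$ of $\Cl T(w)$ with $(x,y)\in p_m$ and $(x',y')\in p$. The same reasoning applied to a convergent subnet of $(f_m)$ shows that $t$ labels an edge of $\Cl T(w)$ with both endpoints in $p$; hence $t$ is a transition from $p$ to $p$, and so, by Lemma~\ref{l:a-trivial-characterization-of-Jp}, every element of $J_p$ is a factor of $t$, i.e.\ $t$ lies $\Cl J$-above $J_p$.

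The heart of the argument is the reverse inequality. Since the minimum ideal $I_{(x',y')}$ of the monoid of stabilizers of $(x',y')$ is contained in $J_p$, choose $j\in I_{(x',y')}$; then $j$ labels a loop of $\Cl T(w)$ at the vertex $(x',y')$, and $j\in J_p$. This loop and the edge $f_m$ have the same target $(x',y')$, while their sources satisfy $(x,y)<(x',y')$ because $p_m<p$; hence Lemma~\ref{l:suffix-transition} applies and yields that $j$ is a suffix of $t_m$. Consequently $t_m\in S^Ij$, so $t_m\leq_{\Cl J}j\in J_p$. Fixing once and for all some $j_0\in J_p$, all the $t_m$ then lie in $S^Ij_0S^I$, which is compact---being the image of $S^I\times\{j_0\}\times S^I$ under multiplication---hence closed; therefore the accumulation point $t$ lies in $S^Ij_0S^I$ too, i.e.\ $t\leq_{\Cl J}J_p$. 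Combined with the previous paragraph, this gives $t\mathrel{\Cl J}j_0$, so $t$ belongs to the $\Cl J$-class $J_p$.

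I expect the main obstacle to be precisely this last step: seeing that the ``partial'' transition $t_m$, which climbs from $p_m$ up to $p$, is forced to carry as a suffix the label of a loop of $\Cl T(w)$ at its upper endpoint lying in the minimum ideal of the local monoid there, and that this is what drags $t_m$, and hence its accumulation point $t$, down to the level of $J_p$. By contrast, the $\Cl J$-upper bound and the various passages to the limit (existence of the limit edges $f_m$ and of the limit edge realizing $t$, closedness of $\sim$-classes and of principal two-sided ideals of $S^I$) are routine, but they must be carried out with subnets rather than subsequences, since $S$ is not assumed metrizable. Finally, $R_p\subseteq J_p$ follows by running the whole argument for strictly decreasing sequences converging to $p$, with ``same target / suffix'' replaced by ``same source / prefix'' and Lemma~\ref{l:suffix-transition} replaced by its left--right dual.
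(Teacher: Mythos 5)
Your proof is correct and follows the same route as the paper: a subnet argument, using continuity of $\chi$ and closedness of $\leq$ in $\mathfrak F(w)\times\mathfrak F(w)$, shows that each $t_m$ labels an edge from a vertex of $p_m$ to a vertex of $p$ and that $t$ labels a loop at a vertex of $p$, giving $t\geq_{\Cl J}J_p$ by Lemma~\ref{l:a-trivial-characterization-of-Jp}; then Lemma~\ref{l:suffix-transition} (which you apply in the direction its proof actually establishes --- the label of the edge with the higher source is a suffix of the label of the edge with the lower source; the statement in the text has the two labels transposed) forces each $t_m$ to lie $\Cl J$-below $J_p$, and passing to the limit yields the reverse inequality $t\leq_{\Cl J}J_p$. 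The one place you deviate is how that limit is taken: the paper fixes a single element $z$ of $J_p$ and asserts it is a suffix of every $t_m$ and hence of $t$, but that $z$ a priori depends on $m$ (it labels a loop at the $m$-dependent upper endpoint), so the paper's step really requires extracting a simultaneous accumulation point of those suffixes; your version, via closedness of the principal two-sided ideal $S^Ij_0S^I$, sidesteps this neatly and is slightly more rigorous. One cosmetic remark: stability is not needed at the very end --- $t\leq_{\Cl J}J_p$ together with $t\geq_{\Cl J}J_p$ already gives $t\in J_p$, since $J_p$ is a full $\Cl J$-class.
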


\begin{proof}
  Let $w\in S$ be such that $p\in\mathfrak L(w)$, and
  suppose that $(p_n)_n$ is a strictly increasing sequence of
  elements of $\mathfrak{L}(w)$ converging to~$p$.
  For each $m\geq 1$ and $n>m$, let $t_{m,n}$ be a transition from $p_m$
  to~$p_n$, and let $t_m$ be an accumulation point of
  the sequence $(t_{m,n})_n$. Finally, let $t$ be an accumulation point of the
  sequence $(t_m)_m$. The proof that $L_p$ is contained in $J_p$ is concluded once we show that $t\in J_p$.

  We first claim that, for every fixed $m\geq1$, $t_m$~is a transition
  from~$p_m$ to~$p$. For each $n>m$, choose $(u_n,v_n)\in p_m$ and
  $(x_n,y_n)\in p_n$ such that $u_nt_{m,n}=x_n$ and $v_n=t_{m,n}y_n$.
  By compactness,
  the sequence $(t_{m,n},u_n,v_n,y_n)_n$
  has some subnet
  $(t_{m,n_k},u_{n_k},v_{n_k},y_{n_k})_{n_k}$
  such that $t_m=\lim_{k} t_{m,n_k}$ and the nets $(u_{n_k})_k$,
  $(v_{n_k})_k$, and $(y_{n_k})_k$ converge, respectively to some $u$, $v$,
  and~$y$.
  Since
  $(u_{n_k},t_{m,n_k}y_{n_k})=(u_{n_k},v_{n_k})\in p_m$
  and
  $(u_{n_k}t_{m,n_k},y_{n_k})=(x_{n_k},y_{n_k})\in p_{n_k}$, it follows
  from Proposition~\ref{p:topology-of-Fw}
  that $(u,v)=(u,t_my)\in p_m$ and $(ut_m,y)\in p$, which proves the claim.

  Since $(p_m)_m$ converges to~$p$ and $t$ is an accumulation point of the
  sequence $(t_m)_m$, it follows
  again from Proposition~\ref{p:topology-of-Fw}
  that $t$~is a transition from $p$
  to~$p$.

  Choose $z\in J_p$
  such that there is a loop $(ut_m,y)\xrightarrow z(ut_m,y)$.
  Since $(u,v)<(ut_m,y)$ and $(u,v)\xrightarrow{t_m}(ut_m,y)$
  is an edge of $\Cl T(S)$, Lemma~\ref{l:suffix-transition}
  yields that $z$ is a suffix of $t_m$. This holds
  for every~$m\geq1$, whence $z$ is a suffix of $t$. As we
  have already shown that $t$ is a transition from $p$ to~$p$, we
  deduce that $t\in J_p$ by Lemma~\ref{l:a-trivial-characterization-of-Jp}.
  Hence we have $L_p\subseteq J_p$ and
  dually $R_p\subseteq J_p$.
\end{proof}

\begin{Cor}
  \label{c:convergence-of-transitions}
    Let $S$ be an equidivisible profinite semigroup which is
  finitely cancelable.
  Let $w\in S$. Suppose that $(u_n,v_n)_n$ is a strictly
  increasing sequence in~$\mathfrak{F}(w)$ converging to a stationary
  point $(u,v)$. For each pair $m<n$, let $t_{m,n}\in S$ be a
  transition $(u_m,v_m)\to(u_n,v_n)$.
  Then, for each $m$,
  the sequence $(t_{m,n})_n$ converges to
  the unique transition from
  $(u_m,v_m)$ to $(u,v)$.
  Moreover,
  the sequence $(t_m)_m$ converges to the label of the only loop at the
  vertex~$(u,v)$ in the trivial category $\Cl K_p$, where $p=(u,v)/{\sim}$.
\end{Cor}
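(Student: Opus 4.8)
The statement has two parts, and I would treat the second (convergence of $(t_m)_m$) as essentially a consequence of the first together with Corollary~\ref{c:trivial-groupoid-at-type-2-point}, so the real work is to show that for each fixed $m$, the sequence $(t_{m,n})_n$ converges, and that its limit is the unique transition $(u_m,v_m)\to(u,v)$. Uniqueness of that transition is already available: $(u_m,v_m)<(u,v)$ since the sequence is strictly increasing, so Proposition~\ref{p:transitions-in-F} guarantees that any two coterminal edges between these (distinct) strongly connected components coincide. So it remains to prove convergence to \emph{that} transition.

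First I would fix $m$ and pass, as in the definition of $L_p$ and in the proof of Theorem~\ref{t:J-class-type-2}, to a convergent subnet of $(t_{m,n})_n$; call its limit $t_m'$. The argument in the proof of Theorem~\ref{t:J-class-type-2} (choosing $(u_n,v_n)\in p_m$, $(x_n,y_n)\in p_n$ realizing the transitions, taking limits, and invoking Proposition~\ref{p:topology-of-Fw} for closedness of $\sim$-classes under limits) shows that $t_m'$ is a transition from $(u_m,v_m)$ to $(u,v)$ — here I use that $(u_n,v_n)$ may be taken to be $(u_m,v_m)$ itself since $p_m$ is a singleton when $(u_m,v_m)$ is a step point, which it is (step points have singleton $\sim$-classes, by Proposition~\ref{p:covers-in-F}\ref{item:covers-in-F:1}). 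By the uniqueness just noted, $t_m'$ is forced to equal the unique transition $(u_m,v_m)\to(u,v)$. Since every convergent subnet of $(t_{m,n})_n$ has the same limit and $S$ is compact, the full sequence $(t_{m,n})_n$ converges to that transition. This is the first assertion.

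For the second assertion, write $q_m$ for the unique transition $(u_m,v_m)\to(u,v)$, so $t_{m,n}\to q_m$ as $n\to\infty$, and $t_m$ (an accumulation point of $(t_{m,n})_n$) therefore equals $q_m$. Now I would observe that the $q_m$ satisfy a compatibility relation: composing the transition $(u_m,v_m)\to(u,v)$ out of the transition $(u_m,v_m)\to(u_{m+1},v_{m+1})$ followed by $(u_{m+1},v_{m+1})\to(u,v)$ and again invoking uniqueness (Proposition~\ref{p:transitions-in-F}) on the pair of coterminal edges $(u_m,v_m)\to(u,v)$ gives $q_m = t_{m,m+1}\,q_{m+1}$, so that $q_{m+1}$ is a suffix of $q_m$ in the sense used in Lemma~\ref{l:suffix-transition}; alternatively one argues directly from Lemma~\ref{l:suffix-transition} that any accumulation point of $(q_m)_m$ is a suffix of each $q_m$. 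Let $t$ be any accumulation point of $(q_m)_m$ (equivalently of $(t_m)_m$). As in the last paragraph of the proof of Theorem~\ref{t:J-class-type-2}, $t$ is a transition from $p$ to $p$, and by Theorem~\ref{t:J-class-type-2} it lies in $J_p$; moreover it labels a loop at $(u,v)$ in $\Cl T(S)$. By Corollary~\ref{c:minimum-idempotents-labeling-loops-at-same-point} (applicable since a profinite, finitely cancelable equidivisible semigroup is unambiguous) every loop at $(u,v)$ labeled by an element of $J_p$ lies in $\Cl K_p$, and Corollary~\ref{c:trivial-groupoid-at-type-2-point} says $\Cl K_p$ is a trivial category, so there is exactly one loop at $(u,v)$ in $\Cl K_p$; hence $t$ is its label, the same for every accumulation point. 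By compactness of $S$ this forces $(t_m)_m$ itself to converge to that label.

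**Main obstacle.** The delicate point is making sure, in the first part, that the limit of a subnet of $(t_{m,n})_n$ really lands as a transition $(u_m,v_m)\to(u,v)$ — i.e.\ that in the limit the source stays in $p_m$ (not merely in some weakly larger class) and the target lands in $p$ rather than overshooting past it. This is handled by the closedness of $\sim$-classes under limits coming from Proposition~\ref{p:topology-of-Fw} (continuity of $\chi$, so each $\chi^{-1}(\text{point})$ is closed), plus the inequality $(u_n's_n,\dots)\le(u_n,v_n)<(u,v)$ controlling the target from above via Lemma~\ref{l:if-S-is-compact-then-the-quasi-order-is-closed}; I would spell this out carefully since it is exactly where the hypothesis that $(u,v)$ is \emph{stationary} (so that the $p_m$ strictly increase to it and $\chi(q)=\sup$ but not $\max$) is implicitly used. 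Everything else is a bookkeeping assembly of results already in the paper.
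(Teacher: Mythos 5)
Your proposal is correct and follows essentially the same route as the paper's proof: accumulation points of $(t_{m,n})_n$ are transitions from $(u_m,v_m)$ to $(u,v)$, unique by Proposition~\ref{p:transitions-in-F}, and accumulation points of $(t_m)_m$ lie in $L_p\subseteq J_p$ by Theorem~\ref{t:J-class-type-2} and hence label the unique loop of $\Cl K_p$ at $(u,v)$ via Corollaries~\ref{c:minimum-idempotents-labeling-loops-at-same-point} and~\ref{c:trivial-groupoid-at-type-2-point}. One small remark: your claim that $(u_m,v_m)$ is a step point is neither part of the hypotheses nor needed, since the $t_{m,n}$ are transitions between the specific pairs $(u_m,v_m)$ and $(u_n,v_n)$, so passing to the limit of a convergent subnet directly yields $u_mt'_m=u$ and $v_m=t'_mv$ without any singleton-class argument.
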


\begin{proof}
  Every accumulation point of the sequence
  $(t_{m,n})_n$ labels an edge from
  $(u_m,v_m)$ to $(u,v)$. But there is only one such edge, by
  Proposition~\ref{p:transitions-in-F}. Since
  we are dealing with a compact space, this implies
  that $(t_{m,n})_n$ converges to some element $t_m$.
  To conclude the proof, it suffices to
  show that every accumulation point $t$ of the
  sequence $(t_m)_m$ labels the same loop at vertex~$(u,v)$.
  By the definition of $L_p$, we have $t\in L_p$,
  whence $t\in J_p$ by Theorem~\ref{t:J-class-type-2}.
  Moreover, in $\Cl T(w)$ the sequence
  of edges
  $(u_m,v_m)\xrightarrow{t_m}(u,v)$
  admits the loop
  $(u,v)\xrightarrow{t}(u,v)$
  as an accumulation point.
  Since $t\in J_p$, this loop belongs
  to~$\Cl K_p$
  by Corollary~\ref{c:minimum-idempotents-labeling-loops-at-same-point}.
  By Corollary~\ref{c:trivial-groupoid-at-type-2-point}, there is only
  one loop of $\Cl K_p$ at~$(u,v)$. Therefore,
  every accumulation point of $(t_m)_m$ is the label of
  that loop.
\end{proof}

In some cases, Theorem~\ref{t:J-class-type-2} can be strengthened, as
seen next.

\begin{Thm}
  \label{t:Jp-equals-Rp-equals-Lp}
    Let $A$ be a finite alphabet, and let
  $\pv V$ be a pseudovariety closed under concatenation.
  Then $L_p=J_p=R_p$ for every stationary point $p\in \mathfrak{L}(\Om AV)$.
\end{Thm}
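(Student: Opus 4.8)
The plan is to combine Theorem~\ref{t:J-class-type-2}, which already gives $L_p\subseteq J_p$ and $R_p\subseteq J_p$, with the reverse inclusions; by the order-reversing duality between $L_p$ and $R_p$ it suffices to prove $J_p\subseteq L_p$. Throughout, $S=\Om AV$ is profinite, equidivisible --- hence unambiguous, and stable as it is compact --- and finitely cancelable with respect to $A$ (by Theorems~\ref{t:characterization-of-equidivisible-pseudovarieties} and~\ref{t:factorizing} and Proposition~\ref{p:equid-conca-are-finitely-cancelable}, using that $\pv V\supseteq\pv N$ forces $\pv V\not\subseteq\pv{CS}$); and since $p$ is stationary, every element of the $\sim$-class $p$ is a stationary point of $\mathfrak F(w)$, where $w$ is chosen so that $p\in\mathfrak L(w)$.

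Fix $t\in J_p$. By Proposition~\ref{p:Jp-vs-labels-in-strongly-connected-component}, $t$ labels an edge $\tau\colon(x,y)\xrightarrow{t}(x',y')$ of $\Cl K_p$ with $(x,y),(x',y')\in p$. Let $e$ be the label of the only loop at $(x,y)$ in $\Cl K_p$; since $\Cl K_p$ is a trivial category (Corollary~\ref{c:trivial-groupoid-at-type-2-point}), that loop is its local identity at $(x,y)$, so $et=t$. Now I approach $(x,y)$ and $(x',y')$ simultaneously from below: by Proposition~\ref{p:approximation-type-2-by-step-points} choose strictly increasing sequences of step points $(\xi_k)_k$ and $(\xi'_k)_k$ of $\mathfrak F(w)$ with $\xi_k\to(x,y)$ and $\xi'_k\to(x',y')$ (taking $\xi'=\xi$ when $(x,y)=(x',y')$). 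In $\mathfrak L(w)$, which is linear and complete (Proposition~\ref{p:complete-quasi-ordering}), the sequences $(\chi(\xi_k))_k$ and $(\chi(\xi'_k))_k$ increase with common supremum $p$, so after passing to subsequences they can be interleaved into one strictly increasing sequence $p_1<p_2<\cdots\to p$ of step points with $p_{2i-1}=\chi(\xi_{k_i})$, $p_{2i}=\chi(\xi'_{j_i})$, and $k_i,j_i\to\infty$.

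All $p_n$ being step points, for $m<n$ Proposition~\ref{p:transitions-in-F} yields a unique transition $t_{m,n}\colon p_m\to p_n$. Fix an odd $m=2i-1$. Along the even indices $n=2l$ with $l\ge i$ the points $\xi'_{j_l}$ converge to $(x',y')$, so by closedness of the transition relation (Lemma~\ref{l:if-S-is-compact-then-the-quasi-order-is-closed}) and the uniqueness in Proposition~\ref{p:transitions-in-F}, the sequence $(t_{2i-1,2l})_l$ converges to the unique transition $\rho_i\colon\xi_{k_i}\to(x',y')$; composing the edges $\xi_{k_i}\xrightarrow{r_i}(x,y)\xrightarrow{t}(x',y')$, where $r_i$ is the unique transition $\xi_{k_i}\to(x,y)$, and using uniqueness once more, $\rho_i=r_it$. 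Set $t_{2i-1}:=\rho_i$, an accumulation point of $(t_{2i-1,n})_{n>2i-1}$, and let $t_{2i}$ be any accumulation point of $(t_{2i,n})_{n>2i}$. By Corollary~\ref{c:convergence-of-transitions} applied to $(\xi_k)_k\to(x,y)$ we have $r_i\to e$, so $t_{2i-1}=r_it\to et=t$; hence $t$ is an accumulation point of $(t_m)_m$, and by the definition of $L_p$, $t\in L_p$. Thus $J_p\subseteq L_p$, giving $L_p=J_p$, and $R_p=J_p$ dually.

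I expect the main obstacle to be recognizing that a single monotone approximating sequence cannot suffice: by Corollary~\ref{c:convergence-of-transitions}, transitions between step points approaching a stationary point from below accumulate only at the label of the local identity of $\Cl K_p$ at the limit, so a non-idempotent element of $J_p$ is never produced that way. Interleaving two sequences --- one approaching the source $(x,y)$, one the target $(x',y')$ of a $\Cl K_p$-edge labelled $t$ --- is the device that makes the crossing transitions $\rho_i$ detect $t$, through the factorization $\rho_i=r_it$ and the identity $et=t$. A secondary technical point is ensuring the two approximating sequences really merge into a single sequence still converging to $p$, which is exactly where completeness of $\mathfrak L(w)$ (to identify $p$ as their common supremum) enters.
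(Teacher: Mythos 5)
Your proof is correct and follows essentially the same strategy as the paper's: interleave two strictly increasing sequences of step points approaching the source and target of the $\Cl K_p$-edge labelled by the given $t\in J_p$, take the cross-transitions from the odd to the even subsequence, and use triviality of $\Cl K_p$ (Corollary~\ref{c:trivial-groupoid-at-type-2-point}) to pin down the accumulation point. The only real difference is bookkeeping: you compute the limit explicitly via the factorization $\rho_i=r_it$ together with $r_i\to e$ and $et=t$, whereas the paper first observes that the accumulation point lies in $L_p\subseteq J_p$ (by Theorem~\ref{t:J-class-type-2}) and then invokes triviality of $\Cl K_p$ once more to conclude it equals $\tau$.
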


\begin{proof}
  According to Theorem \ref{t:J-class-type-2},
  it remains to prove
  that $J_p$ is contained in $L_p$ and in $R_p$.
  By symmetry, it suffices to
  prove that $J_p\subseteq L_p$.

  Let $\tau$ be an element of $J_p$. Then by
  Proposition~\ref{p:Jp-vs-labels-in-strongly-connected-component}
  there are elements $(u,v)$
  and $(u',v')$ of $p$ such that $(u,v)\xrightarrow \tau(u',v')$ is an
  edge of $\Cl K_p$. According to
  Proposition~\ref{p:approximation-type-2-by-step-points}, there are
  strictly increasing sequences $(q_n)_n$ and $(q'_n)_n$ of step
  points of $\mathfrak F(w)$ converging to $(u,v)$ and $(u',v')$,
  respectively. We may define recursively
  a strictly ascending sequence of step points $p_n$ as follows:
  \begin{itemize}
    \item $p_1=q_1$;
    \item if $n>1$ is even, then $p_n$
      is the smallest term of the sequence $(q'_n)_n$
      belonging to $]p_{n-1},p[$;         
    \item if $n>1$ is odd, then $p_n$
      is the smallest term of the sequence $(q_n)_n$ belonging to $]p_{n-1},p[$;      
  \end{itemize}
  For each $m\geq1$
  and each $n\geq m$
  let $t_{m,n}$
  be the unique transition from $p_{2m-1}$ to $p_{2n}$.
  Let $t_m$ be an accumulation point of the sequence
  $(t_{m,n})_{n\geq m}$.
  Since $(p_{2n})_n$ converges to
  $(u',v')$, the pseudoword $t_m$ labels an edge
  from $p_{2m-1}$ to $(u',v')$.
  Let $t$ be an accumulation point of the sequence $(t_m)_m$.
  Since $(p_{2m-1})_m$ converges to
  $(u,v)$, the pseudoword $t$ labels an edge
  from $(u,v)$ to $(u',v')$.
  By the definition of $L_p$, we have $t\in L_p$.
  Therefore
  $t\in J_p$ by Theorem~\ref{t:J-class-type-2}.
  By Corollary~\ref{c:trivial-groupoid-at-type-2-point},
  the category $\Cl K_p$ is trivial,
  whence $t=\tau$,
  thus proving that $\tau\in L_p$.
\end{proof}

\section{Proof of Theorem~\ref{t:a-description-of-the-representation-theorem}}
\label{sec:proof-theorem-reft:a}

Throughout this section, when not explicitly stated, we consider $w$
to be an element of $\Om AA$, where $A$ is a finite alphabet, and
$\varphi\colon A\to S$ to be a generating mapping of a finite
aperiodic semigroup~$S$. We also take $s=\varphi_{\pv A}(w)$.

Consider a mapping $g\colon \step(\mathfrak {L}(w))\to \mathfrak {F}(s)$.
From hereon, we assume that $\mathfrak {L}_c(w)$ is $g$-recognized by the
pair $(\varphi,s)$.  In particular, all properties of
  Definition~\ref{def:successful-run} are fulfilled when
  $(L,\ell)=\mathfrak{L}_c(w)$.

\begin{Def}[$g$-projection]
Let $x$ and $y$ be two step points of $\mathfrak {L}(w)$
such that $x\leq y$.
By Propositions~\ref{p:transitions-in-F} (in case $x<y$)
and~\ref{p:shifting-allowed-implies-type-2} (in case $x=y$),
there is a unique edge
in $\Cl T(w)$
from $x$ to $y$. Let $t$ be its label, which is $1$ if $x=y$.
If $g(x)\xrightarrow{\varphi_{\pv A}(t)} g(y)$ is
an edge of $\Cl T(s)$,
then we say that the edge $x\xrightarrow{t}y$ is \emph{$g$-projected (to
$g(x)\xrightarrow{\varphi_{\pv A}(t)} g(y)$)}.
\end{Def}

The unique edge from $x$ to $y$ will sometimes be denoted simply
by $x\xrightarrow{}y$, without reference to the label.

\begin{Remark}
  \label{r:a-sort-of-homomorphism}
  Let $x$, $y$ and $z$ be step points such that
  $x\leq y\leq z$. If
  $x \xrightarrow{}y$
  and $y \xrightarrow{}z$
  are $g$-projected, then $x \xrightarrow{}z$
  is $g$-projected.
\end{Remark}

  Given two step points $x$ and $y$, write $x\prec\prec y$
if $x\leq y$ and the interval
$[x,y]$ is finite.

\begin{Remark}
  \label{r:projection-of-finite-paths}
  If $x\prec\prec y$, then $x \xrightarrow{}y$ is $g$-projected.
\end{Remark}

Let $\approx$ be the equivalence relation on $\step(\mathfrak{L}(w))$ generated by
$\prec\prec$,
that is, $x\approx y$ if and only
$x\prec\prec y$
or $y\prec\prec x$. The $\approx$-class of $x$ is denoted $[x]_\approx$.
Note that $w\in A^+$
if and only if $(1,w)\approx (w,1)$.
Let $\Cl O_w$ be a subset of
$\step(\mathfrak {L}(w))$ such that each
$\approx$-class contains exactly one element of 
$\Cl O_w$, with  the additional restriction
that if $w\notin A^+$ then we have
\begin{equation*}
\Cl O_w\cap [(1,w)]_{\approx}=\{(1,w)\}\quad
\text{and}\quad \Cl O_w\cap [(w,1)]_{\approx}=\{(w,1)\}.  
\end{equation*}

\begin{Def}[Bridges]
  A \emph{bridge} in $\mathfrak{L}(w)$, with respect to
  the mapping~$g$,
  is a nonempty open interval $I$ of $\mathfrak{L}(w)$
  such that, for every pair of step points
  $x,y$ of $I$, with $x<y$,
  the edge $x\rightarrow {}y$ is $g$-projected.
  A \emph{special bridge} in $\mathfrak{L}(w)$
  is a bridge of the form ${[}X,Y{[}$,
    with $X,Y\in \Cl O_w$ such that $X<Y$.
\end{Def}

Notice that every nonempty interval contained in a bridge is also a bridge, and that special bridges are clopen intervals.

\begin{Lemma}\label{l:cover-of-special-bridges}
  Let $\Cl F$ be a nonempty family of special bridges
  of $\mathfrak {L}(w)$. 
  If $\bigcup\Cl F$ is a closed interval,
   then $\bigcup\Cl F$ is a special bridge.
\end{Lemma}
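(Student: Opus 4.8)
The plan is to use compactness to reduce to a finite union of special bridges, and then to merge them one pair at a time. Write $J=\bigcup\mathcal F$. Since $J$ is closed in the compact space $\mathfrak L(w)$ it is compact, and since every special bridge is clopen (hence open), $\mathcal F$ is an open cover of $J$; so one can extract a finite subcover $J=B_1\cup\dots\cup B_n$ with each $B_i\in\mathcal F$. I would then prove, by induction on $n$, that \emph{any finite union of special bridges which is a closed interval is itself a special bridge}. The base case $n=1$ is immediate, as $J=B_1$.

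For the inductive step ($n\ge2$), first discard any $B_i$ contained in the union of the others, which only decreases $n$, so that we may assume the cover is irredundant. Then $\min J$ is the left endpoint of a unique member (two members cannot share a left endpoint, by irredundancy), say $B_1=[X_1,Y_1[\,$, and $X_k>X_1$ for $k\ne1$. Irredundancy forces $J\supsetneq B_1$, so some $z\in J$ satisfies $z\ge Y_1$; as $J$ is an interval with $X_1<Y_1\le z$, we get $Y_1\in J$, hence $Y_1\in B_k=[X_k,Y_k[$ for some $k\ne1$, with $X_1<X_k\le Y_1<Y_k$. Thus $B_1$ and $B_k$ are either \emph{overlapping} ($X_k<Y_1$) or \emph{adjacent} ($X_k=Y_1$), and in both cases $B_1\cup B_k=[X_1,Y_k[\,$. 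Once we know this union is a special bridge, we replace $\{B_1,B_k\}$ by it, obtaining a cover of $J$ by $n-1$ special bridges, and invoke the induction hypothesis.

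It remains to prove the \emph{merging lemma}: if $B'=[X',Y'[$ and $B''=[X'',Y''[$ are special bridges with $X'<X''\le Y'<Y''$, then $B'\cup B''=[X',Y''[$ is a special bridge. Being a union of two clopen sets it is clopen, and its endpoints lie in $\Cl O_w$, so the only thing to check is that it is a bridge. Let $p<q$ be step points of $B'\cup B''$; we may assume $p\in B'\setminus B''$ and $q\in B''\setminus B'$, the remaining cases following from $B'$ or $B''$ being bridges. If $B'$ and $B''$ overlap, then $p<X''<q$ and $X''$ is a step point lying in $B'\cap B''$, so $p\to X''$ (inside $B'$) and $X''\to q$ (inside $B''$) are $g$-projected, whence $p\to q$ is $g$-projected by Remark~\ref{r:a-sort-of-homomorphism}. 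If instead $B'$ and $B''$ are adjacent, i.e. $X''=Y'$, one uses that $B'$ is \emph{closed}: a half-open interval $[X',Y'[$ can be closed only if $Y'$ has an immediate predecessor $Z$ (otherwise $Y'$ would lie in the closure of $[X',Y'[$), so $B'=[X',Z]$ and, as $[Z,Y']$ is finite, $Z\prec\prec Y'$; hence $Z\to Y'$ is $g$-projected by Remark~\ref{r:projection-of-finite-paths}, and chaining $p\to Z$ (inside $B'$), $Z\to Y'$, and $Y'\to q$ (inside $B''$) via Remark~\ref{r:a-sort-of-homomorphism} shows $p\to q$ is $g$-projected. (Throughout, a self-loop $x\to x$ counts as $g$-projected since $x\prec\prec x$, which disposes of the degenerate subcases $p=Z$ and $q=Y'$.)

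The main obstacle is the merging lemma, and within it the adjacent case: the key point is to notice that closedness of a special bridge $[X',Y'[$ forces its right endpoint $Y'$ to have an immediate predecessor, which is precisely what provides a finite — hence $g$-projected — bridge across the ``seam'' between two abutting special bridges. The overlapping case, the reduction to an irredundant finite subcover, and the locating of a mergeable pair are routine by comparison.
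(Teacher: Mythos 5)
Your proof is correct, and it follows essentially the same strategy as the paper: reduce to a finite subcover by compactness, then build up the bridge by merging, with the key step being that a step point $Y'$ which is the right endpoint of a half-open clopen interval must have a predecessor $Z$, giving a finite (hence $g$-projected) edge $Z\to Y'$ that lets you cross the seam between abutting intervals. The only difference is bookkeeping: the paper first sorts all endpoints $\{X_i\}\cup\{Y_i\}$ into $Z_1<\cdots<Z_m$ and merges the resulting partition intervals $[Z_k,Z_{k+1}[$ strictly left to right, so every merge is of the adjacent type; you instead merge the original bridges pairwise (after an irredundancy reduction), which forces you to handle both overlapping and adjacent pairs — the overlapping case being trivial and adding little. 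Your topological justification that closedness of $[X',Y'[$ forces $Y'$ to have a predecessor is a nice observation; the paper takes this for granted, and it also follows directly from the clustered-set axiom~\ref{item:clustered-3}, since $Y'$ is a step point of $\mathfrak{L}(w)$ which is neither its minimum nor its maximum.
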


Before proving Lemma~\ref{l:cover-of-special-bridges}, we remark that
the hypothesis that the union $\bigcup\Cl F$ is closed cannot be removed.
Indeed, consider a case where we have a strictly increasing sequence $(q_n)_n$ of stationary points converging to a stationary point $q$
(cf.~Example~\ref{ex:examples-of-cluster-words}).
Between $q_n$ and $q_{n+1}$ pick an element $p_n$ of $\Cl O_w$.
Then the union of the special bridges
${[}\min \mathfrak {L}(w),p_n{[}$ is ${[}\min \mathfrak {L}(w),q{[}$,
which is not a special bridge.

\begin{proof}[Proof of Lemma~\ref{l:cover-of-special-bridges}]
  Since $\bigcup\Cl F$ is a compact set having an open cover by the
  elements of $\Cl F$,
    we have $\bigcup\Cl F=\bigcup\Cl F'$
    for some finite subfamily $\Cl F'$ of~\Cl F.
    Then, for some $n\geq 1$, we may assume that
    there are elements  $X_1<X_2<\cdots <X_n$ and
    $Y_1<Y_2<\cdots <Y_n$ of $\Cl O_w$
    such that
    \begin{equation*}
    \Cl F'=\{{[}X_k,Y_k{[}:1\leq k\leq n\}.      
    \end{equation*}
    Consider the set
    \begin{equation*}
    \Cl Z=\{X_i:1\leq k\leq n\}\cup \{Y_i:1\leq k\leq n\}.  
    \end{equation*}
    and let $Z_1<Z_2<\cdots <Z_m$ be the elements of $\Cl Z$. Notice that $m\geq 2$.
    For each $k\in \{1,\ldots,m-1\}$, denote
    by $I_k$ the interval ${[}Z_k,Z_{k+1}{[}$.
    In case $Z_k=X_j$ for some $j$,
    we have $Z_{k+1}\leq Y_j$, and so $I_k$ is a special bridge contained
    in the special bridge ${[}X_j,Y_j{[}$.
    If $Z_k=Y_j$ for some $j$,
    then, since $Z_k<Z_m$, we must have $Z_k\in\bigcup\Cl F$, whence
    $Z_k\in {[}X_i,Y_i{[}$ for some $i$.
    As $Z_k<Y_i$, we have $Z_{k+1}\leq Y_i$, and so  
    $I_k$ is a special bridge contained
    in the special bridge ${[}X_i,Y_i{[}$.
    Hence,  the set
    \begin{equation*}
    \Cl F''=\{I_k:1\leq k\leq m-1\}      
    \end{equation*}
    is a family of special bridges.
    
    Let $U_k=\bigcup_{j=1}^k I_j$.
    Note that $U_k={[}Z_1,Z_{k+1}{[}$.
    We prove by induction on $k\in\{1,\ldots,m-1\}$ that
    $U_k$ is a special bridge.
    The initial step is trivial. Suppose
    that $U_k$ is a special bridge,
    for some $k\in\{1,\ldots,m-2\}$.
    To prove that $U_{k+1}$ is
    a special bridge, consider two step points $x,y$ in $U_{k+1}$ with $x<y$. We have to show that the edge $x\xrightarrow{} y$ is $g$-projected. Using
    induction, it suffices to consider the case where  $x\in U_k$ and $y\in I_{k+1}$. Let $Z'_{k+1}$ be the
    predecessor of the step point $Z_{k+1}$ in~$\mathfrak{L}(w)$.
    Since $U_k={[}Z_1,Z_{k+1}{[}$ and $I_{k+1}={[}Z_{k+1},Z_{k+2}{[}$,
    the edges $x\xrightarrow{} Z'_{k+1}$ and $Z_{k+1}\xrightarrow{} y$
    are $g$-projected, and the same is true obviously for the edge
    $Z'_{k+1}\xrightarrow{} Z_{k+1}$;
    hence, by Remark~\ref{r:a-sort-of-homomorphism},
  $x\xrightarrow{}y$ is $g$-projected. This proves that
    $U_{k+1}$ is a special bridge, concluding the induction.
    
    The result now follows because $U_{m-1}=\bigcup \Cl F''=\bigcup \Cl F'=\bigcup\Cl F$.
  \end{proof}

  In the proof of the next lemma and in the sequel,
  we use the notation $\lambda(x)=\inf [x]_\approx$ and
     $\rho(x)=\sup[x]_\approx$, for a step point $x$.
     Note that $\lambda(x)$ is stationary unless
     $(1,w)\prec\prec x$, and $\rho(x)$ is stationary unless
     $x\prec\prec (w,1)$. Also, when $r\in \step(\mathfrak {L}(w))$,
the unique element of $\Cl O_w\cap [r]_{\approx}$
is denoted~$\Cl O_w(r)$.

  \begin{Lemma}\label{l:cover-of-special-bridges-2}
    Let $X,Y\in\Cl O_w$ be such that $X<Y$. Suppose that for every
    stationary point $p$ in ${[}X,Y{[}$ there is a special bridge
    containing $p$. Then ${[}X,Y{[}$ is a special bridge.
  \end{Lemma}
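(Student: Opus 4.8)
The plan is to prove that ${[}X,Y{[}$ is a \emph{bridge}; since $X,Y\in\Cl O_w$ and $X<Y$, it will then be a special bridge by definition. First I would note that ${[}X,Y{[}$ is clopen, hence compact: by Theorem~\ref{t:cluster} the linear order $\mathfrak L(w)$ is clustered, and conditions~\ref{item:clustered-2} and~\ref{item:clustered-3} of Definition~\ref{def:clustered-sets} imply that every step point of $\mathfrak L(w)$ is either the minimum or has a predecessor, and dually is either the maximum or has a successor; consequently every step point is isolated in $\mathfrak L(w)$, and both $[X,{\rightarrow}{[}$ and ${]}{\leftarrow},Y{[}$ are clopen, so ${[}X,Y{[}$ is clopen. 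Next, for each stationary point $p$ of ${[}X,Y{[}$, the hypothesis gives a special bridge ${[}X',Y'{[}$ containing~$p$; since $p$ is stationary, $X'<p<Y'$, and likewise $X<p<Y$. Then ${[}\max(X,X'),\min(Y,Y'){[}$ is a nonempty interval contained in the bridge ${[}X',Y'{[}$, hence itself a bridge, its endpoints lie in $\Cl O_w$, and the left endpoint is strictly below the right one (both being compared with~$p$); so it is a special bridge $T_p$ with $p\in T_p\subseteq {[}X,Y{[}$.

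Now let $G={[}X,Y{[}\setminus\bigcup_p T_p$, the union ranging over the stationary points $p$ of ${[}X,Y{[}$. Then $G$ is closed in the compact space ${[}X,Y{[}$, hence compact, and it contains no stationary point, so its points are isolated; therefore $G$ is finite, say $G=\{g_1,\dots,g_n\}$. The sets $T_p$ together with the singletons $\{g_i\}$ form an open cover of ${[}X,Y{[}$, so there is a finite subfamily with ${[}X,Y{[}=T_{p_1}\cup\dots\cup T_{p_m}\cup\{g_1,\dots,g_n\}$; write $T_{p_j}={[}A_j,B_j{[}$ with $A_j,B_j\in\Cl O_w$.

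To conclude that ${[}X,Y{[}$ is a bridge, I would take arbitrary step points $a<b$ in ${[}X,Y{[}$ and let $\Cl Z$ be the finite set $\{X,Y,a,b,g_1,\dots,g_n,A_1,B_1,\dots,A_m,B_m\}$ of step points, listed in order as $Z_1<\dots<Z_N$ (so $Z_1=X$ and $Z_N=Y$). I claim each edge $Z_k\xrightarrow{}Z_{k+1}$ of $\Cl T(w)$ is $g$-projected. If the interval $[Z_k,Z_{k+1}]$ is finite, this is Remark~\ref{r:projection-of-finite-paths}. Otherwise $[Z_k,Z_{k+1}]$ is compact and infinite, so, since step points are isolated, it contains a stationary point $p^*$, necessarily in ${]}Z_k,Z_{k+1}{[}\subseteq{[}X,Y{[}$, hence in some $T_{p_j}={[}A_j,B_j{[}$; as $A_j,B_j\in\Cl Z$ and no element of $\Cl Z$ lies strictly between $Z_k$ and $Z_{k+1}$, while $A_j\le p^*<Z_{k+1}$ and $B_j>p^*>Z_k$, we get $A_j\le Z_k<Z_{k+1}\le B_j$, whence ${[}Z_k,Z_{k+1}{[}\subseteq T_{p_j}$. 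Let $z'$ be the predecessor of $Z_{k+1}$ (which exists, $Z_{k+1}$ being a step point distinct from the minimum); infinitude of $[Z_k,Z_{k+1}]$ forces $Z_k<z'<Z_{k+1}$, so $Z_k$ and $z'$ are step points of the bridge $T_{p_j}$ with $Z_k<z'$, giving that $Z_k\xrightarrow{}z'$ is $g$-projected, while $z'\xrightarrow{}Z_{k+1}$ is $g$-projected by Remark~\ref{r:projection-of-finite-paths} (as $z'\prec Z_{k+1}$); hence $Z_k\xrightarrow{}Z_{k+1}$ is $g$-projected by Remark~\ref{r:a-sort-of-homomorphism}. Applying Remark~\ref{r:a-sort-of-homomorphism} repeatedly along the portion of $Z_1<\dots<Z_N$ joining $a$ to $b$ shows that $a\xrightarrow{}b$ is $g$-projected. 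Thus ${[}X,Y{[}$ is a bridge, and therefore a special bridge.

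I expect the main obstacle to be the bookkeeping: one must clip the given special bridges so that the $T_p$'s and, crucially, their endpoints stay inside ${[}X,Y{[}$, so that after passing to a finite subcover the endpoints $A_j,B_j$ --- once placed in $\Cl Z$ --- correctly bracket each consecutive pair $Z_k<Z_{k+1}$. The topological inputs (compactness of ${[}X,Y{[}$ and of closed intervals, plus step points being isolated) are exactly what reduce each ``infinite'' configuration to a finite one. Note that one cannot simply cover ${[}X,Y{[}$ by special bridges and invoke Lemma~\ref{l:cover-of-special-bridges}, because a step point of ${[}X,Y{[}$ need not lie in any $T_p$; the finite-subcover-plus-stitching argument above is tailored to circumvent this.
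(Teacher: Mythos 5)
Your proof is correct, and it takes a genuinely different route from the paper's. The paper's proof reduces everything to Lemma~\ref{l:cover-of-special-bridges}: it clips each given special bridge to $I_p\subseteq{[}X,Y{[}$ and then shows, via the $\approx$-class machinery (the stationary limits $\lambda(z)$, $\rho(z)$ of a step point $z$, and the representative $\Cl O_w(z)$), that the family $(I_p)_{p\text{ stationary}}$ actually covers \emph{every} step point of ${[}X,Y{[}$ as well --- either $z\in{]}\lambda(z),\Cl O_w(z){[}\subseteq I_{\lambda(z)}$ or $z\in{[}\Cl O_w(z),\rho(z){[}\subseteq I_{\rho(z)}$, with $I_{\rho(X)}$, $I_{\lambda(Y)}$ handling the two ends --- so that $\bigcup I_p={[}X,Y{[}$ and Lemma~\ref{l:cover-of-special-bridges} applies directly. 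You instead bypass Lemma~\ref{l:cover-of-special-bridges}: you isolate the finite set $G$ of uncovered step points, take a finite subcover $T_{p_1},\dots,T_{p_m},\{g_1\},\dots,\{g_n\}$, and verify the bridge property on an explicit finite scaffold $\Cl Z$ of step points by stitching across consecutive pairs $Z_k\prec\cdots\prec Z_{k+1}$, using in each infinite gap the bridge through a stationary point of that gap together with Remarks~\ref{r:projection-of-finite-paths} and~\ref{r:a-sort-of-homomorphism}. Both arguments are sound; yours is slightly more self-contained, the paper's a bit slicker once Lemma~\ref{l:cover-of-special-bridges} is in hand. One correction to your closing remark: the obstacle you anticipate --- that a step point of ${[}X,Y{[}$ might lie in no clipped bridge $T_p$ --- in fact never occurs, and showing precisely this (so that Lemma~\ref{l:cover-of-special-bridges} can be invoked) is the substance of the paper's proof.
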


  \begin{proof}
    Let $P={[}X,Y{[}\cap \stat(\mathfrak {L}(w))$.
    By hypothesis, for each $p\in P$
    there is a special bridge $I_p$ containing~$p$. Note that
    $I_p\cap {[}X,Y{[}$ is also a special bridge containing~$p$. Therefore, we may as well assume that $I_p\subseteq
    {[}X,Y{[}$. Let $U=\bigcup_{p\in P}I_p$.
  Then $U\subseteq {[}X,Y{[}$.

  We claim that $U={[}X,Y{[}$.
    As we trivially have $P\subseteq U$,
    we are reduced to showing that,
    for every $z\in {[}X,Y{[}\cap \step(\mathfrak{L}(w))$,
    we have $z\in U$.
    
    If $\rho(X)<z<\lambda (Y)$ holds, then
    the stationary points $\rho(z)$ and $\lambda(z)$ belong to ${[}X,Y{[}$.
    Either $z\in {]}\lambda(z),\Cl O_w(z){[}$ or $z\in {[}\Cl
    O_w(z),\rho(z){[}$.
    As ${]}\lambda(z),\Cl O_w(z){[}\subseteq I_{\lambda(z)}$
    and ${[}\Cl O_w(z),\rho(z){[}\subseteq I_{\rho(z)}$,
    it follows that $z\in U$.

    Suppose that $\rho(X)\geq z$.
    Then $z\in {[}X,\rho(X){[}$, thus $z\in I_{\rho (X)}$.
    Similarly, if $\lambda (Y)\leq z$  then
    $z\in I_{\lambda (Y)}$. In both cases $z\in U$.

    We proved that $U={[}X,Y{[}$. By
    Lemma~\ref{l:cover-of-special-bridges}, the interval ${[}X,Y{[}$
    is a special bridge.
  \end{proof}

  Let $p_1$ and $p_2$ be two elements of $\mathfrak {L}(w)$
  such that $p_1\leq p_2$.
  If $p_1<p_2$, then, as written in Corollary~\ref{c:transitions-in-F},
  the set of transitions from $p_1$ to $p_2$ is contained
  in a $\Cl J$-class $J_{p_1,p_2}$. In case $p_1=p_2$,
  let $J_{p_1,p_2}=J_{p_1}$.
  By a \emph{$\Cl J$-minimum transition} from $p_1$ to $p_2$,
  we mean a transition from $p_1$
  to $p_2$ belonging to $J_{p_1,p_2}$; this terminology is useful to
  unify both cases $p_1<p_2$ and $p_1=p_2$ in some of our arguments.

  \begin{Lemma}\label{l:the-j-minimum-transitions}
    Let $p_1,p_2\in\mathfrak{L}(w)$, with $p_1\leq p_2$.
    For every $(x_1,y_1)\in p_1$ and $(x_2,y_2)\in p_2$,
    the intersection $I$ of $J_{p_1,p_2}$
    with the set of labels of edges from $(x_1,y_1)$ to
    $(x_2,y_2)$ is a singleton. If
    $t $ is the unique element of $I$,
    and if $e_i$ is the unique idempotent of $J_{p_i}$
    that stabilizes $(x_i,y_i)$ (cf.~Lemma~\ref{l:minimum-idempotents-labeling-loops-at-same-point}), then $t=e_1te_2$.
  \end{Lemma}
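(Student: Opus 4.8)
The plan is to exploit the two structural facts about coterminal edges in $\Cl T(w)$ that are already available: Proposition~\ref{p:transitions-in-F}, which says that between vertices lying in distinct strongly connected components there is at most one edge, and Corollary~\ref{c:trivial-groupoid-at-type-2-point}, which says that the minimum ideal category $\Cl K_p$ is trivial. The point of the notion of $\Cl J$-minimum transition is exactly to let these two facts be invoked uniformly, so I would split the argument into the cases $p_1<p_2$ and $p_1=p_2=:p$ and carry them in parallel.

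First I would show that $I$ is nonempty. Since $p_1\le p_2$ in $\mathfrak L(w)$ forces $(x_1,y_1)\le(x_2,y_2)$ in the quasi-order of $\mathfrak F(w)$, there is some edge $(x_1,y_1)\xrightarrow{c}(x_2,y_2)$ of $\Cl T(w)$; composing it on the left and on the right with the loops labeled by the idempotent stabilizers $e_1$ and $e_2$ yields an edge labeled $e_1ce_2$ from $(x_1,y_1)$ to $(x_2,y_2)$. It then remains to check that $e_1ce_2\in J_{p_1,p_2}$: when $p_1<p_2$ this is immediate from Corollary~\ref{c:transitions-in-F}, since every transition from $p_1$ to $p_2$ lies in $J_{p_1,p_2}$; when $p_1=p_2=p$, the loop labeled $e_1$ lies in $\Cl K_p$ by Corollary~\ref{c:minimum-idempotents-labeling-loops-at-same-point} (as $e_1\in J_p$), and since $\Cl K_p$ is an ideal of $\Cl T_p$ and $(x_1,y_1)\xrightarrow{ce_2}(x_2,y_2)$ is an edge of $\Cl T_p$, the composite edge lies in $\Cl K_p$, so its label is in $J_p=J_{p_1,p_2}$, again by Corollary~\ref{c:minimum-idempotents-labeling-loops-at-same-point}. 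In either case $e_1ce_2\in I$, and since $e_1(e_1ce_2)e_2=e_1ce_2$ this element already witnesses the stated identity.

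For uniqueness, take any $s\in I$, so that $(x_1,y_1)\xrightarrow{s}(x_2,y_2)$ is an edge of $\Cl T(w)$. If $p_1<p_2$, the two endpoints lie in distinct strongly connected components and Proposition~\ref{p:transitions-in-F} gives at most one such edge. If $p_1=p_2=p$, then $s\in J_p$, so by Corollary~\ref{c:minimum-idempotents-labeling-loops-at-same-point} the edge lies in $\Cl K_p$, and triviality of $\Cl K_p$ (Corollary~\ref{c:trivial-groupoid-at-type-2-point}) again forces at most one edge; hence $I=\{t\}$. Finally, the identity $t=e_1te_2$ follows by applying the construction of the first step to $t$ itself: $e_1te_2$ again labels an edge $(x_1,y_1)\to(x_2,y_2)$ belonging to $J_{p_1,p_2}$, so $e_1te_2\in I=\{t\}$. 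The proof has no genuinely new difficulty; the only delicate point is keeping the two cases properly aligned — in particular routing the membership $e_1ce_2\in J_p$ through the description of the edges of $\Cl K_p$ rather than through a direct manipulation of $\Cl J$-classes.
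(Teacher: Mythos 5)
Your proof is correct and follows essentially the same route as the paper's: split on whether $p_1<p_2$ or $p_1=p_2$, get uniqueness from Proposition~\ref{p:transitions-in-F} in the first case and from the combination of Corollary~\ref{c:minimum-idempotents-labeling-loops-at-same-point} and Corollary~\ref{c:trivial-groupoid-at-type-2-point} in the second, and then derive $t=e_1te_2$ by producing $e_1te_2$ as another element of $I$ and invoking uniqueness. The only difference is that you spell out nonemptyness of $I$ explicitly (via the ideal property of $\Cl K_p$ when $p_1=p_2$), which the paper leaves implicit; this is a small but welcome clarification rather than a different argument.
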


  \begin{proof}
    If $p_1<p_2$, then the lemma follows
    straightforwardly from Proposition~\ref{p:transitions-in-F}.
    If $p_1=p_2=p$, then 
    $J_{p_1,p_2}=J_{p}$.
    Since edges in $\mathcal T_p$ labeled by elements of
    $J_p$ are edges of $\Cl K_p$
    by Corollary~\ref{c:minimum-idempotents-labeling-loops-at-same-point},
    and since $\Cl K_p$ is trivial by
    Corollary~\ref{c:trivial-groupoid-at-type-2-point}, 
    we have also in this case that there is only one
    edge from $(x_1,y_1)$ to $(x_2,y_2)$
    with label in $J_{p_1,p_2}$.
    Let $e_i$ and $t$ be as in the statement of the lemma.
    Because $e_i$ stabilizes $(x_i,y_i)$,
    we may consider the edge
    $(x_1,y_1)\xrightarrow{e_1te_2}(x_1,y_2)$, 
    and so $t\mathrel{\Cl J}e_1te_2$.
    By the already proved uniqueness, we must have
    $t=e_1te_2$.
  \end{proof}

  \begin{Def}[{$\Cl J$-bridge}]
    A pair of elements $(p_1,p_2)$ of $\stat(\mathfrak{L}(w))$ is a
    \emph{\Cl J-bridge with respect to $g$}, if
    \begin{enumerate}
    \item $p_1\leq p_2$;
    \item $F_g(p_1)\cap F_g(p_2)\neq\emptyset$;
    \item the elements of $\varphi_{\pv A}(J_{p_1})$
      are \Cl J-equivalent to the elements of $\varphi_{\pv A}(J_{p_2})$;
    \item if $\tau$ is a \Cl J-minimum
      transition from an element
      of $p_1$ to an element of $p_2$, then $\varphi_{\pv A}(\tau)$
      is \Cl J-equivalent to the elements of $\varphi_{\pv A}(J_{p_1})$.    
    \end{enumerate}
\end{Def}

Note that $(p,p)$ is a \Cl J-bridge, whenever $p$ is a stationary point.

 In the proof of the following
 proposition, the hypothesis that $S$ is an unambiguous aperiodic semigroup
 is essential.
 
\begin{Prop}
  \label{p:towards-big-jumps}
  Suppose that $S$~is unambiguous and let $(p_1,p_2)$ be a \Cl J-bridge with respect to~$g$. Suppose there are step points $z_1$ and $z_2$ such that
  ${[}z_1,p_1{[}$ and ${]}p_2,z_2{]}$
  are bridges with respect to $g$.
  Let $(s_1,s_2)\in F_g(p_1)\cap F_g(p_2)$.  
  Then there are step points $x_1\in [z_1,p_1[$
  and $x_2\in {]}p_2,z_2]$
  such that
  $x_1\xrightarrow{} x_2$
  is $g$-projected to an idempotent loop
  of $\Cl T(s)$ of the form $(s_1,s_2)\xrightarrow{} (s_1,s_2)$.
\end{Prop}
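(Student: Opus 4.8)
The plan is to construct, on each side of $p_1$ and $p_2$, a monotone sequence of step points mapping to $(s_1,s_2)$ under $g$, to extract from these sequences an idempotent crossing label, and then to take $x_1,x_2$ among their terms.

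Fix $(s_1,s_2)\in F_g(p_1)\cap F_g(p_2)$. Since $\mathfrak L(w)$ is metrizable (Corollary~\ref{c:Ls-is-compact}) and $g^{-1}(s_1,s_2)$ is left cofinal at $p_1$, one obtains a strictly increasing sequence of step points $a_1<a_2<\cdots$ in $\,]z_1,p_1[\,$ converging to $p_1$ with $g(a_n)=(s_1,s_2)$; dually, using right cofinality at $p_2$, a strictly decreasing sequence $b_1>b_2>\cdots$ in $\,]p_2,z_2[\,$ converging to $p_2$ with $g(b_n)=(s_1,s_2)$. For step points $x<y$ write $\tau(x,y)\in\Om AA$ for the label of the unique edge $x\to y$ of $\Cl T(w)$ (unique by Proposition~\ref{p:transitions-in-F}) and $\psi(x,y)=\varphi_{\pv A}(\tau(x,y))$. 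As $[z_1,p_1[$ and $]p_2,z_2]$ are bridges and the $a_n,b_n$ lie in them, the edges $a_m\to a_n$ and $b_n\to b_m$ ($m<n$) are $g$-projected, so $\psi(a_m,a_n)$ and $\psi(b_n,b_m)$ are labels of loops of $\Cl T(s)$ at $(s_1,s_2)$, hence lie in the finite stabilizer monoid $M_{(s_1,s_2)}$. By Ramsey's theorem one may pass to subsequences on which $\psi(a_m,a_n)=c$ and $\psi(b_n,b_m)=d$ for all $m<n$, with $c=c^2$, $d=d^2$ in $M_{(s_1,s_2)}$; after a further extraction, $a_n\to\hat a\in p_1$ and $b_n\to\hat b\in p_2$ in $\mathfrak F(w)$, and $\psi$ of the label of $a_n\to b_n$ (unique since $a_n<p_1\le p_2<b_n$) equals a constant $\bar t$.

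One then interprets these data in $S$. By Corollary~\ref{c:convergence-of-transitions} on each side (note that $\Cl K_{p_1},\Cl K_{p_2}$ are trivial by Corollary~\ref{c:trivial-groupoid-at-type-2-point}, since $\Om AA$ is equidivisible, profinite and finitely cancelable), $\tau(a_m,a_n)\to\sigma_m$ as $n\to\infty$, where $\sigma_m$ is the label of $a_m\to\hat a$, and $\sigma_m\to e_1$, the label of the unique loop of $\Cl K_{p_1}$ at $\hat a$; this loop is idempotent, so $e_1$ is the unique idempotent of $J_{p_1}$ stabilizing $\hat a$ (using that $\Om AA$ is unambiguous, via Lemma~\ref{l:minimum-idempotents-labeling-loops-at-same-point}), and $\varphi_{\pv A}(e_1)=\lim_m\varphi_{\pv A}(\sigma_m)=c$. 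Dually $\varphi_{\pv A}(e_2)=d$ for the idempotent $e_2$ of $J_{p_2}$ stabilizing $\hat b$. Passing to the limit in $\tau(a_m,b_n)=\tau(a_m,b_{n'})\,\tau(b_{n'},b_n)$ shows that the labels of $a_m\to b_n$ converge to a label $\ell^\ast$ of an edge $\hat a\to\hat b$ which is a $\Cl J$-minimum transition from $p_1$ to $p_2$, so $\ell^\ast\in J_{p_1,p_2}$ and $\varphi_{\pv A}(\ell^\ast)=\bar t$; by Lemma~\ref{l:the-j-minimum-transitions}, $\ell^\ast=e_1\ell^\ast e_2$, whence $\bar t=c\,\bar t\,d$. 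Finally, the $\Cl J$-bridge hypotheses on $(p_1,p_2)$ force $c=\varphi_{\pv A}(e_1)$, $d=\varphi_{\pv A}(e_2)$ and $\bar t=\varphi_{\pv A}(\ell^\ast)$ to lie in a single $\Cl J$-class $\bar{\Cl J}$ of $S$, which is regular since it contains the idempotent $c$. (The case $p_1=p_2$ is covered by the same argument, recalling $J_{p_1,p_2}=J_{p_1}$.)

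The conclusion follows by running the algebra inside $\bar{\Cl J}$. From $\bar t=c\bar t d$ one gets $c\bar t=\bar t=\bar t d$, hence, by stability of $S$, $\bar t\mathrel{\Cl R}c$ and $\bar t\mathrel{\Cl L}d$, so $\bar t\in R_c\cap L_d$. This is the step where unambiguity of $S$ is essential: for idempotents $c,d$ in a regular $\Cl J$-class of a finite unambiguous semigroup, $R_c\cap L_d$ contains an idempotent, and aperiodicity then makes this $\Cl H$-class a singleton, so $\bar t=\bar t^2$. Now $\bar t\mathrel{\Cl R}c$ and $\bar t\mathrel{\Cl L}d$ give $\bar t c=c$ and $d\bar t=d$, whence, as $c,d\in M_{(s_1,s_2)}$, $s_1\bar t=s_1d\bar t=s_1d=s_1$ and $\bar t s_2=\bar t c s_2=c s_2=s_2$; thus $(s_1,s_2)\xrightarrow{\bar t}(s_1,s_2)$ is an idempotent loop of $\Cl T(s)$. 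For $n$ large in the final subsequence, the edge $a_n\to b_n$ of $\Cl T(w)$ is $g$-projected precisely to this loop (its label has $\varphi_{\pv A}$-image $\bar t$ and $g(a_n)=g(b_n)=(s_1,s_2)$), so $x_1=a_n\in[z_1,p_1[$ and $x_2=b_n\in\,]p_2,z_2]$ do the job. The main obstacle is the purely semigroup-theoretic fact about $R_c\cap L_d$ just invoked; everything else is bookkeeping with the convergence machinery already available.
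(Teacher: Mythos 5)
Your proof follows the same overall strategy as the paper's: extract monotone sequences of step points in $g^{-1}(s_1,s_2)$ converging to $p_1$ and $p_2$, identify the limiting idempotents $e_1,e_2$ via Corollary~\ref{c:convergence-of-transitions} and Corollary~\ref{c:trivial-groupoid-at-type-2-point}, use the $\Cl J$-bridge hypothesis to place $c,d,\bar t$ in a single regular $\Cl J$-class, and use stability, aperiodicity and unambiguity to force $\bar t$ to be the desired idempotent loop. The substitution of Ramsey's theorem for the paper's convergence-plus-continuity argument in stabilizing $\psi(a_m,a_n)$ is a harmless stylistic variant. The bookkeeping with $\tau,\psi$, the identification $\varphi_{\pv A}(\ell^\ast)=\bar t$ via Lemma~\ref{l:the-j-minimum-transitions}, and the final computation $s_1\bar t=s_1$, $\bar ts_2=s_2$ are all correct.

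There is, however, a genuine gap at the step you yourself flag as the crux. You assert as a general fact that, in a finite unambiguous semigroup, if $c,d$ are idempotents in the same regular $\Cl J$-class then $R_c\cap L_d$ contains an idempotent. This is not proved in the paper, you do not prove it, and it is not a standard fact one can simply cite; it is a nontrivial structural claim about the eggbox picture of regular $\Cl J$-classes in unambiguous semigroups, and it is far from obvious (to see this happening one would have to rule out, say, a Brandt-like configuration living above a nontrivial $\Cl L$-chain). The paper does not need any such general statement: since you have already established that $c$ and $d$ are $\Cl J$-equivalent idempotents that both stabilize $(s_1,s_2)$ (they label loops at that vertex), Lemma~\ref{l:minimum-idempotents-labeling-loops-at-same-point} --- which is exactly where unambiguity of $S$ enters --- immediately gives $c=d$, and then $\bar t\in R_c\cap L_c=H_c=\{c\}$ by aperiodicity, so $\bar t=c$ is idempotent with no further ado. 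You have all the ingredients for this simpler closing step in hand; replacing the unsupported assertion about $R_c\cap L_d$ with this appeal to Lemma~\ref{l:minimum-idempotents-labeling-loops-at-same-point} closes the gap and recovers essentially the paper's proof.
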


\begin{proof}
  Let $I_1=[z_1,p_1[$ and $I_2={]}p_2,z_2]$.
  By the definition of $F_g(p_i)$, there are
  strictly monotone sequences $(r_{i,m})_{m\geq 1}$ (increasing if
  $i=1$, decreasing if $i=2$)  of step points in $I_i\cap g^{-1}(s_1,s_2)$,
  converging in $\mathfrak {F}(w)$ to elements $r_i$ of $p_i$ ($i=1,2$).
    Denote by $e_i$
    the unique idempotent of $J_{p_i}$
    that stabilizes~$r_i$.

    For each $n\geq m$, let $t_{1,m,n}$ be the label of the unique
    edge from $r_{1,m}$ to $r_{1,n}$,
    and let
    $t_{2,m,n}$ be the label of the unique
    edge from $r_{2,n}$ to $r_{2,m}$.
    By Corollary~\ref{c:convergence-of-transitions}
    and its dual, for each $i\in\{1,2\}$ the sequence $(t_{i,m,n})_n$
    converges to an element $t_{i,m}$, and
    in turn the sequence $(t_{i,m})_m$ converges to~$e_i$.
    Therefore, by continuity of $\varphi_{\pv A}$,
    for each $i\in\{1,2\}$ we may
    take $m_i\geq 1$ and $n_i\geq m_i$ for which we have
    \begin{equation}\label{eq:approximation-jumps}
    \varphi_{\pv A}(e_i)=\varphi_{\pv A}(t_{i,m_i})=\varphi_{\pv A}(t_{i,m_i,n_i}).  
    \end{equation}

   Our hypothesis yields that the edge
  $r_{1,m_1}\xrightarrow{t_{1,m_1,n_1}} r_{1,n_1}$
  is $g$-projected
  to the edge
  $(s_1,s_2)\xrightarrow{\varphi_{\pv A}(e_1)}(s_1,s_2)$.
  In particular, $(s_1,s_2)$
  is stabilized by $\varphi_{\pv A}(e_1)$. Similarly,
  $(s_1,s_2)$
  is stabilized by $\varphi_{\pv A}(e_2)$.
  Therefore, $\varphi_{\pv A}(e_1)=\varphi_{\pv A}(e_2)$, by
  Lemma~\ref{l:minimum-idempotents-labeling-loops-at-same-point},
  since the finite semigroup $S$ is unambiguous.

  By Lemma~\ref{l:the-j-minimum-transitions},
   the unique minimum $\Cl J$-transition $\tau$
    from $r_1$ to $r_2$ is such that
    $\tau=e_1\tau e_2$.
  By the definition
  of \Cl J-bridge, we
  have $\varphi_{\pv A}(e_1)\mathrel{\Cl J}\varphi_{\pv A}(\tau)$.
  On the other hand, $\tau\leq_{\Cl R}e_1$ and $\tau\leq_{\Cl L}e_2$.
  Hence $\varphi_{\pv A}(e_1)\mathrel{\Cl H}\varphi_{\pv A}(\tau)$ because
  $\varphi_{\pv A}(e_1)=\varphi_{\pv A}(e_2)$ and $S$ is stable. Therefore,
  \begin{equation}
    \label{eq:e1-equals-tau-equals-e2}
  \varphi_{\pv A}(e_1)=\varphi_{\pv A}(\tau)=\varphi_{\pv A}(e_2),    
  \end{equation}
  since $S$ is aperiodic.       

  The unique edge (cf.~Proposition~\ref{p:transitions-in-F})
  from $r_{1,m_1}$ to $r_{2,m_2}$,
  with label $\zeta$,
  is factorized by the edges
  $r_{1,m_1}\xrightarrow{t_{1,m_1}} r_{1}$,
  $r_{1}\xrightarrow{\tau} r_{2}$
  and
  $r_{2}\xrightarrow{t_{2,m_2}} r_{2,m_2}$.
  Therefore $\zeta=t_{1,m_1}\tau t_{2,m_2}$.
  From~\eqref{eq:e1-equals-tau-equals-e2}
  and~\eqref{eq:approximation-jumps} we get
  $\varphi_{\pv A}(\zeta)=\varphi_{\pv A}(e_1)$.
  Therefore, $r_{1,m_1}\xrightarrow{} r_{2,m_2}$
  is $g$-projected to the idempotent loop
  $(s_1,s_2)\xrightarrow{\varphi_{\pv A}(\zeta)} (s_1,s_2)$.
  Since $r_{i,m_i}$ is a step point in $I_i$, this concludes the proof.
\end{proof}

We next show a pair of lemmas where the function~$g$ does not appear,
but which will be used later in this section in the proof of
Theorem~\ref{t:a-description-of-the-representation-theorem}.

\begin{Lemma}
  \label{l:transitions-near-a-stationary-point-at-finite-semigroup-level}
  Consider a pseudovariety $\pv V$ closed under concatenation.
  Let $w$ be an element of $\Om AV$.
  Consider a generating mapping $\varphi\colon A\to S$ of a semigroup in $\pv V$.
  For every stationary point $p$ of $\mathfrak{L}(w)$,
  there is a clopen interval $I$ containing $p$
  such that, if $t$ is a transition between elements $q,r$ of $I$
  with $q\leq r$, then
  $\varphi_{\pv V}(t)$ is
  a factor of the elements of~$\varphi_{\pv V}(J_p)$.
\end{Lemma}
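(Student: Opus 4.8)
The plan is to exploit the compactness of $\mathfrak{F}(w)$ together with the characterization of transitions from $p$ to $p$ provided by Lemma~\ref{l:a-trivial-characterization-of-Jp}, passing through finite quotients via the fact that $\pv V$ is closed under concatenation (so that Theorem~\ref{t:factorizing} and Lemma~\ref{l:a-condition-on-metric-semigroups} apply). First I would fix the stationary point $p$ and a factorization $(u,v)\in p$ stabilized by some element $z\in S$; such $z$ exists by Proposition~\ref{p:shifting-allowed-implies-type-2}, and moreover we may take $z\in J_p$ by passing to the minimum ideal of the stabilizer monoid. The key local observation is that if $q\leq r$ are factorizations close to $(u,v)$ and $t$ is a transition $q\to r$, then $t$ is ``sandwiched'' between pieces converging to a loop at $(u,v)$; concretely, one wants to show that the composite transition going from a point just below $p$ up through $r$, then $q$, then back up to a point just above $p$ is, in the limit, a loop at $p$, hence $\Cl J$-below $J_p$ by Lemma~\ref{l:a-trivial-characterization-of-Jp}, and then $t$ itself, being a factor of that composite, is $\Cl J$-below $J_p$ as well.

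The cleanest route to make this precise is by contradiction: suppose no clopen interval $I$ containing $p$ has the stated property. Since $\mathfrak{L}(w)$ is metrizable (Corollary~\ref{c:Ls-is-compact}) and clustered (Theorem~\ref{t:cluster}), the clopen intervals $[q_n, r_n]$ with $q_n < p < r_n$ and $q_n \to p$, $r_n \to p$ form a neighbourhood base at $p$. So there would be, for each $n$, factorizations $q_n \leq a_n \leq b_n \leq r_n$ and a transition $t_n\colon a_n\to b_n$ with $\varphi_{\pv V}(t_n)$ \emph{not} a factor of the elements of $\varphi_{\pv V}(J_p)$. Choosing transitions $q_n\to a_n$, $b_n \to r_n$, and also $r_n \to$ (some fixed step point above) and composing appropriately — using Corollary~\ref{c:convergence-of-transitions} to control the limits of the transitions from $q_n$ up to $p$ and down from $p$ — I would extract, by compactness of $S$, a subnet along which all these transition labels converge, with $t_n \to t$. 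The limit $t$ then labels an edge that is a factor of a loop at $p$ in $\Cl T(w)$, hence $t$ is a transition from $p$ to $p$, so $t$ is a factor of the elements of $J_p$ by Lemma~\ref{l:a-trivial-characterization-of-Jp}. Applying the continuous homomorphism $\varphi_{\pv V}$ and using that $\varphi_{\pv V}(t_n) = \varphi_{\pv V}(t)$ for $n$ large (as $S$ is finite), we get that $\varphi_{\pv V}(t_n)$ is a factor of $\varphi_{\pv V}(J_p)$ for large $n$, the desired contradiction.

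The main obstacle is the bookkeeping of the composite transition and ensuring that the limit $t$ genuinely is a \emph{factor} of a loop at $p$ (not merely a transition between points near $p$), so that Lemma~\ref{l:a-trivial-characterization-of-Jp} applies; this is where one must be careful to arrange $a_n, b_n$ between $q_n$ and $r_n$ with $q_n \to p \leftarrow r_n$ from both sides, and to glue a transition $q_n \to a_n \to b_n \to r_n$ whose limit, via Corollary~\ref{c:convergence-of-transitions} applied to the increasing sequence approaching $p$ from below and its dual from above, lands on a loop at $(u,v)$ in the trivial category $\Cl K_p$. A secondary point requiring the concatenation hypothesis is the passage between $\mathfrak{F}(w)$ and finite quotients: one uses that $\varphi_{\pv V}$ maps transitions to transitions and that a factor in $S$ of an element of $\varphi_{\pv V}(J_p)$ is witnessed already at the profinite level; since $S$ is finite this is automatic, but the statement ``$\varphi_{\pv V}(t)$ is a factor of the elements of $\varphi_{\pv V}(J_p)$'' is exactly the $\varphi_{\pv V}$-image of ``$t$ is a factor of the elements of $J_p$'', so no extra argument beyond continuity is needed. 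Once the contradiction is reached, the clopen interval $I$ is produced, completing the proof.
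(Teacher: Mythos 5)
Your proposal is correct, but it takes a genuinely different route from the paper's proof. The paper proceeds directly: using Proposition~\ref{p:approximation-type-2-by-step-points} and its dual it picks step points $p_n\nearrow (u,v)$ and $p'_n\searrow (u,v)$, fixes a loop $\varepsilon$ of $\Cl K_p$ at $(u,v)$, observes that $\tau_n\varepsilon\tau'_n$ converges to a loop $\tilde\varepsilon$ with $\Lambda(\tilde\varepsilon)\in J_p$, and then uses continuity to choose $N$ so that $\varphi_{\pv V}(\Lambda(\tau_N\varepsilon\tau'_N))\in\varphi_{\pv V}(J_p)$; since the transition from $p_N$ to $p'_N$ is unique (Proposition~\ref{p:transitions-in-F}), any transition inside $I=[p_N,p'_N]$ is a factor of it, and the conclusion follows. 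You instead argue by contradiction: assuming no clopen interval works, extract a sequence of ``bad'' transitions $t_n$ inside shrinking clopen intervals $[q_n,r_n]$ with step-point endpoints converging to $p$, pass to a convergent subsequence, observe that the limit $t$ is a transition from $p$ to $p$ (by closedness of the order relation and continuity of $\chi$), hence a factor of the elements of $J_p$ by Lemma~\ref{l:a-trivial-characterization-of-Jp}, and conclude by finiteness of $S$ and continuity of $\varphi_{\pv V}$ that $\varphi_{\pv V}(t_n)=\varphi_{\pv V}(t)$ eventually, a contradiction. Your route has the advantage of not needing the uniqueness of transitions (Proposition~\ref{p:transitions-in-F}) nor the $\Cl K_p$ machinery, relying only on clusteredness and metrizability of $\mathfrak L(w)$ to get the neighborhood base of clopen intervals; the paper's route has the advantage of producing the interval $I$ explicitly. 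Two small inaccuracies in your write-up, neither fatal: you invoke Theorem~\ref{t:factorizing}, Lemma~\ref{l:a-condition-on-metric-semigroups}, and Corollary~\ref{c:convergence-of-transitions} as if they were needed directly, but in your argument the concatenation hypothesis only enters indirectly through the structural facts (clusteredness, density of step points, the stabilizer characterization), and the limit transition being a transition $p\to p$ follows from mere compactness and Lemma~\ref{l:if-S-is-compact-then-the-quasi-order-is-closed} rather than Corollary~\ref{c:convergence-of-transitions}. Also, you needn't worry about $t$ being a \emph{factor} of a loop: since $a_n$ and $b_n$ converge (along the chosen subsequence) to elements of $p$, the limit $t$ is already a transition from $p$ to $p$, which is all Lemma~\ref{l:a-trivial-characterization-of-Jp} requires.
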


\begin{proof}
  Let $(u,v)\in p$. By
  Proposition~\ref{p:approximation-type-2-by-step-points}
  and its dual, there is in $\mathfrak F(w)$ an increasing sequence
  $(p_n)_n$
  of step points
  converging to $(u,v)$
  and a decreasing sequence
  $(p'_n)_n$
  of step points converging to $(u,v)$.
  Denote by $\tau_n$
  the unique edge
  from $p_n$ to $(u,v)$,
  and by
  $\tau'_n$
  the unique edge
  from $(u,v)$ to $p'_n$ (uniqueness of these edges is
  from Proposition~\ref{p:transitions-in-F}).
  Let $\varepsilon$ be a loop in $\Cl T_p$ at vertex $(u,v)$.
  Then $\tau_n\varepsilon\tau'_n$ converges to a loop $\tilde\varepsilon$ of
  $\Cl T(w)$ at vertex $(u,v)$. Since
  $\varepsilon$ is a factor of $\tilde\varepsilon$,
  we must have $\Lambda(\tilde\varepsilon)\in J_p$
  (thus we have $\tilde\varepsilon=\varepsilon$ by
  Corollary~\ref{c:trivial-groupoid-at-type-2-point}, although we
  shall not use that fact).
  Therefore, since $\varphi_\pv A$ and the labeling mapping $\Lambda$
  are continuous,
  there is $N$ such that
  $\varphi_\pv V(\Lambda(\tau_n\varepsilon\tau'_n))\in\varphi_\pv V(J_p)$
  for all $n\geq N$.
  Consider the clopen interval $I=[p_N,p'_N]$.
  Let $q$ and $r$ be elements of $I$ such that $q\leq r$,
  and let $t$ be a transition from $q$ to $r$. Then
  $t$ is a factor of a transition from $p_N$ to $p'_N$.
  But $\Lambda(\tau_N\varepsilon\tau'_N)$ is the unique transition
  from $p_N$ to $p'_N$, by Proposition~\ref{p:transitions-in-F}. Hence,
  $\varphi_\pv V(t)$ is a factor of the elements of $\varphi_\pv V(J_p)$.
\end{proof}

\begin{Lemma}\label{l:continuity-of-Jp}
  Let $(q_n)_n$ be a sequence of stationary points of
  $\mathfrak {L}(w)$  converging
  to $q$. For each $n$, let $u_n$ be an element of
  $J_{q_n}$. Suppose that $(u_n)_n$ converges to $u$.
  Then $u$ is \Cl J-above $J_q$.
\end{Lemma}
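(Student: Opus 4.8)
The plan is to use a compactness argument on the category of transitions combined with the characterization of $J_q$ as the set of $\Cl J$-minimum transitions from $q$ to itself (Lemma~\ref{l:a-trivial-characterization-of-Jp}) and the localization Lemma~\ref{l:transitions-near-a-stationary-point-at-finite-semigroup-level}. First I would fix $w\in\Om AV$ with $q,q_n\in\mathfrak L(w)$, and for each $n$ pick a loop $(x_n,y_n)\xrightarrow{u_n}(x_n,y_n)$ of $\Cl T(w)$ at a vertex in $q_n$ realizing $u_n\in J_{q_n}$ (possible by Corollary~\ref{c:idempotents-vs-sim-classes} and Lemma~\ref{l:a-trivial-characterization-of-Jp}; in fact we may take $u_n$ idempotent). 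By compactness of $\mathfrak F(w)$ we pass to a subnet so that $(x_n,y_n)$ converges to some $(x,y)\in\mathfrak F(w)$, and by Proposition~\ref{p:topology-of-Fw} we get $(x,y)\in q$. Taking limits of the loops $(x_n,y_n)\xrightarrow{u_n}(x_n,y_n)$ in the compact category $\Cl T(w)$ (continuity of the structure maps of $\Cl T(S)$ and of the labeling functor $\Lambda$), and using that $\lim u_n=u$, we obtain a loop $(x,y)\xrightarrow{u}(x,y)$ of $\Cl T(w)$. Then $u$ is a transition from $q$ to $q$ in the sense of Section~\ref{sec:linear-orders}, so by Lemma~\ref{l:a-trivial-characterization-of-Jp} $u$ is a factor of the elements of $J_q$, i.e.\ $u\ge_{\Cl J}J_q$, which is the assertion. (Here one must be a little careful that $u$ is a transition from \emph{some} element of $q$ to an element of $q$, which is exactly what the loop $(x,y)\xrightarrow{u}(x,y)$ provides.)

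The main obstacle I anticipate is whether this clean ``pass to the limit in the compact category'' argument is actually available at the required level of generality: the statement is phrased for $w\in\Om AA$ (or at least for a pseudovariety closed under concatenation), but it does not explicitly assume $S$ finite or $\varphi$ present — indeed $\varphi$ and $S$ do not appear in the statement at all, only $w$ and $\mathfrak L(w)$. So the proof is entirely internal to $\Om AV$, and the only thing one needs is that $\Om AV$ is a compact semigroup and that $\Cl T(\Om AV)$ is a compact category, both of which are recorded in the excerpt. If instead one wants a more hands-on argument avoiding limits of edges, I would argue as follows: for each $n$ let $t_n\in J_{q_n}$; by Lemma~\ref{l:a-trivial-characterization-of-Jp}, $t_n$ labels a loop at some vertex of $q_n$. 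Use Lemma~\ref{l:transitions-near-a-stationary-point-at-finite-semigroup-level} (or rather its ``internal'' version with $\varphi=\mathrm{id}$, which is just Lemma~\ref{l:a-trivial-characterization-of-Jp} plus Corollary~\ref{c:trivial-groupoid-at-type-2-point}) to see that, for $n$ large, $q_n$ lies in a small clopen interval $I$ around $q$ in which every transition is a factor of the elements of $J_q$; in particular $t_n$, being a transition between elements of $I$, is a factor of the elements of $J_q$, hence $t_n\ge_{\Cl J}J_q$. Passing to the limit $t_n\to u$ and using that $\ge_{\Cl J}J_q$ is a closed condition (the set of factors of a fixed element of the compact semigroup $\Om AV$ is closed, being $S^I\,z\,S^I$ with $S$ compact), we conclude $u\ge_{\Cl J}J_q$.

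Between the two, I would write up the second (interval) argument, since it isolates the one genuinely topological fact needed — closedness of the relation ``is a factor of $z$'' in a compact semigroup — and otherwise reduces to already-proved lemmas; the first is conceptually cleaner but requires stating and justifying convergence of nets of edges in $\Cl T(w)$, which is slightly more machinery than necessary here. Either way the proof is short: fix the setup, localize near $q$ using Lemma~\ref{l:transitions-near-a-stationary-point-at-finite-semigroup-level}, deduce $t_n$ is $\Cl J$-above $J_q$ for large $n$, and take the limit. The one point to double-check is that Lemma~\ref{l:transitions-near-a-stationary-point-at-finite-semigroup-level} applies with $S=\Om AV$ and $\varphi$ the identity generating map (so that $\varphi_\pv V=\mathrm{id}$ and $\varphi_\pv V(J_p)=J_p$), which is legitimate since $\Om AV$ is itself a pro-$\pv V$ semigroup generated by $A$.
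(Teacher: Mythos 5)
Your first (``limit-in-the-compact-category'') argument is essentially the paper's proof: fix representative edges in $\Cl T(w)$ for the transitions $u_n$, pass to a subnet so the endpoints converge in $\mathfrak F(w)$, use continuity of $\chi$ (Proposition~\ref{p:topology-of-Fw}) to place the limit endpoints in $q$, and conclude by Lemma~\ref{l:a-trivial-characterization-of-Jp}. One small imprecision: $u_n\in J_{q_n}$ in general labels an \emph{edge} between two vertices of $q_n$, not a loop (by Proposition~\ref{p:Jp-vs-labels-in-strongly-connected-component} it goes from $p_{e}$ to $p_{f}$ with $e\mathrel{\Cl R}u_n\mathrel{\Cl L}f$, and in the aperiodic setting $p_e=p_f$ only if $e=f$, i.e.\ only if $u_n$ is idempotent); and you cannot replace the given $u_n$ by an idempotent, since $u_n$ is part of the hypothesis. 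This is harmless: pass to subnets so both endpoints converge and observe that both limits land in $q$.

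The second argument, which you say you would actually write up, has a genuine gap. The ``internal version of Lemma~\ref{l:transitions-near-a-stationary-point-at-finite-semigroup-level} with $\varphi=\mathrm{id}$'' is false. That lemma's conclusion is about $\varphi_{\pv V}(t)$ in a \emph{finite} semigroup $S$, and the finiteness is essential: in the proof, $\Lambda(\tau_n\varepsilon\tau'_n)$ converges to an element of $J_p$ but is, for each finite $n$, strictly $\Cl J$-\emph{below} $J_p$ (it has the loop label $\varepsilon\in J_p$ as a proper factor); only after applying the continuous map $\varphi_{\pv V}$ into the discrete finite $S$ does convergence become eventual equality, giving $\varphi_{\pv V}(\Lambda(\tau_n\varepsilon\tau'_n))\in\varphi_{\pv V}(J_p)$. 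With $S=\Om AV$ and $\varphi=\mathrm{id}$ this collapse does not happen. Concretely, for any clopen interval $I=[p_N,p'_N]$ around $q$, the transition from $p_N$ to $p'_N$ contains an element of $J_q$ as a proper factor, hence is $\Cl J$-strictly below $J_q$, not above; so the assertion ``every transition between elements of $I$ is a factor of the elements of $J_q$'' fails. The purely pointwise claim ``$u_n\geq_{\Cl J}J_q$ for all large $n$'' that you want to extract from it is therefore not available from the cited lemmas, and the step ``in particular $t_n\geq_{\Cl J}J_q$'' is unjustified. Write up the first argument; it is correct and is what the paper does.
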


\begin{proof}
  The sequence of edges
  $(q_n\xrightarrow{u_n}q_{n})_n$ of~$\Cl T(w)$
  converges to
  $q\xrightarrow{u}q$.
  Since~$u$ labels a transition from $q$ to itself,
  $u$ is \Cl J-above $J_q$.
\end{proof}

In Lemma~\ref{l:continuity-of-Jp} it is not true
in general that $u\in J_q$.
In Example~\ref{ex:examples-of-cluster-words},
the stationary point $r$ such that $(a^\omega ba^\omega)^\omega$
belongs to $J_r$ is the limit of a sequence $(q_n)_n$ of stationary points 
such that $a^\omega\in J_{q_n}$, but $a^\omega\notin J_{r}$.

\begin{Def}[Mapping $\Gamma$]
  Let $J_S$ be the set of regular \Cl J-classes of~$S$, and
  denote by $\Upsilon$ the mapping
  $\stat(\mathfrak {L}(w))\to J_S$ sending a stationary
  point $p$ to the \Cl J-class containing $\varphi_{\pv A}(J_p)$.
  Denote by $\Gamma$ the mapping
  $\Upsilon\times F_g\colon \stat(\mathfrak {L}(w))\to J_S\times
  \mathrm{Im}(F_g)$
  sending $p$ to $(\Upsilon(p),F_g(p))$.
\end{Def}

In the next lemma, we continue to assume that $\mathfrak {L}_c(w)$
is $g$-recognized by~$(\varphi,s)$.

\begin{Lemma}\label{l:a-sufficient-condition-for-being-special}
  Suppose that the semigroup $S$ is unambiguous.
  Let $X,Y$ be elements of $\Cl O_w$ such that $X<Y$.
  Suppose that the restriction of $\Gamma$ to
  $\stat(\mathfrak{L}(w))\cap {[}X,Y{[}$ is constant.
  Then ${[}X,Y{[}$ is a special bridge.
\end{Lemma}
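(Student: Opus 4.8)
The plan is to prove that $[X,Y[$ is a special bridge by verifying directly that for every pair of step points $x<y$ in $[X,Y[$, the edge $x\xrightarrow{}y$ is $g$-projected. The key structural tool will be Lemma~\ref{l:cover-of-special-bridges-2}: it suffices to show that every stationary point $p\in[X,Y[$ is contained in some special bridge. So fix a stationary point $p$ in $[X,Y[$; I need to produce a clopen interval around $p$, of the form $[X',Y'[$ with $X',Y'\in\Cl O_w$, on which all edges between step points are $g$-projected. The hypothesis that $\Gamma$ is constant on $\stat(\mathfrak{L}(w))\cap[X,Y[$ means that for all stationary points $q$ in this range, both the regular \Cl J-class $\Upsilon(q)=\Upsilon(p)$ and the set $F_g(q)=F_g(p)$ are fixed; this uniformity is what should let the local bridges near each stationary point be glued together.

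First I would localize: apply Lemma~\ref{l:transitions-near-a-stationary-point-at-finite-semigroup-level} to get a clopen interval $I_0\ni p$ on which any transition $t$ between comparable elements $q\le r$ of $I_0$ satisfies $\varphi_{\pv A}(t)$ being a factor of $\varphi_{\pv A}(J_p)$. Shrinking $I_0$ further using Theorem~\ref{t:cluster} (density of step points) and the metrizability of $\mathfrak{L}(w)$, I may take $I_0$ of the form $[Z_1,Z_2[$ with $Z_1,Z_2$ step points, and then replace its endpoints by elements of $\Cl O_w$ to arrange $I_0=[X',Y'[$ with $X',Y'\in\Cl O_w$. Within $I_0$, the stationary points are exactly those stationary points of $[X,Y[$ lying in $I_0$, so by hypothesis $\Upsilon$ and $F_g$ are still constant there. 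The goal becomes: on this $I_0$, every edge $x\xrightarrow{}y$ between step points is $g$-projected.

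The heart of the argument, and the place I expect the real work to be, is a "big jump" step analogous to Proposition~\ref{p:towards-big-jumps}: given step points $x<y$ in $I_0$, I want to route the unique edge $x\xrightarrow{}y$ through a sequence of step points each separated from the next either by a finite interval (where Remark~\ref{r:projection-of-finite-paths} applies) or by a single stationary point (where a $g$-projected idempotent loop through that stationary point can be found). For the latter situation I would invoke Proposition~\ref{p:towards-big-jumps} applied to the \Cl J-bridge $(q,q)$ — or more generally $(q_1,q_2)$ where $q_1\le q_2$ are consecutive stationary points with no stationary point strictly between — using that the constancy of $\Gamma$ guarantees $F_g(q_1)\cap F_g(q_2)\neq\emptyset$ and that the relevant \Cl J-classes agree, so that conditions (1)–(4) of being a \Cl J-bridge hold (condition (4) using the factorization property from Lemma~\ref{l:transitions-near-a-stationary-point-at-finite-semigroup-level} together with Lemma~\ref{l:the-j-minimum-transitions} and the aperiodicity/unambiguity of $S$ to pin $\varphi_{\pv A}$ of the minimum transition into the same \Cl J-class). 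Between consecutive stationary points the set of intermediate step points is, by clusteredness, a finite union of "blocks" each of which is a finite interval or $[\Cl O_w(z),\rho(z)[\,\cup\,]\lambda(z),\Cl O_w(z)[$-type pieces already known to lie in bridges; chaining these via Remark~\ref{r:a-sort-of-homomorphism} yields that $x\xrightarrow{}y$ is $g$-projected.

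Once this is established for my chosen $I_0=[X',Y'[$ around the arbitrary stationary point $p$, I have shown every stationary point of $[X,Y[$ sits in a special bridge, so Lemma~\ref{l:cover-of-special-bridges-2} gives that $[X,Y[$ itself is a special bridge, completing the proof. The main obstacle I anticipate is the careful bookkeeping in the "big jump" step: ensuring that one can always find the interior step points $x_1,x_2$ supplied by Proposition~\ref{p:towards-big-jumps} within the bridges on either side of a stationary point, and that the idempotent loops produced at successive stationary points are compatible so that composing them along a finite chain (there are only finitely many stationary points between $x$ and $y$ since $[x,y]$ is compact and step points are dense, hence $[x,y]$ has finitely many stationary points... actually one must argue this carefully, perhaps via a compactness/covering argument rather than finiteness) actually yields a $g$-projected edge $x\xrightarrow{}y$ with the correct label. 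I would handle the possible infinitude of intermediate stationary points by a compactness argument: cover $[x,y]$ by the special bridges around each stationary point together with finite intervals around each step point, extract a finite subcover, and chain through the finitely many pieces.
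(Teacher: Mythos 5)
Your overall skeleton (reduce to Lemma~\ref{l:cover-of-special-bridges-2}, localize around a stationary point $p$ via Lemma~\ref{l:transitions-near-a-stationary-point-at-finite-semigroup-level}, and invoke Proposition~\ref{p:towards-big-jumps}) matches the paper, but the heart of your argument --- chaining from $x$ to $y$ by jumping over one stationary point at a time, or over pairs of ``consecutive'' stationary points --- has a genuine gap. Stationary points can accumulate at stationary points (see Example~\ref{ex:examples-of-cluster-words}, where the big bullet is a limit of the small bullets), so between two step points $x<y$ there may be infinitely many stationary points with no consecutive decomposition; worse, at a stationary point $q$ that is a left limit of stationary points there is \emph{no} step point $z_1$ with ${[}z_1,q{[}$ finite, so the bridges that Proposition~\ref{p:towards-big-jumps} needs on either side of $q$ cannot be supplied by Remark~\ref{r:projection-of-finite-paths}. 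Your proposed repair --- cover ${[}x,y{]}$ by ``the special bridges around each stationary point'' and extract a finite subcover --- is circular: those special bridges are exactly what the lemma is trying to construct, and producing them at an accumulation point of stationary points is the unresolved case.

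The paper's proof sidesteps all of this with a single jump. Given step points $x<y$ in ${[}X',Y'{[}$ that are not $\approx$-equivalent, it applies Proposition~\ref{p:towards-big-jumps} \emph{once} to the pair $(\rho(x),\lambda(y))$, which spans all intermediate stationary points simultaneously. The definition of \Cl J-bridge places no adjacency requirement on $p_1\leq p_2$: constancy of $F_g$ gives $F_g(\rho(x))\cap F_g(\lambda(y))\neq\emptyset$, constancy of $\Upsilon$ together with Lemma~\ref{l:the-j-minimum-transitions} and the localization from Lemma~\ref{l:transitions-near-a-stationary-point-at-finite-semigroup-level} pins $\varphi_{\pv A}$ of the \Cl J-minimum transition into $\Upsilon(p)$, so $(\rho(x),\lambda(y))$ is a \Cl J-bridge. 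The flanking intervals ${[}x,\rho(x){[}$ and ${]}\lambda(y),y{]}$ are contained in the $\approx$-classes of $x$ and $y$, hence are bridges by Remark~\ref{r:projection-of-finite-paths}, and Proposition~\ref{p:towards-big-jumps} plus Remark~\ref{r:a-sort-of-homomorphism} then give that $x\xrightarrow{}y$ is $g$-projected. No chaining over intermediate stationary points is ever needed. You should replace your inductive/covering chaining by this single application of Proposition~\ref{p:towards-big-jumps} to $(\rho(x),\lambda(y))$.
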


\begin{proof}
  Let $p\in \stat(\mathfrak{L}(w))\cap {[}X,Y{[}$.
  By Lemma~\ref{l:cover-of-special-bridges-2},
  it suffices to prove that there is a special bridge
  containing $p$.
  Let $I$ be a clopen interval containing $p$
  satisfying the properties described in
  Lemma~\ref{l:transitions-near-a-stationary-point-at-finite-semigroup-level},
  and let $[x_0,y_0]$
  be the clopen interval $I\cap {[}X,Y{[}$, which is indeed nonempty as it
  contains~$p$.
  Note that $x_0$ and $y_0$ are step points.
  Let $X'=\Cl O_w(x_0)$
  and $Y'=\Cl O_w(y_0)$.
  By the definition of $\Cl O_w$, $x_0\geq X$ implies
  $X'\geq X$, and $y_0\leq Y$ implies
  $Y'\leq Y$, thus ${[}X',Y'{[}\subseteq {[}X,Y{[}$.
  Therefore, we have $p\in {[}X',Y'{[}$ and
  ${[}\rho(X'),\lambda(Y'){]}
  \subseteq I\cap {[}X,Y{[}$.
  
  Let $x,y$ be step points in ${[}X',Y'{[}$ such that
  $x<y$.
  We want to show that $x\xrightarrow{} y$ is $g$-projected.
  If
  $x\prec\prec y$,  then we apply
  Remark~\ref{r:projection-of-finite-paths},
  so we suppose that is not the case,
  which implies $\rho (x)\leq\lambda(y)$.
  Note that ${[}\rho(x),\lambda(y){]}
  \subseteq {[}\rho(X'),\lambda(Y'){]}$.
  If $t$ is a $\Cl J$-minimum
  transition from
  $\rho (x)$ to $\lambda(y)$
  then $t\leq_{\Cl J}J_{\rho(x)}$
  and $t\leq_{\Cl J} J_{\lambda(y)}$ by Lemma~\ref{l:the-j-minimum-transitions},
  thus
  $\varphi_{\pv A}(t)\leq_{\Cl J}\Upsilon(\rho(x))$
  and $\varphi_{\pv A}(t)\leq_{\Cl J}\Upsilon(\lambda(y))$.
  Hence
  $\varphi_{\pv A}(t)\leq_{\Cl J}\Upsilon(p)$,
  because the restriction of $\Upsilon$
  to $\stat(\mathfrak{L}(w))\cap {[}X,Y{[}$ is constant.
  By the choice of~$I$
    (cf.~Lemma~\ref{l:transitions-near-a-stationary-point-at-finite-semigroup-level}),
    this implies $\varphi_{\pv A}(t)\in \Upsilon(p)$.
  Therefore, since the restriction of $\Gamma$
  to $\stat(\mathfrak{L}(w))\cap {[}X,Y{[}$ is constant,
  we have shown that the pair $(\rho (x),\lambda(y))$ is a \Cl
  J-bridge.
     By Remark~\ref{r:projection-of-finite-paths},
     ${[}x,\rho(x){[}$
     and ${]}\lambda(y),y{]}$
     are bridges, whence it follows from
     Proposition~\ref{p:towards-big-jumps}
     that there are step points $x'\in {[}x,\rho(x){[}$
     and $y'\in {]}\lambda(y),y{]}$ such that
     $x'\xrightarrow{} y'$ is $g$-projected.
     Therefore $x\xrightarrow{} y$ is $g$-projected
     by Remarks~\ref{r:projection-of-finite-paths}
     and \ref{r:a-sort-of-homomorphism}.     
     This concludes the proof that ${[}X',Y'{[}$ is a special bridge
     containing $p$.
\end{proof}

\begin{Prop}\label{p:the-really-big-jump}
  Suppose that the finite aperiodic semigroup $S$ is unambiguous.
  Then every stationary point $p$
  of $\mathfrak{L}(w)$ is contained in some special bridge.
\end{Prop}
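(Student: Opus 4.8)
The plan is to argue by induction on the number $n(p)$ of regular $\Cl J$-classes of $S$ lying $\Cl J$-below $\Upsilon(p)$, proving that every stationary point $p$ of $\mathfrak L(w)$ with $n(p)\leq k$ belongs to some special bridge; the base case $k=1$ (where $\Upsilon(p)$ is $\Cl J$-minimal) is subsumed by the inductive step. The induction hypothesis should be used in the form: every stationary point $q$ with $n(q)<n(p)$ lies in a special bridge.

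The first step is to localize. Given $p$, I would intersect the clopen interval furnished by Lemma~\ref{l:transitions-near-a-stationary-point-at-finite-semigroup-level} (on which $\varphi_{\pv A}$ of the label of any transition between two of its points is a factor of $\varphi_{\pv A}(J_p)$) with the open interval furnished by Lemma~\ref{l:if-P-is-maximal-then-inverse-image-by-g-is-closed}\ref{item:if-P-is-maximal-2} (on which $F_g(q)\subseteq F_g(p)$ for every stationary $q$), and then use the density of $\step(\mathfrak L(w))$ and the fact that closed intervals with step-point endpoints are clopen to obtain a clopen interval $[X,Y[$ with $X,Y\in\Cl O_w$ and $X<p<Y$ on which both properties hold. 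Applying Lemma~\ref{l:transitions-near-a-stationary-point-at-finite-semigroup-level} to a loop labelled by an element of $J_q$ shows that every stationary point $q$ of $[X,Y[$ satisfies $\Upsilon(q)\leq_{\Cl J}\Upsilon(p)$, hence $n(q)\leq n(p)$; those with $n(q)<n(p)$ lie, by the induction hypothesis, in a special bridge, which after intersecting with $[X,Y[$ (intersections of intervals with $\Cl O_w$-endpoints again have $\Cl O_w$-endpoints) we may take inside $[X,Y[$. The remaining stationary points of $[X,Y[$ are the \emph{peers} of $p$, i.e.\ those $q$ with $\Upsilon(q)=\Upsilon(p)$ (equality holds because $\leq_{\Cl J}$ together with $n(q)=n(p)$ forces it).

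The second step is to extract the \Cl J-bridge structure: each peer $q$ of $[X,Y[$ forms a \Cl J-bridge with $p$, namely $(q,p)$ if $q\leq p$ and $(p,q)$ if $q\geq p$. The conditions on $\Cl J$-classes hold since all peers share the value $\Upsilon(p)$; for the condition on $\Cl J$-minimum transitions, Lemma~\ref{l:the-j-minimum-transitions} produces $\tau$ with $\tau\leq_{\Cl R}J_q$ and $\tau\leq_{\Cl L}J_p$ (or symmetrically), while the choice of $[X,Y[$ makes $\varphi_{\pv A}(\tau)$ a factor of $\varphi_{\pv A}(J_p)$, so that stability and aperiodicity of $S$ force $\varphi_{\pv A}(\tau)$ into the single trivial $\Cl H$-class of $\Upsilon(p)$ cut out by the $\Cl R$- and $\Cl L$-classes of the idempotents stabilizing the two points; and $F_g(q)\cap F_g(p)=F_g(q)\neq\emptyset$, since $F_g(q)$ is a nonempty subset of $F_g(p)$ (nonemptiness of these sets following from the pigeonhole principle applied to the finitely many $2$-factorizations of $s$, via condition \ref{item:run-4}). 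Granting this, one assembles a special bridge around $p$: each \Cl J-bridge over a peer is crossed using Proposition~\ref{p:towards-big-jumps}, its flanking intervals being bridges thanks to Remark~\ref{r:projection-of-finite-paths} (finitely many $\prec\prec$-comparable step points) and to the special bridges already available around the lower-rank points; the $g$-projected edges so produced are concatenated through Remark~\ref{r:a-sort-of-homomorphism}, and Lemma~\ref{l:cover-of-special-bridges} together with Lemma~\ref{l:cover-of-special-bridges-2} glues the pieces into a single special bridge. On any subinterval of $[X,Y[$ on which $\Gamma=\Upsilon\times F_g$ is constant on stationary points — in particular when $n(p)=1$ and one first arranges $F_g$ to be constant — Lemma~\ref{l:a-sufficient-condition-for-being-special} applies outright.

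I expect the assembly step to be the main obstacle. Inside $[X,Y[$ there may be infinitely many peers, possibly accumulating to $p$ and to one another, as well as infinitely many lower-rank stationary points, so no single crossing suffices and one cannot reduce $p$ to a point "isolated among the complicated stationary points". The robust way around this, which I expect the actual proof to follow, is to run the induction not point by point but over whole $\Cl J$-classes at once — establishing simultaneously, for all stationary points with a fixed value of $\Upsilon$, that they lie in special bridges — and to organize the crossings of Proposition~\ref{p:towards-big-jumps} so that finitely many of them, together with the special bridges around lower-rank points, cover any prescribed clopen subinterval (this is what makes Lemma~\ref{l:cover-of-special-bridges} and Lemma~\ref{l:cover-of-special-bridges-2} applicable); the bookkeeping of these nested and overlapping special bridges, and the final verification that every edge between step points of the constructed interval is $g$-projected, is the technical core. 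The hypotheses that $S$ is unambiguous and aperiodic are used essentially in Lemma~\ref{l:the-j-minimum-transitions}, in Proposition~\ref{p:towards-big-jumps}, and to identify $\varphi_{\pv A}(\tau)$ with the idempotents of $\Upsilon(p)$.
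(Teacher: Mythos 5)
Your proposed induction variable is too coarse, and that is the decisive gap. You induct on $n(p)$, the number of regular $\Cl J$-classes lying $\Cl J$-below $\Upsilon(p)$; the paper instead inducts on the cardinality of $G(p)=\{\Gamma(q):\Gamma(q)<\Gamma(p)\}$, where $\Gamma=\Upsilon\times F_g$ and the order on $J_S\times\mathrm{Im}(F_g)$ is ``strictly lower in $\leq_{\Cl J}$, or the same $\Cl J$-class and $\subseteq$ on the $F_g$-sets.'' Your scheme drops the $F_g$-component entirely, and this matters. After shrinking to the clopen interval $[X,Y[$ you correctly get, for every stationary $q$ in it, both $\Upsilon(p)\leq_{\Cl J}\Upsilon(q)$ and $F_g(q)\subseteq F_g(p)$. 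Your ``peers'' are the $q$ with $\Upsilon(q)=\Upsilon(p)$ -- but among these there may be points with $F_g(q)\subsetneq F_g(p)$. Such $q$ satisfy $n(q)=n(p)$, so your induction hypothesis says nothing about them; and they cannot all be disposed of by crossings, because two such peers $q_1,q_2$ need not satisfy $F_g(q_1)\cap F_g(q_2)\neq\emptyset$ (both are nonempty subsets of $F_g(p)$, which is no guarantee of a common element), so $(q_1,q_2)$ need not be a $\Cl J$-bridge. In the paper's proof these points fall into the set $M$, and $\Gamma(q)<\Gamma(p)$ holds precisely because of the $F_g$-coordinate, so the (finer) induction hypothesis applies to them.

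Your fallback plan -- crossing each peer $q$ with $p$ itself via the $\Cl J$-bridge $(q,p)$ -- also does not compose. To cross $(q,p)$ with $q<p$, Proposition~\ref{p:towards-big-jumps} needs $]p,z_2]$ to be a bridge; but if peers accumulate to $p$ from the right, establishing that $]p,z_2]$ is a bridge already requires handling those peers, which is the very thing you are trying to prove. The paper avoids this circularity by showing (Lemma~\ref{l:Hp-is-closed}, using upper semi-continuity of $\Upsilon$ via Lemma~\ref{l:continuity-of-Jp} and the $F_g$-containment) that the set $N$ of $\Gamma$-peers of $p$ in $[X,Y[$ is closed, and then crossing in one jump from the least $\Gamma$-peer $r_x$ in a given $[x,y]$ to the greatest $\Gamma$-peer $r_y$; the pair $(r_x,r_y)$ is a $\Cl J$-bridge precisely because $\Gamma$ is constant on $N$, and the flanking intervals $[x,r_x[$ and $]r_y,y]$ contain only $M$-points, each already inside a special bridge by the (finer) induction hypothesis. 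Your preliminary steps, your use of Lemma~\ref{l:transitions-near-a-stationary-point-at-finite-semigroup-level}, of Lemma~\ref{l:if-P-is-maximal-then-inverse-image-by-g-is-closed}, of Lemma~\ref{l:the-j-minimum-transitions}, and of the gluing Lemmas~\ref{l:cover-of-special-bridges} and \ref{l:cover-of-special-bridges-2}, all match the paper; you correctly anticipated that the induction must be organized at the level of whole classes, but the classes must be fibers of $\Gamma$, not of $\Upsilon$.
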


\begin{proof}
  Endow $J_S\times \mathrm{Im}(F_g)$ with the following partial order:
\begin{equation*}
  (J_1,P_1)\leq (J_2,P_2)\iff (J_2<_{\Cl J}J_1\text{ or
  }(J_1=J_2\text{ and }P_1\subseteq P_2)).
\end{equation*}  
For each $p\in\stat(\mathfrak{L}(w))$, let
  \begin{equation*}
    G(p)=\{\Gamma(q)\,:\,q\in \stat(\mathfrak{L}(w))\text{ and }\Gamma(q)<\Gamma(p)\}.
  \end{equation*}
  We shall prove by induction on the cardinal of $G(p)$
  that $p$ is contained in some special bridge.
  
  We start with some preliminary remarks.  
  Let $I=[\alpha,\beta]$ be a clopen interval containing $p$
  satisfying the properties described in
  Lemmas~\ref{l:transitions-near-a-stationary-point-at-finite-semigroup-level}
  and~\ref{l:if-P-is-maximal-then-inverse-image-by-g-is-closed}\ref{item:if-P-is-maximal-2}.
  Let $X=\Cl O_w(\alpha)$ and $Y=\Cl O_w(\beta)$.
  Then the following holds:
  $$  \stat(\mathfrak{L}(w))\cap I=
  \stat(\mathfrak{L}(w))\cap {[}X,Y{[}.$$
  Let $q\in \stat(\mathfrak{L}(w))\cap {[}X,Y{[}$.
  Then we have $\Upsilon(p)\leq_{\Cl J}\Upsilon(q)$
  by the choice
  of $I$
  (cf.~Lemma~\ref{l:transitions-near-a-stationary-point-at-finite-semigroup-level}).
  It also follows from the choice of $I$
  (cf.~Lemma \ref{l:if-P-is-maximal-then-inverse-image-by-g-is-closed}\ref{item:if-P-is-maximal-2})
  that $F_g(q)\subseteq F_g(p)$.
  Hence we have
  \begin{equation}
    \label{eq:a-sort-of-maximality-of-p}
    \Gamma(q)\leq\Gamma(p)\quad \text{for all $q\in \stat(\mathfrak{L}(w))\cap {[}X,Y{[}$}.
  \end{equation}
  
  {\it Initial step.}
  Suppose that $G(p)=\emptyset$. It follows
  from~\eqref{eq:a-sort-of-maximality-of-p}
  that $\Gamma(q)=\Gamma(p)$ for every $q\in \stat(\mathfrak{L}(w))\cap {[}X,Y{[}$.
  Therefore, ${[}X,Y{[}$ is a special bridge, by
  Lemma~\ref{l:a-sufficient-condition-for-being-special}.
  
  {\it Inductive step.}
  Suppose that for every
  stationary point $q$ such that
  $|G(q)|<|G(p)|$, there is a special bridge containing $q$.
  Consider the sets
  \begin{align*}
    M&=\{q\in \stat(\mathfrak{L}(w))\cap {[}X,Y{[}:\Gamma(q)<\Gamma(p)\},\\
    N&=\{q\in \stat(\mathfrak{L}(w))\cap {[}X,Y{[}:\Gamma(q)=\Gamma(p)\}.\
  \end{align*}
  Note that~$M\cup N=\stat(\mathfrak{L}(w))\cap {[}X,Y{[}$,
  by~\eqref{eq:a-sort-of-maximality-of-p}.
  Note also that, by the induction hypothesis,
  every element of $M$ is contained in some special bridge,
  as $\Gamma(q)<\Gamma(p)$ implies $G(q)\subsetneq G(p)$
  with $\Gamma(q)\in G(p)\setminus G(q)$.
  Our goal is to apply Lemma~\ref{l:cover-of-special-bridges-2}
  to the interval ${[}X,Y{[}$. So,
  it remains to show that every element of $N$ is contained in some
  special bridge.
  For that purpose, we prove the following lemma.  
  
  \begin{Lemma}\label{l:Hp-is-closed}
    The set $N$ is closed.
  \end{Lemma}
  
  \begin{proof}
    Consider a sequence $(q_n)_n$ of elements of $N$
    converging to~$q$. Since
    $[X,Y[$ and
    $\stat(\mathfrak{L}(w))$ are closed (the latter in view of
    Proposition~\ref{p:shifting-allowed-implies-type-2}), we only need to prove
    that $\Gamma(q)=\Gamma(p)$.
    
    By
    Lemma~\ref{l:if-P-is-maximal-then-inverse-image-by-g-is-closed}\ref{item:if-P-is-maximal-2},
    we have $F_g(q_n)\subseteq F_g(q)$
    for sufficiently large $n$.
    By the definition of $N$,
    we have $F_g(q_n)= F_g(p)$ for all $n$,
    whence $F_g(p)\subseteq F_g(q)$.
    On the other hand, since
    $q\in {[}X,Y{[}$
    and thus $q\in I$, by the choice of $I$
    (cf.~Lemma~\ref{l:if-P-is-maximal-then-inverse-image-by-g-is-closed}\ref{item:if-P-is-maximal-2})
    we have $F_g(q)\subseteq F_g(p)$.
    This concludes the proof of the equality $F_g(q)=F_g(p)$.
    
    By Lemma~\ref{l:continuity-of-Jp},
    we have $\Upsilon (q)\leq_{\Cl J}\Upsilon (q_n)$
    for some sufficiently large $n$.
    By the definition of $N$,
    we have $\Upsilon (p)=\Upsilon (q_n)$
    for all $n$, whence $\Upsilon (q)\leq_{\Cl J}\Upsilon (p)$.
    On the other hand,
    since $q\in [X,Y[$ we have $\Upsilon (p)\leq_{\Cl J}\Upsilon (q)$
    by~\eqref{eq:a-sort-of-maximality-of-p}.
    Therefore we have $\Gamma(p)=\Gamma(q)$.
  \end{proof}

  Let us proceed with the proof of Proposition~\ref{p:the-really-big-jump}.
    Suppose that $q\in N$.
  Let $I'$ be a clopen interval containing $q$,
  and contained in ${[}X,Y{[}$,
  satisfying the properties described in
  Lemmas~\ref{l:transitions-near-a-stationary-point-at-finite-semigroup-level}
  and~\ref{l:if-P-is-maximal-then-inverse-image-by-g-is-closed}\ref{item:if-P-is-maximal-2}.
  In a similar way as we have built $X,Y$ from $I$,
  we may build from $I'$ elements $X',Y'\in \Cl O_w$ such that
  $q\in {[}X',Y'{[}$ and
  $\stat(\mathfrak{L}(w))\cap {[}X',Y'{[}\subseteq I'$.
  
  Let $x$ and $y$ be two step points of ${[}X',Y'{[}$ such that
  $x<y$. We want to prove that the edge $x\xrightarrow{} y$
  is $g$-projected.
  By Remark~\ref{r:projection-of-finite-paths},
  we may as well assume that $x$ and $y$ are not $\approx$-equivalent.
  
  Suppose $N\cap [x,y]=\emptyset$. Then every element of
  the nonempty intersection $\stat(\mathfrak{L}(w))\cap {[}x,y{[}$
  belongs to $M$ and is therefore
  contained in a special bridge, as observed earlier.
  Hence ${[}\Cl O_w(x),\Cl O_w(y){[}$ is a special bridge
  by Lemma~\ref{l:cover-of-special-bridges-2}.
  It follows that the edge $x\xrightarrow{} y$ is
  $g$-projected (cf.~Remarks~\ref{r:projection-of-finite-paths}
  and~\ref{r:a-sort-of-homomorphism}).
  
  Suppose that $N\cap [x,y]\neq\emptyset$.
  Then
  $N\cap [x,y]$ has a minimum $r_x$ and a maximum $r_y$,
  because $N\cap [x,y]$ is closed by Lemma~\ref{l:Hp-is-closed}.

  Let $t$ be a \Cl J-minimum transition from
  $r_x$ to~$r_y$.
  Then we have $t\leq_{\Cl J} J_{r_x}$ by Lemma~\ref{l:the-j-minimum-transitions},
  thus $\varphi_{\pv A}(t)\leq_{\Cl J}\Upsilon(r_x)$.
  On the other hand, $r_x,r_y\in I'$, and so, by the choice of $I'$
  (cf.~Lemma~\ref{l:transitions-near-a-stationary-point-at-finite-semigroup-level}),
  we know that $\varphi_{\pv A}(t)\geq_{\Cl J}\Upsilon(q)$.
  Since by the definition of $N$, the equalities
  $\Upsilon(r_x)=\Upsilon(r_y)=\Upsilon(q)$ hold,
  we obtain $\varphi_{\pv A}(t)\in \Upsilon(q)$.
  Also by the definition of $N$,
  we conclude that $F_g(r_x)=F_g(r_y)$.
  This shows that $(r_x,r_y)$ is a $\Cl J$-bridge.

  We claim  that ${[}x,r_x{[}$ is a bridge.
  Let $x_1$ and $x_2$ be step points such that $x<x_1\leq x_2<r_x$.
  We want to show that $x_1\xrightarrow{} x_2$ is $g$-projected,
  for which we may as well assume that $x_1$ and
  $x_2$ are not $\approx$-equivalent.
  Recall that $M\cup N=\stat(\mathfrak{L}(w))\cap {[}X,Y{[}$
  by~\eqref{eq:a-sort-of-maximality-of-p}, and so,
  by the definition of $r_x$,
  the set
  $\stat(\mathfrak{L}(w))\cap {[}\Cl O_w(x_1),\Cl O_w(x_2){[}$
  is contained in $M$.
  But we already observed that every element of $M$ is contained in some
  special bridge.
  Hence, ${[}\Cl O_w(x_1),\Cl O_w(x_2){[}$ is a special bridge
  by Lemma~\ref{l:cover-of-special-bridges-2}.
  and so $x_1\xrightarrow{} x_2$ is $g$-projected ((cf.~Remarks~\ref{r:projection-of-finite-paths}
  and~\ref{r:a-sort-of-homomorphism})), thus
  proving the claim. Similarly, ${]}r_y,y{]}$ is a bridge.

  By Proposition~\ref{p:towards-big-jumps},
  there are step points $x_0$ and $y_0$
  satisfying $x<x_0<r_x$ and $r_y<y_0<y$
  such that  $x_0\xrightarrow{} y_0$ is $g$-projected.
  Again
  by the definition of $r_x$ and by~\eqref{eq:a-sort-of-maximality-of-p},
  every element of
  $\stat(\mathfrak{L}(w))\cap {[}\Cl O_w(x),\Cl O_w(x_0){[}$
  belongs to $M$, and so ${[}\Cl O_w(x),\Cl O_w(x_0){[}$
  is a special bridge
  by Lemma~\ref{l:cover-of-special-bridges-2}.
  Hence, the edge 
  $x\xrightarrow{} x_0$ is $g$-projected.
  Similarly, $y_0\xrightarrow{} y$ is $g$-projected.
  It follows that $x\xrightarrow{} y$ is $g$-projected by
  Remark~\ref{r:a-sort-of-homomorphism}.
  
  We showed in all cases that
  $x\xrightarrow{} y$ is $g$-projected. Hence,
  ${[}X',Y'{[}$ is a special bridge containing~$q$.

  We have established that every element of $\stat(\mathfrak{L}(w))\cap {[}X,Y{[}$
  is contained in some special bridge.
  From Lemma~\ref{l:cover-of-special-bridges-2},
  we then deduce that ${[}X,Y{[}$ is a special bridge containing~$p$.
  This concludes the inductive step in our proof, thereby showing
  that the proposition holds.
\end{proof}

\begin{Thm}\label{t:g-is-a-category-homomorphism}
  Let $w$ be an element of $\Om AV$.
  Let $\varphi\colon A\to S$ be a generating mapping of a finite aperiodic
  unambiguous semigroup $S$.
  Suppose that $\mathfrak{L}_c(w)$
  is $g$-recognized by $(\varphi,s)$.
  Then $\mathfrak{L}(w)$ is a bridge.
\end{Thm}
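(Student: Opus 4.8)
The plan is to unwind definitions: since $\mathfrak{L}(w)=[(1,w),(w,1)]$ is the whole space, it is a nonempty open interval, so showing that it is a bridge amounts to checking that for every pair of step points $x<y$ of $\mathfrak{L}(w)$ the edge $x\xrightarrow{}y$ is $g$-projected. First I would dispose of the case $w\in A^+$: then $(1,w)\approx(w,1)$, hence $(1,w)\prec\prec(w,1)$, so $\mathfrak{L}(w)$ is a finite linear order all of whose points are step points; for any $x<y$ we have $x\prec\prec y$, and Remark~\ref{r:projection-of-finite-paths} yields that $x\xrightarrow{}y$ is $g$-projected. Thus $\mathfrak{L}(w)$ is a bridge in this case.

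So assume $w\notin A^+$. Then $(1,w)\neq(w,1)$, and by the defining restriction on $\Cl O_w$ we have $(1,w),(w,1)\in\Cl O_w$ with $(1,w)<(w,1)$. By Proposition~\ref{p:the-really-big-jump}, every stationary point of $\mathfrak{L}(w)$---in particular every stationary point lying in $[(1,w),(w,1)[$---is contained in some special bridge. Applying Lemma~\ref{l:cover-of-special-bridges-2} with $X=(1,w)$ and $Y=(w,1)$ then gives that $[(1,w),(w,1)[$ is a special bridge, hence a bridge; consequently every edge $x\xrightarrow{}y$ between step points with $(1,w)\le x<y<(w,1)$ is $g$-projected.

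It remains to handle edges ending at the maximum $(w,1)$. Since $(w,1)=\max\mathfrak{L}(w)$ while $w\notin A^+$ forces $(w,1)\neq\min\mathfrak{L}(w)$, clusteredness (condition \ref{item:clustered-3} of Definition~\ref{def:clustered-sets}) gives a predecessor $p$ of $(w,1)$; this $p$ is a step point lying in $[(1,w),(w,1)[$. Given a step point $x<(w,1)$, linearity together with $p\prec(w,1)$ forces $x\le p$. If $x<p$, then $x\xrightarrow{}p$ is $g$-projected because $x$ and $p$ both lie in the special bridge $[(1,w),(w,1)[$, while $p\xrightarrow{}(w,1)$ is $g$-projected since $p\prec(w,1)$ (cf.\ condition \ref{item:run-3}, or Remark~\ref{r:projection-of-finite-paths}), so $x\xrightarrow{}(w,1)$ is $g$-projected by Remark~\ref{r:a-sort-of-homomorphism}; if $x=p$, then $x\xrightarrow{}(w,1)$ is $g$-projected directly. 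Combining with the previous paragraph, every edge between step points $x<y$ of $\mathfrak{L}(w)$ is $g$-projected, so $\mathfrak{L}(w)$ is a bridge. The real content of the argument is already contained in Proposition~\ref{p:the-really-big-jump}; the present statement is essentially its packaging together with Lemma~\ref{l:cover-of-special-bridges-2}, the only extra care being the treatment of the endpoint $(w,1)$ and the trivial finite case, so I do not anticipate a genuine obstacle here.
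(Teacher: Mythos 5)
Your proof is correct and follows essentially the same route as the paper: reduce to the finite case via Remark~\ref{r:projection-of-finite-paths}, then apply Proposition~\ref{p:the-really-big-jump} to cover every stationary point by a special bridge, and conclude by a covering lemma. The only differences are cosmetic: the paper invokes Lemma~\ref{l:cover-of-special-bridges} directly on the union $\bigcup_p [X_p,Y_p[ = \mathfrak{L}(w)\setminus\{(w,1)\}$ (implicitly using that, by clusteredness, this set equals $[(1,w),\text{pred}(w,1)]$ and is hence a closed interval), while you invoke the packaged version Lemma~\ref{l:cover-of-special-bridges-2} with $X=(1,w)$, $Y=(w,1)$; and you spell out the final step handling edges that terminate at $\max\mathfrak{L}(w)=(w,1)$ via its predecessor and Remark~\ref{r:a-sort-of-homomorphism}, which the paper leaves tacit.
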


\begin{proof}
  Clearly, if $w\in A^+$ then
  $\mathfrak{L}(w)$ is a bridge by
  Remark~\ref{r:projection-of-finite-paths}.
  Suppose that $w\in \Om AA\setminus A^+$.
  By Proposition~\ref{p:the-really-big-jump},
  for each stationary point $p$ of $\mathfrak{L}(w)$,
  there is a special bridge ${[}X_p,Y_p{[}$ containing it.
  The union of the nonempty family $({[}X_p,Y_p{[})_{p\in
    \stat(\mathfrak{L}(w))}$
  is $\mathfrak{L}(w)\setminus \{(w,1)\}$, therefore $\mathfrak{L}(w)$
  is a bridge, in view of Lemma~\ref{l:cover-of-special-bridges}.
\end{proof}

\begin{proof}[Conclusion of the proof of Theorem~\ref{t:a-description-of-the-representation-theorem}]
  The direct implication in Theorem
  \ref{t:a-description-of-the-representation-theorem} follows directly
  from Proposition~\ref{p:direct-implication-of-theorem-we-search}.
  
  Conversely, suppose that
  $\mathfrak{L}(w)$ is $g$-recognized by $(\varphi,s)$,
  for some $g$. Then $(1,w)\xrightarrow{w} (w,1)$ is $g$-projected
  to $g(1,w)\xrightarrow{\varphi_{\pv A}(w)} g(w,1)$
  by Theorem~\ref{t:g-is-a-category-homomorphism}.
  But $g(1,w)=(1,s)$
  and $g(w,1)=(s,1)$. It follows that $s=\varphi_{\pv A}(w)$.
\end{proof}

\section{The effect of multiplication on the quasi-order}

\label{sec:sort-repr-theor}

In this section, under suitable conditions, we relate
$\mathfrak {F}(uv)$ on one hand, with
$\mathfrak {F}(u)$ and $\mathfrak {F}(v)$ on the other hand.

The quasi-orders considered in this section are all total, as they
stem from the quasi-order of $2$-factorizations of equidivisible
semigroups. We want to compare different intervals of
quasi-ordered sets of $2$-factorizations. This leads us to introduce
the following definitions.
Let $(P,\leq)$ and $(Q,\leq)$
be two quasi-ordered sets,
and let $\varphi$ be a function from $P$ to $Q$.
Recall that $\varphi$ is \emph{monotone} if
$p\leq q$ implies $\varphi(p)\leq \varphi(q)$, for every $p,q\in P$.
Suppose moreover that the quasi-order on $P$ is total.
Then we say that $\varphi$ is
a \emph{quasi-isomorphism} if $\varphi$
is a surjective monotone mapping such that,
for all $p,q\in P$, we have
$p< q\Rightarrow \varphi(p)<\varphi(q)$.
Because the quasi-order on $P$ is total,
we have $\varphi(p)<\varphi(q)\Rightarrow p<q$
and
$\varphi(p)\sim \varphi(q)\Rightarrow p\sim q$, for all $p,q\in P$.
Therefore, $\varphi$ induces the isomorphism
of linearly ordered sets
$\tilde\varphi\colon P/{\sim}\to Q/{\sim}$ sending $p/{\sim}$ to $\varphi(p)/{\sim}$.
In particular, the quasi-order on $Q$ is also total.

Let $I_u$ be an interval of $\mathfrak {L}(u)$
and $I_v$ an interval of $\mathfrak {L}(v)$, for some $u,v$
in a compact semigroup $S$.
A mapping $\theta\colon I_u\to I_v$
is said to be \emph{$J$-preserving}
if $J_p=J_{\theta(p)}$
for every $p\in I_u$.

\begin{Prop}
  \label{p:cut-at-type-1}
  Consider an equidivisible profinite semigroup $S$ which is finitely
  cancelable, and take $w\in S$.
  Let $(u,v)\in\mathfrak{F}(w)$. For $p=(u,v)/{\sim}$,
  let $e$ be the unique
  idempotent of $J_p$ stabilizing~$(u,v)$. Then the following mappings
  are quasi-isomorphisms of intervals of the respective
  totally quasi-ordered sets $\mathfrak{F}(w)$, $\mathfrak{F}(u)$, and $\mathfrak{F}(v)$:
  \begin{align*}
    \lambda_{(u,v)}\colon[(1,u),(u,e)]&\to[(1,w),(u,v)],\\
    (x,y)&\mapsto(x,yv)\\
    \rho_{(u,v)}\colon[(e,v),(v,1)]&\to[(u,v),(w,1)].\\
    (x,y)&\mapsto(ux,y)
  \end{align*}  
  Moreover, the induced isomorphisms
  $\tilde\lambda_{(u,v)}$
  and
  $\tilde\rho_{(u,v)}$
  are $J$-preserving.
\end{Prop}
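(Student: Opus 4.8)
The plan is to prove everything for $\lambda_{(u,v)}$; the statements for $\rho_{(u,v)}$ and $\tilde\rho_{(u,v)}$ then follow by the evident left--right symmetry of the construction (the role of the relation $ye=y$ below being played, for $\rho$, by the relation $ex=x$, valid for $x$ a first component of a factorization in $[(e,v),(v,1)]$). Note that $S$, being equidivisible, is unambiguous and (being profinite) stable, so the idempotent $e$ is indeed uniquely determined, and all of $\mathfrak F(u),\mathfrak F(v),\mathfrak F(w)$ are totally quasi-ordered by Lemma~\ref{l:linear-quasi-ordering}.

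First I would dispatch the routine structural facts. That $\lambda_{(u,v)}$ maps $[(1,u),(u,e)]$ into $[(1,w),(u,v)]$ follows because, if $(x,y)\le(u,e)$ is witnessed by a transition $t$ with $xt=u$ and $y=te$, then the same $t$ witnesses $(x,yv)\le(u,ev)=(u,v)$, using $ev=v$. Monotonicity is immediate, since a transition in $\mathfrak F(u)$ between factorizations of $u$ remains a transition in $\mathfrak F(w)$ after multiplying the second components by $v$. For surjectivity, given $(a,b)\le(u,v)$ in $\mathfrak F(w)$ with $at=u$ and $b=tv$, I would check that $(a,te)$ lies in $[(1,u),(u,e)]$ (because $a(te)=ue=u$ and $te=(te)e$) and that $\lambda_{(u,v)}(a,te)=(a,(te)v)=(a,tv)=(a,b)$.

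The content of the proof is strictness: $p<q\Rightarrow\lambda_{(u,v)}(p)<\lambda_{(u,v)}(q)$. As $\mathfrak F(u)$ is totally quasi-ordered and $\lambda_{(u,v)}$ is monotone, this is equivalent to $\lambda_{(u,v)}$ reflecting $\sim$, and that is what I would establish. So take $(x_1,y_1),(x_2,y_2)$ in the interval with $(x_1,y_1v)\sim(x_2,y_2v)$; by totality we may assume $(x_1,y_1)\le(x_2,y_2)$, witnessed by a transition $t$, whence $(x_1,y_1v)\le(x_2,y_2v)$, and by the assumed equivalence there is also a transition $z\colon(x_2,y_2v)\to(x_1,y_1v)$, so $x_2z=x_1$. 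The key algebraic point is that every $(x,y)$ in the interval satisfies $ye=y$ (write $y=\sigma e$ from $(x,y)\le(u,e)$, and use $e^2=e$). Now $y_1$ is a transition $(x_1,y_1v)\to(u,v)$ and $y_2$ a transition $(x_2,y_2v)\to(u,v)$ (since $x_iy_i=u$), so $zy_1$ and $y_2$ — hence also $zy_1e$ and $y_2e$ — are coterminal edges $(x_2,y_2v)\to(u,v)$ of $\Cl T(w)$. If $(x_2,y_2v)<(u,v)$, these endpoints lie in distinct strongly connected components, so Proposition~\ref{p:transitions-in-F} forces $zy_1e=y_2e$. If instead $(x_2,y_2v)\sim(u,v)$, write $p$ for this common $\sim$-class (it is the class $(u,v)/{\sim}$ of the statement); the loop labelled $e$ at $(u,v)$ lies in $\Cl K_p$ (since $e\in J_p$, cf.~Corollary~\ref{c:idempotents-vs-sim-classes}), $\Cl K_p$ is an ideal, so the edges labelled $zy_1e$ and $y_2e$ both lie in $\Cl K_p$, and they coincide because $\Cl K_p$ is a trivial category by Corollary~\ref{c:trivial-groupoid-at-type-2-point}. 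Either way $zy_1=zy_1e=y_2e=y_2$, so $z$ is a transition $(x_2,y_2)\to(x_1,y_1)$ in $\mathfrak F(u)$, giving $(x_2,y_2)\le(x_1,y_1)$ and hence $(x_1,y_1)\sim(x_2,y_2)$. I expect this reflection-of-$\sim$ step, and in particular the subcase where the common image falls into a fat stationary class $p$, to be the main obstacle; it is exactly where finite cancelability of $S$ enters, through Proposition~\ref{p:transitions-in-F} and the triviality of $\Cl K_p$.

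Finally, for the $J$-preservation of $\tilde\lambda_{(u,v)}$, I would fix a $\sim$-class $P$ in the domain interval and put $P'=\tilde\lambda_{(u,v)}(P)=\{(x,yv):(x,y)\in P\}$. By Remark~\ref{r:trivial-preservation}, each edge $(\alpha,\beta)\xrightarrow{t}(\gamma,\delta)$ of $\Cl T_P$ yields an edge $(\alpha,\beta v)\xrightarrow{t}(\gamma,\delta v)$ of $\Cl T_{P'}$ with the same label, so the set of transitions from $P$ to $P$ is contained in the set of transitions from $P'$ to $P'$; Lemma~\ref{l:a-trivial-characterization-of-Jp} (applied in $\mathfrak F(u)$ and in $\mathfrak F(w)$) then gives $J_{P'}\leq_{\Cl J}J_P$. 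Conversely, choosing $(\alpha,\beta)\in P$ and an idempotent $e_P\in J_P$ stabilizing it (Corollary~\ref{c:idempotents-vs-sim-classes}), the same remark shows $e_P$ stabilizes $(\alpha,\beta v)$, so $e_P$ is a transition from $P'$ to $P'$ and hence a factor of the elements of $J_{P'}$ by Lemma~\ref{l:a-trivial-characterization-of-Jp}, giving $J_P\leq_{\Cl J}J_{P'}$. Therefore $J_P=J_{P'}$, so $\tilde\lambda_{(u,v)}$ is $J$-preserving, and the corresponding statements for $\rho_{(u,v)}$ and $\tilde\rho_{(u,v)}$ follow dually.
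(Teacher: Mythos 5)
Your reduction of strictness to the reflection of $\sim$, and the argument establishing that reflection, are correct; they also depart from the paper's route in an interesting way. The paper isolates Lemma~\ref{l:stabilizer-of-product} as a technical tool, applies it to $y_2v = sty_2v$, and then shows $y_2f=y_2$ via triviality of $\Cl K_p$. You instead work entirely inside $\Cl T(w)$: you observe that $zy_1e$ and $y_2e$ both label edges $(x_2,y_2v)\to(u,v)$ and force them equal either by Proposition~\ref{p:transitions-in-F} (endpoints in distinct components) or by the ideal property and triviality of $\Cl K_p$ (both endpoints in $p$), and then strip off the $e$ using $y_ie=y_i$. This is arguably more direct, and it bypasses Lemma~\ref{l:stabilizer-of-product} entirely for this part of the proof.

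There is, however, a genuine gap in your $J$-preservation argument: both halves establish the same inequality. The first half (transitions $P\to P$ included in transitions $P'\to P'$, then Lemma~\ref{l:a-trivial-characterization-of-Jp}) gives $J_{P'}\leq_{\Cl J}J_P$. The second half shows $e_P\in J_P$ is a transition $P'\to P'$, hence a factor of the elements of $J_{P'}$; this again yields $J_{P'}\leq_{\Cl J}J_P$, not the $J_P\leq_{\Cl J}J_{P'}$ you claim. The converse inequality requires producing an element of $J_{P'}$ that is a transition $P\to P$ in $\mathfrak F(u)$, and this is where the real content lies: one must show that an idempotent $\varepsilon\in J_{P'}$ stabilizing some $(x,yv)\in P'$ also stabilizes some element $(x,yf)$ of $P$. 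That step is nontrivial — in the paper, Lemma~\ref{l:stabilizer-of-product} produces $f$ with $\varepsilon y=yf$ and $fv=v$, one then proves $(x,yf)\leq(u,e)$, and finally one uses the quasi-isomorphism property (already established) to conclude $(x,yf)\sim(x,y)$. Your proposal does not supply this step.
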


Before proceeding with the proof of Proposition~\ref{p:cut-at-type-1},
let us recall that the uniqueness of $e$ mentioned in its statement
is guaranteed by Lemma~\ref{l:minimum-idempotents-labeling-loops-at-same-point}.
The next technical lemma will be used
in the proof of Proposition~\ref{p:cut-at-type-1}.

\begin{Lemma}
  \label{l:stabilizer-of-product}
  Let $S$ be an equidivisible compact semigroup,
  and let $x,y,z\in S$.
  If $xy=xyz$,
  then there exists some idempotent $e\in S^I$ such that
  $yz^\omega=ey$ and $xe=x$.
  Dually, if $yz=xyz$, then there exists some idempotent
  $e\in S^I$ such that $x^\omega y=ye$ and $ez=z$.
\end{Lemma}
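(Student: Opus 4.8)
The plan is to prove the first assertion; the second follows by applying the first in the opposite semigroup $S^{op}$, which is again equidivisible and compact (reversing the multiplication turns $yz=xyz$ into $zy=zyx$, i.e., the hypothesis of the first part with the roles of $x$ and $z$ swapped, and the conclusion $x^\omega y=ye$, $ez=z$ is exactly the mirror of $yz^\omega=ey$, $xe=x$). So assume $xy=xyz$. Iterating, $xy=xyz^n$ for all $n\ge 1$, and taking limits in the compact semigroup $S$ we get $xy=xyz^\omega$, where $z^\omega$ is the idempotent in $\overline{\langle z\rangle}$. Thus we may replace $z$ by the idempotent $w=z^\omega$ and it suffices to find an idempotent $e\in S^I$ with $wy=ey$ (note $yz^\omega=yw$, but we want $ey$ with $e$ on the left, so actually I should be careful: the statement asks for $yz^\omega=ey$, so I work with the equation $xy=xyw$ and the target $yw=ey$, $xe=x$).

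\medskip

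The key step is to apply equidivisibility to the equality $x\cdot y = (xyw)\cdot 1$ rewritten suitably, or better, to $x\cdot(yw)=x\cdot y$ does not directly have the product form; instead I would use $xy=xyw=(xy)w$ together with $xy=x\cdot y$ to invoke equidivisibility on $x\cdot y=(xyw)\cdot 1$? That is degenerate. The cleaner route: from $xy=xyw$ apply equidivisibility to the equation $x\cdot y=xy\cdot w$ viewed as $x\cdot y=(xy)\cdot w$. Equidivisibility of $S$ gives $t\in S^I$ with either (i) $xt=xy$ and $y=tw$, or (ii) $x=xyt$ and $ty=w$. In case (i), $y=tw$ gives $yw=tw^2=tw=y$, so $yw=y=1\cdot y$ and we may take $e=1$ (then $xe=x$ trivially), unless we also must account for... actually $yw=y$ already gives $yz^\omega=yw=y=1\cdot y$, done with $e=1$. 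In case (ii), $x=xyt$; iterating, $x=x(yt)^n$ for all $n$, so $x=x(yt)^\omega$ in the limit. Set $e=(yt)^\omega$, an idempotent of $S^I$; then $xe=x$. It remains to check $yw=ey$. From $ty=w$ we get $(yt)y=yw$, hence $(yt)^n y=yw$ for all $n\ge 1$ (since $(yt)^{n}y=(yt)^{n-1}(yt y)=(yt)^{n-1}yw=\cdots=(yt)y w^{n-1}$, and $w$ is idempotent so $w^{n-1}=w$), thus passing to the limit $(yt)^\omega y=yw$, i.e. $ey=yw=yz^\omega$. This is exactly what we want.

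\medskip

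I expect the main obstacle to be purely bookkeeping: making sure the two cases from equidivisibility are handled with the correct sidedness (it is easy to mix up left and right when the statement puts $e$ on the left in $ey$ but the idempotent $(yt)^\omega$ arises on the right of $x$), and verifying that the limit arguments ($x=x(yt)^\omega$, $(yt)^\omega y=yw$) are legitimate, which they are because $S$ is compact and multiplication is continuous, so the closed subsemigroup generated by $yt$ contains the idempotent $(yt)^\omega$ and the relevant equalities are preserved under the limit. No deep idea is needed beyond equidivisibility plus the standard compact-semigroup fact about $\omega$-powers; the dual statement then costs nothing.
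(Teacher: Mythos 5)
Your argument is correct, and it follows a genuinely different decomposition from the paper's. The paper never replaces $z$ by $z^\omega$; instead it compares the two $2$-factorizations $(x,y)$ and $(x,yz)$ of $xy$ in the quasi-ordered set $\mathfrak F(xy)$ (which is where equidivisibility is invoked), and splits into the cases $(x,yz)\leq(x,y)$ (getting $t$ with $xt=x$, $yz=ty$, then $e=t^\omega$) and $(x,y)<(x,yz)$ (getting $u$ with $xu=x$, $y=uyz$, then $e=u^\omega$). You instead first pass to the idempotent $w=z^\omega$, which turns the hypothesis into $x\cdot y=(xy)\cdot w$, and apply the raw definition of equidivisibility to \emph{that} pair of factorizations: case (i) gives $xt=xy$, $y=tw$, hence $yw=y$ and $e=1$ works trivially; case (ii) gives $x=xyt$, $ty=w$, and you build $e=(yt)^\omega$ and check $(yt)^\omega y=yw$ by the idempotency of $w$. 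What your route buys is that one of the two branches degenerates to $e=1$, at the small cost of the preliminary reduction $z\mapsto z^\omega$; what the paper's route buys is a perfectly symmetric treatment of the two branches inside the $\mathfrak F(xy)$ formalism it has already set up (which it reuses heavily elsewhere). Both are correct and of comparable length; the computations in your case (ii) — $x=x(yt)^\omega$ and $(yt)^n y=yw$ by induction using $w^2=w$ — are all verified, and the passage to the opposite semigroup for the dual statement is handled correctly.
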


\begin{proof}
  We deal only with the case $xy=xyz$, as the other case is dual.
  Since $S$ is equidivisible, the
  pairs $(x,y)$ and $(x,yz)$ are comparable
  elements of the quasi-ordered set $\mathfrak{F}(xy)$.
  If $(x,yz)\leq(x,y)$, then there exists
  $t\in S^I$ such that $x=xt$ and $yz=ty$, and so $yz^k=t^ky$
  for every $k\geq1$, whence $yz^\omega=t^\omega y$ and we choose
  $e=t^\omega$. Otherwise,
  $(x,y)<(x,yz)$, and so there exists $u\in S$ such that $x=xu$ and
  $y=uyz$, whence $x=xu^\omega$ and $y=u^\omega yz^\omega$;
  since $yz^\omega=y=u^\omega y$, we may choose $e=u^\omega$.
\end{proof}

\begin{proof}[Proof of Proposition~\ref{p:cut-at-type-1}] 
  By symmetry, it suffices to consider the mapping
  $\lambda=\lambda_{(u,v)}$. 
  By Remark~\ref{r:trivial-preservation},
  and since $ev=v$, the mapping $\lambda$
  indeed takes its values in the interval $[(1,w),(u,v)]$
  and it is monotone.

  Let $(x,z)\in \mathfrak F(w)$ be such that $(x,z)\leq(u,v)$.
  Then there is $t\in S^I$ such that
  $xt=u$ and $z=tv$.
  And, since $u=ue$,
  we deduce that $(x,te)\xrightarrow {t}(u,e)$ is and edge
  of $\mathcal T(u)$, whence $(x,te)$ belongs to
  the interval $[(1,u),(u,e)]$.
  As $\lambda(x,te)=(x,tev)=(x,z)$, we conclude
  that $\lambda$ is surjective.

  To prove that $\lambda$ is a quasi-isomorphism,
  it remains to show that
  if $(x_1,y_1)$, $(x_2,y_2)$ are elements of $[(1,u),(u,e)]$, then
  \begin{equation}
    \label{eq:cut-at-type-1.1}
    (x_1,y_1)<(x_2,y_2) \implies (x_1,y_1v)<(x_2,y_2v).
  \end{equation}
  
  Reasoning by \emph{reductio ad absurdum},
  suppose that the implication fails, that is,
  that $(x_1,y_1)<(x_2,y_2)$ and
  $(x_2,y_2v)\leq (x_1,y_1v)$. 
  We may then consider $s,t\in S^I$ such that
  $x_1t=x_2$, $y_1=ty_2$, $x_2s=x_1$, and $y_2v=sy_1v$.
  The latter equality can be written as $y_2v=st\cdot y_2v$,
  and applying the second case of Lemma~\ref{l:stabilizer-of-product}
  to it, we conclude that there exists an idempotent
  $f\in S^I$ such that $(st)^\omega y_2=y_2f$ and $fv=v$. The
  calculations
  $$x_2\cdot(st)^{\omega-1}s=x_1(ts)^\omega=x_1
  \text{ and } (st)^{\omega-1}s\cdot y_1=(st)^\omega y_2=y_2f$$
  show that $x_2\cdot y_2f=x_1y_1=u$ and
  \begin{equation}
    \label{eq:cut-at-type-1.2}
    (x_2,y_2f)\leq(x_1,y_1)
  \end{equation}
  in $[(1,u),(u,e)]$.
  Since $(x_1,y_1)<(x_2,y_2)$, we reach a contradiction provided
  we prove that $y_2f=y_2$.

  Recall that $v=fv$, and note that $u=x_2y_2f$ implies $u=uf$.
  Hence, as $(u,v)\xrightarrow e(u,v)$ belongs to
  the ideal $\mathcal K_p$, we conclude
  that $(u,v)\xrightarrow {efe}(u,v)$ also belongs to
  $\mathcal K_p$.
  From Corollary~\ref{c:trivial-groupoid-at-type-2-point},
  we get $efe=e$.
  On the other hand, since $(x_2,y_2f)\leq (u,e)$
  (cf.~\eqref{eq:cut-at-type-1.2}) and
  $(x_2,y_2)\leq(u,e)$, we have $y_2fe=y_2f$ and
  $y_2e=y_2$. Hence, $y_2f=y_2 fe=y_2efe=y_2e=y_2$, and so we reach
  the desired contradiction.
  The contradiction was originated by the assumption that
  the implication~\eqref{eq:cut-at-type-1.1} fails in the interval
  $[(1,u),(u,e)]$. Hence, the implication holds, which concludes the
  proof that $\lambda$ is a quasi-isomorphism.

    It remains to show that, for
    $(x,y)\in [(1,u),(u,e)]$, we have $J_q=J_{\tilde\lambda(q)}$, where
    $q=(x,y)/{\sim}$. Let $\varepsilon\in S^I$ be an idempotent.
    Observe that if $\varepsilon$ stabilizes~$(x,y)$
  then it also stabilizes~$(x,y v)$,
  which shows that $J_q$ lies \Cl J-above $J_{\tilde\lambda(q)}$.

  Conversely, suppose that $\varepsilon$ stabilizes $(x,yv)$.
  As $\varepsilon yv=yv$,
  it follows from Lemma~\ref{l:stabilizer-of-product} that
  there is some idempotent
  $f\in S^I$ with  $\varepsilon y=yf$ and $fv=v$.
  
  We claim that $(x,yf)\leq (u,e)$. Suppose
  on the contrary that
  \begin{equation}
    \label{eq:contrad-1}
  (u,e)<(x,yf).  
  \end{equation}
  Then $e\leq_{\Cl L}yf$, and so in particular we have $e=ef$.
  Similarly, from
  \begin{equation}
    \label{eq:contrad-2}
  (x,y)\leq (u,e)
  \end{equation}
 we get $y=ye$. Therefore, $yf=yef=ye=y$,
 yielding a contradiction
 between~\eqref{eq:contrad-1}
 and~\eqref{eq:contrad-2}.
 This shows the claim that $(x,yf)\leq (u,e)$.

 We are now assured that $(x,yf)$ belongs to the domain of
 $\lambda$. From $v=fv$, we get $\lambda(x,y)=(x,yv)=\lambda(x,yf)$.
  We have
  already proved that $\lambda$ is a quasi-isomorphism, and so we
  conclude that $(x,y)\sim(x,yf)$.
  On the other hand, $\varepsilon$ clearly
  stabilizes~$(x,yf)=(x,\varepsilon y)$. Hence, $J_{\tilde\lambda(q)}$
  lies \Cl J-above $J_q$, which proves that the two \Cl J-classes
  coincide.
\end{proof}

\begin{Remark}\label{r:quasi-step-point-case}
  Notice that in Proposition~\ref{p:cut-at-type-1}, in the special case
where $(u,v)$ is a step point, we have $e=1$ and so the domains of
$\lambda_{(u,v)}$ and $\rho_{(u,v)}$ are, respectively,
$\mathfrak {F}(u)$ and $\mathfrak {F}(v)$.
\end{Remark}

\begin{Remark}\label{r:quasi-stationary-point-case}
  Suppose that in
  Proposition~\ref{p:cut-at-type-1}
  we have $e\neq 1$. Let $q$ be the stationary point $(e,e)/{\sim}$ of $\mathfrak {L}(e)$.
Since $e$ stabilizes $(e,e)$ and $J_q$ is $\Cl J$-above $e$,
we have $e\in J_q$. Therefore,
applying Proposition~\ref{p:cut-at-type-1}, we may consider the
quasi-isomorphisms
$\lambda_{(e,v)}\colon[(1,e),(e,e)]\to[(1,v),(e,v)]$
and
$\rho_{(u,e)}\colon[(e,e),(e,1)]\to[(u,e),(u,1)]$.
\end{Remark}

The diagram in Figure~\ref{fig:quasi-iso-stationary-point} may
facilitate the understanding of the applications of
Proposition~\ref{p:cut-at-type-1} in the case in which $(u,v)$ is a stationary point.
\begin{figure}[ht]
  \begin{center}
    \unitlength=1mm
    \begin{picture}(105,36)(0,-1)
      \small
      \put(0,31){\line(1,0){100}}
      \put(0,30){\line(0,1){2}}
      \put(100,30){\line(0,1){2}}
      \put(102,30){$w$}
      \put(60,30){\line(0,1){2}}
      \put(56,27){$(u,v)$}
      \put(0,22){\line(1,0){75}}
      \put(0,21){\line(0,1){2}}
      \put(75,21){\line(0,1){2}}
      \put(77,21){$u$}
      \put(60,21){\line(0,1){2}}
      \put(56,18){$(u,e)$}
      \put(45,13){\line(1,0){30}}
      \put(45,12){\line(0,1){2}}
      \put(75,12){\line(0,1){2}}
      \put(77,12){$e$}
      \put(60,12){\line(0,1){2}}
      \put(56,9){$(e,e)$}
      \put(45,4){\line(1,0){55}}
      \put(45,3){\line(0,1){2}}
      \put(100,3){\line(0,1){2}}
      \put(102,3){$v$}
      \put(60,3){\line(0,1){2}}
      \put(56,0){$(e,v)$}
      \put(30,23){\vector(0,1){7}}
      \put(21,25.5){$\lambda_{(u,v)}$}
      \put(67.5,14){\vector(0,1){7}}
      \put(69.5,17){$\rho_{(u,e)}$}
      \put(52.5,12){\vector(0,-1){7}}
      \put(43.5,7.5){$\lambda_{(e,v)}$}
      \put(88,5){\vector(0,1){25}}
      \put(90,17){$\rho_{(u,v)}$}
    \end{picture}
  \end{center}
  \caption{Quasi-isomorphisms associated with a stationary point $(u,v)$}
  \label{fig:quasi-iso-stationary-point}
\end{figure}
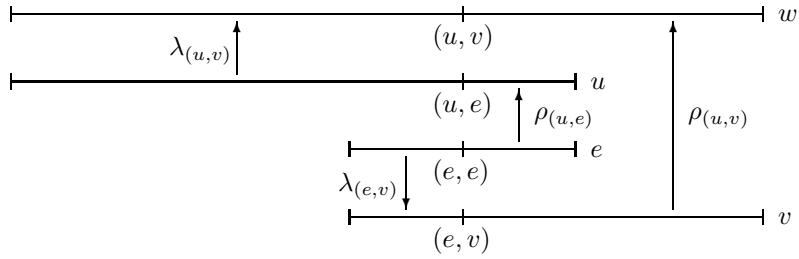
The arrows indicate quasi-isomorphisms between various intervals of
the quasi-ordered sets $\mathfrak{F}(w)$, $\mathfrak{F}(u)$,
$\mathfrak{F}(v)$, and $\mathfrak{F}(e)$. Those quasi-isomorphisms induce
isomorphisms between the corresponding intervals of the linearly
ordered sets $\mathfrak{L}(w)$, $\mathfrak{L}(u)$, $\mathfrak{L}(v)$,
and $\mathfrak{L}(e)$. The picture is perhaps clearer if interpreted
in this context, in which case, the points $(u,v)$, $(u,e)$, $(e,v)$,
and $(e,e)$ should be replaced by their respective $\sim$-classes.

 Let $S$ be an equidivisible profinite semigroup $S$ which is finitely
 cancelable, and let $w\in S$.
 We endow every ordered subset $Q$ of $\mathfrak {L}(w)$
 with the following labeling $\ev$:
   for a step point $p=(u,v)$ of $\mathfrak {L}(w)$ belonging to $Q$,
   let $\ev(p)=\beu(v)$ (and so
   if $Q=\mathfrak {L}(w)$ then the labeling on step points is the one defining the cluster word
   $\mathfrak {L}_c(w)$);
   for a stationary point~$p$ of $\mathfrak {L}(w)$ belonging to $Q$,
   let~$\ev(p)=J_p$. The resulting labeled ordered set is denoted $Q_\ev$.
   In the next result, $P_\ev+Q_\ev$ denotes
   the labeled ordered set with underlying
   ordered set $P+Q$ and labeling whose restriction to $P$
   and $Q$ is respectively the labeling of $P_\ev$ and $Q_\ev$.
   The symbol $\cong$ stands for isomorphism of labeled ordered sets.  

\begin{Prop}
  \label{p:cut}
  Let $S$ be an equidivisible profinite semigroup $S$ which is finitely
cancelable. Take $w\in S$. Let $u,v\in S$ be such that $w=uv$.
If $e$ is the unique idempotent of $J_{(u,v)/{\sim}}$ stabilizing $(u,v)$, then
\begin{equation*}
  \mathfrak {L}(w)_\ev\cong {[}(1,u)/{\sim}, (u,e)/{\sim} {[}_\ev+
  {[}(e,v)/{\sim},(v,1)/{\sim}{]}_\ev.
\end{equation*}
 In particular, if $(u,v)$ is a step point, then
 $\mathfrak {L}(w)_{\ev}\cong(\mathfrak {L}(u)\setminus\{(u,1)\})_{\ev}+\mathfrak {L}(v)_{\ev}$.
\end{Prop}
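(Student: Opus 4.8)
The plan is to glue the two order isomorphisms $\tilde\lambda_{(u,v)}$ and $\tilde\rho_{(u,v)}$ of Proposition~\ref{p:cut-at-type-1} at the point $p=(u,v)/{\sim}$; throughout, one keeps in mind that the idempotent $e$ of $J_p$ stabilizing $(u,v)$ satisfies $ue=u$ and $ev=v$. First I would record the order-theoretic decomposition of the target: since $\mathfrak{L}(w)$ is a linear order with least element $(1,w)/{\sim}$ and greatest element $(w,1)/{\sim}$, it is the ordered sum of the half-open interval ${]}{\leftarrow},p{[}$ with the closed interval $[p,{\rightarrow}{[}$. Proposition~\ref{p:cut-at-type-1} provides $J$-preserving order isomorphisms $\tilde\lambda_{(u,v)}\colon[(1,u)/{\sim},(u,e)/{\sim}]\to[(1,w)/{\sim},p]$ and $\tilde\rho_{(u,v)}\colon[(e,v)/{\sim},(v,1)/{\sim}]\to[p,(w,1)/{\sim}]$, and since $\lambda_{(u,v)}(u,e)=(u,ev)=(u,v)$ and $\rho_{(u,v)}(e,v)=(ue,v)=(u,v)$, each of them sends its left endpoint to $p$. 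Hence $\tilde\lambda_{(u,v)}$ restricts to an order isomorphism ${[}(1,u)/{\sim},(u,e)/{\sim}{[}\to{]}{\leftarrow},p{[}$, while $\tilde\rho_{(u,v)}$ already maps $[(e,v)/{\sim},(v,1)/{\sim}]$ onto $[p,{\rightarrow}{[}$. Assembling the two yields a bijection
\[
\Phi\colon {[}(1,u)/{\sim},(u,e)/{\sim}{[}\;+\;[(e,v)/{\sim},(v,1)/{\sim}]\;\longrightarrow\;\mathfrak{L}(w),
\]
which is an isomorphism of ordered sets because the two images partition $\mathfrak{L}(w)$ and $\Phi(x)<p\leq\Phi(y)$ whenever $x$ lies in the first summand and $y$ in the second. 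The final assertion of the proposition is then the special case $e=1$, where the left endpoints are $(u,e)/{\sim}=(u,1)/{\sim}=\max\mathfrak{L}(u)$ and $(e,v)/{\sim}=(1,v)/{\sim}=\min\mathfrak{L}(v)$, so that the first summand becomes $\mathfrak{L}(u)\setminus\{(u,1)\}$ and the second all of $\mathfrak{L}(v)$.

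Next I would verify that $\Phi$ respects the labeling $\ev$. As an order isomorphism it automatically maps step points to step points and stationary points to stationary points, so it suffices to see that a point lying in the first (resp. second) summand is a step point of $\mathfrak{L}(u)$ (resp. $\mathfrak{L}(v)$) exactly when it is one in the ordered sum, and that the labels then agree. For stationary points the agreement of labels is precisely the $J$-preservation statement of Proposition~\ref{p:cut-at-type-1}, namely $J_q=J_{\tilde\lambda_{(u,v)}(q)}$ and $J_q=J_{\tilde\rho_{(u,v)}(q)}$. For a step point $q=(x,y)$ --- whose $\sim$-class is a singleton by Proposition~\ref{p:covers-in-F}\ref{item:covers-in-F:1} --- one has $\Phi(q)=(x,yv)$ on the first summand and $\Phi(q)=(ux,y)$ on the second, so in both cases the second component begins with the same letter: trivially on the second summand, and on the first because $y\neq1$ (the unique factorization in $\mathfrak{F}(u)$ with trivial second component is $(u,1)$, which is excluded from ${[}(1,u)/{\sim},(u,e)/{\sim}{[}$ since $(u,e)\leq(u,1)$), so that $\beu(yv)=\beu(y)$. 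For points interior to the two intervals, the coincidence of the step/stationary dichotomy reduces to a routine comparison of covers and of least and greatest elements.

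The step I expect to be the main obstacle is reconciling the step/stationary dichotomy and the labels at the glueing point $p$ together with the two interval endpoints $(u,e)/{\sim}$ and $(e,v)/{\sim}$; here the value of $e$ is decisive. If $e\neq1$, then $e\in S$ stabilizes $(u,v)$, $(u,e)$ and $(e,v)$, so by Proposition~\ref{p:shifting-allowed-implies-type-2} the points $p\in\mathfrak{L}(w)$, $(u,e)/{\sim}\in\mathfrak{L}(u)$ and $(e,v)/{\sim}\in\mathfrak{L}(v)$ are all stationary; consequently ${[}(1,u)/{\sim},(u,e)/{\sim}{[}$ has no greatest element, the least element $(e,v)/{\sim}$ of $[(e,v)/{\sim},(v,1)/{\sim}]$ has no predecessor in the ordered sum, and $\Phi$ identifies it with the stationary point $p$, the labels matching because $J_{(e,v)/{\sim}}=J_p$ by $J$-preservation. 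If $e=1$, then $J_p=\{1\}$ by Lemma~\ref{l:a-trivial-characterization-of-Jp}, so $(u,v)$ is stabilized by no element of $S$ and $p$ is a step point of $\mathfrak{L}(w)$, matching the step-point endpoints $\max\mathfrak{L}(u)$ and $\min\mathfrak{L}(v)$ across which the glueing now takes place; one also uses the clustered structure (Theorem~\ref{t:cluster}) to see that $\min\mathfrak{L}(v)$ remains a step point of the sum. Once these boundary cases are settled, $\Phi$ is the desired isomorphism of labeled ordered sets, and restricting to the case $e=1$ gives the concluding statement.
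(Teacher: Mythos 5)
Your proof is correct and follows essentially the same route as the paper: use the $J$-preserving quasi-isomorphisms $\lambda_{(u,v)}$ and $\rho_{(u,v)}$ from Proposition~\ref{p:cut-at-type-1}, restrict them to the relevant half-open/closed intervals, glue the resulting labeled-order isomorphisms at $p=(u,v)/{\sim}$, and check that the step/stationary dichotomy and the labels are carried across (step labels via $\beu(yv)=\beu(y)$ since $y\neq1$ in the half-open interval, stationary labels via $J$-preservation). One small slip: you write that ``each of them sends its left endpoint to $p$,'' but $(u,e)/{\sim}$ is the \emph{right} endpoint of the domain of $\tilde\lambda_{(u,v)}$ (only $\tilde\rho_{(u,v)}$ sends its left endpoint to $p$); the restriction you state immediately afterwards is however the correct one, so this is only a wording error. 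Your treatment of the boundary cases at the glueing point is more explicit than the paper's, which simply asserts the step-point equivalence and label agreement and leaves the routine verification to the reader.
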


\begin{proof}
  Consider the quasi-isomorphism $\lambda_{(u,v)}$ and respective
  isomorphism $\tilde\lambda_{(u,v)}$ as in
  Proposition~\ref{p:cut-at-type-1}. Then, the pair $(x,y)\in
  \mathfrak {F}(u)$ is a step point of the interval
  ${[}(1,u),(u,e){[}$ if and only if its image
  $\lambda_{(u,v)}(x,y)=(x,yv)$ is a step point of
  ${[}(1,w),(u,v){[}$, and $\beu(y)=\beu(yv)$. This fact, together
  with $\tilde \lambda_{(u,v)}$ being $J$-preserving, enables us to
  conclude that
  \begin{equation*}
  {[}(1,u)/{\sim},(u,e)/{\sim}{[}_{\ev}\cong
  {[}(1,w)/{\sim},(u,v)/{\sim}{[}_{\ev}.    
  \end{equation*}
  Similarly, we obtain
  \begin{equation*}
  {[}(e,v)/{\sim},(v,1)/{\sim}{]}_{\ev}\cong {[}(u,v)/{\sim},(w,1)/{\sim}{]}_{\ev}.    
  \end{equation*}
   Since we clearly have
  \begin{equation*}
    \mathfrak
    {L}(w)_{\ev}={[}(1,w)/{\sim},(u,v)/{\sim}{)[}_{\ev}+{[}(u,v)/{\sim},(w,1)/{\sim}{]}_{\ev},
  \end{equation*}
  this concludes the proof.  
\end{proof}

\section{The image of the representation
  \texorpdfstring{$w\mapsto\mathfrak{L}_c(w)$}{w|->Lc(w)} in the
  aperiodic case}

Consider a cluster word $(L,\ell)$ over $A$.
Let $\varphi\colon A\to S$ be a generating mapping
of a semigroup $S$.
Let $s\in S$ and $g\colon\step(L)\to \mathfrak{F}(s)$ be such that
$(L,\ell)$ is $g$-recognized by $(\varphi,s)$. We say that
$g$ is a \emph{$(\varphi,s)$-recognizer} of~$(L,\ell)$.

\begin{Lemma}\label{l:onto-homomorphism-preserves-recognition}
  Let $\varphi\colon A\to S$ be a generating mapping of a finite semigroup~$S$,
  and let $\pi\colon S\to T$ be an onto homomorphism of semigroups.
  Suppose that $(L,\ell)$ is recognized by $(\varphi,s)$.
  Then $(L,\ell)$ is recognized by $(\pi\circ\varphi,\pi(s))$.
\end{Lemma}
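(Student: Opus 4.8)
The plan is to transport the recognizer across $\pi$. Fix a $(\varphi,s)$-recognizer $g\colon\step(L)\to\mathfrak{F}(s)$ of $(L,\ell)$. First I would extend $\pi$ to a monoid homomorphism $S^I\to T^I$ with $\pi(1)=1$. Since $\pi$ is multiplicative, the rule $(u,v)\mapsto(\pi(u),\pi(v))$ yields a well-defined map $\pi_\ast\colon\mathfrak{F}(s)\to\mathfrak{F}(\pi(s))$; moreover, for any edge $(u,v)\xrightarrow{t}(x,y)$ of $\Cl T(s)$, applying $\pi$ to the defining equalities $ut=x$ and $v=ty$ shows that $\pi_\ast(u,v)\xrightarrow{\pi(t)}\pi_\ast(x,y)$ is an edge of $\Cl T(\pi(s))$ (in other words $\pi$ induces a functor between the categories of transitions). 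The candidate recognizer is then $g'=\pi_\ast\circ g\colon\step(L)\to\mathfrak{F}(\pi(s))$, and I would note at the outset that $\pi\circ\varphi$ is a generating mapping of $T=\pi(S)$, so it suffices to verify that $g'$ satisfies \ref{item:run-1}--\ref{item:run-4} of Definition~\ref{def:successful-run} with respect to $(\pi\circ\varphi,\pi(s))$.

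Conditions \ref{item:run-1} and \ref{item:run-2} follow immediately from the corresponding conditions for $g$ together with $\pi_\ast(1,s)=(1,\pi(s))$ and $\pi_\ast(s,1)=(\pi(s),1)$. For \ref{item:run-3}, if $p_1\prec p_2$ then $g(p_1)\xrightarrow{\varphi(\ell(p_1))}g(p_2)$ is an edge of $\Cl T(s)$, and the edge-compatibility of $\pi_\ast$ recorded above gives that $g'(p_1)\xrightarrow{\pi(\varphi(\ell(p_1)))}g'(p_2)$ is an edge of $\Cl T(\pi(s))$, where $\pi(\varphi(\ell(p_1)))=(\pi\circ\varphi)(\ell(p_1))$. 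None of this uses any hypothesis beyond $\pi$ being a homomorphism.

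The one point that needs an argument is condition \ref{item:run-4}. Here I would use that, for a stationary point $p$ of $L$ and $q\in\mathfrak{F}(\pi(s))$, one has $(g')^{-1}(q)=\bigcup_{q'\in\pi_\ast^{-1}(q)}g^{-1}(q')$, and that this is a \emph{finite} union since $S$, hence $\mathfrak{F}(s)$, is finite. The hard part is to isolate the (elementary but essential) fact that finiteness is exactly what preserves cofinality: in a linearly ordered set, a finite union $X_1\cup\dots\cup X_n$ is left cofinal at $p$ if and only if some $X_i$ is left cofinal at $p$; for the nontrivial direction, if no $X_i$ is left cofinal at $p$ one chooses $q_i<p$ with $X_i\cap{]}q_i,p{[}=\emptyset$ and observes that $(X_1\cup\dots\cup X_n)\cap{]}q,p{[}=\emptyset$ for $q=\max_i q_i<p$; the statement for right cofinality is dual. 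Combining this with condition \ref{item:run-4} for $g$, applied to each $q'\in\pi_\ast^{-1}(q)$, chains the equivalences: $(g')^{-1}(q)$ is left cofinal at $p$ $\Leftrightarrow$ some $g^{-1}(q')$ is left cofinal at $p$ $\Leftrightarrow$ some $g^{-1}(q')$ is right cofinal at $p$ $\Leftrightarrow$ $(g')^{-1}(q)$ is right cofinal at $p$, which is \ref{item:run-4} for $g'$. Beyond pinning down this finite-union lemma, the whole proof is bookkeeping.
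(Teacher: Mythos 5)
Your proof is correct and follows essentially the same route as the paper: define the candidate recognizer $h=\pi_\ast\circ g$ componentwise, observe (R.1)--(R.3) are inherited directly, and for (R.4) write $h^{-1}(q)$ as a finite union of the sets $g^{-1}(q')$ over $q'\in\pi_\ast^{-1}(q)$ and use that a finite union of subsets of a linear order is cofinal on one side at $p$ iff some member is, combined with (R.4) for $g$. The only cosmetic difference is that you isolate the finite-union cofinality lemma and the functoriality of $\pi_\ast$ explicitly, while the paper leaves both implicit.
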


\begin{proof}
  Let $g\colon\step(L)\to\mathfrak{F}(s)$
  be a $(\varphi,s)$-recognizer of $(L,\ell)$.
  For each $p\in \step(L)$, let $g(p)=(u_p,v_p)$.
  Consider the mapping $h\colon\step(L)\to\mathfrak{F}(\pi(s))$
  defined by $h(p)=(\pi(u_p),\pi(v_p))$. We claim
  that $(L,\ell)$ is $h$-recognized by $(\pi\circ\varphi,\pi(s))$.
  The conditions \ref{item:run-1}-\ref{item:run-3}
  in Definition~\ref{def:successful-run}
  for $h$-recognition by  $(\pi\circ\varphi,\pi(s))$ are
  clearly satisfied.
  It remains to show that condition~\ref{item:run-4} holds.
  Let $p$ be a stationary point of $(L,\ell)$.
  Take an element $q$ of $\mathfrak{F}(\pi(s))$ such that
  $h^{-1}(q)$ is left cofinal at $p$.
  Consider the set
  \begin{equation*}
    X=\{(u,v)\in \mathfrak{F}(s):(\pi(u),\pi(v))=q\}.
  \end{equation*}
  Then we have $h^{-1}(q)=g^{-1}(X)=\bigcup_{x\in X} g^{-1}(x)$.
  Since $h^{-1}(q)$ is left cofinal at $p$ and
  $X$ is finite, there is at least one element $x_0$ of
  $X$ such that $g^{-1}(x_0)$ is left cofinal
  at $p$. But then $g^{-1}(x_0)$ is also right cofinal at $p$, because
  $(L,\ell)$ is $g$-recognized by $(\varphi,s)$.
  Therefore, $h^{-1}(q)$ is right cofinal at $p$.
  Symmetrically, if
  $h^{-1}(q)$ is right cofinal at $p$,
  then it is left cofinal at $p$.
  This concludes our proof.  
\end{proof}

For a cluster word $(L,\ell)$ over $A$, if $p$ and $q$ are step points of $L$ such that $p\leq q$, then $([p,q],\ell)$ is the
cluster word obtained from $(L,\ell)$ by restricting $\ell$ to ${[}p,q{[}$, and letting
$\ell(q)=1$.

We wish to study cluster words $(L,\ell)$ satisfying the following conditions:

\begin{enumerate}
  [label=(W.\arabic*),series=axioms]
\item For every
  finite aperiodic unambiguous $A$-generated semigroup $S$,
  and every generating mapping $\varphi\colon A\to S$, there is
  a unique $s\in S$ such that $(L,\ell)$
  is recognized by $(\varphi,s)$.\label{item:worthy-1}
\item If $p$ and $q$ are step points of $L$
  such that $p<q$,
  then~$([p,q],\ell)$ satisfies~\ref{item:worthy-1}.\label{item:worthy-2}
\item Consider an arbitrary finite aperiodic unambiguous
  $A$-generated
  semigroup $S$
  and a generating mapping $\varphi\colon A\to S$.
  Let $s$ be such that $(L,\ell)$
  if recognized by $(\varphi,s)$.
  Take a $(\varphi,s)$-recognizer $g$.
  Suppose $p$ and $q$ are step points such that $p<q$.
  If $t\in S$ is such that $(\varphi,t)$ recognizes $([p,q],\ell)$,
  then $g(p)\xrightarrow{t}g(q)$ is an edge of~$\Cl T(s)$.  
  \label{item:worthy-3}

\end{enumerate}

\begin{Remark}\label{r:g-is-determined}
  In the setting of condition~\ref{item:worthy-3}, there is only one such $(\varphi,s)$-recognizer, assuming that \ref{item:worthy-1} and \ref{item:worthy-2} also hold.
Indeed, if $g$ is a $(\varphi,s)$-recognizer,
and $p$ is a step point of $L$, and if $s_1$ and $s_2$
   are (the unique) elements
   of $S$ such that $([\min L,p],\ell)$ and
   $([p,\max L],\ell)$ are respectively
   recognized by $(\varphi,s_1)$ and $(\varphi,s_2)$,
   then $g(\min L)\xrightarrow{s_1} g(p)$
   and
   $g(p)\xrightarrow{s_2} g(\max L)$
   are edges of $\Cl T(s)$, and so $g(p)=(s_1,s_2)$.
\end{Remark}

 Finally we consider a fourth condition, assuming 
  \ref{item:worthy-1}-\ref{item:worthy-3} hold:
   
\begin{enumerate}[resume*=axioms]
  
\item For every step point $p$ of $L$,
  there is a finite aperiodic unambiguous semigroup $S$
  and a generating mapping $\varphi\colon A\to S$
  such that, for the unique $(\varphi,s)$-recognizer $g$ of $(L,\ell)$,
  there are no elements of $S$
  that stabilize $g(p)$ in $\Cl T(S)$.\label{item:worthy-4}  
\end{enumerate}

A cluster word satisfying conditions
\ref{item:worthy-1}-\ref{item:worthy-4}
is called a \emph{worthy} cluster word.

\begin{Thm}\label{t:the-image-are-the-worthys}
  A cluster word $(L,\ell)$ over $A$ is
  isomorphic to a cluster word of
  the form $\mathfrak{L}_c(w)$, $w\in\Om AA$,
  if and only if it is a worthy cluster word.
\end{Thm}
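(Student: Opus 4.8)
The plan is to establish the two implications separately. For the \emph{necessity} direction, suppose $(L,\ell)=\mathfrak{L}_c(w)$ for some $w\in\Om AA$. Then \ref{item:worthy-1} is exactly Theorem~\ref{t:a-description-of-the-representation-theorem}: for every finite aperiodic unambiguous $A$-generated semigroup $S$ with generating mapping $\varphi$, the unique $s$ recognizing $\mathfrak{L}_c(w)$ is $\varphi_{\pv A}(w)$. Uniqueness follows because if $(\varphi,s)$ and $(\varphi,s')$ both recognized $\mathfrak{L}_c(w)$, Theorem~\ref{t:a-description-of-the-representation-theorem} would force $s=\varphi_{\pv A}(w)=s'$. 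For \ref{item:worthy-2}, I would use Proposition~\ref{p:cut}: if $p=(x,y)/{\sim}$ and $q=(x',y')/{\sim}$ are step points with $p<q$, then the interval $[p,q]$ of $\mathfrak{L}_c(w)$, with its labeling, is isomorphic (as a labeled ordered set, hence as a cluster word after setting $\ell(q)=1$) to $\mathfrak{L}_c(w')$ where $w'$ is the unique element of $\Om AA$ with $w=x'' w' y''$ for appropriate decompositions —- concretely, one reads off $w'$ from the factorization $w=uw'v$ determined by $p$ and $q$ using the quasi-isomorphisms $\lambda_{(\cdot,\cdot)}$ and $\rho_{(\cdot,\cdot)}$ of Proposition~\ref{p:cut-at-type-1}. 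Then \ref{item:worthy-2} reduces to \ref{item:worthy-1} applied to $w'$. Condition \ref{item:worthy-3} follows from Proposition~\ref{p:direct-implication-of-theorem-we-search}: with $g=g_{w,\varphi}$, the element $t$ recognizing $([p,q],\ell)\cong\mathfrak{L}_c(w')$ is $\varphi_{\pv A}(w')$, and $g(p)\xrightarrow{\varphi_{\pv A}(w')}g(q)$ is an edge of $\Cl T(s)$ precisely because $w=u\,w'\,v$ at the level of $\Om AA$ projects under $\varphi_{\pv A}$. Finally \ref{item:worthy-4} is Proposition~\ref{p:shifting-allowed-implies-type-2} combined with Proposition~\ref{p:residual-B-semigroups}: if $p=(x,y)$ is a step point of $\mathfrak{L}(w)$, then its $\sim$-class is a singleton and $p$ is not stabilized by any element of $\Om AA$; by Proposition~\ref{p:residual-B-semigroups} we can separate a would-be stabilizer from $1$ by a homomorphism onto an unambiguous finite aperiodic semigroup, and pulling this through $g=g_{w,\varphi}$ shows $g(p)$ has no nontrivial stabilizer in that image.

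For the \emph{sufficiency} direction, suppose $(L,\ell)$ is a worthy cluster word; I must produce $w\in\Om AA$ with $\mathfrak{L}_c(w)\cong(L,\ell)$. The idea is to build $w$ as an inverse limit. By \ref{item:worthy-1}, for each finite aperiodic unambiguous $A$-generated semigroup $S$ with generating mapping $\varphi\colon A\to S$ there is a unique $s=s_{S,\varphi}$ recognizing $(L,\ell)$; moreover Lemma~\ref{l:onto-homomorphism-preserves-recognition} shows these elements are compatible with onto homomorphisms, i.e.\ $\pi(s_{S,\varphi})=s_{T,\pi\circ\varphi}$. Since, by Proposition~\ref{p:residual-B-semigroups}, $\Om AA$ is the inverse limit of the $A$-generated unambiguous finite aperiodic semigroups, the compatible family $(s_{S,\varphi})$ defines a unique element $w\in\Om AA$ with $\varphi_{\pv A}(w)=s_{S,\varphi}$ for all such $\varphi$. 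It then remains to prove $\mathfrak{L}_c(w)\cong(L,\ell)$. By Proposition~\ref{p:direct-implication-of-theorem-we-search}, for every such $(\varphi,S)$ the cluster word $\mathfrak{L}_c(w)$ is recognized by $(\varphi,\varphi_{\pv A}(w))=(\varphi,s_{S,\varphi})$ via $g_{w,\varphi}$; and by construction $(L,\ell)$ is also recognized by $(\varphi,s_{S,\varphi})$, via some recognizer that Remark~\ref{r:g-is-determined} shows is unique (using \ref{item:worthy-1}--\ref{item:worthy-3}).

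The heart of the argument is then to extract an order isomorphism $\Phi\colon L\to\mathfrak{L}(w)$ from the fact that $(L,\ell)$ and $\mathfrak{L}_c(w)$ pass the ``same tests.'' I would proceed as follows. On step points, use \ref{item:worthy-1}--\ref{item:worthy-3}: each step point $p$ of $L$ determines, via the elements $s_1,s_2$ recognizing $[\min L,p]$ and $[p,\max L]$, a well-defined pair in every $\mathfrak{F}(\varphi_{\pv A}(w))$, compatibly; by Proposition~\ref{p:residual-B-semigroups} this pair lifts to a unique 2-factorization $(u_p,v_p)\in\mathfrak{F}(w)$, which by \ref{item:worthy-4} and Proposition~\ref{p:shifting-allowed-implies-type-2} is a step point of $\mathfrak{F}(w)$; set $\Phi(p)=(u_p,v_p)/{\sim}$, and note $\ell$-labels are preserved since $\beu(v_p)$ is detected at the finite level. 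One checks $\Phi$ is strictly monotone on step points because $p<q$ in $L$ yields a nontrivial transition $s_1\to s_1'$ at every finite level via \ref{item:worthy-3}, hence $\Phi(p)<\Phi(q)$; the ``$\prec\prec$'' relation and finiteness of intervals are also detected, so $\Phi$ respects $\approx$-classes. For stationary points, use that a stationary point is characterized order-theoretically (no successor/predecessor and not an endpoint), so stationary points of $L$ correspond to the non-step points, and I extend $\Phi$ by continuity/completeness: each stationary $p\in L$ is the supremum of the step points below it in its ``cluster'', and by Proposition~\ref{p:complete-quasi-ordering} and Corollary~\ref{c:Ls-is-compact} the corresponding supremum exists in $\mathfrak{L}(w)$; the matching of the $J_p$-invariants (the $\Upsilon$ part of the mapping $\Gamma$ from the previous section) is forced by \ref{item:worthy-3} together with Theorem~\ref{t:Jp-equals-Rp-equals-Lp}, which identifies $J_p$ with $L_p$, an object computable from step-point transitions and hence transported by $\Phi$. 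Surjectivity of $\Phi$ follows because any step point of $\mathfrak{F}(w)$ arises as $g_{w,\varphi}(p)$ for a step point $p$ of $\mathfrak L(w)$, and the cluster structure (Theorem~\ref{t:cluster}) means step points are dense, so an isomorphism on step points extends uniquely. \textbf{The main obstacle} I anticipate is the bookkeeping in the inductive extraction of $\Phi$ on the stationary points: one must show that two stationary points of $L$ separated by no step point coincide, and dually, and that the invariant $\Gamma$ (both its $\Upsilon$-component and its $F_g$-component) is genuinely transported, for which Lemma~\ref{l:absorption}, Lemma~\ref{l:if-P-is-maximal-then-inverse-image-by-g-is-closed} and the trivial-category results (Corollary~\ref{c:trivial-groupoid-at-type-2-point}) will all be needed, and assembling these into a clean recursion parallel to the proof of Proposition~\ref{p:the-really-big-jump} is the delicate part.
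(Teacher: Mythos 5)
Your overall architecture coincides with the paper's: necessity via the representation theorem and Proposition~\ref{p:cut-at-type-1}, sufficiency by assembling $w$ as an inverse limit from the unique elements $s_{S,\varphi}$ and mapping each step point $p$ of $L$ to the pair $(w_{[\min L,p]},w_{[p,\max L]})$. But there are two genuine gaps. First, in the necessity direction you verify \ref{item:worthy-3} only for the canonical recognizer $g_{w,\varphi}$ of Proposition~\ref{p:direct-implication-of-theorem-we-search}, whereas the condition quantifies over an \emph{arbitrary} $(\varphi,s)$-recognizer $g$; you cannot invoke Remark~\ref{r:g-is-determined} to identify $g$ with $g_{w,\varphi}$, since that remark presupposes \ref{item:worthy-3}. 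What is needed is Theorem~\ref{t:g-is-a-category-homomorphism} (every recognizer makes $\mathfrak{L}(w)$ a bridge, so every edge between step points is $g$-projected for every $g$), which is how the paper handles it.

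Second, and more seriously, your surjectivity argument for $\Phi$ does not work as stated: the claim that ``any step point of $\mathfrak{F}(w)$ arises as $g_{w,\varphi}(p)$'' together with density of step points does not show that every step point of $\mathfrak{L}(w)$ is hit by a step point of $L$. This is the genuinely delicate part of the converse. The paper's argument takes a step point $(u,v)$ of $\mathfrak{L}(w_L)$, forms $X=\{q\in\step(L):\lambda(q)\leq(u,v)\}$, and proves that $p=\sup X$ is a step point of $L$ with $\lambda(p)=(u,v)$; the proof that $p$ is not stationary uses the cofinality condition \ref{item:run-4} of Definition~\ref{def:successful-run} together with \ref{item:worthy-4} to manufacture a forbidden loop at $g_{w_L,\varphi}(u,v)$. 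Nothing in your sketch supplies this. Relatedly, the ``main obstacle'' you anticipate (transporting the invariant $\Gamma$ and the \Cl J-classes $J_p$ at stationary points) is not actually required: an isomorphism of cluster words is an order isomorphism matching labels on step points only, so once the bijection on step points is established the extension to stationary points is purely order-theoretic (completeness plus density of step points, Theorem~\ref{t:cluster} and Proposition~\ref{p:complete-quasi-ordering}); the machinery of Proposition~\ref{p:the-really-big-jump} belongs to the proof of Theorem~\ref{t:a-description-of-the-representation-theorem}, not here.
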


\begin{proof}
  Let $w\in \Om AA$. By Theorem~\ref{t:g-is-a-category-homomorphism},
  the cluster word $\mathfrak{L}_c(w)$ satisfies
  condition \ref{item:worthy-1}.

  Take two step points $p$ and $q$ of $\mathfrak{L}_c(w)$ such that $p<q$.
  Let $t\in\Om AA$ be the unique transition from $p$ to $q$.
  Applying twice Proposition~\ref{p:cut-at-type-1}, we conclude that
  $([p,q],\ell)$ is isomorphic with $\mathfrak{L}_c(t)$. 
  Therefore, by Theorem~\ref{t:g-is-a-category-homomorphism},
  $([p,q],\ell)$ satisfies condition~\ref{item:worthy-1}.

  By the previous paragraph and by Theorem~\ref{t:g-is-a-category-homomorphism},  the cluster word $\mathfrak{L}_c(w)$ satisfies
  condition \ref{item:worthy-3}.
  
  Suppose condition~\ref{item:worthy-4} does not hold for
  $\mathfrak{L}_c(w)$. Then, there is a step point $p$ such that, for
  every finite aperiodic unambiguous $A$-generated semigroup $S$ and
  every generating mapping $\varphi\colon A\to S$, the vertex $g(p)$
  of~$\mathcal T(s)$ is stabilized by some element of $S$. Let
  $p=(u,v)$. We then have $g(p)=(\varphi_{\pv A}(u),\varphi_{\pv
    A}(v))$ (cf.~Theorem~\ref{t:g-is-a-category-homomorphism}). By a
  standard compactness argument, this implies that $(u,v)$ is
  stabilized by some element of $\Om AA$. In view of
  Proposition~\ref{p:shifting-allowed-implies-type-2}, this is impossible
  since $p$ is a step point.

  Conversely, suppose that $(L,\ell)$ is a worthy cluster word over $A$.
  Let $(\pi_{i})_{i\in I}$ be an inverse system of continuous 
  homomorphisms $\pi_i\colon \Om AA\to S_i$ onto finite aperiodic
  unambiguous $A$-generated semigroups, with connecting homomorphisms
  $\pi_{j,i}\colon S_j\to S_i$, such that $\Om AA=\varprojlim_{i\in I} S_i$.
  According to condition~\ref{item:worthy-1},
  for each $i\in I$, we may consider the unique element $s_i$ of $S_i$ such that
  $(L,\ell)$ is recognized by $(\pi_i,s_i)$.
  Applying Lemma~\ref{l:onto-homomorphism-preserves-recognition},
  we then conclude that $\pi_{j,i}(s_j)=s_i$, whenever $i,j\in I$ are such that
  $i\leq j$.
  Hence, we may consider the unique element $w_L$ of $\Om AA$
  such that
  \begin{equation}\label{eq:definition-of-wl}
  \pi_i(w_L)=s_i,  
  \end{equation}
  for every $i\in I$.
  
    Consider the mapping
  $\lambda\colon\step(L)\to \step(\mathfrak{L}(w_L))$ defined by
  \begin{equation*}
  \lambda(p)=(w_{[\min L,p]},w_{[p,\max L]}).  
  \end{equation*}
  Note that condition~\ref{item:worthy-2}
  ensures that $w_{[\min L,p]}$ and $w_{[p,\max L]}$ are well defined.
  We claim that $\lambda$ is an isomorphism
  between the cluster words $(L,\ell)$ and $\mathfrak{L}_c(w_L)$.
  In the process of proving this
  we show that $(w_{[\min L,p]},w_{[p,\max L]})$
  is indeed a step point of $\mathfrak{F}(w_L)$
  (and thus of $\mathfrak{L}(w_L)$).
  
  We begin by observing
  that formula~\eqref{eq:definition-of-wl} generalizes to
  every generating mapping $\varphi\colon A\to S$ of
  a finite aperiodic unambiguous $A$-generated semigroup.
  Indeed, take $s\in S$ such that $(L,\ell)$ is recognized by $(\varphi,s)$.
  There is some $i\in I$ for which
  there is an onto homomorphism
  $\rho\colon S_i\to S$ satisfying $\varphi_{\pv A}=\rho\circ\pi_i$.
  By Lemma~\ref{l:onto-homomorphism-preserves-recognition}, we know that
  $\rho(s_i)=s$. Therefore, we have
  \begin{equation}\label{eq:second-formula-for-wl}
      \varphi_{\pv A}(w_L)=s. 
  \end{equation}
  For such a pair $(\varphi,s)$, let $g_{\varphi}\colon \step(L)\to \mathfrak{F}(s)$
  be the unique $(\varphi,s)$-recognizer of~$(L,\ell)$.
  If $p$ is a step point of $L$, then applying formula
  \eqref{eq:second-formula-for-wl}
  to $[\min L,p]$ and to
  $[p,\max L]$, and taking into account
  Remark~\ref{r:g-is-determined}, we conclude that
  \begin{equation}\label{eq:who-is-g-fi}
  g_\varphi(p)=(\varphi_{\pv A}(w_{[\min L,p]}),\varphi_{\pv A}(w_{[p,\max L]})).  
\end{equation}
In particular,
we have
\begin{equation*}
  \varphi_{\pv A}(w_{[\min L,p]}w_{[p,\max L]})=s=\varphi_{\pv A}(w_L).
\end{equation*}
Because $\varphi$ was arbitrarily chosen among generating mappings of finite
aperiodic unambiguous $A$-generated semigroups, this shows
that the pair $\lambda(p)=(w_{[\min L,p]},w_{[p,\max L]})$ indeed belongs to $\mathfrak{F}(w_L)$.
  
  Consider step points $q$ and $r$
  of $L$ such that $q\prec r$. Let $a=\ell(q)$.
  Take a generating mapping $\varphi\colon A\to S$
  of a finite aperiodic unambiguous semigroup~$S$.
  By the definition of $(\varphi,s)$-recognizer,
  we can consider in $\Cl T(s)$ the edge
  $g_\varphi(q)\xrightarrow{\varphi(a)}g_\varphi(r)$.
  In view of formula~\eqref{eq:who-is-g-fi}, applied to $q$ and $r$,
  we then have
  \begin{equation*}
    \varphi_{\pv A}(w_{[\min L,q]}a)=\varphi_{\pv A}(w_{[\min L,r]})\quad
    \text{and}
    \quad
    \varphi_{\pv A}(w_{[q,\max L]})=\varphi_{\pv A}(aw_{[r,\max L]}).
  \end{equation*}
  Since $\varphi$ was arbitrarily chosen among generating mappings of finite
  aperiodic unambiguous $A$-generated semigroups, we conclude that
  \begin{equation*}
  w_{[\min L,q]}a=w_{[\min L,r]}\quad
    \text{and}
    \quad
  w_{[q,\max L]}=aw_{[r,\max L]},  
  \end{equation*}
  that is, $\lambda(q)\xrightarrow{a}\lambda(r)$
  is an edge of $\Cl T(w_L)$.
  By Proposition~\ref{p:covers-in-F},
  we either have $\lambda(q)\sim\lambda(r)$ or $\lambda(q)\prec\lambda(r)$.
  
  If $\lambda(q)\sim\lambda(r)$, then
  there is $z\in a(\Om AA)^I$
  such that
  $\lambda(q)\xrightarrow{z}\lambda(q)$
  is an edge of $\Cl T(w_L)$.
  Therefore, in view of formula~\eqref{eq:who-is-g-fi},
  we conclude that $\varphi(z)$ labels a loop of $\Cl T(s)$
  rooted at $g_\varphi(q)$.
  This contradicts the assumption
  that condition~\ref{item:worthy-4} holds.

  We then conclude that, for step points $q,r$ of $L$, we have
  \begin{equation}\label{eq:succ-is-preserved}
  q\prec r\implies \lambda(q)\prec\lambda(r).  
\end{equation}
  We also showed that $\ell(\lambda(q))=\ell(q)$,
  thus establishing that
  the mapping $\lambda\colon\step(L)\to \step(\mathfrak{L}(w_L))$
  has a well-defined codomain and that it preserves labels.

 Notice that formula~\eqref{eq:who-is-g-fi}
 can now be seen as follows: for
 the $(\varphi,s)$-recognizer
 $g_{w_L,\varphi}\colon\step(\mathfrak{L}(w))\to \mathfrak{F}(s)$
 of $\mathfrak{L}_c(w_L)$,
 as in Proposition~\ref{p:direct-implication-of-theorem-we-search},
 we have
 \begin{equation}\label{eq:relating-recognizers}
   g_\varphi(p)=g_{w_L,\varphi}(\lambda(p)),
 \end{equation}
 for every step point $p$ of $L$.
  
  Let $q$ and $r$ be step points such that $q<r$.  
  Suppose that $\lambda(q)\geq \lambda(r)$.
  Then, for the pair $(\varphi,s)$ considered so far,
  and in view of~\eqref{eq:relating-recognizers},
  we may consider in $\Cl T(s)$
  an edge $g_{\varphi}(r)\xrightarrow{t}g_{\varphi}(q)$ labeled by some
  $t\in S^I$.
On the other hand, according to condition~\ref{item:worthy-3},
there is in $\Cl T(s)$ an edge
$g_{\varphi}(q)\xrightarrow{z}g_{\varphi}(r)$ labeled by some $z\in S$.
   It follows that there is a loop in $\Cl T(s)$
   at $g_\varphi(q)$
   labeled by $zt\in S$.
   This contradicts~\ref{item:worthy-4}. Hence, we have
   $\lambda(q)<\lambda(r)$.
   
   It remains to show that $\lambda$ is onto.
   Let $(u,v)$ be a step point of $\mathfrak L(w_L)$.
   Consider the set 
   \begin{equation*}
     X=\{q\in\step(L):\lambda(q)\leq (u,v)\}.
   \end{equation*}
   Notice that $X$ is nonempty: indeed, one clearly has $\min L\in X$.

   We claim that $p=\sup X$ is a step point.
   Suppose not.
   Let $\varphi\colon A\to S$
   be the generating mapping of a finite aperiodic unambiguous $A$-generated
   semigroup. 
   Since $\Im g_{\varphi}$ is finite
   and $\{q\in\step(L):q>p\}$ is right cofinal at~$p$, there is
   $(s_1,s_2)\in \Im g_{\varphi}$
   such that
   \begin{equation*}
      R=\{q\in\step(L):q>p\text{ and }g_{\varphi}(q)=(s_1,s_2)\}
   \end{equation*}
   is right cofinal at $p$.
   In particular, $g_\varphi^{-1}(s_1,s_2)$ is right cofinal at $p$.
   Taking into account condition \ref{item:run-4}
   in Definition~\ref{def:successful-run},
   we know that $g_\varphi^{-1}(s_1,s_2)$ is also left cofinal at $p$.
   Therefore, there is a step point $q$ such that $q<p$ and
   $g_{\varphi}(q)=(s_1,s_2)$. Since $p=\sup X$, there is a step point
   $q'$ such that $q<q'<p$ and $\lambda(q')\leq (u,v)$.
   We have already shown that $\lambda$ is injective and respects the order,
   so we actually have $\lambda(q)<(u,v)$.
   Let $r$ be an element of the nonempty set $R$. Since $r>p$,
   we have $(u,v)<\lambda (r)$. Let $t_1$ and $t_2$ be (the unique)
   transitions from $\lambda(q)$ to $(u,v)$ and from $(u,v)$ to
   $\lambda(r)$, respectively. Then, in $\Cl T(s)$, we have the
   following edges
   \begin{equation}\label{eq:edges-forming-a-loop}
     g_{w_L,\varphi}(\lambda(q))\xrightarrow{\varphi_{\pv A}(t_1)}
     g_{w_L,\varphi}(u,v)\xrightarrow{\varphi_{\pv A}(t_2)}g_{w_L,\varphi}(\lambda(r)).  
   \end{equation}
   But we have $g_{w_L,\varphi}(\lambda(q))=g_\varphi(q)=(s_1,s_2)=g_\varphi(r)
    =g_{w_L,\varphi}(\lambda(r))$.
   Hence, we can multiply the second edge in~\eqref{eq:edges-forming-a-loop}
   with the first edge, obtaining a loop at
   $g_{w_L,\varphi}(u,v)=(\varphi_{\pv A}(u),\varphi_{\pv A}(v))$
   labeled by $\varphi_{\pv A}(t_2t_1)\in S$,
   leading to a contradiction, since $\mathfrak L_c(w_L)$ satisfies \ref{item:worthy-4}. 
   This establishes the claim that $p$ is a step point, thus $p\in X$.

   Suppose that $\lambda(p)<(u,v)$.
   Let $p'$ be the step point such that $p\prec p'$. Then,
   applying~\eqref{eq:succ-is-preserved},
   we get $\lambda(p)\prec \lambda (p')$. 
   Since $\lambda(p)< (u,v)$, we
   obtain $\lambda(p')\leq (u,v)$, and so $p'\in X$.
   But then $p'\leq \sup X=p$, a contradiction with $p<p'$.
   As $p\in X$, to avoid the contradiction, we must have $\lambda(p)=(u,v)$.
   This concludes the proof that $\lambda$ is onto. 
\end{proof}

It would also be interesting to characterize the worthy clustered
linear orders that arise as the images $\mathfrak{L}_c(w)$ of
$\omega$-words $w$. We leave this as an open problem.

\section{On the cardinality of the set of stationary points}
\label{sec:card-prot-l_2w}

Let $\pv V$ be an equidivisible pseudovariety of semigroups not contained
in~$\pv {CS}$. Then $\pv V$ is finitely cancelable
(cf.~Proposition~\ref{p:equid-conca-are-finitely-cancelable}),
and so, by Theorem~\ref{t:cluster}, for a finite alphabet $A$ and for $w\in\Om AV$,
the set $\step(\mathfrak {L}(w))$ of step points is
the set of isolated points of
$\mathfrak {L}(w)$, with respect to the order topology.
Therefore, $\step(\mathfrak {L}(w))$ is at most countable
by Corollary~\ref{c:Ls-is-compact}.
The aim of this section
is to show that when $A$ has at least two elements,
there are elements $w$ in $\Om AV$
for which the set  $\stat(\mathfrak {L}(w))$ of stationary points
has cardinal $2^{\aleph_0}$. This will be done
using some tools originated from
symbolic dynamics, following an approach that
has been successfully used in recent years
to elucidate structural aspects of
relatively free profinite semigroups~\cite{Almeida:2005c,Almeida&Volkov:2006,Almeida&ACosta:2007a,ACosta&Steinberg:2011,Almeida&ACosta:2013}.

 \subsection{Subshifts}
 Consider a finite alphabet $A$,
 and endow $A^\mathbb{Z}$
 with the product topology, where $A$ is endowed with
 the discrete topology.
 The \emph{shift map} of $A^\mathbb{Z}$ is the
 homeomorphism $\sigma\colon A^\mathbb{Z}\to A^\mathbb{Z}$,
 defined by $\sigma((x_i)_{i\in \mathbb Z})=(x_{i+1})_{i\in \mathbb Z}$.
  A \emph{symbolic dynamical system of  $A^\mathbb{Z}$},
  also called \emph{subshift of $A^\mathbb{Z}$},
  is a nonempty closed subset $\Cl X$ of~$A^\mathbb{Z}$
  such that $\sigma(\Cl X)=\Cl X$.
  The books~\cite{Lind&Marcus:1996,Kitchens:1998}
  are good references on symbolic dynamical systems.

   We say that a subset $L$ of a semigroup $S$ is
\begin{itemize}
\item \emph{factorial} if it is closed under taking factors;
\item \emph{prolongable} if, for every $s\in L$, there are $t,u\in
  S$ such that $ts,su\in L$;
\item \emph{irreducible} if, for all $s,t\in L$, there is $u\in S$
  such that $sut\in L$.
\end{itemize}

If $\Cl X$ is a subshift of $A^\mathbb{Z}$, then $L(\Cl X)$ denotes
the language of the words of $A^+$ of the form
$x_kx_{k+1}\ldots x_{k+n}$, where
$k\in\mathbb Z$, $n\geq 0$ and $(x_i)_{i\in \mathbb Z}\in \Cl X$.
The set $L(\Cl X)$
is a factorial and prolongable language of $A^+$,
and in fact all nonempty factorial and prolongable languages of $A^+$ are of
  this form; moreover, $\Cl Y\subseteq \Cl X$
  if and only if $L(\Cl Y)\subseteq L(\Cl X)$,
  whenever $\Cl X$
  and $\Cl Y$ are subshifts of
  $A^{\mathbb{Z}}$~\cite[Proposition 1.3.4]{Lind&Marcus:1996}.
  Finally, \Cl X is said to be \emph{irreducible} if  $L(\Cl X)$
 is an irreducible subset of $A^+$.

   If $\Cl X$ is a subshift of $A^{\mathbb Z}$ then
   the sequence $(\frac{1}{n}\log_2 |L(\Cl X)\cap A^n|)_n$
   converges to its infimum, which is called the
   \emph{entropy of $\Cl X$} and denoted $h(\Cl X)$~\cite{Lind&Marcus:1996}.
   Note that $\Cl X\subseteq \Cl Y$     implies $h(\Cl X)\leq h(\Cl Y)$,
   whenever $\Cl X$ and $\Cl Y$ are subshifts.

\begin{Remark}\label{r:entropy-full-shift}
  If $\Cl X$ is a subshift of $A^{\mathbb Z}$
  then $h(\Cl X)\leq \log_2|A|=h(A^{\mathbb Z})$.
  Moreover, from the fact that
  $(\frac{1}{n}\log_2 |L(\Cl X)\cap A^n|)_n$ converges to its infimum
  one easily deduces that the subshift $\Cl X$ of $A^{\mathbb Z}$ satisfies
  $h(\Cl X)=\log_2|A|$ if and only if~$\Cl X=A^{\mathbb Z}$
  (this a special case
  of \cite[Corollary 4.4.9]{Lind&Marcus:1996}).
\end{Remark}

\subsection{A special \texorpdfstring{$\Cl J$}{J}-class}    

Consider a subshift $\Cl X$ of $A^{\mathbb Z}$, and suppose that \pv V
is a pseudovariety containing $\pv {LSl}$. Let $M_\pv V(\Cl X)$ be the
set of pseudowords $w\in\Om AV$ such that all finite factors of~$w$
belong to~$L(\Cl X)$. The set $M_\pv V(\Cl X)$ is a factorial subset of~\Om
AV. Because, as it is well known, the languages of the form $A^\ast
uA^\ast$, with $u\in A^+$, are $\pv {LSl}$-recognizable, the
hypothesis that $\pv V$ contains $\pv {LSl}$ ensures that $M_\pv V(\Cl
X)$ is a closed subset of $\Om AV$ (cf.~\cite{ACosta:2007t}).

    \begin{Lemma}\label{l:bridge-between-j-minimum-elements-LSl-case}
      Let $\Cl X$ be an irreducible subshift of
      $A^{\mathbb Z}$.
      Consider a pseudovariety \pv V containing $\pv {LSl}$.
      For every $u,v\in M_\pv V(\Cl X)$ there is $w\in \Om AV$,
      depending only on the finite suffixes of $u$ and
      on the finite prefixes of $v$,
      such that $uwv\in M_\pv V(\Cl X)$.      
    \end{Lemma}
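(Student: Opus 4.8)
The plan is to build $w$ as a limit of connecting words provided by the irreducibility of $\Cl X$, feeding into the construction only the finite suffixes of $u$ and the finite prefixes of $v$. I would first record the properties of $M_\pv V(\Cl X)$ that I need. Since each language $A^\ast fA^\ast$ (with $f\in A^+$) is $\pv{LSl}$-recognizable, hence $\pv V$-recognizable, the set of pseudowords of $\Om AV$ having $f$ as a finite factor is clopen; therefore $M_\pv V(\Cl X)=\bigcap_{f\notin L(\Cl X)}\{x\in\Om AV:f\text{ is not a factor of }x\}$ is a closed, factorial subset of $\Om AV$ with $M_\pv V(\Cl X)\cap A^+=L(\Cl X)$, the last equality because $L(\Cl X)$ is factorial. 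I would also use that $\pv V\supseteq\pv{LSl}\supseteq\pv{LI}$ contains both $\pv K$ and $\pv D$, so that for every $m\geq1$ each pseudoword that is not a word of length $<m$ has a well-defined prefix and suffix of length $m$, which depend locally constantly on the pseudoword.

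For the construction, for $n\geq1$ let $s_n$ be the suffix of $u$ of length $n$ (or $u$ itself, should $u$ be a shorter word) and $t_n$ the prefix of $v$ of length $n$ (or $v$ itself). These are finite factors of $u$ and of $v$, hence lie in $L(\Cl X)$, and are determined by the finite suffixes of $u$ and the finite prefixes of $v$ alone. By irreducibility of $L(\Cl X)$, choose $w_n\in A^+$ with $s_nw_nt_n\in L(\Cl X)$; factoriality then also gives $w_n\in L(\Cl X)$. As $\Om AV$ is compact and metrizable, pass to a subsequence with $n_k\to\infty$ and $w_{n_k}\to w$. Then $w$ depends only on $(s_n)_n$ and $(t_n)_n$, and $w\in M_\pv V(\Cl X)$, since the $w_{n_k}$ lie in $L(\Cl X)\subseteq M_\pv V(\Cl X)$ and $M_\pv V(\Cl X)$ is closed.

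It remains to show that every finite factor of $uwv$ lies in $L(\Cl X)$, and here the crucial input is the \emph{localization of finite factors} over pseudovarieties containing $\pv{LSl}$: a finite factor of a product $xy$ of two pseudowords is a finite factor of $x$, or a finite factor of $y$, or of the form $pq$ with $p$ a finite suffix of $x$ and $q$ a finite prefix of $y$. Applying this twice to $uwv$ reduces the verification to: finite factors of $u$, of $v$ and of $w$, all of which lie in $L(\Cl X)$ since $u,v,w\in M_\pv V(\Cl X)$; words $s\cdot q$ with $s$ a finite suffix of $u$ and $q$ a finite prefix of $w$; words $p\cdot t$ with $p$ a finite suffix of $w$ and $t$ a finite prefix of $v$; and, only in the case where $w$ is itself a word, words $s\cdot w\cdot t$. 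In each of these boundary cases, write $s=s_\ell$, $t=t_\ell$ for the appropriate $\ell$, and pick $k$ large enough that $n_k\geq\ell$ and that the relevant bounded prefix (resp.\ suffix) of $w_{n_k}$ already coincides with that of $w$ (using the local constancy above, together with the fact that $w_{n_k}=w$ eventually when $w$ is a word); then the word under consideration is a factor of $s_{n_k}w_{n_k}t_{n_k}\in L(\Cl X)$, and hence lies in $L(\Cl X)$.

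The step I expect to be the main obstacle is the localization-of-finite-factors statement above: I would either invoke it from the structure theory of $\Om AV$ for $\pv V\supseteq\pv{LSl}$ that already underlies the closedness of $M_\pv V(\Cl X)$ (cf.~\cite{ACosta:2007t}), or prove it directly from the $\pv{LSl}$-recognizability of the languages $A^\ast fA^\ast$, exploiting that such languages only test prefixes, suffixes and factors of bounded length; the remaining points --- compactness and metrizability of $\Om AV$, local constancy of bounded prefixes and suffixes, and the ``$k$ large enough'' bookkeeping --- are routine.
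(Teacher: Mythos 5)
Your proof follows the paper's construction exactly: take $s_n$ the length-$n$ suffix of $u$ and the length-$n$ prefix of $v$, use irreducibility of $\Cl X$ to obtain $w_n$ with $s_nw_nt_n\in L(\Cl X)$, and let $w$ be an accumulation point of $(w_n)_n$. The paper ends with the bare assertion ``Then $w$ has the desired property''; your verification --- localization of finite factors of a product over pseudovarieties containing $\pv{LSl}$, local constancy of bounded prefixes and suffixes, the separate case where $w$ is itself a word --- is precisely the content the paper leaves to the reader, and it is correct.
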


    \begin{proof}
      If $\pv V$ is a pseudovariety containing $\pv D$ and its dual, then
      every infinite element of $\Om AV$ has a unique prefix (suffix)
      in $A^+$ with length $n$, for every $n\geq 1$.
      Let $s_n$ be the suffix
      of length $n$ of $u$
      and let $p_n$ be the prefix of length $n$ of $v$.
      Since $\Cl X$ is irreducible, 
      for each $n\in \mathbb N$, there is
      $w_n\in L(\Cl X)$ such that
      $s_nw_np_n\in L(\Cl X)$.
      Let $w$ be an accumulation point
      of $(w_n)_n$.
      Then $w$ has the desired property.
    \end{proof}

\begin{Prop}[{\cite[Proposition 3.6]{ACosta&Steinberg:2011}}]
  \label{p:closed-factorial-prolongable-irreducible-in-compact}
  Let $S$ be a compact semigroup and let $X\subseteq S$. Then $X$ is a
  closed, factorial, irreducible subset of $S$ if and only if $X$ consists of
  all factors of some regular element of~$S$.
\end{Prop}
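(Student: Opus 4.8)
The plan is to prove the two implications separately; the ``if'' direction is routine, so the real work lies in the ``only if'' direction. Throughout one uses that $S$, being compact, is stable, and that for $t\in S$ the set of factors of $t$ equals $S^It S^I=\{x\in S: x\leq_{\Cl J}t\}$, which is compact (being a continuous image of $S^I\times S^I$) and hence closed. For the ``if'' direction, suppose $X$ is the set of all factors of a regular element $s\in S$, say $s=sxs$ with $x\in S$. Then $s\in X$; $X$ is factorial by transitivity of the factor relation; $X$ is closed, since it is the image under the (continuous) projection $S^I\times S\times S^I\to S$ onto the middle coordinate of the compact set $\{(a,t,b): atb=s\}$; and $X$ is irreducible, because if $s=at_1b=ct_2d$ with $a,b,c,d\in S^I$, then $s=sxs=a\,[t_1(bxc)t_2]\,d$, so $t_1ut_2$ with $u=bxc\in S$ is again a factor of $s$, i.e.\ belongs to $X$.

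For the converse, assume $X$ is closed, factorial and irreducible (in particular nonempty). The first step is to show that $X$ has a \Cl J-minimum, i.e.\ an element $s\in X$ with $s\leq_{\Cl J}t$ for every $t\in X$; granting this, $X$ is exactly the set of factors of $s$, since $s\in X$ together with factoriality gives ``$\supseteq$'', and $s$ being \Cl J-below every element of $X$ gives ``$\subseteq$''. To produce $s$, consider the family $(X_t)_{t\in X}$, where $X_t=X\cap S^It S^I$; each $X_t$ is closed, hence compact. Iterating irreducibility, for any $t_1,\dots,t_n\in X$ there are $u_1,\dots,u_{n-1}\in S$ with $z=t_1u_1t_2u_2\cdots u_{n-1}t_n\in X$, and $z\leq_{\Cl J}t_i$ for each $i$; thus $(X_t)_{t\in X}$ has the finite intersection property, and by compactness of $X$ its total intersection is nonempty, any element $s$ of which is the sought \Cl J-minimum.

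The second step is to show that $s$ is regular, and this is where stability enters essentially. By irreducibility applied to the pair $(s,s)$ there is $u\in S$ with $sus\in X$, hence $s\mathrel{\Cl J}sus$ by minimality of $s$. Since $sus\leq_{\Cl R}s$, stability yields $sus\mathrel{\Cl R}s$, so $s=sus\,p$ for some $p\in S^I$; setting $r=usp\in S$ gives $sr=s$, hence $sr^n=s$ for all $n\geq1$, and therefore $se=s$ for the idempotent $e=r^{\omega}$. Now $e\leq_{\Cl J}r\leq_{\Cl J}us\leq_{\Cl J}s$ (using that $\{x:x\leq_{\Cl J}r\}$ is closed and contains every power of $r$), while $se=s$ gives $s\leq_{\Cl L}e$, hence $s\leq_{\Cl J}e$; so $s\mathrel{\Cl J}e$ and, by stability again, $s\mathrel{\Cl L}e$. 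Consequently $e\in S^Is$, say $e=vs$ with $v\in S^I$, and then $s=se=svs$, so $s$ is regular (with $s=s^2$ in the degenerate case $v=1$). This completes the argument.

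The main obstacle, as I see it, is the regularity step: converting the single \Cl J-equivalence $s\mathrel{\Cl J}sus$ furnished by irreducibility into an idempotent $e$ with $e\mathrel{\Cl J}s$, for which the idempotent power $r^{\omega}$ combined with stability is the decisive device. By contrast, the existence of the \Cl J-minimum is a routine compactness argument once iterated irreducibility is noted, and the ``if'' direction is entirely routine.
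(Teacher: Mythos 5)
Your proof is correct. The paper states this proposition with a citation to \cite[Proposition~3.6]{ACosta&Steinberg:2011} and gives no proof of its own, so there is nothing in the paper to compare against; your argument---iterating irreducibility to get the finite-intersection property for the family $\{X\cap S^ItS^I\}_{t\in X}$, hence by compactness a $\Cl J$-minimum $s\in X$, and then applying irreducibility at $(s,s)$ together with two uses of stability (first for $\Cl R$ to get $s=sus\,p$, then for $\Cl L$ after constructing the idempotent $e=r^\omega$) to obtain $s=svs$---is the natural one and, as far as I can tell, follows the same path as the cited source. Two small points you handled correctly and which are worth noting: you use that $S^Ir S^I$ is closed (a continuous image of a compact set) to conclude $r^\omega\leq_{\Cl J}r$, and you treat the degenerate case $v=1$ (where $s$ comes out idempotent) explicitly. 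Finally, the label of the proposition mentions ``prolongable'' while the statement omits it; as your argument implicitly shows, prolongability is a consequence of irreducibility together with factoriality, so it is not an independent hypothesis here.
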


By Lemma~\ref{l:bridge-between-j-minimum-elements-LSl-case}
and
Proposition~\ref{p:closed-factorial-prolongable-irreducible-in-compact},
if $\Cl X$ is an irreducible subshift, then
there is a unique regular $\Cl J$-class $J_\pv V(\Cl X)$
such that the elements of $M_\pv V(\Cl X)$ are the factors
of elements of $J_\pv V(\Cl X)$.

\begin{Remark}\label{r:j-order-shifts}
It also follows from
Proposition~\ref{p:closed-factorial-prolongable-irreducible-in-compact}
that $J_{\pv V}(\Cl X)\leq_{\Cl J} J_{\pv V}(\Cl Y)$
if and only if $M_\pv V(\Cl Y)\subseteq M_\pv V(\Cl X)$.
Since we clearly have $M_\pv V(\Cl Y)\subseteq M_\pv V(\Cl X)$
if and only if $L(\Cl Y)\subseteq L(\Cl X)$,
we conclude that
    \begin{equation}
      \label{eq:subshifts-j-classes}
    \Cl Y\subseteq \Cl X\iff
    J_\pv V(\Cl X)\leq_{\mathrel{\Cl J}} J_\pv V(\Cl Y).
    \end{equation}  
\end{Remark}

 \begin{Remark}\label{r:the-case-of-minimum-ideal}
   If $\Cl X=A^{\mathbb Z}$ then $M_\pv V(\Cl X)=\Om AV$, and so
   $J_{\pv V}(\Cl X)$ is the minimum ideal of $\Om AV$.
 \end{Remark}

 \subsection{Uncountable \texorpdfstring{$<_{\mathcal R}$}{<R}-chains
   and uncountable sets of stationary points}

We use the standard notation $\lceil \alpha\rceil$ for the least
integer greater than or equal to the real number $\alpha$.

\begin{Thm}\label{t:theorem-beta-shifts-that-provides-examples}
  There is a family $(\mathcal S_\beta)_{\beta\in {]}1,+\infty{[}}$ of
  symbolic dynamical systems, parameterized by the set of real numbers
  greater than one, such that:
  \begin{enumerate}
  \item $\cal S_\beta$ is an irreducible subshift of
    $\{0,\ldots,\lceil \beta\rceil-1\}^{\mathbb Z}$;
    \label{item:theorem-beta-shifts-that-provides-examples-1}
  \item $h(\cal
    S_\beta)=\log_2\beta$;\label{item:theorem-beta-shifts-that-provides-examples-2}
  \item for every $\alpha,\beta\in {]}1,+\infty{[}$, we have
    $\alpha<\beta$ if and only if $\cal S_{\alpha}\subsetneq\cal
    S_{\beta}$.\label{item:theorem-beta-shifts-that-provides-examples-3}
  \end{enumerate}      
\end{Thm}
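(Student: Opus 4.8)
The plan is to use the classical $\beta$-shifts from symbolic dynamics, for which all three required properties are essentially known, and to assemble the references into a self-contained argument. For a real number $\beta>1$, recall the greedy $\beta$-expansion map: writing $d_\beta(x)$ for the greedy expansion of $x\in[0,1)$ in base $\beta$ over the digit alphabet $\{0,\ldots,\lceil\beta\rceil-1\}$, one sets $\Cl S_\beta$ to be the closure under $\sigma$ of the set of sequences obtained from such expansions, or equivalently the subshift whose language $L(\Cl S_\beta)$ consists of the finite factors of $\beta$-expansions. The key combinatorial fact, due to Parry, is that $\Cl S_\beta$ is characterized by its ``quasi-greedy'' expansion $d_\beta^*(1)$: a word $w$ lies in $L(\Cl S_\beta)$ if and only if every suffix of $w$ is lexicographically $\le d_\beta^*(1)$. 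First I would cite Parry's theorem and the monograph treatment in \cite{Lind&Marcus:1996} (or a standard reference on $\beta$-expansions) to extract exactly the three properties, then verify each.

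For property~\ref{item:theorem-beta-shifts-that-provides-examples-1}, the alphabet $\{0,\ldots,\lceil\beta\rceil-1\}$ is forced because the largest digit appearing in a greedy $\beta$-expansion is $\lceil\beta\rceil-1$. That $\Cl S_\beta$ is a nonempty closed shift-invariant set is immediate from the suffix-lexicographic characterization (the defining condition is closed and shift-invariant, and e.g.\ the all-zero sequence satisfies it). Irreducibility amounts to showing that $L(\Cl S_\beta)$ is an irreducible subset of the free semigroup: given $u,v\in L(\Cl S_\beta)$, one wants $w$ with $uwv\in L(\Cl S_\beta)$; taking $w$ to be a long block of $0$'s works, since prepending $0$'s can only lexicographically decrease suffixes, so $u0^kv$ still satisfies Parry's condition as soon as $k$ is large enough that the suffixes starting inside $v$ are unaffected and those straddling the $0$'s begin with enough zeros. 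For property~\ref{item:theorem-beta-shifts-that-provides-examples-2}, the entropy of the $\beta$-shift equals $\log_2\beta$; this is the classical computation of Parry/R\'enyi (the number of admissible words of length $n$ grows like $\beta^n$), and I would simply invoke it, noting that $h(\Cl S_\beta)$ as defined in the paper via $\lim\frac1n\log_2|L(\Cl S_\beta)\cap A^n|$ coincides with the topological entropy.

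The heart of the matter is property~\ref{item:theorem-beta-shifts-that-provides-examples-3}, strict monotonicity in $\beta$. One direction is soft: if $\alpha<\beta$ then $\Cl S_\alpha\subseteq\Cl S_\beta$ because $d_\alpha^*(1)$ is lexicographically smaller than $d_\beta^*(1)$ (the quasi-greedy expansion of $1$ is strictly increasing in $\beta$, a standard fact), and Parry's suffix condition is then monotone. Combined with Remark~\ref{r:entropy-full-shift} this already gives $h(\Cl S_\alpha)=\log_2\alpha<\log_2\beta=h(\Cl S_\beta)$, whence $\Cl S_\alpha\ne\Cl S_\beta$, so in fact $\Cl S_\alpha\subsetneq\Cl S_\beta$; that same entropy strict inequality also rules out $\Cl S_\beta\subseteq\Cl S_\alpha$ for $\alpha<\beta$, giving the converse implication and the ``only if''. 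Conversely, if $\Cl S_\alpha\subsetneq\Cl S_\beta$ then we cannot have $\alpha\ge\beta$: if $\alpha=\beta$ the shifts are equal, and if $\alpha>\beta$ the first part of the argument would give $\Cl S_\beta\subsetneq\Cl S_\alpha$, contradicting $\Cl S_\alpha\subsetneq\Cl S_\beta$. So $\alpha<\beta$, completing the equivalence.

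The main obstacle I anticipate is not the existence of the family but bookkeeping around the precise definitions: matching the paper's notion of subshift, its entropy formula, and its alphabet convention $\{0,\ldots,\lceil\beta\rceil-1\}$ to the forms found in the literature, and handling the boundary cases where $\beta$ is an integer (so $\lceil\beta\rceil-1=\beta-1$ and the digit alphabet is ``full'' but $\Cl S_\beta$ is still a proper subshift of $\{0,\ldots,\beta-1\}^{\mathbb Z}$ unless $\beta=\lceil\beta\rceil$ exactly, in which case one must check $h(\Cl S_\beta)=\log_2\beta=\log_2|A|$ forces $\Cl S_\beta$ to be the full shift, which is consistent). A cleaner alternative, avoiding $\beta$-shift technicalities entirely, would be to build the family by hand: for each $\beta$ choose a parameter word and define $\Cl S_\beta$ by an explicit suffix-lexicographic condition, then prove irreducibility and the entropy formula directly; but since the paper only needs the three listed properties, invoking the well-developed theory of $\beta$-expansions is the most economical route, and I would take that path.
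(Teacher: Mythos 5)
Your proposal follows the same route as the paper: both identify the classical $\beta$-shifts as the family and invoke the $\beta$-expansion literature (R\'enyi, Parry, Ito--Takahashi) for the entropy formula and the monotone inclusion, then deduce strictness of the inclusion from the strict entropy inequality exactly as you do. The only variation is on irreducibility, where the paper simply cites that $\beta$-shifts are coded (hence irreducible), while you sketch a direct argument with blocks of $0$'s; that sketch is slightly loose --- when a suffix $u'$ of $u$ is a prefix of $d_\beta^*(1)$ you must take $k$ large enough to outlast the run of $0$'s following $u'$ in $d_\beta^*(1)$, which is possible but needs to be said --- yet it is repairable and the overall argument is the same.
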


A concrete family of symbolic dynamical systems satisfying the
conditions of
Theorem~\ref{t:theorem-beta-shifts-that-provides-examples}
is the family of \emph{$\beta$-shifts}. A comprehensive exposition
about this family can be found
in~\cite[Chapter 7]{Lothaire:2001}
and \cite{Frougny:2000}.
That these subshifts are irreducible follows from them being \emph{coded}~\cite{Bertrand-Mathis:1986} ---
a subshift $\Cl X$ of $A^{\mathbb Z}$ is coded if there is a prefix
code $Y$ contained in 
$A^+$ such that $L(\cal X)$ is the set of factors of elements of $Y^+$.
The entropy of $\beta$-shifts was computed
in~\cite{Renyi:1957,Parry:1960}.
The fact that this class fits into
Property~\ref{item:theorem-beta-shifts-that-provides-examples-3}
of Theorem~\ref{t:theorem-beta-shifts-that-provides-examples}
appears at the beginning of~\cite[Section 4]{Ito&Takahashi:1974}
(actually, only the implication
$\alpha\leq \beta\Rightarrow\cal S_{\alpha}\subseteq\cal S_{\beta}$
is explicit there, but
from Property~\ref{item:theorem-beta-shifts-that-provides-examples-2}
one gets $\cal S_{\alpha}\subsetneq\cal S_{\beta}\Rightarrow\alpha <\beta$).

As usual, the notation $<_{\Cl J}$ stands for the irreflexive relation
originated by $\leq_{\Cl J}$, and similarly for $<_{\Cl R}$ and $<_{\Cl L}$.
From Theorem~\ref{t:theorem-beta-shifts-that-provides-examples}
   and equivalence~\eqref{eq:subshifts-j-classes}
   in Remark~\ref{r:j-order-shifts}
   one immediately deduces the existence of a $<_{\Cl J}$-chain
   in $\Om AV$ formed by $2^{\aleph_0}$ regular elements,
whenever~$\pv V$
contains $\pv {LSl}$
and $A$ has at least two letters.
The next theorem gives a refinement of this, as it shows in particular
the existence in $\Om AV$ of a $<_{\Cl R}$-chain
formed by $2^{\aleph_0}$ regular elements.
We remark that
in~\cite{Costa:2001a} an example is given of a $<_{\Cl R}$-chain
of $2^{\aleph_0}$ non-regular elements in $\Om A{LSl}$, when $|A|>1$.

\begin{Thm}\label{t:pseudword-minimum-ideal-r-chain-limit}
  Let \pv V be a finitely cancelable pseudovariety of semigroups
  containing $\pv {LSl}$ and let $(\mathcal S_\beta)_{\beta\in
    {]}1,+\infty{[}}$ be a family of subshifts as in
  Theorem~\ref{t:theorem-beta-shifts-that-provides-examples}. Fix an
  integer $n>1$ and let $A$ be the alphabet $\{0,\ldots,n-1\}$. There
  is a family $(w^{(\beta)})_{\beta\in {]}1,n{]}}$ of pseudowords of
  $\Om {A}V$ satisfying the following conditions:
  \begin{enumerate}
  \item\label{item:pseudword-minimum-ideal-r-chain-limit-1}
    $w^{(\beta)}\in J_\pv V(\Cl S_{\beta})\subseteq \Om {A}V$,
    for every
    $\beta\in {]}1,n{]}$;
  \item\label{item:pseudword-minimum-ideal-r-chain-limit-2}
    $\alpha<\beta \Leftrightarrow w^{(\beta)}<_{\Cl R}w^{(\alpha)}$,
        for every
    $\alpha,\beta\in {]}1,n{]}$;
  \item\label{item:pseudword-minimum-ideal-r-chain-limit-2-e-meio}
    $w^{(n)}$ is an element of the minimum ideal of $\Om AV$;
  \item\label{item:pseudword-minimum-ideal-r-chain-limit-3} there is a subnet of
    $(w^{(\beta)})_{\beta\in {{]}1,n{[}}}$ converging to $w^{(n)}$, where
    ${{]}1,n{[}}$ is endowed with the usual order;
  \item\label{item:pseudword-minimum-ideal-r-chain-limit-4} for each
    $\beta\in {]}1,n{]}$, there are $v^{(\beta)}$,
    $f^{(\beta)}$
    such that $(w^{(\beta)},v^{(\beta)})$ is a stationary point of
    $\mathfrak F(w^{(n)})$, 
    and, for $q_\beta=(w^{(\beta)},v^{(\beta)})/{\sim}$,
    the pseudoword 
    $f^{(\beta)}$ is an idempotent in
    $J_{q_\beta}$ stabilizing $(w^{(\beta)},v^{(\beta)})$
    and satisfying $f^{(\beta)}\mathrel{\Cl L} w^{(\beta)}$;
  \item\label{item:pseudword-minimum-ideal-r-chain-limit-5}
    we have  $\alpha <\beta \Rightarrow q_{\alpha}<q_{\beta}$,
    and if  moreover $\Om AV$ is equidivisible,
    then the equivalence $\alpha <\beta \Leftrightarrow
    q_{\alpha}<q_{\beta}$ holds, for every $\alpha,\beta\in {]}1,n{]}$.
  \end{enumerate}
\end{Thm}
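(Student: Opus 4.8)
The plan is to realize all the $w^{(\beta)}$ as prefixes of a single pseudoword $w^{(n)}$, arranged along one chain, with associated $\Cl J$-classes the classes $J_\pv V(\Cl S_\beta)$ furnished by Lemma~\ref{l:bridge-between-j-minimum-elements-LSl-case} and Proposition~\ref{p:closed-factorial-prolongable-irreducible-in-compact}. First I would dispose of the structural facts about the $\Cl S_\beta$: since $h(\Cl S_n)=\log_2 n=\log_2|A|$, Remark~\ref{r:entropy-full-shift} gives $\Cl S_n=A^{\mathbb Z}$, so $M_\pv V(\Cl S_n)=\Om AV$ and, by Remark~\ref{r:the-case-of-minimum-ideal}, $J_\pv V(\Cl S_n)$ is the minimum ideal of $\Om AV$; moreover $\bigcup_{\beta<n}L(\Cl S_\beta)=A^+$ (a word avoided by every $\Cl S_\beta$ would confine each $\Cl S_\beta$ to a fixed proper subshift of finite type, of entropy strictly below $\log_2 n$, contradicting $h(\Cl S_\beta)=\log_2\beta\to\log_2 n$), and the $\beta$-shift family is left-continuous, $\bigcup_{\gamma<\beta}L(\Cl S_\gamma)=L(\Cl S_\beta)$.

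The core is the construction. Fix a countable dense subset $Q$ of $]1,n[$. By a staged recursion inside $\Om AV$ that glues pieces together with the bridge words of Lemma~\ref{l:bridge-between-j-minimum-elements-LSl-case}, I would build, for $\beta\in Q$ processed in order of increasing magnitude, the factorization of $w^{(n)}$ as $\cdots\,w^{(\beta)}\,f^{(\beta)}\,\cdots$, where $w^{(\beta)}$ is obtained by concatenating (with bridges) all of $L(\Cl S_\beta)$, so that its factor set is a closed, factorial, irreducible subset with language $L(\Cl S_\beta)$, hence equals $M_\pv V(\Cl S_\beta)$, whence $w^{(\beta)}\in J_\pv V(\Cl S_\beta)$ by Proposition~\ref{p:closed-factorial-prolongable-irreducible-in-compact} and the uniqueness of that class; and $f^{(\beta)}$ is the unique idempotent of $J_\pv V(\Cl S_\beta)$ in the $\Cl L$-class of $w^{(\beta)}$, inserted immediately after $w^{(\beta)}$, with the cut $w^{(n)}=w^{(\beta)}v^{(\beta)}$ taken right before $f^{(\beta)}$, so that $w^{(\beta)}f^{(\beta)}=w^{(\beta)}$ (as $w^{(\beta)}\mathrel{\Cl L}f^{(\beta)}$) and $f^{(\beta)}v^{(\beta)}=v^{(\beta)}$ (as $v^{(\beta)}=f^{(\beta)}v'$). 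The limit $w^{(n)}$ of the resulting net of partial constructions contains every finite word as a factor, by $\bigcup L(\Cl S_\beta)=A^+$, hence is $\Cl J$-minimum, giving \ref{item:pseudword-minimum-ideal-r-chain-limit-2-e-meio}; since $]1,n[$ is directed and $\Om AV$ compact, a subnet of $(w^{(\beta)})_{\beta\in Q}$ converges to $w^{(n)}$, giving \ref{item:pseudword-minimum-ideal-r-chain-limit-3}; and each $w^{(\beta)}$ is a prefix of $w^{(n)}$ because $w^{(\beta)}(\Om AV)^I$ is closed and $w^{(\beta)}$ is a prefix of every later $w^{(\gamma)}$. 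For $\beta\in]1,n[\setminus Q$ I would define $w^{(\beta)},f^{(\beta)},v^{(\beta)}$ as limits along a subnet of $(w^{(\gamma)},f^{(\gamma)},v^{(\gamma)})_{\gamma\uparrow\beta,\ \gamma\in Q}$; left-continuity of the $\beta$-shifts identifies the factor set of $w^{(\beta)}$ with $M_\pv V(\Cl S_\beta)$, giving \ref{item:pseudword-minimum-ideal-r-chain-limit-1}.

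Then I would read off the remaining items. For \ref{item:pseudword-minimum-ideal-r-chain-limit-2}: $\alpha<\beta$ makes $w^{(\alpha)}$ a prefix of $w^{(\beta)}$, so $w^{(\beta)}\le_{\Cl R}w^{(\alpha)}$, strictly because $w^{(\beta)}\in J_\pv V(\Cl S_\beta)<_{\Cl J}J_\pv V(\Cl S_\alpha)\ni w^{(\alpha)}$ by \eqref{eq:subshifts-j-classes} and Theorem~\ref{t:theorem-beta-shifts-that-provides-examples}\ref{item:theorem-beta-shifts-that-provides-examples-3}, and the converse is formal. For \ref{item:pseudword-minimum-ideal-r-chain-limit-4}: $f^{(\beta)}$ is an idempotent stabilizing $(w^{(\beta)},v^{(\beta)})$, hence a factor of the elements of $J_{q_\beta}$ by Lemma~\ref{l:a-trivial-characterization-of-Jp}; combined with $f^{(\beta)}\mathrel{\Cl L}w^{(\beta)}\in J_\pv V(\Cl S_\beta)$ this forces $J_{q_\beta}=J_\pv V(\Cl S_\beta)$ and $f^{(\beta)}\in J_{q_\beta}$, while $(w^{(\beta)},v^{(\beta)})$ is a stationary point by Proposition~\ref{p:shifting-allowed-implies-type-2}, being stabilized by $f^{(\beta)}\in\Om AV$ but not maximal (since $v^{(\beta)}$ is infinite). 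For \ref{item:pseudword-minimum-ideal-r-chain-limit-5}: $w^{(\alpha)}$ a prefix of $w^{(\beta)}$ together with the coherent choice of continuations gives $(w^{(\alpha)},v^{(\alpha)})\le(w^{(\beta)},v^{(\beta)})$ in $\mathfrak F(w^{(n)})$, so $q_\alpha\le q_\beta$, strictly because $\sim$-equivalent $2$-factorizations have $\Cl R$-equivalent first components while $w^{(\alpha)},w^{(\beta)}$ lie in distinct $\Cl J$-classes; when $\Om AV$ is equidivisible, $\mathfrak L(w^{(n)})$ is linearly ordered and the converse follows by trichotomy.

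The main obstacle is the construction of the second paragraph: the index set $]1,n[$ is dense and not well-ordered, so $w^{(n)}$ cannot be built by a naive left-to-right limit of a sequence — a genuinely transfinite, carefully staged recursion (or an equivalent compactness packaging over the poset of partial constructions) is needed so that the ``windows'' are interleaved in the correct order. More delicate still, the idempotency of $f^{(\beta)}$ and the absorption identities $w^{(\beta)}f^{(\beta)}=w^{(\beta)}$, $f^{(\beta)}v^{(\beta)}=v^{(\beta)}$ must survive the limits defining $w^{(n)}$ and the $w^{(\beta)}$ with $\beta\notin Q$ — which is not automatic, since limits of idempotents need not be idempotent — so one must choose the approximating subnets so as to preserve these relations, or re-derive them at the limit using the structure results of the preceding sections (triviality of $\Cl K_p$, the characterization of $J_p$, and finite cancelability).
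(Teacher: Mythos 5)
Your plan does identify the right ingredients (the bridge Lemma~\ref{l:bridge-between-j-minimum-elements-LSl-case}, Proposition~\ref{p:closed-factorial-prolongable-irreducible-in-compact}, entropy semicontinuity, the entropy characterization of the minimum ideal, Proposition~\ref{p:shifting-allowed-implies-type-2}), and you correctly diagnose that a naive left-to-right recursion over the dense parameter set cannot work and that some Zorn-type packaging is needed. But there is a genuine gap: you never carry out the construction, and you explicitly acknowledge this (``the main obstacle is the construction of the second paragraph''). The paper resolves exactly this obstacle with Lemma~\ref{l:maximal-r-chain-general-situation} and Proposition~\ref{p:continuum-chain-of-R-classes}, which produce the whole $\leq_{\Cl R}$-chain $(w^{(\beta)})_{\beta\in{]}1,n{[}}$ of elements $w^{(\beta)}\in J_{\pv V}(\Cl S_\beta)$ in one shot, as a maximal element (via Zorn) of the poset of partial chains $R_{\mathscr J}$; the extension step in the proof of Proposition~\ref{p:continuum-chain-of-R-classes} is where the bridge lemma is actually used. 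Your sketch gestures at this (``compactness packaging over the poset of partial constructions'') but stops short of it.

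A second, more structural gap is the decision to insert the idempotents $f^{(\beta)}$ \emph{during} the construction of $w^{(n)}$, which is what forces you to worry about idempotency and the absorption identities surviving limits (you flag this and do not resolve it). The paper avoids this entirely by reversing the order of construction: first build the chain $(w^{(\beta)})$, take $w^{(n)}$ as an accumulation point of a sequence with $\alpha_k\uparrow n$ (using Lemma~\ref{l:entropy-is-upper-semi-continuous}, Remark~\ref{r:entropy-full-shift} and Proposition~\ref{p:maximal-entropy-is-minimum-ideal} to show $w^{(n)}$ lies in the minimum ideal), and only \emph{then} observe that since $w^{(\beta)}$ is regular, its $\Cl L$-class contains an idempotent $f^{(\beta)}$, and that setting $v^{(\beta)}=f^{(\beta)}u^{(\beta)}$ with $w^{(n)}=w^{(\beta)}u^{(\beta)}$ gives a stationary point stabilized by $f^{(\beta)}$ with no limiting argument needed. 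This eliminates both of the delicate steps you flagged at the end of your proposal.

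Two smaller issues: the claim that the factor set of a $w^{(\beta)}$ obtained by concatenating all of $L(\Cl S_\beta)$ ``equals $M_{\pv V}(\Cl S_\beta)$'' is not justified (the factor set would be contained in $M_{\pv V}(\Cl S_\beta)$, and even assuming irreducibility of that factor set, one would need Proposition~\ref{p:closed-factorial-prolongable-irreducible-in-compact} together with an additional argument to place $w^{(\beta)}$ in the $\Cl J$-minimum class); and the appeal to left-continuity of the $\beta$-shift family, $\bigcup_{\gamma<\beta}L(\Cl S_\gamma)=L(\Cl S_\beta)$, is unproved and not needed in the paper's route, which only relies on the properties listed in Theorem~\ref{t:theorem-beta-shifts-that-provides-examples}.
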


The proof of Theorem~\ref{t:pseudword-minimum-ideal-r-chain-limit}
will
be done in several steps.
But first
we highlight the following corollary, which is our main motivation for
the theorem.

\begin{Cor}
  \label{c:continuum-many-type-2-elements}
  Let \pv V be a finitely cancelable pseudovariety of semigroups
  containing $\pv {LSl}$ and
  let $A$ be a finite alphabet with at least two elements.
  Then there are pseudowords $w$ in the minimum ideal of $\Om AV$
  such that $\stat(\mathfrak L(w))$ has~$2^{\aleph_0}$ elements.\qed
\end{Cor}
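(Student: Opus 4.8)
The plan is to obtain Corollary~\ref{c:continuum-many-type-2-elements} as an essentially immediate consequence of Theorem~\ref{t:pseudword-minimum-ideal-r-chain-limit}, so that no substantial new argument is needed. First I would reduce to the setting of that theorem: writing $n=|A|\ge 2$, a bijection between $A$ and $\{0,\ldots,n-1\}$ renames the free generators and hence induces an isomorphism of $\Om AV$ onto $\Om{\{0,\ldots,n-1\}}V$ that carries the minimum ideal onto the minimum ideal and is compatible with the formation of $\mathfrak {F}(\,\cdot\,)$ and $\mathfrak {L}(\,\cdot\,)$; so we may assume $A=\{0,\ldots,n-1\}$. Applying Theorem~\ref{t:pseudword-minimum-ideal-r-chain-limit} to this $n$, I would then take $w=w^{(n)}$, which lies in the minimum ideal of $\Om AV$ by item~\ref{item:pseudword-minimum-ideal-r-chain-limit-2-e-meio}.

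For the lower bound I would use items~\ref{item:pseudword-minimum-ideal-r-chain-limit-4} and~\ref{item:pseudword-minimum-ideal-r-chain-limit-5}: the former yields, for each $\beta\in{]}1,n{]}$, a stationary point $q_\beta=(w^{(\beta)},v^{(\beta)})/{\sim}$ of $\mathfrak {F}(w)$, that is, an element of $\stat(\mathfrak {L}(w))$, while the latter gives $\alpha<\beta\Rightarrow q_\alpha<q_\beta$. Since $<$ is irreflexive, the assignment $\beta\mapsto q_\beta$ is injective, so $\stat(\mathfrak {L}(w))$ has at least $|{]}1,n{]}|=2^{\aleph_0}$ elements.

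For the matching upper bound I would invoke the standard fact that, $A$ being finite, $\Om AV$ is compact and metrizable, hence separable, so $|\Om AV|\le 2^{\aleph_0}$; therefore $\mathfrak {F}(w)$, being a subset of $(\Om AV)^I\times(\Om AV)^I$, and a fortiori its quotient $\mathfrak {L}(w)$, has cardinality at most $2^{\aleph_0}$, whence $|\stat(\mathfrak {L}(w))|\le 2^{\aleph_0}$. Combining the two estimates gives $|\stat(\mathfrak {L}(w))|=2^{\aleph_0}$, as required. I do not expect any real obstacle here: the entire difficulty is packaged in Theorem~\ref{t:pseudword-minimum-ideal-r-chain-limit}, and the only points to verify in the corollary itself — the innocuous renaming of generators and the observation that the $q_\beta$ are genuine stationary points of $\mathfrak {L}(w)$, not merely of some auxiliary object — are read off directly from that theorem's statement.
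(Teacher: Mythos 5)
Your proof is correct and takes the same route the paper implicitly intends (the \qed mark signals that the corollary is an immediate read-off from Theorem~\ref{t:pseudword-minimum-ideal-r-chain-limit}): set $w=w^{(n)}$, use item~\ref{item:pseudword-minimum-ideal-r-chain-limit-2-e-meio} for membership in the minimum ideal, and use items~\ref{item:pseudword-minimum-ideal-r-chain-limit-4}--\ref{item:pseudword-minimum-ideal-r-chain-limit-5} to produce an injection ${]}1,n{]}\hookrightarrow\stat(\mathfrak L(w))$ via $\beta\mapsto q_\beta$, the implication $\alpha<\beta\Rightarrow q_\alpha<q_\beta$ together with irreflexivity of $<$ sufficing for injectivity. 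The extra upper-bound remark (compactness and metrizability of $\Om AV$ give $|\Om AV|\le 2^{\aleph_0}$) is a sound and useful addition that the paper leaves tacit.
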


Note also that Theorem~\ref{t:pseudword-minimum-ideal-r-chain-limit}
gives an example of a pseudoword in the minimum ideal of $\Om AV$
whose set of stationary points contains a subset with the same order
type as the set of real numbers. In contrast, the following example
exhibits a pseudoword also in the minimum ideal of $\Om AV$ with only
one stationary point.

\begin{Example}
  Let $u_1,u_2,u_3,\ldots$ be an enumeration of the elements
  of~$A^+$, and let $\pv V$ be a pseudovariety
  containing $\pv {LSl}$ such that $\Om AV$ is equidivisible.
  For each $k\geq 1$, consider in $\Om AV$ an accumulation
  point~$v_k$
  of the sequence $(u_ku_{k+1}\cdots u_{n-1}u_n)_{n\geq k}$
  and an accumulation point $w_k$ of the sequence
  $(u_nu_{n-1}\cdots u_{k+1}u_k)_{n\geq k}$.
  As every element of $A^+$ is a factor of $v_k$ and $w_k$,
  we know that $v_k$ and $w_k$ belong to the minimum ideal $K_A$ of $\Om AV$.
  Therefore, if $p$ and $q$ are respectively the first
  and last stationary point of $\mathfrak{L}(v_1w_1)$, then $J_p=J_q=K_A$ by
  Theorem~\ref{t:J-class-type-2},
  and so $p=q$ by Lemma~\ref{l:absorption}.
\end{Example}

\subsection{About the proof of Theorem~\ref{t:pseudword-minimum-ideal-r-chain-limit}}

Let $S$ be a compact semigroup and $I$ an ordered set. Suppose that
$F=(F_i)_{i\in I}$ is a nonempty family of compact subsets of $S$.
Denote by $R_F$ the set of partial functions $f$ from $I$ to $\bigcup
F$ such that $f(i)\in F_i$ for all $i\in\Dom f$, and such that $i\leq
j \Rightarrow f(i)\leq_{\Cl R} f(j)$ whenever $i,j\in \Dom f$. We
endow $R_F$ with the partial order $\leq$ defined by
\begin{equation*}
  f\leq g\iff 
  (f=g\lor
  \Dom f\subsetneq \Dom g).
\end{equation*}
    
\begin{Lemma}\label{l:maximal-r-chain-general-situation}
  The ordered set $R_F$ has a maximal element.
\end{Lemma}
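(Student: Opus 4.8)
The plan is to apply Zorn's Lemma to the ordered set $(R_F,\leq)$. The empty partial function vacuously satisfies both conditions defining $R_F$, so $R_F\neq\emptyset$, and it remains to verify that every chain in $R_F$ has an upper bound. Index such a chain as $(f_\alpha)_{\alpha\in A}$, where $A$ is totally ordered (hence directed) and $f_\alpha\leq f_\beta\Leftrightarrow\alpha\leq\beta$; writing $D_\alpha=\Dom f_\alpha$, the definition of $\leq$ on $R_F$ gives $D_\alpha\subseteq D_\beta$ whenever $\alpha\leq\beta$. If the chain has a greatest element, that element is an upper bound and we are done. Otherwise, for each $\alpha$ there is $\beta$ with $f_\alpha<f_\beta$, hence $D_\alpha\subsetneq D_\beta$; so, setting $D=\bigcup_{\alpha\in A}D_\alpha$, we have $D_\alpha\subsetneq D$ for every $\alpha$.

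The key difficulty is that a chain in $R_F$ need not glue into a single function, because comparable members of $R_F$ are not required to agree on the common part of their domains. I would therefore manufacture a new member $f\in R_F$ with $\Dom f=D$ by a compactness argument; then $\Dom f_\alpha=D_\alpha\subsetneq D=\Dom f$ forces $f_\alpha\leq f$ for every $\alpha$, so $f$ is the desired upper bound. For each $i\in I$ put $F_i^{*}=F_i\sqcup\{*_i\}$ with $*_i$ an added isolated point; this is compact Hausdorff and $F_i$ is closed in it. Then $P=\prod_{i\in I}F_i^{*}$ is compact Hausdorff by Tychonoff's theorem. Extend each $f_\alpha$ to $\hat f_\alpha\in P$ by $\hat f_\alpha(i)=f_\alpha(i)$ for $i\in D_\alpha$ and $\hat f_\alpha(i)=*_i$ otherwise, take a convergent subnet of $(\hat f_\alpha)_{\alpha\in A}$ with limit $\hat f\in P$, and let $f$ be the restriction of $\hat f$ to $\{i\in I:\hat f(i)\neq *_i\}$.

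It then remains to check that $f\in R_F$ and $\Dom f=D$. For $i\notin D$ every $\hat f_\alpha(i)=*_i$, so $\hat f(i)=*_i$; and for $i\in D_{\alpha_0}$ one has $\hat f_\alpha(i)\in F_i$ for all $\alpha\geq\alpha_0$, so, since $F_i$ is closed and the subnet is eventually beyond $\alpha_0$, $\hat f(i)\in F_i$. Thus $\Dom f=D$ and $f(i)\in F_i$ for every $i\in\Dom f$. Finally, given $i\leq j$ in $D$, choose $\alpha_0$ with $i,j\in D_{\alpha_0}$; then $\hat f_\alpha(i)\leq_{\Cl R}\hat f_\alpha(j)$ for all $\alpha\geq\alpha_0$, and since $\leq_{\Cl R}$ is a closed relation on $S$ (the image of $S^I\times S^I$ under the continuous map $(t,x)\mapsto(tx,t)$ is compact, hence closed, and $\leq_{\Cl R}$ is its intersection with $S\times S$), this relation passes to the limit along the subnet, giving $f(i)\leq_{\Cl R}f(j)$. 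Hence $f\in R_F$ is an upper bound of the chain, and Zorn's Lemma provides a maximal element of $R_F$. The one point needing genuine care, beyond these routine verifications, is precisely the non-gluing phenomenon noted above, which is what makes the passage through the compact product $P$ (rather than a plain union of the $f_\alpha$) essential.
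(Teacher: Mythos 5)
Your proof is correct and follows essentially the same route as the paper: verify the hypothesis of Zorn's Lemma by extending the members of a chain to elements of a compact product, passing to a convergent subnet, and using closedness of $\leq_{\Cl R}$ and of each $F_i$ to certify that the limit lies in $R_F$ and has domain $\bigcup_\alpha D_\alpha$. The only cosmetic difference is that you adjoin an isolated marker $*_i$ to each $F_i$ to track the domain of the limit, whereas the paper simply chooses an arbitrary extension $f'\in\prod_{i\in I}F_i$ of each $f$ (which tacitly uses that the relevant $F_i$ are nonempty) and then restricts the limit $\varphi$ to $\bigcup_{f\in C}\Dom f$; your version avoids that nonemptiness assumption and makes the non-gluing issue (chains in $R_F$ need not agree on overlapping domains) completely explicit, but both resolve it the same way, by going through the compact product rather than a set-theoretic union.
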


\begin{proof}
  Let $C$ be a chain of elements of $R_F$. We want to show that $C$
  has an upper bound in~$R_F$. For each $f\in R_F$, let $f'$ be an
  element of $\prod_{i\in I} F_i$ whose restriction to $\Dom f$ equals
  $f$. Since $\prod_{i\in I}F_i$ is compact, the net $(f')_{f\in C}$
  has a subnet converging to some $\varphi\in \prod_{i\in I}F_i$. For
  achieving our goal, we may as well assume that $(f')_{f\in C}$
  converges. Let us fix an element $f_0$ of $C$, and take $i,j\in\Dom
  f_0$ such that $i\leq j$. For all $f\in C$ such that $f_0\leq f$,
  one has $f(i)\leq_{\Cl R} f(j)$. As the net $(f')_{f\in C\land
    f_0\leq f}$ converges to $\varphi$ and $\leq_{\Cl R}$ is a closed
  relation, we deduce that $\varphi(i)\leq_{\Cl R}\varphi(j)$.
  Moreover, since $F_i$ is closed, we also have $\varphi(k)\in F_k$
  for all $k\in\Dom f_0$. As $f_0$ was chosen arbitrarily from $C$, we
  conclude that the restriction of $\varphi$ to $\bigcup_{f\in C}\Dom
  f$ belongs to $R_F$ and is an upper bound for $C$. Hence, by
  Zorn's Lemma, $R_F$ has a maximal element.
\end{proof}
    
\begin{Prop}
  \label{p:continuum-chain-of-R-classes}
  For the relation $\supseteq$, let $\Cl C$ be a nonempty chain of
  irreducible subshifts of \z A. Consider a pseudovariety of
  semigroups \pv V that contains \pv {LSl}. Let $\mathscr J$ be the family of
  \Cl J-classes $(J_\pv V(\Cl X))_{\Cl X\in \Cl C}$. Then there is an
  element of $R_{\mathscr J}$ with domain $\Cl C$.
\end{Prop}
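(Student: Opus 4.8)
The plan is to produce, by a single application of Zorn's Lemma via Lemma~\ref{l:maximal-r-chain-general-situation}, a maximal element $f$ of $R_{\mathscr J}$ and then argue that its domain must be all of $\Cl C$. First I would set up the family: for each $\Cl X \in \Cl C$ put $F_{\Cl X} = J_\pv V(\Cl X)$, which is a regular \Cl J-class of the compact semigroup $\Om AV$; such a \Cl J-class is closed (being the set of factors of a regular element by Proposition~\ref{p:closed-factorial-prolongable-irreducible-in-compact}, intersected appropriately — more simply, in a compact semigroup each regular \Cl J-class is closed), hence compact. The index set is $(\Cl C, \supseteq)$, so $\mathscr J = (F_{\Cl X})_{\Cl X \in \Cl C}$ is a nonempty family of compact subsets of $\Om AV$ indexed by an ordered set, and Lemma~\ref{l:maximal-r-chain-general-situation} gives a maximal $f \in R_{\mathscr J}$.

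The crux is then to show $\Dom f = \Cl C$. Suppose not, and pick $\Cl Y \in \Cl C \setminus \Dom f$. I would split $\Cl C$ relative to $\Cl Y$ into the "larger" part $\Cl C_{>} = \{\Cl X \in \Cl C : \Cl X \supsetneq \Cl Y\}$ and the "smaller" part $\Cl C_{<} = \{\Cl X \in \Cl C : \Cl X \subsetneq \Cl Y\}$; since $\Cl C$ is a chain under $\supseteq$ and $\Cl Y \notin \Dom f$, every element of $\Dom f$ lies in one of these. The goal is to find $w \in J_\pv V(\Cl Y)$ with $f(\Cl X) \geq_{\Cl R} w$ for all $\Cl X \in \Cl C_{>} \cap \Dom f$ and $w \geq_{\Cl R} f(\Cl X)$ for all $\Cl X \in \Cl C_{<} \cap \Dom f$, so that extending $f$ by $f(\Cl Y) := w$ contradicts maximality. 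The \Cl R-order constraint translates, via the standard correspondence between \Cl R-order and prefixes in $\Om AV$ under the hypothesis $\pv {LSl} \subseteq \pv V$ (so $\pv V$ contains $\pv D^{op}$ and every infinite pseudoword has a unique prefix of each finite length), into a compatibility condition on finite prefixes: I would take, for increasing $\Cl X$ down the chain, the corresponding prefixes of the $f(\Cl X)$ — which refine one another — and likewise handle the $\Cl C_{<}$ side. Then, since $\Cl Y$ is irreducible, Lemma~\ref{l:bridge-between-j-minimum-elements-LSl-case} lets me glue: given a pseudoword $u \in M_\pv V(\Cl Y)$ recording the required finite prefixes from the smaller side and a pseudoword $v \in M_\pv V(\Cl Y)$ recording the required finite suffixes, obtain $w$ with $u w v \in M_\pv V(\Cl Y)$; an accumulation-point / compactness argument over all finite truncations produces an actual element of $M_\pv V(\Cl Y)$ that has all the prescribed prefixes and is $\leq_{\Cl R}$ all the larger $f(\Cl X)$'s. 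Finally, one arranges that this element is \emph{regular}, i.e. lies in $J_\pv V(\Cl Y)$ itself and not merely in $M_\pv V(\Cl Y)$: multiplying on the right by an element of $J_\pv V(\Cl Y)$ (or replacing $w$ by $w e$ for a suitable idempotent $e \in J_\pv V(\Cl Y)$ that fixes the relevant prefixes) pushes $w$ into $J_\pv V(\Cl Y)$ while only possibly decreasing it in the \Cl R-order, which is exactly the direction we need against the smaller side; one checks it stays \Cl R-below the larger side because its finite prefixes are unchanged.

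The main obstacle I anticipate is the bookkeeping of the two-sided gluing together with the regularity requirement: ensuring simultaneously that the constructed $w$ is \Cl R-above the whole lower segment of $f$, \Cl R-below the whole upper segment, \emph{and} a genuine element of the regular \Cl J-class $J_\pv V(\Cl Y)$ (rather than just a factor of one). The \Cl R-below-the-upper-segment part is the delicate one, since it is a condition about $w$ being a suffix-quotient of prescribed pseudowords, and it interacts with the freedom we have in choosing suffixes in Lemma~\ref{l:bridge-between-j-minimum-elements-LSl-case}; I expect this to require a careful limiting argument where one simultaneously controls prefixes (for the lower side and regularity via an idempotent) and the "\Cl R-below" relations for each member of the upper segment, using that $\leq_{\Cl R}$ is a closed relation in the compact semigroup $\Om AV$. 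Once the extension $f \cup \{(\Cl Y, w)\}$ is seen to lie in $R_{\mathscr J}$, it strictly extends $f$, contradicting maximality; hence $\Dom f = \Cl C$, proving the proposition.
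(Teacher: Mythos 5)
Your opening steps --- applying Lemma~\ref{l:maximal-r-chain-general-situation} to obtain a maximal $f\in R_{\mathscr J}$ and then arguing by contradiction that $\Dom f=\Cl C$ --- match the paper. The genuine gap is in how you try to reach the contradiction. You propose to find a single $w\in J_{\pv V}(\Cl Y)$ that can be inserted while leaving $f$ \emph{unchanged} on $\Dom f$, i.e.\ passing to the extension $f\cup\{(\Cl Y,w)\}$. Such a $w$ need not exist. (As a side remark, with the index set $(\Cl C,\supseteq)$ the two inequalities you write on $\Cl C_{>}$ and $\Cl C_{<}$ are both reversed from what membership in $R_{\mathscr J}$ requires, but the gap persists after correcting the orientations.) Indeed, for $\Cl X\supsetneq\Cl Y$ in $\Dom f$, membership in $R_{\mathscr J}$ of the extended map forces $w$ to be a prefix of $f(\Cl X)$ while also lying in $J_{\pv V}(\Cl Y)$, and there is no reason any prefix of $f(\Cl X)$ lies there: take $A=\{a,b\}$, $\Cl X=A^{\mathbb Z}$, $f(\Cl X)$ an element of the minimum ideal of $\Om AV$ whose first letter is $b$, and $\Cl Y=\{a\}^{\mathbb Z}$ (so every element of $M_{\pv V}(\Cl Y)$ has first letter $a$); then no prefix of $f(\Cl X)$ belongs to $J_{\pv V}(\Cl Y)$. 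The bookkeeping, regularity, and limiting issues you flagged as the anticipated difficulties are not where the argument actually pinches.

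The resolution hinges on a feature of $R_F$ your plan does not exploit: the partial order is $f\leq g\iff f=g$ or $\Dom f\subsetneq\Dom g$, which compares \emph{domains only}, so $g$ need not agree with $f$ on $\Dom f$. This is what lets the paper \emph{redefine} $f$ on the whole segment $\{\Cl X:\Cl X\supsetneq\Cl Z\}$. Concretely, it takes $u$ an accumulation point of $\bigl(f(\Cl X)\bigr)_{\Cl X\in(I,\subseteq)}$ with $I=\{\Cl X\in\Dom f:\Cl X\subseteq\Cl Z\}$ (so $u\leq_{\Cl R}f(\Cl X)$ for $\Cl X\in I$, since $\leq_{\Cl R}$ is closed), glues $u$ to some $v\in J_{\pv V}(\Cl Z)$ by irreducibility to obtain $uwv\in J_{\pv V}(\Cl Z)$, and then, using a \emph{uniform} bridge $w'$ supplied by Lemma~\ref{l:bridge-between-j-minimum-elements-LSl-case} (uniform because the $\Cl R$-comparable infinite pseudowords $f(\Cl X)$ all share the same set of finite prefixes), sets $f'(\Cl X)=uwvw'f(\Cl X)$ for $\Cl X\supsetneq\Cl Z$. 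Each such value is in $J_{\pv V}(\Cl X)$ by $\Cl J$-minimality, the $\Cl R$-chain condition is preserved, and $\Dom f'\supsetneq\Dom f$, contradicting maximality. The regularity concern you raised is thus handled for free by $\Cl J$-minimality; the idea missing from your plan is the re-definition of $f$ on the large-subshift side.
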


\begin{proof}
  By Lemma~\ref{l:maximal-r-chain-general-situation}, we know there is
  in $R_{\mathscr J}$ a maximal element $f$. We claim that $\Dom f=\Cl C$.
  Suppose this is false. Let $\Cl Z\in \Cl C\setminus \Dom f$.
  Supposing that $I=\{\Cl X\in\Dom f: \Cl X\subseteq \Cl Z\}$ is
  nonempty, let $u$ be an accumulation point of the net $(f(\Cl
  X))_{\Cl X\in (I,\subseteq)}$; in case $I=\emptyset$, we let $u$ be any
  element of $J_\pv V(\Cl Z)$. Since $\Cl X\subseteq \Cl Z$ implies
  $M_\pv V(\Cl X)\subseteq M_\pv V(\Cl Z)$, we have $f(\Cl X)\in M_\pv
  V(\Cl Z)$ for all $\Cl X\in I$. And since $M_\pv V(\Cl Z)$ is
  closed, we conclude that $u\in M_\pv V(\Cl Z)$. Moreover, fixed $\Cl
  X\in I$, then, as $f\in R_{\mathscr J}$, we have $f(\Cl Y) \leq_{\Cl R}f(\Cl
  X)$ for all $\Cl Y\in I$ such that $\Cl X\subseteq \Cl Y$, whence
  \begin{equation}\label{eq:upper-chain-of-r-classes}
    \Cl X\in I\implies u\leq_{\Cl R} f(\Cl X).
  \end{equation}
  Let $v\in J_\pv V(\Cl Z) $. By the irreducibility of $M_\pv V(\Cl
  Z)$, there is $w\in\Om AV$ such that $uwv\in M_\pv V(\Cl Z)$.
  Since $uwv$ is a factor of $v$ and $v$ is a \Cl J-minimum element of
  $M_\pv V(\Cl Z)$, we have $uwv\in J_\pv V(\Cl Z)$. As $f\in R_{\mathscr J}$,
  every two elements in the image of $f$ are~$\Cl R$-comparable,
  and so the elements in the image of $f$ have all the same set $P$ of
  finite prefixes. By
  Lemma~\ref{l:bridge-between-j-minimum-elements-LSl-case}, for each
  $\Cl X\in\Dom f$ such that $\Cl Z\subseteq \Cl X$, there is a
  pseudoword $w'$, depending only on $uwv$ and $P$, such that
  $uwvw'f(\Cl X)\in M_\pv V(\Cl X)$. More precisely, we have
  $uwvw'f(\Cl X)\in J_\pv V(\Cl X)$, as $f(\Cl X)\in J_\pv V(\Cl X)$.
  The partial function
  \begin{equation*}
    f'\colon \Cl X\in\Dom f
    \cup\{\Cl Z\}\mapsto
    \begin{cases}
      f(\Cl X)&\text{ if }\Cl X\subsetneq \Cl Z,\\
      uwv&\text{ if }\Cl X=\Cl Z,\\
      uwvw'f(\Cl X)&\text{ if }\Cl Z\subsetneq \Cl X,
    \end{cases}
  \end{equation*}
  belongs to $R_{\mathscr J}$ (cf.
  implication~\eqref{eq:upper-chain-of-r-classes}) and $\Dom
  f\subsetneq \Dom f'$. This contradicts the fact that $f$ is a
  maximal element of $R_{\mathscr J}$. The absurdity comes from the hypothesis
  $\Cl C\setminus \Dom f\neq\emptyset$.
\end{proof}
  
We recall the concept of entropy of a pseudoword, first introduced
in~\cite{Almeida&Volkov:2006}, and applied there in the study of
relatively free profinite semigroups. Some further applications were
given in~\cite{ACosta&Steinberg:2011}. Let \pv V be a pseudovariety
containing $\pv {LSl}$ and $A$ an alphabet with at least two letters.
For $w\in \Om AV$, let $q_w(n)$ denote the number of factors of length
$n$ of $w$. If $w$ is an infinite pseudoword then the sequence
$\frac{1}{n}\log_2 q_w(n)$ converges to its infimum, which is denoted
by~$h(w)$ and called the \emph{entropy of~$w$}.\footnote{This is the
  definition used in~\cite{ACosta&Steinberg:2011}.
  In~\cite{Almeida&Volkov:2006} the entropy of $w$ is defined as
  $\frac{1}{n}\log_{|A|} q_w(n)$, which equals $h(w)\log_{|A|}2$ for
  $h(w)$ as defined here.} This definition extends to finite words,
by letting $h(w)=0$ when $w$ is finite. If $\Cl X$ is a subshift, then
$h(\cal X)=h(w)$ for every pseudoword $w$ whose set of finite factors
is equal to $L(\Cl X)$. For instance, if $\Cl X$ is irreducible then
$h(\cal X)=h(w)$ when $w\in J_\pv V(\Cl X)$.

Note that $h(w)\in[0,\log_2|A|]$, for all $w\in\Om AV$. Moreover, we
have the following fact from~\cite{Almeida&Volkov:2006}.
    
\begin{Prop}\label{p:maximal-entropy-is-minimum-ideal}
  Let $w\in\Om AV$. Then $h(w)=\log_2|A|$ if and only if $w$ belongs
  to the minimum ideal of $\Om AV$.
\end{Prop}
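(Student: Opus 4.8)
The plan is to prove the two implications separately, working throughout with the set of finite factors of $w$ and with the minimum ideal $K_A$ of the compact semigroup $\Om AV$. Write $S=\Om AV$ for brevity. Recall that $K_A$ exists (as $S$ is compact), that it coincides with the $\leq_{\Cl J}$-least $\Cl J$-class, and hence that $w\in K_A$ if and only if $w\leq_{\Cl J}s$ for every $s\in S$, that is, if and only if \emph{every} element of $S$ is a factor of $w$. Also recall that $\pv V\supseteq\pv{LSl}\supseteq\pv N$, so that the finite pseudowords form the dense subsemigroup $A^+$ of $S$, and that by convention $h(w)=0$ when $w$ is finite while $h(w)=\lim_n\frac1n\log_2 q_w(n)=\inf_n\frac1n\log_2 q_w(n)$ when $w$ is infinite.

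For the implication $w\in K_A\Rightarrow h(w)=\log_2|A|$: if $w\in K_A$, then for every $u\in A^+$ the principal ideal $S^IuS^I$ contains $K_A$, hence contains $w$, so $u$ is a factor of $w$. Therefore every word of $A^+$ of length $n$ is a factor of $w$, i.e.\ $q_w(n)=|A|^n$ for all $n\ge 1$; since $|A|\ge2$ a finite pseudoword cannot satisfy this for large $n$, so $w$ is infinite, and then $h(w)=\lim_n\tfrac1n\log_2|A|^n=\log_2|A|$.

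For the converse, assume $h(w)=\log_2|A|$. Since $|A|\ge2$ we have $\log_2|A|>0$, so $w$ is not finite; hence $w$ is infinite and $\bigl(\tfrac1n\log_2 q_w(n)\bigr)_n$ converges to its infimum $\log_2|A|$. In particular $\tfrac1n\log_2 q_w(n)\ge\log_2|A|$ for every $n$, so $q_w(n)\ge|A|^n$; as trivially $q_w(n)\le|A|^n$, we get $q_w(n)=|A|^n$, meaning that every word of $A^+$ is a factor of $w$. It remains to upgrade this to: every $s\in S$ is a factor of $w$, which by the remark above gives $w\in K_A$. For this I would use density of $A^+$ in $S$: pick a net $s_i\in A^+$ with $s_i\to s$, write $w=x_is_iy_i$ with $x_i,y_i\in S^I$, pass by compactness of $S^I\times S^I$ to a subnet along which $x_i\to x$ and $y_i\to y$, and conclude $w=xsy$ by continuity of multiplication; hence $w\in S^IsS^I$ for every $s$, so $w\in K_A$.

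The argument is essentially routine, and the only point needing a little care is the final compactness-and-continuity step, which lifts the ``every finite word is a factor of $w$'' property from the dense subsemigroup $A^+$ to all of $S$; this is a standard profinite compactness argument and I anticipate no genuine obstacle. Equivalently, one can phrase the endgame by noting that ``set of factors of $w$''$\,=S$ exhibits this set as a closed, factorial, irreducible subset of $S$, and read off $w\in K_A$ (directly, or through Proposition~\ref{p:closed-factorial-prolongable-irreducible-in-compact}), but this reduces to the same compactness point.
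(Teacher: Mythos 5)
Your proof is correct. Note that the paper does not actually prove Proposition~\ref{p:maximal-entropy-is-minimum-ideal}: it is quoted as "the following fact from [Almeida \& Volkov, 2006]", so there is no in-paper argument to compare against. Your self-contained argument is the natural one and all steps check out: the characterization of the minimum ideal as the set of elements having every element of $\Om AV$ as a factor, the counting argument $q_w(n)=|A|^n$ in both directions (using $|A|\ge 2$, which is a standing assumption in that subsection, and the fact that for infinite pseudowords the entropy sequence converges to its infimum), and the final compactness-and-continuity step lifting "every finite word is a factor of $w$" to "every pseudoword is a factor of $w$" — the same closed-relation argument the paper itself uses elsewhere (e.g.\ in Section~\ref{sec:finitely-cancelable} and in Lemma~\ref{l:if-S-is-compact-then-the-quasi-order-is-closed}).
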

    
In particular, the entropy of pseudowords of $\Om AV$ is not
continuous, since every finite word has entropy zero and the set of
finite words is dense. However, it is upper semi-continuous, as proved
next.

\begin{Lemma}\label{l:entropy-is-upper-semi-continuous}
  Let $\pv V$ be a pseudovariety containing $\pv {LSl}$. If $(w_n)_n$
  is a sequence of elements of $\Om AV$ converging to $w$ then
  $\limsup h(w_n)\leq h(w)$.
\end{Lemma}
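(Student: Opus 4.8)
The plan is to reduce everything to the formula $h(w)=\inf_{m\geq1}\frac1m\log_2 q_w(m)$, valid for infinite $w$ (with $h(w)=0$ for finite $w$), together with the observation that for each fixed word $u\in A^+$ the set
\[
C_u=\{z\in\Om AV:\ z=xuy\ \text{for some }x,y\in(\Om AV)^I\}
\]
of pseudowords having $u$ as a factor is a \emph{closed} subset of $\Om AV$: indeed it is the image of the compact space $(\Om AV)^I\times(\Om AV)^I$ under the continuous map $(x,y)\mapsto xuy$, hence compact, hence closed. (This is where the standing hypothesis plays only a mild role: the containment $\pv{LSl}\subseteq\pv V$ is what makes the notion of entropy of a pseudoword meaningful in the first place, and it also guarantees the two routine facts we shall use — that finite pseudowords are isolated points of $\Om AV$, since $\pv N\subseteq\pv{LSl}$, and that every infinite pseudoword has a factor of each length $n\geq1$, since $\pv D\subseteq\pv{LSl}$, so that $q_w(n)\geq1$ for infinite $w$.)

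First I would deal with the case where $w$ is finite: then $\{w\}$ is open, so $w_k=w$ for all large $k$ and the inequality is immediate. Assume from now on that $w$ is infinite and fix $\varepsilon>0$. Since $h(w)$ is the infimum of the sequence $\bigl(\frac1m\log_2 q_w(m)\bigr)_m$, choose $n\geq1$ with $\frac1n\log_2 q_w(n)<h(w)+\varepsilon$, and let $F\subseteq A^n$ be the set of factors of $w$ of length $n$, so that $F$ is finite, nonempty, and $|F|=q_w(n)$. For every $u\in A^n\setminus F$ the set $C_u$ is closed and does not contain $w$, hence
\[
V=\Om AV\setminus\bigcup_{u\in A^n\setminus F}C_u
\]
is an open neighbourhood of $w$, the union being finite. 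If $w_k\in V$, then every factor of $w_k$ of length $n$ lies in $F$, so $q_{w_k}(n)\leq q_w(n)$; consequently, if $q_{w_k}(n)\geq1$ then $h(w_k)\leq\frac1n\log_2 q_{w_k}(n)\leq\frac1n\log_2 q_w(n)<h(w)+\varepsilon$, while if $q_{w_k}(n)=0$ then $w_k$ has length below $n$, so $w_k$ is finite and $h(w_k)=0\leq h(w)+\varepsilon$. In all cases $h(w_k)<h(w)+\varepsilon$, and since $w_k\in V$ for all sufficiently large $k$, this gives $\limsup_k h(w_k)\leq h(w)+\varepsilon$. Letting $\varepsilon\to0$ concludes the argument.

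I do not expect any serious obstacle here. The only point requiring a little care is the closedness of $C_u$ together with the fact that its elements are exactly the pseudowords admitting $u$ as a factor in the profinite sense, i.e.\ with the factorization $z=xuy$ taken in $(\Om AV)^I$; once that identification is in place, the remainder is a routine manipulation of the infimum defining $h$, the sole nuisance being the degenerate possibility $q_{w_k}(n)=0$, which is disposed of by the observation that such $w_k$ are finite and hence of entropy $0$.
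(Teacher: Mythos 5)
Your proof is correct and follows essentially the same route as the paper's: both arguments control the length-$n$ factors of $w_k$ by those of $w$ for large $k$ (you via closedness of the set $C_u$ of pseudowords admitting $u$ as a factor, the paper via the $\pv{LSl}$-recognizability of $A^*uA^*$, which makes that set clopen) and then evaluate the infimum defining $h$ at a well-chosen finite stage $n$. Your version is organized as a direct $\varepsilon$-argument rather than the paper's contradiction via subsequences, and it handles the degenerate finite cases more explicitly, but the mathematical content is the same.
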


\begin{proof}
  Since $\limsup h(w_n)$ is the greatest accumulation point of the
  sequence $(h(w_n))_n$, the proof is reduced to the case where
  $(h(w_n))_n$ converges.
    
  Since $\lim w_n=w$ and $\pv V$ contains $\pv{LSl}$, for each $k$
  there is $p_k$ such that for all $n\geq p_k$ the pseudowords $w_n$
  and $w$ have the same factors of length~$k$. Let $(n_k)_k$ be the
  sequence recursively defined by $n_1=p_1$ and
  $n_{k+1}=\max\{n_k,p_{k+1}\}$. Given $\varepsilon>0$, consider the
  set $K=\{k:h(w_{n_k})\geq h(w)+\varepsilon\}$. For every $k\in K$,
  one has
  \begin{equation}\label{eq:leads-to-contradiction-entropy}
    \frac{1}{k}\log_2 q_w(k)=\frac{1}{k}\log_2 q_{w_{n_k}}(k)
    \geq h(w_{n_k})\geq h(w)+\varepsilon.
  \end{equation}
  As $\lim\frac{1}{k}\log_2 q_w(k)=h(w)$, if $K$ is infinite
  then~\eqref{eq:leads-to-contradiction-entropy} leads to the
  contradiction $h(w)\geq h(w)+\varepsilon$. Hence $K$ is finite, and
  so $\lim h(w_{n_k})\leq h(w)+\varepsilon$. Since $\varepsilon$ is
  arbitrary and $(h(w_n))_n$ converges, we conclude that $\lim
  h(w_n)\leq h(w)$.
\end{proof}

We now have all the tools to achieve the proof of
Theorem~\ref{t:pseudword-minimum-ideal-r-chain-limit}.

\begin{proof}[Proof of
  Theorem~\ref{t:pseudword-minimum-ideal-r-chain-limit}]
  Let $\Cl C$ be the chain $(\mathcal S_\beta)_{{]}1,n{[}}$, ordered
  by $\supseteq$. Applying
  Proposition~\ref{p:continuum-chain-of-R-classes} to the family of
  $\Cl J$-classes $(J_{\pv V}(\Cl X))_{\Cl X\in\Cl C}$, we conclude
  that there is a function $f\colon \Cl C\to \Om AV$ such that $f(\Cl
  S_\beta)\in J_{\pv V}(\Cl S_\beta)$ and
  \begin{equation*}
    \Cl S_\beta\supseteq \Cl S_\alpha\implies
    f(\Cl S_\beta) \leq_{\Cl R}f(\Cl S_\alpha),
    \qquad\forall \alpha,\beta\in {]}1,n{[}.
  \end{equation*}
  On the other hand, if $f(\Cl S_\beta) \leq_{\Cl R}f(\Cl S_\alpha)$,
  then $\Cl S_\beta\supseteq \Cl S_\alpha$ by
  equivalence~\eqref{eq:subshifts-j-classes} in
  Remark~\ref{r:j-order-shifts}. Therefore, we actually have
  \begin{equation}
    \label{eq:more-precise}
    \Cl S_\beta\supsetneq \Cl S_\alpha\iff 
    f(\Cl S_\beta) <_{\Cl R}f(\Cl S_\alpha),
    \qquad\forall \alpha,\beta\in {]}1,n{[}.
  \end{equation}
  For each $\beta\in {]}1,n{[}$, let $w^{(\beta)}=f(\Cl S_\beta)$. By
  the given characterization of $(\Cl S_\beta)_{\beta\in
    {]}1,+\infty{[}}$
  (cf.~Theorem~\ref{t:theorem-beta-shifts-that-provides-examples}), we
  know that $\Cl S_\beta\supsetneq \Cl S_\alpha$ if and only if
  $\alpha<\beta$, whence~\eqref{eq:more-precise} translates to
  \begin{equation}\label{eq:translation-for-w}
    \alpha<\beta \iff w^{(\beta)}<_{\Cl R}w^{(\alpha)},
    \qquad\forall \alpha,\beta\in {]}1,n{[}.
  \end{equation}
  Let $(\alpha_k)_k$ be an increasing sequence of elements of the open
  interval ${]}1,n{[}$ such that $\lim\alpha_k=n$. Thus, we have $\lim
  h(w^{(\alpha_k)})=\lim \log_2 \alpha_k=\log_2 n$. Hence, if
  $w^{(n)}$ is an accumulation point of the sequence
  $(w^{(\alpha_k)})_{k}$, then $h(w^{(n)})=\log_2n$ by
  Lemma~\ref{l:entropy-is-upper-semi-continuous} and
  Remark~\ref{r:entropy-full-shift}. By
  Proposition~\ref{p:maximal-entropy-is-minimum-ideal}, the pseudoword
  $w^{(n)}$ then belongs to the minimum ideal of $\Om AV$. Since
  $\leq_\Cl R$ is a closed relation, we have $w^{(n)}\leq_\Cl R
  w^{(\beta)}$ for all $\beta\in {]}1,n{[}$. Then, taking into
  account~\eqref{eq:translation-for-w}, we conclude that the net
  $(w^{(\beta)})_{\beta\in {]}1,n{]}}$ satisfies
  conditions~\ref{item:pseudword-minimum-ideal-r-chain-limit-1}-\ref{item:pseudword-minimum-ideal-r-chain-limit-3}
  in Theorem~\ref{t:pseudword-minimum-ideal-r-chain-limit}.

  As $w^{(n)}\leq_{\Cl R}w^{(\beta)}$, there is $u^{(\beta)}$ with
  $w^{(n)}=w^{(\beta)}u^{(\beta)}$. Since $w^{(\beta)}$ is regular,
  there is an idempotent $f^{(\beta)}$ in the $\Cl L$-class of
  $w^{(\beta)}$. Take $v^{(\beta)}=f^{(\beta)}u^{(\beta)}$. Then
  $(w^{(\beta)},v^{(\beta)})$ is an element of $\mathfrak
  {F}(w^{(n)})$ stabilized by $f^{(\beta)}$. By
  Proposition~\ref{p:shifting-allowed-implies-type-2}, this implies
  that $q_\beta=(w^{(\beta)},v^{(\beta)})/{\sim}$ is a stationary
  point.
  The elements of $J_{q_\beta}$ are factors of $w^{(\beta)}$, and so
  by the minimality of $J_{q_\beta}$ we have $f^{(\beta)}\in
  J_{q_\beta}$. Hence,
  condition~\ref{item:pseudword-minimum-ideal-r-chain-limit-4} in
  Theorem~\ref{t:pseudword-minimum-ideal-r-chain-limit} holds.

  Let $\alpha,\beta\in {]}1,n{]}$. Since $q_\alpha\leq
  q_\beta\Rightarrow w^{(\beta)}\leq_{\Cl R} w^{(\alpha)}$, we deduce
  from~\eqref{eq:translation-for-w} that $q_\alpha\leq
  q_\beta\Rightarrow\alpha\leq \beta$. Thus, if $\alpha<\beta$
  then we cannot have $q_\beta\leq q_\alpha$, and so assuming $\Om AV$
  is equidivisible, we get $q_\alpha< q_\beta$, thereby establishing
  condition~\ref{item:pseudword-minimum-ideal-r-chain-limit-5} in
  Theorem~\ref{t:pseudword-minimum-ideal-r-chain-limit}.
\end{proof}

\section*{Acknowledgments}

The work of the first, third, and fourth authors was partly supported
by the \textsc{Pessoa} French-Portuguese project ``Separation in
automata theory: algebraic, logical, and combinatorial aspects''.
The work of the first three authors was also partially supported
respectively by CMUP (UID/MAT/ 00144/2013), CMUC (UID/MAT/00324/2013),
and CMAT (UID/MAT/ 00013/2013), which are funded by FCT (Portugal) with
national (MCTES) and European structural funds (FEDER), under the
partnership agreement PT2020.
The work of the fourth author was partly supported by ANR 2010
BLAN 0202 01 FREC and by the DeLTA project ANR-16-CE40-0007.

\bibliographystyle{amsplain}

\begin{thebibliography}{10}

\bibitem{Almeida:1994a}
J.~Almeida, \emph{Finite semigroups and universal algebra}, World Scientific,
  Singapore, 1995, English translation.

\bibitem{Almeida:2005c}
\bysame, \emph{Profinite groups associated with weakly primitive
  substitutions}, Fundamentalnaya i Prikladnaya Matematika (Fundamental and
  Applied Mathematics) \textbf{11} (2005), 13--48, In Russian. English version
  in J. Math. Sciences \textbf{144}, No. 2 (2007) 3881--3903.

\bibitem{Almeida:2003cshort}
\bysame, \emph{Profinite semigroups and applications}, Structural theory of
  automata, semigroups and universal algebra (New York) (V.~B. Kudryavtsev and
  I.~G. Rosenberg, eds.), Springer, 2005, pp.~1--45.

\bibitem{Almeida&ACosta:2007a}
J.~Almeida and A.~Costa, \emph{Infinite-vertex free profinite semigroupoids and
  symbolic dynamics}, J. Pure Appl. Algebra \textbf{213} (2009), 605--631.

\bibitem{Almeida&ACosta:2013}
\bysame, \emph{Presentations of {S}ch\"utzenberger groups of minimal
  subshifts}, Israel J. Math. \textbf{196} (2013), no.~1, 1--31.

\bibitem{Almeida&ACosta:2016a}
\bysame, \emph{Equidivisible pseudovarieties of semigroups}, Publ. Math.
  Debrecen (2016), To appear, arXiv:1603.00330.

\bibitem{Almeida&Klima:2015a}
J.~Almeida and O.~Kl{\'\i}ma, \emph{Representations of relatively free
  profinite semigroups, irreducibility, and order primitivity}, Tech. report,
  Univ. Masaryk and Porto, 2015, arXiv:1509.01389.

\bibitem{Almeida&Volkov:2001a}
J.~Almeida and M.~V. Volkov, \emph{Profinite methods in finite semigroup
  theory}, Proceedings of International Conference ``Logic and applications''
  honoring Yu. L. Ershov on his 60-th birthday anniversary and of International
  Conference on mathematical logic, honoring A. I. Mal'tsev on his 90-th
  birthday anniversary and 275-th anniversary of the Russian Academy of
  Sciences (Novosibirsk, Russia) (S.~S. Goncharov, ed.), 2002, pp.~3--28.

\bibitem{Almeida&Volkov:2006}
\bysame, \emph{Subword complexity of profinite words and subgroups of free
  profinite semigroups}, Int. J. Algebra Comput. \textbf{16} (2006), 221--258.

\bibitem{Almeida&Weil:1996b}
J.~Almeida and P.~Weil, \emph{Free profinite {$\cal R$}-trivial monoids}, Int.
  J. Algebra Comput. \textbf{7} (1997), 625--671.

\bibitem{Bertrand-Mathis:1986}
A.~Bertrand-Mathis, \emph{D\'eveloppement en base {$\theta$}; r\'epartition
  modulo un de la suite {$(x\theta^n)_{n\geq 0}$}; langages cod\'es et
  {$\theta$}-shift}, Bull. Soc. Math. France \textbf{114} (1986), no.~3,
  271--323.

\bibitem{Birget:1984}
J.-C. Birget, \emph{Iteration of expansions---unambiguous semigroups}, J. Pure
  Appl. Algebra \textbf{34} (1984), no.~1, 1--55.

\bibitem{Bruyere&Carton:2007}
V.~Bruy{\`e}re and O.~Carton, \emph{Automata on linear orderings}, J. Comput.
  System Sci. \textbf{73} (2007), no.~1, 1--24.

\bibitem{Carruth&Hildebrant&Koch:1983}
J.~H. Carruth, J.~A. Hildebrant, and R.~J. Koch, \emph{The theory of
  topological semigroups}, Pure and Applied Mathematics, no.~75, Marcel Dekker,
  New York, 1983.

\bibitem{Chaubard&Pin&Straubing:2006}
L.~Chaubard, J.-E. Pin, and H.~Straubing, \emph{Actions, wreath products of
  \protect{$\mathcal C$}-varieties and concatenation product}, Theor. Comp.
  Sci. \textbf{356} (2006), 73--89.

\bibitem{ACosta:2007t}
A.~Costa, \emph{Semigrupos profinitos e dinâmica simbólica}, Ph.D. thesis,
  Univ. Porto, 2007.

\bibitem{ACosta&Steinberg:2011}
A.~Costa and B.~Steinberg, \emph{Profinite groups associated to sofic shifts
  are free.}, Proc. London Math. Soc. \textbf{102} (2011), 341--369.

\bibitem{Costa:2001a}
J.~C. Costa, \emph{Free profinite locally idempotent and locally commutative
  semigroups}, J. Pure Appl. Algebra \textbf{163} (2001), 19--47.

\bibitem{Eilenberg:1976}
S.~Eilenberg, \emph{Automata, languages and machines}, vol.~B, Academic Press,
  New York, 1976.

\bibitem{Frougny:2000}
C.~Frougny, \emph{Number representation and finite automata}, Topics in
  symbolic dynamics and applications ({T}emuco, 1997), London Math. Soc.
  Lecture Note Ser., vol. 279, Cambridge Univ. Press, Cambridge, 2000,
  pp.~207--228.

\bibitem{Gool&Steinberg:2016}
S.~J.~van Gool and B.~Steinberg, \emph{Pro-aperiodic monoids via saturated
  models}, Tech. report, 2016, arXiv:1609.07736.

\bibitem{Grillet:1995}
P.-A. Grillet, \emph{Semigroups, {A}n introduction to the structure theory},
  Monographs and Textbooks in Pure and Applied Mathematics, vol. 193, Marcel
  Dekker Inc., New York, 1995.

\bibitem{Huschenbett&Kufleitner:2014}
M.~Huschenbett and M.~Kufleitner, \emph{Ehrenfeucht-{F}ra\"iss\'e games on
  omega-terms}, LIPIcs. Leibniz Int. Proc. Inform., vol.~25, STACS, Schloss
  Dagstuhl. Leibniz-Zent. Inform., Wadern, 2014, pp.~374--385.

\bibitem{Ito&Takahashi:1974}
S.~Ito and Y.~Takahashi, \emph{Markov subshifts and realization of
  {$\beta$}-expansions}, J. Math. Soc. Japan \textbf{26} (1974), 33--55.

\bibitem{Kitchens:1998}
B.~Kitchens, \emph{One-sided, two-sided and countable state {M}arkov shifts},
  Springer, Berlin, 1998.

\bibitem{Kufleitner&Wachter:2016}
M.~Kufleitner and J.~P. W\"achter, \emph{The word problem for omega-terms over
  the {T}rotter-{W}eil hierarchy (extended abstract)}, Computer
  science---theory and applications, Lecture Notes in Comput. Sci., vol. 9691,
  Springer, [Cham], 2016, pp.~237--250.

\bibitem{Lallement:1979}
G.~Lallement, \emph{Semigroups and combinatorial applications},
  Wiley-Interscience, J. Wiley \& Sons, Inc., New York, 1979.

\bibitem{Lind&Marcus:1996}
D.~Lind and B.~Marcus, \emph{An introduction to symbolic dynamics and coding},
  Cambridge University Press, Cambridge, 1995.

\bibitem{Lothaire:2001}
M.~Lothaire, \emph{Algebraic combinatorics on words}, Cambridge University
  Press, Cambridge, UK, 2002.

\bibitem{McKnight&Storey:1969}
J.~D. McKnight, Jr. and A.~J. Storey, \emph{Equidivisible semigroups}, J.
  Algebra \textbf{12} (1969), 24--48.

\bibitem{Moura:2009a}
A.~Moura, \emph{Representations of the free profinite object over {DA}}, Int.
  J. Algebra Comput. \textbf{21} (2011), 675--701.

\bibitem{Parry:1960}
W.~Parry, \emph{On the {$\beta $}-expansions of real numbers}, Acta Math. Acad.
  Sci. Hungar. \textbf{11} (1960), 401--416.

\bibitem{Pin:2009S}
J.-E. Pin, \emph{Profinite methods in automata theory}, vol. 26th International
  Symposium on Theoretical Aspects of Computer Science (STACS 2009),
  Internationales Begegnungs- Und Forschungszentrum f{\"u}r Informatik (IBFI),
  2009, pp.~31--50.

\bibitem{Pin&Weil:2002b}
J.-E. Pin and P.~Weil, \emph{The wreath product principle for ordered
  semigroups}, Comm. Algebra \textbf{30} (2002), no.~12, 5677--5713.

\bibitem{Renyi:1957}
A.~R{\'e}nyi, \emph{Representations for real numbers and their ergodic
  properties}, Acta Math. Acad. Sci. Hungar \textbf{8} (1957), 477--493.

\bibitem{Rhodes&Steinberg:2002}
J.~Rhodes and B.~Steinberg, \emph{Profinite semigroups, varieties, expansions
  and the structure of relatively free profinite semigroups}, Int. J. Algebra
  Comput. \textbf{11} (2002), 627--672.

\bibitem{Rhodes&Steinberg:2009qt}
\bysame, \emph{The $q$-theory of finite semigroups}, Springer Monographs in
  Mathematics, Springer, 2009.

\bibitem{Rosenstein:1982}
J.~G. Rosenstein, \emph{Linear orderings}, Academic Press, New York, 1982.

\bibitem{Schutzenberger:1965}
M.~P. Schützenberger, \emph{On finite monoids having only trivial subgroups},
  Inform. and Control \textbf{8} (1965), 190--194.

\bibitem{Straubing:1979a}
H.~Straubing, \emph{Aperiodic homomorphisms and the concatenation product of
  recognizable sets}, J. Pure Appl. Algebra \textbf{15} (1979), 319--327.

\bibitem{Thomas:1997}
W.~Thomas, \emph{Languages, automata, and logic}, Handbook of formal languages.
  Beyond words (G.~Rozenberg and A.~Salomaa, eds.), vol.~3, Springer, Berlin,
  1997, pp.~389--455.

\bibitem{Tilson:1987}
B.~Tilson, \emph{Categories as algebra: an essential ingredient in the theory
  of monoids}, J. Pure Appl. Algebra \textbf{48} (1987), 83--198.

\bibitem{Weil:2002}
P.~Weil, \emph{Profinite methods in semigroup theory}, Int. J. Algebra Comput.
  \textbf{12} (2002), 137--178.

\bibitem{Willard:1970}
S.~Willard, \emph{General topology}, Addison-Wesley, Reading, Mass., 1970.

\end{thebibliography}

\providecommand{\bysame}{\leavevmode\hbox to3em{\hrulefill}\thinspace}
\providecommand{\MR}{\relax\ifhmode\unskip\space\fi MR }
\providecommand{\MRhref}[2]{%
  \href{http://www.ams.org/mathscinet-getitem?mr=#1}{#2}
}
\providecommand{\href}[2]{#2}

\end{document}